\documentclass[onecolumn,a4paper,11pt,accepted=2020-01-28]{quantumarticle} 
\pdfoutput=1

\usepackage[utf8]{inputenc}
\usepackage{fullpage,amsfonts,bm}
\usepackage{amsmath,amssymb,amsthm,mathtools}
\usepackage{mleftright}\mleftright
\usepackage[pagebackref,final]{hyperref}
\usepackage{algpseudocode}
\usepackage{float}

\usepackage{verbatim}
\usepackage{ifdraft}
\ifdraft{\newcommand{\authnote}[3]{{\color{#3} {\bf  #1:} #2}}}{\newcommand{\authnote}[3]{}}


\usepackage{tikz}
\usetikzlibrary{calc,arrows,quotes,angles}

\usepackage{subcaption}

\usepackage{thmtools}
\usepackage{thm-restate}

\providecommand{\cx}{0}
\providecommand{\cy}{0}
\providecommand{\scale}{0}
\providecommand{\sfw}{0}

\floatstyle{boxed}

\newfloat{algorithm}{t}{lop}
\floatname{algorithm}{Algorithm}

\newfloat{metaalgorithm}{t}{lop}
\floatname{metaalgorithm}{Meta-Algorithm}

\newtheorem{theorem}{Theorem}

\newtheorem{lemma}[theorem]{Lemma}
\newtheorem{claim}[theorem]{Claim}
\newtheorem{remark}[theorem]{Remark}

\newtheorem{corollary}[theorem]{Corollary}
\newtheorem{definition}[theorem]{Definition}


\mathchardef\mhyphen="2D

\newcommand{\ignore}[1]{}

\newcommand{\ket}[1]{|#1\rangle}
\newcommand{\bra}[1]{\langle#1|}

\newcommand{\ketbra}[2]{|#1\rangle\! \langle #2|}

\newcommand{\Tr}{\mbox{\rm Tr}}
\newcommand{\tr}[1]{\Tr\left(#1\right)}
\newcommand{\opt}{\mbox{\rm OPT}}

\newcommand{\eps}{\varepsilon}
\newcommand{\nrm}[1]{\left\lVert#1\right\rVert}
\newcommand{\maxnrm}[1]{\left\lVert#1\right\rVert_{\max}}
\def\01{\{0,1\}}

\newcommand{\MAJ}{\mbox{\rm MAJ}}
\newcommand{\OR}{\mbox{\rm OR}}
\newcommand{\MOM}{\MAJ_a\mhyphen\OR_b\mhyphen\MAJ_c}
\newcommand{\bigOn}[1]{{\mathcal O}\left(#1\right)}
\newcommand{\bigO}{{\mathcal O}}

\newcommand{\bOt}[1]{\widetilde{\mathcal O}\left(#1\right)}
\newcommand{\cp}{c^{\prime}}
\newcommand{\pt}{\tilde{\mathcal{P}}}

\begin{document}

\title{Quantum SDP-Solvers: Better upper and lower bounds}
\author{Joran van Apeldoorn}
\author{Andr\'as Gily\'en}
\author{Sander Gribling}
\affiliation{QuSoft, CWI, Amsterdam, the Netherlands. $\{${\tt apeldoor,gilyen,gribling}$\}${\tt @cwi.nl}}
\author{Ronald de Wolf}
\affiliation{QuSoft, CWI and University of Amsterdam, the Netherlands.  {\tt rdewolf@cwi.nl}}

\date{}
\maketitle

\begin{abstract}
  Brand\~ao and Svore~\cite{brandao2016QSDPSpeedup} recently gave quantum algorithms for approximately solving semidefinite programs, which in some regimes are faster than the best-possible classical algorithms in terms of the dimension~$n$ of the problem and the number~$m$ of constraints, but worse in terms of various other parameters. In this paper we improve their algorithms in several ways, getting better dependence on those other parameters. To this end we develop new techniques for quantum algorithms, for instance a general way to efficiently implement smooth functions of sparse Hamiltonians, and a generalized minimum-finding procedure.

  We also show limits on this approach to quantum SDP-solvers, for instance for combinatorial optimization problems that have a lot of symmetry.
  Finally, we prove some general lower bounds showing that in the worst case, the complexity of every quantum LP-solver (and hence also SDP-solver) has to scale linearly with $mn$ when $m\approx n$, which is the same as classical.
\end{abstract}

\newpage

\tableofcontents

\newpage

\section{Introduction}

\subsection{Semidefinite programs} \label{subsec:sdp}

In the last decades, particularly since the work of Gr{\"o}tschel, Lov{\'{a}}sz, and Schrijver~\cite{grotschel1988GeomAlgAndConvOpt}, \emph{semidefinite programs} (SDPs) have become an important tool for designing efficient optimization and approximation algorithms. SDPs generalize the better-known \emph{linear} programs (LPs), but (like LPs) they are still efficiently solvable. Thanks to their stronger expressive power, SDPs can sometimes give better approximation algorithms than LPs.

The basic form of an SDP is the following:
\begin{align} \label{eq:SDP}
  \max \quad &\Tr(CX) \\
  \text{s.t.}\ \ \ &\Tr(A_j X) \leq b_j \quad \text{ for all } j \in [m], \notag \\
             &X \succeq 0, \notag
\end{align}
where $[m]:=\{1,\ldots,m\}$.
The input to the problem consists of Hermitian $n\times n$ matrices $C,A_1,\ldots,A_m$ and reals $b_1,\ldots,b_m$.  For normalization purposes we assume $\nrm{C},\nrm{A_j} \leq 1$.
The number of constraints is~$m$ (we do not count the standard $X\succeq 0$ constraint for this). The variable~$X$ of this SDP is an $n\times n$ positive semidefinite (psd) matrix.
LPs correspond to the case where all matrices are diagonal.

A famous example is the algorithm of Goemans and Williamson~\cite{goemans1995MaxCutSDP} for approximating the size of a maximum cut in a graph~$G=([n],E)$: the maximum, over all subsets $S$ of vertices, of the number of edges between $S$ and its complement~$\bar{S}$. Computing MAXCUT$(G)$ exactly is NP-hard. It corresponds to the following integer program
\begin{align*}
  \max \quad &\frac{1}{2}\sum_{\{i,j\}\in E}(1-v_iv_j)\\
  \text{s.t.}\ \ \ & v_j\in\{+1,-1\}\quad \text{ for all } j \in [n],\end{align*}
using the fact that $(1-v_iv_j)/2=1$ if $v_i$ and $v_j$ are different signs, and $(1-v_iv_j)/2=0$ if they are the same.
We can relax this integer program by replacing the signs $v_j$ by unit vectors, and replacing the product $v_iv_j$ in the objective function by the dot product $v_i^T v_j$.
We can implicitly optimize over such vectors (of unspecified dimension) by explicitly optimizing over an $n\times n$ psd matrix $X$ whose diagonal entries are~1.
This $X$ is the Gram matrix of the vectors $v_1,\ldots,v_n$, so $X_{ij}=v_i^T v_j$.
The resulting SDP is
\begin{align*}
  \max \quad &\frac{1}{2}\sum_{\{i,j\}\in E}(1-X_{ij}) \\
  \text{s.t.}\ \ \ &\Tr(E_{jj} X) =1 \quad \text{ for all } j \in [n], \\
             &X \succeq 0,
\end{align*}
where $E_{jj}$ is the $n\times n$ matrix that has a~1 at the $(j,j)$-entry, and 0s elsewhere.

This SDP is a relaxation of a maximization problem, so it may overshoot the correct value, but Goemans and Williamson showed that an optimal solution to the SDP can be rounded to a cut in $G$ whose size is within a factor $\approx 0.878$ of MAXCUT$(G)$ (which is optimal under the Unique Games Conjecture~\cite{khot2007OptimalInapxMAXCUT}). This SDP can be massaged into the form of~\eqref{eq:SDP} by replacing the equality $\Tr(E_{jj} X)=1$ by inequality $\Tr(E_{jj} X)\leq 1$ (so $m=n$) and letting $C$ be a properly normalized version of the Laplacian of~$G$.

\subsection{Classical solvers for LPs and SDPs}

Ever since Dantzig's development of the simplex algorithm for solving LPs in the 1940s~\cite{dantzig1947SimplexAlgorithm}, much work has gone into finding faster solvers, first for LPs and then also for SDPs. The simplex algorithm for LPs (with some reasonable pivot rule) is usually fast in practice, but has worst-case exponential runtime. The ellipsoid method and interior-point methods can solve LPs and SDPs in polynomial time; they will typically \emph{approximate} the optimal value to arbitrary precision~\cite{grotschel1981EllipsoidMethod,nesterov1994InteriorPointAlgs}.
The best known general SDP-solvers~\cite{lee2015FasterCuttingPlaneConvexOpt} approximate the optimal value \opt\ of such an SDP up to additive error $\eps$, with complexity
$$
{\mathcal O}(m(m^2+n^\omega + mns)\, \text{polylog}(m,n,R,1/\eps)),
$$ 
where $\omega\in[2,2.373)$ is the (still unknown) optimal exponent for matrix multiplication; $s$ is the \emph{sparsity}: the maximal number of non-zero entries per row of the input matrices; and $R$ is an upper bound on the trace of an optimal~$X$.\footnote{See Lee, Sidford, and Wong~\cite[Section~10.2 of arXiv version~2]{lee2015FasterCuttingPlaneConvexOpt}, and note that our $m,n$ are their $n,m$, their $S$ is our $mns$, and their $M$ is our~$R$. The bounds for other SDP-solvers that we state later also include another parameter~$r$; it follows from the assumptions of~\cite[Theorem~45 of arXiv version~2]{lee2015FasterCuttingPlaneConvexOpt} that in their setting $r\leq mR$, and hence $r$ is absorbed in the $\log^{O(1)}(mnR/\eps)$ factor.}
The assumption here is that the rows and columns of the matrices of SDP~\eqref{eq:SDP} can be accessed as adjacency lists: we can query, say, the $\ell$th non-zero entry of the $k$th row of matrix $A_j$ in constant time.

Arora and Kale~\cite{arora2016CombPrimDualSDP} gave an alternative way to approximate \opt, using a matrix version of the ``multiplicative weights update'' method.\footnote{See also~\cite{arora2012MultiplicativeWeightsAlg} for a subsequent survey; the same algorithm was independently discovered around the same time in the context of learning theory~\cite{tsuda2005MatrixExpGradUpdates,warmuth2012OnlineVarMinimiz}. In the optimization community, first-order methods for semidefinite programming have been considered for instance in~\cite{renegar2016EfficientSubgradientMethods,renegar2015AcceleratedFirstOrderMethods}.}
In Section~\ref{sec:classicalAK} we will describe their framework in more detail, but in order to describe our result we will start with an overly simplified sketch here.
The algorithm goes back and forth between candidate solutions to the primal SDP and to the corresponding \emph{dual} SDP, whose variables are non-negative reals $y_1,\ldots,y_m$:
\begin{align} \label{eq:SDP2}
  \min \quad & b^T y \\ \notag
  \text{s.t.}\ \ \ &\sum_{j=1}^m y_j A_j - C \succeq 0,\\ \notag
             &y \geq 0.
\end{align}
Under assumptions that will be satisfied everywhere in this paper, strong duality applies:
the primal SDP~\eqref{eq:SDP} and dual SDP~\eqref{eq:SDP2} will have the same optimal value \opt.
The algorithm does a binary search for \opt\ by trying different guesses $\alpha$ for it. Suppose we have fixed some $\alpha$, and want to find out whether $\alpha$ is bigger or smaller than $\opt$.  Start with some candidate solution $X^{(1)}$ for the primal, for example a multiple of the identity matrix ($X^{(1)}$ has to be psd but need not be a \emph{feasible} solution to the primal). This $X^{(1)}$ induces the following polytope:
\begin{align*}
  \mathcal{P}_{\eps}(X^{(1)}) := \{ y \in \mathbb R^m:\ &b^T y  \leq \alpha, \\
                                                        &\tr{\Big(\sum_{j=1}^m y_j A_j - C\Big) X^{(1)}} \geq -\eps,\\
                                                        & y \geq 0 \}.
\end{align*}
This polytope can be thought of as a relaxation of the feasible region of the dual SDP with the extra constraint that $\opt\leq\alpha$: instead of requiring that $\sum_j y_j A_j - C$ is psd, we merely require that its inner product with the particular psd matrix $X^{(1)}$ is not too negative.
The algorithm then calls an ``oracle'' that provides a $y^{(1)}\in \mathcal{P}_{\eps}(X^{(1)})$, or outputs ``fail'' if $\mathcal{P}_{0}(X^{(1)})$ is empty (how to efficiently implement such an oracle depends on the application).
In the ``fail'' case we know there is no dual-feasible $y$ with objective value $\leq\alpha$,
so we can increase our guess $\alpha$ for \opt, and restart.
In case the oracle produced a $y^{(1)}$, this is used to define a Hermitian matrix $H^{(1)}$ and a new candidate solution $X^{(2)}$ for the primal, which is proportional to $e^{-H^{(1)}}$.
Then the oracle for the polytope $\mathcal{P}_{\eps}(X^{(2)})$ induced by this $X^{(2)}$ is called to produce a candidate $y^{(2)}\in\mathcal{P}_{\eps}(X^{(2)})$ for the dual (or ``fail''), this is used to define $H^{(2)}$ and $X^{(3)}$ proportional to $e^{-H^{(2)}}$, and so on.

Surprisingly, the average of the dual candidates $y^{(1)},y^{(2)},\ldots$ converges to a nearly-dual-feasible solution.
Let $R$ be an upper bound on the trace of an optimal $X$ of the primal, $r$ be an upper bound on the sum of entries of an optimal $y$ for the dual, and $w^*$ be the ``width'' of the oracle for a certain SDP: the maximum of $\nrm{\sum_{j=1}^m y_j A_j - C}$ over all psd matrices $X$ and all vectors $y$ that the oracle may output for the corresponding polytope~$\mathcal{P}_{\eps}(X)$. In general we will not know the width of an oracle exactly, but only an upper bound $w \geq w^*$, that may depend on the SDP; this is, however, enough for the Arora-Kale framework. In Section~\ref{sec:classicalAK} we will show that without loss of generality we can assume the oracle returns a $y$ such that $\nrm{y}_1 \leq r$. Because we assumed $\nrm{A_j},\nrm{C} \leq 1$, we have $w^* \leq r+1$ as an easy width-bound.
General properties of the multiplicative weights update method guarantee that after $T=\widetilde{\mathcal O}(w^2R^2/\eps^2)$ iterations\footnote{The $\widetilde{\mathcal O}(\cdot)$ notation hides polylogarithmic factors in all parameters.}, if no oracle call yielded ``fail'', then the vector $\frac{1}{T}\sum_{t=1}^T y^{(t)}$ is close to dual-feasible and satisfies $b^T y\leq\alpha$. This vector can then be turned into a dual-feasible solution by tweaking its first coordinate, certifying that $\opt\leq\alpha+\eps$, and we can decrease our guess $\alpha$ for $\opt$ accordingly.

The framework of Arora and Kale is really a meta-algorithm, because it does not specify how to implement the oracle.  They themselves provide oracles that are optimized for special cases, which allows them to give a very low width-bound for these specific SDPs. For example for the MAXCUT SDP, they obtain a solver with near-linear runtime in the number of edges of the graph. They also observed that the algorithm can be made more efficient by not explicitly calculating the matrix $X^{(t)}$ in each iteration: the algorithm can still be made to work if instead of providing the oracle with $X^{(t)}$, we feed it good estimates of $\Tr(A_j X^{(t)})$ and $\Tr(C X^{(t)})$.  Arora and Kale do not describe oracles for general SDPs, but as we show at the end of Section~\ref{sec:runtime} (using Appendix~\ref{app:trace} to estimate $\Tr(A_j X^{(t)})$ and $\Tr(C X^{(t)})$), one can get a general classical SDP-solver in their framework with complexity
\begin{equation}\label{eq:AKgeneralupperbound}
  \bOt{nms\left(\frac{Rr}{\eps}\right)^{\!\!4}+ns\left(\frac{Rr}{\eps}\right)^{\!\!7\,}}.
\end{equation}
Compared to the complexity of the SDP-solver of~\cite{lee2015FasterCuttingPlaneConvexOpt}, this has much worse dependence on $R$ and~$\eps$, but better dependence on $m$ and $n$. Using the Arora-Kale framework is thus preferable over standard SDP-solvers for the case where $Rr$ is small compared to $mn$, and a rough approximation to \opt\ (say, small constant $\eps$) is good enough.
It should be noted that for many specific cases, Arora and Kale get significantly better upper bounds than~\eqref{eq:AKgeneralupperbound} by designing oracles that are specifically optimized for those cases.

\subsection{Quantum SDP-solvers: the Brand\~ao-Svore algorithm}

Given the speed-ups that \emph{quantum} computers give over classical computers for various problems~\cite{shor1994Factoring,grover1996QSearch,durr2004QQueryCompGraph,ambainis2004QWalkForElementDist,harrow2009QLinSysSolver}, it is natural to ask whether quantum computers can solve LPs and SDPs more efficiently as well. Very little was known about this, until recently when Brand\~ao and Svore~\cite{brandao2016QSDPSpeedup} discovered quantum algorithms that significantly outperform classical SDP-solvers in certain regimes.  Because of the general importance of quickly solving LPs and SDPs, and the limited number of quantum algorithms that have been found so far, this is a very interesting development.

The key idea of the Brand\~ao-Svore algorithm is to take the Arora-Kale approach and to replace two of its steps by more efficient quantum subroutines. First, given a vector $y^{(t-1)}$, it turns out one can use ``Gibbs sampling'' to prepare the new primal candidate $X^{(t)}\propto e^{-H^{(t-1)}}$ \emph{as a $\log(n)$-qubit quantum state $\rho^{(t)}:=X^{(t)}/\Tr(X^{(t)})$} in much less time than needed to compute $X^{(t)}$ as an $n\times n$ matrix.
Second, one can efficiently implement the oracle for $\mathcal{P}_{\eps}(X^{(t)})$ based on a number of copies of $\rho^{(t)}$, using those copies to estimate $\Tr(A_j \rho^{(t)})$ and $\Tr(A_j X^{(t)})$ when needed (note that $\Tr(A\rho)$ is the expectation value of operator~$A$ for the quantum state~$\rho$). This is based on something called ``Jaynes's principle.'' The resulting oracle is weaker than what is used classically, in the sense that it outputs a sample $j\sim y_j/\nrm{y}_1$ rather than the whole vector~$y$. However, such sampling still suffices to make the  algorithm work (it also means we can assume the vector $y^{(t)}$ to be quite sparse).

Using these ideas, Brand\~ao and Svore obtain a quantum SDP-solver of complexity
$$
\widetilde{\mathcal O}(\sqrt{mn}s^2 R^{32} /\delta^{18}),
$$
with \emph{multiplicative} error $1\pm\delta$ for the special case where $b_j\geq 1$ for all $j\in[m]$, and $\opt\geq 1$ (the latter assumption allows them to convert additive error~$\eps$ to multiplicative error~$\delta$)~\cite[Corollary~5 in arXiv version 4]{brandao2016QSDPSpeedup}. They describe a reduction to transform a general SDP of the form~\eqref{eq:SDP} to this special case, but that reduction significantly worsens the dependence of the complexity on the parameters $R$, $r$, and $\delta$.

Note that compared to the runtime~\eqref{eq:AKgeneralupperbound} of our general instantiation of the original Arora-Kale framework, there are quadratic improvements in both $m$ and $n$, corresponding to the two quantum modifications made to Arora-Kale.  However, the dependence on $R,r,s$ and $1/\eps$ is much worse now than in~\eqref{eq:AKgeneralupperbound}. This quantum algorithm thus provides a speed-up only in regimes where $R,r,s,1/\eps$ are fairly small compared to $mn$ (finding good examples of SDPs in such regimes is an open problem).

\subsection{Our results}

In this paper we present two sets of results: improvements to the Brand\~ao-Svore algorithm, and better lower bounds for the complexity of quantum LP-solvers (and hence for quantum SDP-solvers as well).

\subsubsection{Improved quantum SDP-solver}
Our quantum SDP-solver, like the Brand\~ao-Svore algorithm, works by quantizing some aspects of the Arora-Kale algorithm. However, the way we quantize is different and faster than theirs.

First, we give a more efficient procedure to estimate the quantities $\Tr(A_j\rho^{(t)})$ required by the oracle. Instead of first preparing some copies of Gibbs state $\rho^{(t)}\propto e^{-H^{(t-1)}}$ as a mixed state, we coherently prepare a purification of $\rho^{(t)}$, which can then be used to estimate $\Tr(A_j\rho^{(t)})$ more efficiently using amplitude-estimation techniques.
Also, our purified Gibbs sampler has logarithmic dependence on the error, which is exponentially better than the Gibbs sampler of Poulin and Wocjan~\cite{poulin2009GibbsSamplingAndEval} that Brand\~ao and Svore invoke. Chowdhury and Somma~\cite{chowdhury2016QGibbsSampling} also gave a Gibbs sampler with logarithmic error-dependence, but assuming query access to the entries of $\sqrt{H}$ rather than $H$ itself.

Second, we have a different implementation of the oracle, without using Gibbs sampling or Jaynes's principle (though, as mentioned above, we still use purified Gibbs sampling for approximating the $\Tr(A\rho)$ quantities). We observe that the vector $y^{(t)}$ can be made very sparse: \emph{two} non-zero entries suffice.\footnote{Independently of us, Ben-David, Eldar, Garg, Kothari, Natarajan, and Wright (at MIT), and separately Ambainis observed that in the special case where all $b_j$ are at least~1, the oracle can even be made 1-sparse, and the one entry can be found using one Grover search over $m$ points (in both cases personal communication 2017). The same happens implicitly in our Section~\ref{sec:oracle} in this case. However, in general 2 non-zero entries are necessary in~$y$.}
We then show how we can efficiently find such a 2-sparse vector (rather than merely sampling from it) using two applications of a new generalization of the well-known quantum minimum-finding algorithm of D{\"u}rr and H{\o}yer~\cite{durr1996QMinimumFinding}, which is based on Grover search~\cite{grover1996QSearch}.

These modifications both simplify and speed up the quantum SDP-solver, resulting in complexity
$$
\widetilde{\mathcal O}(\sqrt{mn}s^2 (Rr/\eps)^{8}).
$$
The dependence on $m$, $n$, and $s$ is the same as in Brand\~ao-Svore, but our dependence on $R$, $r$, and $1/\eps$ is substantially better.
Note that each of the three parameters $R$, $r$, and $1/\eps$ now occurs with the same 8th power in the complexity. This is no coincidence: as we show in Appendix~\ref{app:reductions}, these three parameters can all be traded for one another, in the sense that we can massage the SDP to make each one of them small at the expense of making the others proportionally bigger.
These trade-offs suggest we should actually think of $Rr/\eps$ as \emph{one} parameter of the primal-dual pair of SDPs, not three separate parameters. For the special case of LPs, we can improve the runtime to
$$
\widetilde{\mathcal O}(\sqrt{mn} (Rr/\eps)^{5}).
$$

Like in Brand\~ao-Svore, our quantum oracle produces very sparse vectors $y$, in our case even of sparsity~2. This means that after $T$ iterations, the final $\eps$-optimal dual-feasible vector (which is a slightly tweaked version of the average of the $T$ $y$-vectors produced in the $T$ iterations) has only $\mathcal O(T)$ non-zero entries.  Such sparse vectors have some advantages, for example they take much less space to store than arbitrary $y\in\mathbb{R}^m$.
In fact, to get a sublinear runtime in terms of~$m$, this is necessary.
However, this sparsity of the algorithm's output also points to a weakness of these methods: if \emph{every} $\eps$-optimal dual-feasible vector $y$ has many non-zero entries, then the number of iterations needs to be large.  For example, if every $\eps$-optimal dual-feasible vector $y$ has $\Omega(m)$ non-zero entries, then these methods require $T=\Omega(m)$ iterations before they can reach an $\eps$-optimal dual-feasible vector. Since $T = \bigO\left(\frac{R^2r^2}{\eps^2} \ln(n)\right)$ this would imply that $\frac{Rr}{\eps} = \Omega(\sqrt{m/\ln(n)})$, and hence many classical SDP-solvers would have a better complexity than our quantum SDP-solver.
As we show in Section~\ref{sec:downside}, this will actually be the case for families of SDPs that have a lot of symmetry.
\subsubsection{Tools that may be of more general interest}
Along the way to our improved SDP-solver, we developed some new techniques that may be of independent interest. These are mostly tucked away in appendices, but here we will highlight two.

\paragraph{Implementing smooth functions of a given Hamiltonian.}
Let $f:\mathbb{R}\to\mathbb{C}$ be a function.
In Appendix~\ref{apx:LowWeight} we describe a general technique to apply a function $f(H)$ of a sparse Hamiltonian~$H$ to a given state $\ket{\phi}$.\footnote{Here a univariate function $f:\mathbb{R}\to\mathbb{C}$ is applied to a Hermitian matrix $H$ in the standard way, by acting on the eigenvalues of $H$: if $H$ has diagonalization $H=U^{-1}DU$, with $D$ the diagonal matrix of $H$'s eigenvalues, then $f(H)$ is the matrix $U^{-1}f(D)U$, where the diagonal matrix $f(D)$ is obtained from $D$ by applying $f$ to its diagonal entries.}
Roughly speaking, what applying $f(H)$ to $\ket{\phi}$ means, is that we want a unitary circuit that maps $\ket{0}\ket{\phi}$ to $\ket{0}f(H)\ket{\phi}+\ket{1}\ket{*}$.
If need be, we can then combine this with amplitude amplification to boost the $\ket{0}f(H)\ket{\phi}$ part of the state.
If the function~$f$ can be approximated well by a low-degree Fourier series, then our preparation will be efficient
in the sense of using few queries to $H$ and few other gates.
The novelty of our approach is that we construct a good Fourier series from the polynomial that approximates~$f$ (for example a truncated Taylor series for~$f$). Our Theorem~\ref{thm:Taylor} can be easily applied to various smooth functions without using involved integral approximations, unlike previous works building on similar techniques. Our most general result, Corollary~\ref{cor:patched}, only requires that the function $f$ can be nicely approximated locally around each possible eigenvalue of $H$, improving on Theorem~\ref{thm:Taylor}.

In this paper we mostly care about the function $f(x)=e^{-x}$, which is what we want for generating a purification of the Gibbs state corresponding to~$H$; and the function $f(x)=\sqrt{x}$, which is what we use for estimating quantities like $\Tr(A\rho)$.
However, our techniques apply much more generally than to these two functions. For example, they also simplify the analysis of the improved linear-systems solver of Childs et al.~\cite{childs2015QLinSysExpPrec}, where the relevant function is $f(x)=1/x$. As in their work, the Linear Combination of Unitaries technique of Childs et al.~\cite{childs2012HamSimLCU,berry2014HamSimTaylor,berry2015HamSimNearlyOpt} is a crucial tool for us.

\paragraph{A generalized minimum-finding algorithm.}
D{\"u}rr and H{\o}yer~\cite{durr1996QMinimumFinding} showed how to find the minimal value of a function $f:[N]\to\mathbb{R}$ using $\bigO(\sqrt{N})$ queries to~$f$,
by repeatedly using Grover search to find smaller and smaller elements of the range of~$f$. In Appendix~\ref{app:genMinFind} we describe a more general
minimum-finding procedure. Suppose we have a unitary $U$ which prepares a quantum state $U\ket{0}=\sum_{k=1}^{N}\ket{\psi_k}\ket{x_k}$, where the $\ket{\psi_k}$ are unnormalized states. Our procedure can find the minimum value $x_{k^*}$ among the $x_k$'s that have support in the second register, using roughly $\bigO(1/\nrm{\psi_{k^*}})$ applications of $U$ and~$U^{-1}$. Also, upon finding the minimal value $k^*$ the procedure actually outputs the normalized state proportional to $\ket{\psi_{k^*}}\ket{x_{k^*}}$. This immediately gives the D{\"u}rr-H{\o}yer result as a special case, if we take $U$ to produce $U\ket{0}=\frac{1}{\sqrt{N}}\sum_{k=1}^{N}\ket{k}\ket{f(k)}$ using one query to~$f$. Unlike D{\"u}rr-H{\o}yer, we need not assume direct query access to the individual values $f(k)$.

More interestingly for us, for a given $n$-dimensional Hamiltonian~$H$, if we combine our minimum-finder with phase estimation using unitary $U=e^{iH}$ on one half of a maximally entangled state, then we obtain an algorithm for estimating the smallest eigenvalue of~$H$ (and preparing its ground state) using roughly $\bigO(\sqrt{n})$ applications of phase estimation with~$U$. A similar result on approximating the smallest eigenvalue of a Hamiltonian was already shown by Poulin and Wocjan~\cite{poulin2009PrepGndStateManyBody}, but we improve on their analysis to be able to apply it as a subroutine in our procedure to estimate~$\Tr(A_j\rho)$.

\subsubsection{Lower bounds}
What about lower bounds for quantum SDP-solvers? Brand\~ao and Svore already proved that a quantum SDP-solver has to make $\Omega(\sqrt{n}+\sqrt{m})$ queries to the input matrices, for some SDPs. Their lower bound is for a family of SDPs where $s,R,r,1/\eps$ are all constant, and is by reduction from a search problem.

In this paper we prove lower bounds that are quantitatively stronger in $m$ and~$n$, but for SDPs with non-constant $R$ and $r$.
The key idea is to consider a Boolean function $F$ on $N=abc$ input bits that is the composition of an $a$-bit majority function with a $b$-bit OR function with a $c$-bit majority function. The known quantum query complexities of majority and OR, combined with composition properties of the adversary lower bound, imply that every quantum algorithm that computes this function requires $\Omega(a\sqrt{b}\,c)$ queries.  We define a family of LPs with non-constant $R$ and $r$ such that a constant-error approximation of \opt\ computes~$F$.
Choosing $a$, $b$, and $c$ appropriately, this implies a lower bound of
$$
\Omega\left(\sqrt{\max\{n,m\}} \left( \min\{n,m\} \right)^{3/2} \right)
$$
queries to the entries of the input matrices for quantum LP-solvers. Since LPs are SDPs with sparsity $s=1$, we get the same lower bound for quantum SDP-solvers. If $m$ and $n$ are of the same order, this lower bound is $\Omega(mn)$, the same scaling with $mn$ as the classical general instantiation of Arora-Kale~\eqref{eq:AKgeneralupperbound}. In particular, this shows that we cannot have an $O(\sqrt{mn})$ upper bound without simultaneously having polynomial dependence on~$Rr/\eps$.
The construction of our lower bound implies that for the case $m\approx n$, this polynomial dependence has to be at least $(Rr/\eps)^{1/4}$.

\paragraph{Subsequent work.}
Following the first version of our paper, improvements in the runtime of quantum SDP-solvers were obtained in~\cite{brandao2017QSDPSpeedupsLearning,apeldoorn2018ImprovedQSDPSolving}, the latter providing a runtime of $\bOt{(\sqrt{m}+\sqrt{n} \frac{Rr}{\epsilon}) s \left(\frac{Rr}{\epsilon}\right)^4}$. 
For the special case of LP-solving, where $s=1$, \cite{apeldoorn2019QAlgorithmsForZeroSumGames} further improved the runtime to $\bOt{(\sqrt{m}+\sqrt{n})\left(\frac{Rr}{\epsilon}\right)^3}$.

In a different algorithmic direction, Kerenidis and Prakash~\cite{kerenidis2018QIntPoint} recently obtained a quantum interior-point method for solving SDPs and LPs. It is hard to compare the latter algorithm to the other SDP-solvers for two reasons. First, the output of their algorithm consists only of almost-feasible solutions to the primal and dual (their algorithm has a polynomial dependence on the distance to feasibility). It is therefore not clear what their output means for the optimal value of the SDPs. Secondly, the runtime of their algorithm depends polynomially on the condition number of the matrices that the interior point method encounters, and no explicit bounds for these condition numbers are given.

Our results on implementing smooth functions of a given Hamiltonian have been extended to more general input models (block-encodings) in~\cite{gilyen2018QSingValTransf}. This recent paper builds on some of our techniques, but achieves slightly improved complexities by directly implementing the transformations without using Hamiltonian simulation as a subroutine.

Recently van Apeldoorn et al.~\cite{apeldoorn2018ConvexOptUsingQuantumOracles} and Chakrabarti et al.~\cite{chakrabarti2018QuantumConvexOpt} developed quantum algorithms for general black-box convex optimization, where one optimizes over a general convex set~$K$, and the access to $K$ is via membership and/or separation oracles. Since we work in a model where we are given access directly to the constraints defining the problem, our results are incomparable to theirs.

\paragraph{Organization.}
The paper is organized as follows.  In Section~\ref{sec:upperbounds} we start with a description of the Arora-Kale framework for SDP-solvers, and then we describe how to quantize different aspects of it to obtain a quantum SDP-solver with better dependence on $R$, $r$, and $1/\eps$ (or rather, on $Rr/\eps$) than Brand\~ao and Svore got.  In Section~\ref{sec:downside} we describe the limitations of primal-dual SDP-solvers using general oracles (not optimized for specific SDPs) that produce sparse dual solutions~$y$: if good solutions are dense, this puts a lower bound on the number of iterations needed.  In Section~\ref{sec:lowerbounds} we give our lower bounds.  A number of the proofs are relegated to the appendices:
how to classically approximate $\Tr(A_j\rho)$ (Appendix~\ref{app:trace}),
how to efficiently implement smooth functions of Hamiltonians on a quantum computer (Appendix~\ref{apx:LowWeight}),
our generalized method for minimum-finding (Appendix~\ref{app:genMinFind}),
upper and lower bounds on how efficiently we can query entries of sums of sparse matrices (Appendix~\ref{app:sparsematrixsum}),
and how to trade off the parameters $R$, $r$, and $1/\eps$ against each other (Appendix~\ref{app:reductions}).

\section{An improved quantum SDP-solver}
\label{sec:upperbounds}

Here we describe our quantum SDP-solver. In Section~\ref{sec:classicalAK} we describe the framework designed by Arora and Kale for solving semidefinite programs. As in the recent work by Brand\~ao and Svore, we use this framework to design an efficient quantum algorithm for solving SDPs. In particular, we show that the key subroutine needed in the Arora-Kale framework can be implemented efficiently on a quantum computer. Our implementation uses different techniques than the quantum algorithm of Brand\~ao and Svore, allowing us to obtain a faster algorithm. The techniques required for this subroutine are developed in Sections~\ref{sec:trCalc} and~\ref{sec:oracle}. In Section~\ref{sec:runtime} we put everything together to prove the main theorem of this section (the notation is explained below):

\begin{restatable}{theorem}{upperbound}
  \label{thm:upperbound}
  Instantiating Meta-Algorithm~\ref{alg:AKSDP} using the trace calculation algorithm from Section~\ref{sec:trCalc} and the oracle from Section~\ref{sec:oracle} (with width-bound $w:=r+1$), and using this to do a binary search for $\opt\in[-R,R]$ (using different guesses $\alpha$ for $\opt$), gives a quantum algorithm for solving SDPs of the form~\eqref{eq:SDP}, which (with high probability) produces a feasible solution $y$ to the dual program which is optimal up to an additive error $\eps$, and uses
  \[
    \bOt{\sqrt{nm} s^2\left( \frac{Rr}{\eps}\right)^{\!\!8\,} }
  \]
  queries to the input matrices and the same order of other gates.
\end{restatable}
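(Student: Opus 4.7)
The plan is to combine the Arora-Kale convergence guarantee for Meta-Algorithm~\ref{alg:AKSDP} with explicit quantum per-iteration costs coming from the trace calculation of Section~\ref{sec:trCalc} and the 2-sparse oracle of Section~\ref{sec:oracle}. With width bound $w = r+1$, the classical analysis recalled in Section~\ref{sec:classicalAK} guarantees that $T = \bOt{(Rr/\eps)^{2}}$ iterations suffice: after $T$ rounds, a small tweak of $\tfrac{1}{T}\sum_t y^{(t)}$ either yields a dual-feasible $y$ with $b^T y \le \alpha + \eps$, or the oracle reported ``fail'' so that the guess~$\alpha$ is a strict lower bound on $\opt$. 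A binary search for $\opt\in[-R,R]$ to precision $\eps$ adds only an $\bOt{\log(R/\eps)}$ factor, absorbed into $\bOt{\cdot}$, so it suffices to bound the cost of a single iteration.

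Next I would bound one oracle call. Our oracle from Section~\ref{sec:oracle} produces a 2-sparse $y^{(t)} \in \mathcal{P}_\eps(X^{(t)})$ via two invocations of the generalized minimum-finding routine of Appendix~\ref{app:genMinFind} over $j\in[m]$, each costing $\bOt{\sqrt{m}}$ calls to a subroutine that estimates $\Tr(A_j \rho^{(t)})$ and $\Tr(C\rho^{(t)})$ to additive precision $\delta=\Theta(\eps/(Rr))$; this precision is exactly what is needed for the approximate traces to decide membership in $\mathcal{P}_\eps(X^{(t)})$ and for errors not to accumulate destructively over $T$ iterations. Each trace estimate is produced by first preparing a purification of the Gibbs state $\rho^{(t)}\propto e^{-H^{(t-1)}}$ via the smooth-function framework of Appendix~\ref{apx:LowWeight} applied to $f(x)=e^{-x}$, and then reading out the trace by amplitude estimation. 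Since $H^{(t-1)}$ is a non-negative combination of $C$ and the $A_j$'s whose total weight is controlled by $T$ and the width, after rescaling its operator norm is $\bOt{Rr/\eps}$; its entries are accessed through the sum-of-sparse-matrices query construction of Appendix~\ref{app:sparsematrixsum}, contributing one factor of~$s$. Sparse Hamiltonian simulation then costs $\bOt{\sqrt{n}\, s\, \|H^{(t-1)}\|}$ per preparation, amplitude estimation to precision $\delta$ multiplies by $1/\delta=Rr/\eps$, and the query to $A_j$ contributes the second factor of~$s$.

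Multiplying the iteration count $T$, the minimum-finding overhead $\sqrt{m}$, and the per-trace cost, and collecting the powers of $Rr/\eps$ arising from $T$, from $\|H^{(t-1)}\|$, and from $1/\delta$, gives the claimed $\bOt{\sqrt{nm}\,s^{2}(Rr/\eps)^{8}}$ bound on queries and on other gates. The main obstacle is the careful error and probability bookkeeping. The algorithm makes $\bOt{T\sqrt{m}}$ randomized quantum subroutine calls, so each must be amplified to inverse-polynomial failure probability so that a union bound yields the overall ``high probability'' guarantee at only polylogarithmic cost. More delicately, one must verify that $\delta$-approximate traces still allow the generalized minimum-finder to return a valid element of a slightly relaxed $\mathcal{P}_{O(\eps)}(X^{(t)})$, and that the resulting noisy oracle still satisfies the robustness hypotheses of the Arora-Kale convergence theorem after absorbing an extra constant into~$\eps$. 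Simultaneously matching the width, the trace precision $\delta$, the Gibbs norm bound, and the amplitude-estimation accuracy so that the exponent of $Rr/\eps$ comes out to exactly $8$ is the main quantitative task, and it is carried out in Section~\ref{sec:runtime}.
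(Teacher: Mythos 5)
Your overall route is the same as the paper's: bound $T=\bOt{(Rr/\eps)^2}$ via Arora--Kale with $w=r+1$, pay $\bOt{\sqrt{m}}$ trace estimations per iteration through generalized minimum-finding, estimate each $\Tr(A_j\rho^{(t)})$ to precision $\theta=\Theta(\eps/(Rr))$ via a purified Gibbs state plus amplitude estimation, and multiply. However, your quantitative accounting has a genuine gap: the three sources of $Rr/\eps$ you name ($T$, $\nrm{\eta H^{(t)}}$, and $1/\theta$) contribute only $2+1+1=4$ powers, not $8$. The per-preparation cost you quote, $\bOt{\sqrt{n}\,s\,\nrm{H^{(t-1)}}}$, is wrong in two places, and these are exactly where the other four powers come from. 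First, the Hamiltonian $H^{(t)}$ is a sum of $\bigO(t)$ input matrices, so its sparsity is $d=\bOt{st}=\bOt{sT}$, not $s$; the Hamiltonian-simulation cost scales with $d$, contributing an extra factor $T=(Rr/\eps)^2$. Second, a \emph{single} sparse-access query to $H^{(t)}$ must be synthesized from the input oracles via the data structure of Appendix~\ref{app:sparsematrixsum} at a cost of $\bOt{st}$ input queries (this is also where the second factor of $s$ really comes from, rather than from querying $A_j$), contributing another factor $T=(Rr/\eps)^2$. With these included, one iteration costs $\bOt{\sqrt{nm}\,s^2\,\eta t^3\,Rr/\eps}$ input queries, and summing $\sum_{t=1}^T \eta t^3=\bOt{\eta T^4}=\bOt{T^{7/2}}=\bOt{(Rr/\eps)^7}$ and multiplying by $1/\theta$ recovers the exponent $8$.

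Put differently: as written, your bookkeeping would prove the stronger bound $\bOt{\sqrt{nm}\,s^2(Rr/\eps)^4}$, which the described algorithm does not achieve precisely because the sparsity of $H^{(t)}$ (and hence both the simulation cost and the cost of each query to it) grows linearly in the iteration index. The rest of your outline (binary search overhead, error propagation from $\theta$-approximate traces into $\mathcal P_{4Rr\theta}(X)=\mathcal P_{\eps/3}(X)$, and boosting subroutine success probabilities for a union bound) matches the paper's argument.
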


\paragraph{Notation/Assumptions.}
We use $\log$ to denote the logarithm in base~$2$.
We denote the all-zero matrix and vector by~$0$ and the all-one vector by~$\mathbf{1}$.
Throughout we assume each element of the input matrices can be represented by a bitstring of size $\text{poly}(\log n,\log m)$.  We use $s$ as the sparsity of the input matrices, that is, the maximum number of non-zero entries in a row (or column) of any of the matrices $C, A_1,\ldots, A_m$ is $s$. Recall that for normalization purposes we assume $\nrm{A_1}, \ldots, \nrm{A_m}, \nrm{C} \leq 1$. We furthermore assume that $A_1 = I$ and $b_1 = R$, that is, the trace of primal-feasible solutions is bounded by $R$ (and hence also the trace of primal-optimal solutions is bounded by $R$).
The analogous quantity for the dual SDP~\eqref{eq:SDP2}, an upper bound on $\sum_{j=1}^m y_j$ for an optimal dual solution~$y$, will be denoted by~$r$. However, we do not add the constraint $\sum_{j=1}^m y_j \leq r$ to the dual.
We will assume $r\geq 1$. For $r$ to be well-defined we have to make the explicit assumption that the optimal solution in the dual is attained.
In Section~\ref{sec:downside} it will be necessary to work with the best possible upper bounds: we let $R^*$ be the smallest trace of an optimal solution to SDP~\eqref{eq:SDP}, and we let $r^*$ be the smallest $\ell_1$-norm of an optimal solution to the dual.  These quantities are well-defined; indeed, both the primal and dual optimum are attained: the dual optimum is attained by assumption, and due to the assumption $A_1 = I$, the dual SDP is strictly feasible, which means that the optimum in~\eqref{eq:SDP} is attained.

Unless specified otherwise, we always consider \emph{additive} error. In particular, an $\eps$-optimal solution to an SDP will be a feasible solution whose objective value is within additive error~$\eps$ of the optimum.

\paragraph{Input oracles:} We assume sparse black-box access to the elements of matrices $C, A_1,\ldots, A_m$ defined in the following way: for input $(j, k,\ell) \in (\{0\}\cup[m]) \times [n] \times [s]$ we can query the location and value of the $\ell$th non-zero entry in the $k$th row of the matrix $A_j$ (where $j=0$ would indicate the $C$ matrix).

Specifically in the quantum case, as described in~\cite{berry2015HamSimNearlyOpt}, we assume access to an oracle $O_I$ which serves the purpose of sparse access. $O_I$ calculates the $\text{index}_{A_j}: [n] \times [s] \to [n]$ function, which for input $(k,\ell)$ gives the column index of the $\ell$th non-zero element in the $k$th row of $A_j$. We assume this oracle computes the index ``in place":
\begin{equation}
  O_I\ket{j,k,\ell} = \ket{j,k,\text{index}_{A_j}(k,\ell)}.
  \label{eq:oracleind}
\end{equation}
(In the degenerate case where the  $k$th row has fewer than $\ell$ non-zero entries, $\text{index}_{A_j}(k,\ell)$ is defined to be $\ell$ together with some special symbol.) We also assume we can apply the inverse of~$O_I$.

We also need another oracle $O_M$, returning a bitstring representation of $(A_j)_{ki}$ for any $j \in \{0\}\cup[m]$ and $ k,i \in [n]$:
\begin{equation}
  O_M\ket{j,k,i,z} = \ket{j,k,i,z \oplus{(A_j)_{ki}}}.
  \label{eq:oraclemat}
\end{equation}
The slightly unusual ``in place'' definition of oracle $O_I$ is not too demanding. In particular, if instead we had an oracle that computed the non-zero entries of a row in order, then we could implement both $O_I$ and its inverse using $\log(s)$ queries (we can compute $\ell$ from $k$ and $\text{index}_{A_j}(k,\ell)$ using binary search)~\cite{berry2015HamSimNearlyOpt}.

\paragraph{Computational model:}
As our computational model, we assume a slight relaxation of the usual quantum circuit model: a classical control system that can run quantum subroutines.
We limit the classical control system so that its number of operations is at most a polylogarithmic factor bigger than the gate complexity of the quantum subroutines, i.e., if the quantum subroutines use $C$ gates, then the classical control system may not use more than $\bigO(C\,\text{polylog}(C))$ operations.

When we talk about gate complexity, we count the number of two-qubit quantum gates needed for implementation of the quantum subroutines.
Additionally, we assume for simplicity that there exists a unit-cost QRAM gate that allows us to store and retrieve qubits in a memory, by means of a swap of two registers indexed by another register:
\[
  QRAM : \ket{i,x,r_1,\dots,r_K} \mapsto \ket{i,r_i,r_1,\dots,r_{i-1},x,r_{i+1},\ldots,r_K},
\]
where the registers $r_1,\dots,r_K$ are only accessible through this gate. The QRAM gate can be seen as a quantum analogue of pointers in classical computing.
The only place where we need QRAM is in Appendix~\ref{app:sparsematrixsum}, for a data structure that allows efficient access to the non-zero entries of a sum of sparse matrices; for the special case of LP-solving it is not needed.

\subsection{The Arora-Kale framework for solving SDPs} \label{sec:classicalAK}

In this section we give a short introduction to the Arora-Kale framework for solving semidefinite programs. We refer to~\cite{arora2016CombPrimDualSDP,arora2012MultiplicativeWeightsAlg} for a more detailed description and omitted proofs.

The key building block is the Matrix Multiplicative Weights (MMW) algorithm introduced by Arora and Kale in~\cite{arora2016CombPrimDualSDP}. The MMW algorithm can be seen as a strategy for you in a game between you and an adversary. We first introduce the game.
There is a number of rounds $T$. In each round you present a density matrix $\rho$ to an adversary, the adversary replies with a loss matrix $M$ satisfying $-I \preceq M \preceq I$. After each round you have to pay $\tr{M \rho}$. Your objective is to pay as little as possible. The MMW algorithm is a strategy for you that allows you to lose not too much, in a sense that is made precise below.
In Algorithm~\ref{alg:MMW} we state the MMW algorithm, the following theorem shows the key property of the output of the algorithm.

\begin{algorithm}[ht]
  \begin{description}
  \item[Input] Parameter $\eta \leq 1$, number of rounds $T$.

  \item[Rules] In each round player $1$ (you) presents a density matrix $\rho$, player $2$ (the adversary) replies with a matrix $M$ satisfying $-I \preceq M \preceq I$.

  \item[Output] A sequence of symmetric $n \times n$ matrices $M^{(1)},\ldots, M^{(T)}$  s.t.\ $-I \preceq M^{(t)} \preceq I$, for $t \in [T]$ and a sequence of $n \times n$ psd matrices $\rho^{(1)},\ldots, \rho^{(T)}$ satisfying $\tr{\rho^{(t)}}=1$ for $t \in [T]$.

  \item[Strategy of player $1$:]
  \end{description}
  \begin{algorithmic}
    \State Take $\rho^{(1)} := I/n$
    \State In round $t$:
    \begin{enumerate}
    \item Show the density matrix $\rho^{(t)}$ to the adversary.
    \item Obtain the loss matrix $M^{(t)}$ from the adversary.
    \item Update the density matrix as follows:
      \[
        \rho^{(t+1)}:= \left.\exp\left(- \eta \sum_{\tau=1}^t M^{(\tau)}\right)\right/\tr{\exp\left(- \eta \sum_{\tau=1}^t M^{(\tau)}\right)}
      \]
    \end{enumerate}
  \end{algorithmic}
  \caption{Matrix Multiplicative Weights (MMW) Algorithm}
  \label{alg:MMW}
\end{algorithm}

\begin{theorem}[{\cite[Theorem~3.1]{arora2016CombPrimDualSDP}}]
  For every adversary, the sequence $\rho^{(1)}, \ldots, \rho^{(T)}$ of density matrices constructed using the Matrix Multiplicative Weights Algorithm~\eqref{alg:MMW} satisfies
  \[
    \sum_{t=1}^T \tr{M^{(t)} \rho^{(t)}} \leq \lambda_{\min}\left(\sum_{t=1}^T M^{(t)}\right) + \eta \sum_{t=1}^T \tr{ (M^{(t)})^2 \rho^{(t)}} + \frac{\ln(n)}{\eta}.
  \]
\end{theorem}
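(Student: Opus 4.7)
The plan is to follow the standard potential-function argument for Matrix Multiplicative Weights. I would introduce the log-partition-style potential
\[
    \Phi_t := \Tr\!\left(\exp\!\Big(-\eta \sum_{\tau=1}^{t} M^{(\tau)}\Big)\right),
\]
with $\Phi_0 = \Tr(I) = n$, and note that the normalized exponential appearing in the algorithm is exactly $\rho^{(t+1)} = \exp(-\eta \sum_{\tau\leq t} M^{(\tau)})/\Phi_t$. The strategy is then to sandwich $\Phi_T$ between a lower bound in terms of $\lambda_{\min}(\sum_t M^{(t)})$ and an upper bound obtained by iterating a per-round multiplicative estimate, and finally take logarithms.

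For the lower bound I would use that for any Hermitian matrix $A$, $\Tr(e^{-\eta A}) \geq e^{-\eta \lambda_{\min}(A)}$ (since one eigenvalue of $-\eta A$ equals $-\eta \lambda_{\min}(A)$ and the rest contribute nonnegatively), giving
\[
    \Phi_T \geq \exp\!\Big(-\eta\, \lambda_{\min}\Big(\sum_{t=1}^T M^{(t)}\Big)\Big).
\]
For the per-round upper bound I would invoke the Golden--Thompson inequality,
\[
    \Phi_t \leq \Tr\!\left(\exp\!\Big(-\eta \sum_{\tau=1}^{t-1} M^{(\tau)}\Big)\, e^{-\eta M^{(t)}}\right),
\]
and then replace $e^{-\eta M^{(t)}}$ by a quadratic upper bound. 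The key scalar inequality is that for $|x|\leq 1$ and $\eta \leq 1$, $e^{-\eta x} \leq 1 - \eta x + \eta^2 x^2$; lifting this through the spectral theorem (using $-I \preceq M^{(t)} \preceq I$) yields the operator inequality $e^{-\eta M^{(t)}} \preceq I - \eta M^{(t)} + \eta^2 (M^{(t)})^2$. Since the other factor is psd, taking the trace preserves the inequality, and dividing through by $\Phi_{t-1}$ gives
\[
    \Phi_t \leq \Phi_{t-1}\Big(1 - \eta\, \tr{M^{(t)}\rho^{(t)}} + \eta^2\, \tr{(M^{(t)})^2 \rho^{(t)}}\Big).
\]

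To finish, I would use $1+y \leq e^y$ on the parenthesized factor, telescope to obtain
\[
    \Phi_T \leq n\,\exp\!\left(-\eta \sum_{t=1}^T \tr{M^{(t)}\rho^{(t)}} + \eta^2 \sum_{t=1}^T \tr{(M^{(t)})^2 \rho^{(t)}}\right),
\]
combine with the lower bound on $\Phi_T$, take logarithms, and divide by $\eta$ to isolate $\sum_t \tr{M^{(t)}\rho^{(t)}}$. The $\ln(n)/\eta$ term arises from the initial $\Phi_0 = n$.

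The main obstacle is the matrix step: Golden--Thompson is a genuine matrix inequality (the naive identity $e^{A+B} = e^A e^B$ fails for noncommuting $A,B$), and the quadratic bound on $e^{-\eta M^{(t)}}$ must be justified at the operator level rather than just scalar-wise. Once those two points are in hand, the rest is routine telescoping and taking logs. Everything else, including the choice $\rho^{(1)} = I/n$ to seed the recursion and the use of $\Tr(\rho^{(t)})=1$, slots in cleanly.
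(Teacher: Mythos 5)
Your proposal is correct and is exactly the standard potential-function argument (Golden--Thompson plus the operator lift of $e^{y}\leq 1+y+y^{2}$ for $|y|\leq 1$, telescoping, and taking logarithms) that underlies the cited result; the paper itself states this theorem by reference to Arora--Kale without reproducing a proof, and your argument matches theirs. All the steps you flag as the "main obstacles" are handled correctly: the quadratic operator bound is valid because $\eta\leq 1$ and $-I\preceq M^{(t)}\preceq I$, and $\Tr(PQ)\leq\Tr(PR)$ for $Q\preceq R$ and $P\succeq 0$ justifies passing the bound through the trace.
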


Arora and Kale use the MMW algorithm to construct an SDP-solver.
For that, they construct an adversary who promises to satisfy an additional condition: in each round~$t$, the adversary returns a matrix $M^{(t)}$ whose trace inner product with the density matrix $\rho^{(t)}$ is non-negative.
The above theorem shows that then, after $T$ rounds, the average of the adversary's responses satisfies the stronger condition that its smallest eigenvalue is not too negative: $\lambda_{\min}\left(\frac{1}{T} \sum_{t=1}^T M^{(t)}\right) \geq - \eta - \frac{\ln(n)}{\eta T}$. More explicitly, the MMW algorithm is used to build a vector $y \geq 0$ such that
\[
  \frac{1}{T} \sum_{t=1}^T M^{(t)} \propto \sum_{j=1}^m y_j A_j -C
\]
and $b^T y \leq \alpha$.
That is, the smallest eigenvalue of the matrix $\sum_{j=1}^m y_j A_j -C$ is only slightly below zero and $y$'s objective value is at most $\alpha$. Since $A_1=I$, increasing the first coordinate of $y$ makes the smallest eigenvalue of $\sum_j y_j A_j -C$ bigger, so that this matrix becomes psd and hence dual-feasible. By the above we know how much the minimum eigenvalue has to be shifted, and with the right choice of parameters it can be shown that this gives a dual-feasible vector $\overline{y}$ that satisfies $b^T \overline{y} \leq \alpha + \eps$. In order to present the algorithm formally, we require some definitions.

Given a candidate solution $X \succeq 0$ for the primal problem~\eqref{eq:SDP} and a parameter $\eps \geq 0$, define the polytope
\begin{align}
  \mathcal{P}_{\eps}(X) := \{ y \in \mathbb R^m:\ &b^T y  \leq \alpha, \nonumber\\
                                                  &\tr{\Big(\sum_{j=1}^m y_j A_j - C\Big) X} \geq -\eps,\nonumber\\
                                                  & y \geq 0 \}.\nonumber
\end{align}
One can verify the following:
\begin{lemma}[{\cite[Lemma~4.2]{arora2016CombPrimDualSDP}}] \label{lem:xfeas}
  If for a given candidate solution $X \succeq 0$ the polytope $\mathcal P_0(X)$ is empty, then a scaled version of $X$ is primal-feasible and of objective value at least $\alpha$.
\end{lemma}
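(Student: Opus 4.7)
The plan is to apply a linear-programming theorem of alternatives (Farkas' lemma) to the constraints defining $\mathcal P_0(X)$, and then read off the scaling factor directly from the resulting certificate of infeasibility. First I would rewrite the assumption $\mathcal P_0(X) = \emptyset$ as the infeasibility of the linear system in $y\in\mathbb R^m$ given by
\[
y \geq 0, \qquad b^T y \leq \alpha, \qquad \sum_{j=1}^m y_j \Tr(A_j X) \geq \Tr(CX).
\]
Applying Farkas' lemma to this system, with non-negative multipliers $\gamma$ and $\beta$ attached to the second and third constraints respectively, yields a pair $(\gamma,\beta)\neq(0,0)$ with $\gamma,\beta\geq 0$ such that
\[
\beta \Tr(A_j X) \leq \gamma b_j \ \text{ for all } j\in[m], \qquad \beta \Tr(CX) > \gamma \alpha.
\]

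The key sub-step is then to argue that $\gamma>0$, since otherwise the rescaling $c:=\beta/\gamma$ is undefined. If $\gamma=0$, then the first batch of inequalities gives $\beta \Tr(A_j X)\leq 0$ for every $j$; specializing to $j=1$ and using the standing assumption $A_1=I$ gives $\beta\Tr(X)\leq 0$. Combined with $X\succeq 0$ (so $\Tr(X)\geq 0$), this forces either $\beta=0$, contradicting the non-triviality of $(\beta,\gamma)$, or $\Tr(X)=0$ and hence $X=0$, contradicting the strict inequality $\beta\Tr(CX)>0$. Thus $\gamma>0$.

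With $\gamma>0$ in hand, set $X':=(\beta/\gamma)X$. As a non-negative multiple of a psd matrix, $X'\succeq 0$; dividing the first batch of inequalities by $\gamma$ shows $\Tr(A_jX')\leq b_j$ for all $j$, so $X'$ is primal-feasible; and dividing the last inequality by $\gamma$ shows $\Tr(CX')>\alpha$, giving the objective-value claim. The main obstacle, such as it is, is the verification that the multiplier $\gamma$ on the constraint $b^Ty\leq\alpha$ is strictly positive — which is exactly where the standing assumption $A_1=I$ is used. Everything else is a routine application of LP duality.
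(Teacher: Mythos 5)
Your Farkas-based argument is correct: the certificate $(\gamma,\beta)$ exists exactly because $\mathcal P_0(X)$ is an infeasible finite system of linear inequalities, your use of $A_1=I$ to rule out $\gamma=0$ is valid (and is indeed where that standing assumption enters), and the rescaling $X'=(\beta/\gamma)X$ then satisfies all primal constraints with $\Tr(CX')>\alpha$. The paper itself gives no proof and simply cites Arora--Kale, whose argument is the LP-duality twin of yours (the scalar dual variable of the one-row LP $\min\{b^Ty: \sum_j y_j\Tr(A_jX)\ge \Tr(CX),\, y\ge 0\}$ plays the role of your $\beta/\gamma$), so this is essentially the same approach.
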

The Arora-Kale framework for solving SDPs uses the MMW algorithm where the role of the adversary is taken by an $\eps$-approximate oracle:
\begin{algorithm}[h!]
  \begin{description}
  \item[Input] An $n \times n$ psd matrix $X$, a number
  $\alpha\in[-R,R]$, 
  and the description of an SDP as in~\eqref{eq:SDP}.

  \item[Output] Either the \textsf{Oracle}$_{\eps}$ returns a vector $y$ from the polytope $\mathcal P_\eps(X)$ or it outputs ``fail''.  \linebreak It may only output fail if $\mathcal P_0(X) = \emptyset$.
  \end{description}
  \caption{Definition of an $\eps$-approximate \textsf{Oracle}$_{\eps}$ for maximization SDPs}
  \label{alg:Oracle}
\end{algorithm}

\noindent As we will see later, the runtime of the Arora-Kale framework depends on a property of the oracle called the \emph{width}:
\begin{definition}[\emph{Width} of \textsf{Oracle}$_{\eps}$] \label{def:width}
  The \emph{width} of \textsf{Oracle}$_{\eps}$ for an SDP is the smallest $w^* \geq 0$ such that for every $X \succeq 0$ and $\alpha\in[-R,R]$, the vector $y$ returned by \textsf{Oracle}$_{\eps}$ satisfies $\nrm{\sum_{j=1}^m y_j A_j- C} \leq w^*$.
\end{definition}
In practice, the width of an oracle is not always known. However, it suffices to work with an upper bound $w \geq w^*$: as we can see in Meta-Algorithm~\ref{alg:AKSDP}, the purpose of the width is to rescale the matrix $M^{(t)}$ in such a way that it forms a valid response for the adversary in the MMW algorithm.
\begin{metaalgorithm}[ht]

  \begin{description}
  \item[Input] The input matrices and reals of SDP~\eqref{eq:SDP} and trace bound~$R$. The current guess $\alpha$ of the optimal value of the dual~\eqref{eq:SDP2}. An additive error tolerance $\eps>0$. An $\frac{\eps}{3}$-approximate oracle \textsf{Oracle}$_{\eps/3}$ as in Algorithm~\ref{alg:Oracle} with width-bound~$w$.

  \item[Output] Either ``Lower'' and a vector $\overline{y} \in \mathbb R^{m}_{+}$ feasible for~\eqref{eq:SDP2} with $b^T \overline{y} \leq \alpha+\eps$ \\
    or ``Higher'' and a symmetric $n \times n$ matrix $X$ that, when scaled suitably, is primal-feasible with objective value at least $\alpha$.
  \end{description}
  \begin{algorithmic}
    \State $T := \left\lceil \frac{9 w^2 R^2 \ln(n)}{\eps^2}\right\rceil$.
    \State $\eta := \sqrt{\frac{\ln (n)}{T}}$.
    \State $\rho^{(1)} := I/n$
    \For{ $t  = 1,\dots,T$}
    \State Run \textsf{Oracle}$_{\eps/3}$ with $X^{(t)} = R\rho^{(t)}$.
    \If{\textsf{Oracle}$_{\eps/3}$ outputs ``fail''}
    \State \Return ``Higher'' and a description of $X^{(t)}$.
    \EndIf
    \State Let $y^{(t)}$ be the vector generated by \textsf{Oracle}$_{\eps/3}$.
    \State Set $M^{(t)} = \frac{1}{w} \left( \sum_{j=1}^m y_j^{(t)} A_j - C\right)$.
    \State Define $H^{(t)} = \sum_{\tau=1}^t M^{(\tau)}$.
    \State Update the state matrix as follows: $\rho^{(t+1)}:= \exp\left(- \eta H^{(t)}\right)/\tr{\exp\left(- \eta H^{(t)}\right)}$.
    \EndFor
    \State If \textsf{Oracle}$_{\eps/3}$ does not output ``fail'' in any of the $T$ rounds, then output the dual solution
    $\overline{y} = \frac{\eps}{R} e_1 +  \frac{1}{T} \sum_{t=1}^T y^{(t)}$ where $e_1 = (1,0,\ldots, 0) \in \mathbb R^m$.
  \end{algorithmic}
  \caption{Primal-Dual Algorithm for solving SDPs}
  \label{alg:AKSDP}
\end{metaalgorithm}
The following theorem shows the correctness of the Arora-Kale primal-dual meta-algorithm for solving SDPs, stated in Meta-Algorithm~\ref{alg:AKSDP}:
\begin{theorem}[{\cite[Theorem~4.7]{arora2016CombPrimDualSDP}}]
  Suppose we are given an SDP of the form~\eqref{eq:SDP} with input matrices $A_1 = I, A_2, \ldots, A_m$ and $C$ having operator norm at most $1$, and input reals $b_1=R, b_2, \ldots, b_m$. Assume Meta-Algorithm~\ref{alg:AKSDP} does not output ``fail'' in any of the rounds,
  then the returned vector $\overline{y}$ is feasible for the dual~\eqref{eq:SDP2} with objective value at most $\alpha+\eps$.
  If \textsf{Oracle}$_{\eps/3}$ outputs ``fail'' in the $t$-th round then a suitably scaled version of $X^{(t)}$ is primal-feasible with objective value at least~$\alpha$.
\end{theorem}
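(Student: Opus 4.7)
The proof splits cleanly into two cases according to whether the oracle ever outputs ``fail''. The easy case is when it does: if \textsf{Oracle}$_{\eps/3}$ outputs ``fail'' in round $t$, then by definition it must be that $\mathcal{P}_0(X^{(t)})=\emptyset$, and Lemma~\ref{lem:xfeas} then directly says that a scaled version of $X^{(t)}$ is primal-feasible with objective value at least $\alpha$. So the entire content of the proof is the non-fail case.

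In the non-fail case I have to verify three things about $\overline{y} = \frac{\eps}{R}e_1 + \frac{1}{T}\sum_{t=1}^T y^{(t)}$: (a) $\overline{y}\geq 0$, (b) $\sum_j \overline{y}_j A_j - C \succeq 0$, and (c) $b^T\overline{y}\leq \alpha+\eps$. Item (a) is immediate since each $y^{(t)}\in\mathcal{P}_{\eps/3}(X^{(t)})$ is non-negative. Item (c) is also straightforward: $b^T\overline{y}=\frac{\eps}{R}b_1+\frac{1}{T}\sum_t b^T y^{(t)}\leq \frac{\eps}{R}\cdot R + \alpha = \alpha+\eps$, using $b_1=R$ and the polytope constraint $b^T y^{(t)}\leq \alpha$. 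All the work is in (b), and the strategy is to apply the MMW regret bound to the matrices $M^{(t)}=\frac{1}{w}(\sum_j y^{(t)}_j A_j - C)$.

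The key computation is as follows. Because $y^{(t)}\in\mathcal{P}_{\eps/3}(X^{(t)})$ and $X^{(t)}=R\rho^{(t)}$, we get the oracle lower bound $\tr{M^{(t)}\rho^{(t)}}\geq -\tfrac{\eps}{3wR}$. On the other hand, $\nrm{M^{(t)}}\leq 1$ by the width bound, so $\tr{(M^{(t)})^2\rho^{(t)}}\leq 1$. Plugging both bounds into the MMW regret theorem and dividing by $T$ yields
\[
\lambda_{\min}\!\left(\frac{1}{T}\sum_{t=1}^{T} M^{(t)}\right)\ \geq\ -\frac{\eps}{3wR}\ -\ \eta\ -\ \frac{\ln n}{\eta T}.
\]
With the stated choices $\eta=\sqrt{\ln(n)/T}$ and $T=\lceil 9w^2R^2\ln(n)/\eps^2\rceil$, both $\eta$ and $\ln(n)/(\eta T)$ are at most $\eps/(3wR)$, so the right-hand side is at least $-\eps/(wR)$. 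Multiplying through by $w$ and using $A_1=I$, the shift by $\frac{\eps}{R}e_1$ exactly compensates the smallest-eigenvalue deficit, so $\sum_j \overline{y}_j A_j - C \succeq 0$, giving dual feasibility.

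The main obstacle, in my view, is not conceptual but purely bookkeeping: one must carefully track the two rescalings (dividing by the width $w$ to make $M^{(t)}$ a legal MMW response, and the factor $R$ relating $X^{(t)}$ to $\rho^{(t)}$), and check that the oracle's $\eps/3$ slack, the regret term $\eta$, and the initialization term $\ln(n)/(\eta T)$ can all be simultaneously made $\leq \eps/(3wR)$ with the chosen $T$ and $\eta$. Once those three contributions add up to at most $\eps/(wR)$, the shift in $\overline{y}$ of magnitude $\eps/R$ along the $A_1=I$ direction precisely closes the gap, and no deeper argument is needed.
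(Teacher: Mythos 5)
Your proof is correct and follows exactly the argument the paper sketches (and that Arora--Kale's original proof uses): the ``fail'' case is handled by Lemma~\ref{lem:xfeas}, and in the non-fail case the MMW regret bound combined with the oracle guarantee $\tr{M^{(t)}\rho^{(t)}}\geq -\eps/(3wR)$, the width bound $\nrm{M^{(t)}}\leq 1$, and the choices of $T$ and $\eta$ give $\lambda_{\min}(\frac{1}{T}\sum_t M^{(t)})\geq -\eps/(wR)$, which the shift $\frac{\eps}{R}e_1$ along $A_1=I$ exactly compensates. The bookkeeping of the two rescalings and the bound $b^T\overline{y}\leq\alpha+\eps$ all check out.
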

The SDP-solver uses $T = \left\lceil \frac{9 w^2 R^2 \ln(n)}{\eps^2}\right\rceil$ iterations. In each iteration several steps have to be taken. The most expensive two steps are computing the matrix exponential of the matrix $- \eta H^{(t)}$ and the application of the oracle.
Note that the only purpose of computing the matrix exponential is to allow the oracle to compute the values $\tr{A_j X}$ for all $j$ and $\tr{CX}$, since the polytope depends on $X$ only through those values.
To obtain faster algorithms it is important to note, as was done already by Arora and Kale, that the primal-dual algorithm also works if we provide a (more accurate) oracle with approximations of $\tr{A_j X}$.
Let $a_j:=\tr{A_j\rho} = \tr{A_jX}/\tr{X}$ and $c:=\tr{C\rho} = \tr{CX}/\tr{X}$. Then, given a list of reals $\tilde{a}_1, \ldots, \tilde{a}_m, \tilde{c}$ and a parameter $\theta \geq 0$, such that $|\tilde{a}_j - a_j| \leq \theta$ for all $j$, and $|\tilde{c}- c| \leq \theta$, we define the polytope
\begin{align*}
  \pt(\tilde{a}_1, \ldots, \tilde{a}_m,\tilde{c}-(r+1)\theta) := \{ y \in \mathbb R^m:\ &  b^T y  \leq \alpha,\\&\sum_{j=1}^m y_j \leq r, \\
                                                                &\sum_{j=1}^m \tilde{a}_j y_j  \geq  \tilde{c} - (r+1)\theta\\&y \geq 0\}.
\end{align*}
For convenience we will denote $\tilde{a} = (\tilde{a}_1, \ldots, \tilde{a}_m)$ and $\cp := \tilde{c}- (r+1) \theta$. Notice that $\pt$ also contains a new type of constraint: $\sum_j y_j \leq r$. Recall that $r$ is defined as a positive real such that there exists an optimal solution $y$ to SDP~\eqref{eq:SDP2} with $\nrm{y}_1 \leq r$. Hence, using that $\mathcal P_0(X)$ is a \emph{relaxation} of the feasible region of the dual (with bound $\alpha$ on the objective value), we may restrict our oracle to return only such~$y$:
\[
  \mathcal P_0(X) \neq \emptyset \Rightarrow \mathcal P_0(X) \cap \{y \in \mathbb R^m: \sum_{j=1}^m y_j \leq r\} \neq \emptyset.
\]
The benefit of this restriction is that an oracle that always returns a vector with bounded $\ell_1$-norm automatically has a width $w^* \leq r+1$, due to the assumptions on the norms of the input matrices. The downside of this restriction is that the analogue of Lemma~\ref{lem:xfeas} does not hold for $\mathcal P_0(X) \cap \{y \in \mathbb R^m: \sum_{j} y_j \leq r\}$.\footnote{Using several transformations of the SDP, from Appendix~\ref{app:reductions} and Lemma 2 of~\cite{brandao2016QSDPSpeedup}, one can show that there is a way to remove the need for this restriction. Hence, after these modifications, if for a given candidate solution $X \succeq 0$ the oracle outputs that the set $\mathcal P_0(X)$ is empty, then a scaled version of $X$ is primal feasible for this new SDP, with objective value at least $\alpha$. This scaled version of $X$ can be modified to a near-feasible solution to the original SDP (it will be psd, but it might violate the linear constraints a little bit) with nearly the same objective value.}

The following shows that an oracle that always returns a vector $y \in \pt(\tilde{a},c')$ if one exists, is a $4Rr\theta$-approximate oracle as defined in Algorithm~\ref{alg:Oracle}.

\begin{lemma} \label{lem:approxP}
  Let $\rho = X/\mathrm{Tr}(X)$ where $\mathrm{Tr}(X)\leq R$. Let $\tilde{a}_1, \ldots, \tilde{a}_m$ and $\tilde{c}$ be $\theta$-approximations of $\tr{A_1 \rho}, \ldots, \tr{A_m \rho}$ and $\tr{C\rho}$ respectively. Then the following holds:
  \[
    \mathcal P_0(X) \cap \{y \in \mathbb R^m: \sum_{j=1}^m y_j \leq r\} \subseteq \pt(\tilde{a},c') \subseteq \mathcal P_{4Rr\theta}(X).
  \]
\end{lemma}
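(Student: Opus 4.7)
The proof is a direct calculation matching the approximation errors against the slack in the definition of $\pt$. The key relations I will use throughout are $\tr{A_j X}=Ra_j$ and $\tr{CX}=Rc$ (from $X=R\rho$), together with $|\tilde a_j-a_j|\le\theta$, $|\tilde c-c|\le\theta$, and the assumption $r\ge 1$.

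\textbf{First inclusion.} Let $y\in\mathcal P_0(X)\cap\{y:\sum_j y_j\le r\}$. The constraints $b^\top y\le\alpha$, $y\ge 0$, and $\sum_j y_j\le r$ hold trivially, so I only need to verify $\sum_j\tilde a_j y_j\ge \tilde c-(r+1)\theta$. From $y\in\mathcal P_0(X)$ we get $\tr{(\sum_j y_j A_j-C)X}\ge 0$, i.e.\ $\sum_j a_j y_j\ge c$ after dividing by $R$. Then
\[
\sum_j \tilde a_j y_j=\sum_j a_j y_j+\sum_j(\tilde a_j-a_j)y_j\ge c-\theta\sum_j y_j\ge c-r\theta\ge \tilde c-(r+1)\theta,
\]
using $\tilde c\le c+\theta$ in the last step. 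This gives $y\in\pt(\tilde a,c')$.

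\textbf{Second inclusion.} Let $y\in\pt(\tilde a,c')$. Again $b^\top y\le\alpha$ and $y\ge 0$ are immediate, so the work is to show $\tr{(\sum_j y_j A_j-C)X}\ge -4Rr\theta$. Writing this trace as $R(\sum_j a_j y_j-c)$, I bound the two pieces separately: using $\tilde a_j\le a_j+\theta$ and $\sum_j y_j\le r$,
\[
\sum_j a_j y_j\ge \sum_j \tilde a_j y_j-\theta\sum_j y_j\ge \bigl(\tilde c-(r+1)\theta\bigr)-r\theta=\tilde c-(2r+1)\theta,
\]
and $c\le\tilde c+\theta$, so combining,
\[
\sum_j a_j y_j-c\ge -(2r+2)\theta=-2(r+1)\theta\ge -4r\theta,
\]
where the last inequality uses $r\ge 1$. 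Multiplying by $R$ gives the desired bound $\tr{(\sum_j y_j A_j-C)X}\ge -4Rr\theta$, so $y\in\mathcal P_{4Rr\theta}(X)$.

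The argument is essentially pure bookkeeping; no step is genuinely hard. The only place where one must be a little careful is to choose the right direction of each inequality $|\tilde a_j-a_j|\le\theta$ and $|\tilde c-c|\le\theta$ so that the $\theta$-errors accumulate on the correct side, and at the very end to invoke $r\ge 1$ to convert the cleaner bound $2(r+1)\theta$ into the stated $4r\theta$.
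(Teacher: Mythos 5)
Your proof is correct and follows essentially the same route as the paper's: verify the non-trivial constraint in each inclusion by propagating the $\theta$-errors through $\nrm{y}_1\le r$ and using $r\ge 1$ to absorb $-(2r+2)\theta$ into $-4r\theta$. The bookkeeping is accurate (and in fact slightly cleaner than the paper's displayed inequalities).
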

\begin{proof}
  First, suppose $y \in \mathcal P_{0}(X) \cap \{y \in \mathbb R^m: \sum_{j} y_j \leq r\}$.
  We then have $y \in \pt(\tilde{a},c')$ because
  \[
    \sum_{j=1}^m \tilde{a}_j y_j - \tilde{c} 
    \geq \sum_{j=1}^m (\tilde{a}_j - \tr{A_j \rho}) y_j - (\tilde{c} -\tr{C \rho}) 
    \geq - \theta \nrm{y}_1 - \theta \geq - (r+1) \theta,
  \]
  where in the first inequality we subtracted $\sum_{j=1}^m \tr{A_j \rho} y_j -\tr{C \rho}\geq 0$. 

  Next, suppose $y \in \pt(\tilde{a},\cp)$. We show that $y \in \mathcal P_{4Rr \theta}(X)$. Indeed, since $|\tr{A_j\rho} - \tilde{a}_j| \leq \theta$ we have
  \[
    \tr{\!\!\left(\sum_{j=1}^m y_j A_j -C\right)\rho\!} \geq \left(\sum_{j=1}^m \tilde{a}_jy_j + \tilde{c} \right) -(r+1)\theta \geq -(2+r + \nrm{y}_1) \theta \geq\! -4r \theta  \]
  where the last inequality used our assumptions $r\geq 1$ and $\nrm{y}_1\leq r$.
  Hence
  \[
    \tr{\left(\sum_{j=1}^m y_j A_j -C\right)X} \geq -4r\tr{X} \theta \geq -4Rr\theta.
  \]
  For the latter inequality we used $\tr{X} \leq R$.
\end{proof}

We have now seen the Arora-Kale framework for solving SDPs. To obtain a quantum SDP-solver it remains to provide a quantum oracle subroutine. By the above discussion it suffices to set  $\theta = \eps/(12Rr)$ and to use an oracle that is based on $\theta$-approximations of $\tr{A\rho}$ (for $A \in \{A_1, \ldots, A_m, C\}$), since with that choice of $\theta$ we have $\mathcal P_{4Rr \theta}(X)=\mathcal P_{\eps/3}(X)$.
In the section below we first give a quantum algorithm for approximating $\tr{A\rho}$ efficiently (see also Appendix~\ref{app:trace} for a classical algorithm). Then, in Section~\ref{sec:oracle}, we provide an oracle using those estimates. The oracle will be based on a simple geometric idea and can be implemented both on a quantum computer and on a classical computer (of course, resulting in different runtimes). In Section~\ref{sec:runtime} we conclude with an overview of the runtime of our quantum SDP-solver.
We want to stress that our solver is meant to work for any SDP. In particular, our oracle does not use the structure of a specific SDP. As we will show in Section~\ref{sec:downside}, any oracle that works for all SDPs necessarily has a large width-bound. To obtain quantum speedups for a \emph{specific} class of SDPs it will be necessary to develop oracles tuned to that problem, we view this as an important direction for future work. Recall from the introduction that Arora and Kale also obtain fast classical algorithms for problems such as MAXCUT by developing specialized oracles.

\subsection{Approximating the expectation value \texorpdfstring{$\tr{A\rho}$}{traces} using a quantum algorithm}\label{sec:trCalc}

In this section we give an efficient quantum algorithm to approximate quantities of the form $\tr{A\rho}$. We are going to work with Hermitian matrices $A,H\in\mathbb{C}^{n\times n}$, such that $\rho$ is the Gibbs state $e^{-H}/\tr{e^{-H}}$. Note the analogy with quantum physics: in physics terminology $\tr{A\rho}$ is simply called the ``expectation value" of $A$ for a quantum system in a thermal state corresponding to $H$.

The general approach is to separately estimate $\tr{A e^{-H}}$ and $\tr{e^{-H}}$, and then to use the ratio of these estimates as an approximation of $\tr{A\rho}=\tr{A e^{-H}}/\tr{e^{-H}}$. Both estimations are done using state preparation to prepare a pure state with a flag, such that the probability that the flag is~0 is proportional to the quantity we want to estimate, and then to use amplitude estimation to estimate that probability.
Below in Section~\ref{ssec:generalapproach} we first describe the general approach.
In Section~\ref{sec:lptrace} we then instantiate this for the special case where all matrices are diagonal, which is the relevant case for LP-solving. In Section~\ref{sec:estTrArhogeneral} we handle the general case of arbitrary matrices (needed for SDP-solving); the state-preparation part will be substantially more involved there, because in the general case we need not know the diagonalizing bases for $A$ and~$H$, and $A$ and $H$ may not be simultaneously diagonalizable.

\subsubsection{General approach}\label{ssec:generalapproach}

To start, consider the following lemma about the multiplicative approximation error of a ratio of two real numbers that are given by multiplicative approximations:
\begin{lemma}\label{lemma:trTogether}
  Let $0\leq \theta \leq 1$ and let $\alpha, \tilde{\alpha}, Z, \tilde{Z}$ be positive real numbers
  such that $|\alpha - \tilde{\alpha}| \leq \alpha\theta / 3$ and $|Z - \tilde{Z}| \leq Z\theta / 3$. Then
  \[
    \left|\frac{\alpha}{Z}-\frac{\tilde{\alpha}}{\tilde{Z}}\right| \leq \theta\frac{\alpha}{Z}
  \]
\end{lemma}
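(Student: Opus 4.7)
The plan is to reduce the absolute bound on the difference $\left|\alpha/Z - \tilde\alpha/\tilde Z\right|$ to a bound on \emph{relative} errors, since the hypotheses are naturally phrased that way. Concretely, I would set $\epsilon_1 := (\tilde\alpha - \alpha)/\alpha$ and $\epsilon_2 := (\tilde Z - Z)/Z$, so that by hypothesis $|\epsilon_1|, |\epsilon_2| \leq \theta/3$, and $\tilde\alpha = \alpha(1+\epsilon_1)$, $\tilde Z = Z(1+\epsilon_2)$.

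Next, I would compute directly
\[
  \frac{\tilde\alpha}{\tilde Z} - \frac{\alpha}{Z} = \frac{\alpha}{Z}\!\left(\frac{1+\epsilon_1}{1+\epsilon_2} - 1\right) = \frac{\alpha}{Z}\cdot\frac{\epsilon_1 - \epsilon_2}{1+\epsilon_2},
\]
so the task reduces to showing $|\epsilon_1 - \epsilon_2|/|1+\epsilon_2| \leq \theta$. The numerator is at most $2\theta/3$ by the triangle inequality. For the denominator I would use the assumption $\theta \leq 1$, which gives $|\epsilon_2| \leq 1/3$ and hence $1+\epsilon_2 \geq 2/3 > 0$. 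Combining, $|\epsilon_1 - \epsilon_2|/|1+\epsilon_2| \leq (2\theta/3)/(2/3) = \theta$, and the claim follows after multiplying by $\alpha/Z$.

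No real obstacle is expected; the only subtle point is making sure the lower bound $1+\epsilon_2 \geq 2/3$ is nontrivial, which is exactly where the assumption $\theta \leq 1$ (together with the factor $1/3$ in the hypothesis) is used. If one wanted a slightly tighter or cleaner version, one could weaken the $1/3$ constants, but the stated constants give a very clean proof, and the whole argument is essentially a two-line calculation after the relative-error substitution.
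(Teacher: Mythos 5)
Your proof is correct and is essentially the same argument as the paper's: both reduce to bounding a numerator by $2\theta/3$ via the triangle inequality and a denominator from below by $2/3$ using $\theta\le 1$, your relative-error substitution $\tilde\alpha=\alpha(1+\epsilon_1)$, $\tilde Z=Z(1+\epsilon_2)$ being just a clean reparametrization of the paper's add-and-subtract computation over the common denominator $Z\tilde Z$.
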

\begin{proof} The inequality can be proven as follows
  \begin{align*}
    \left|\frac{\alpha}{Z}-\frac{\tilde{\alpha}}{\tilde{Z}}\right|
    &= \left|\frac{\alpha\tilde{Z}\kern-0.2mm-\kern-0.2mm\tilde{\alpha}Z}{Z\tilde{Z}}\right|
      = \left|\frac{\alpha\tilde{Z}\kern-0.2mm-\kern-0.2mm\alpha Z\kern-0.2mm+\kern-0.2mm\alpha Z\kern-0.2mm-\kern-0.2mm\tilde{\alpha}Z}{Z\tilde{Z}}\right|
    \leq \left|\frac{\alpha\tilde{Z}\kern-0.2mm-\kern-0.2mm\alpha Z}{Z\tilde{Z}}\right|+\left|\frac{\alpha Z\kern-0.2mm-\kern-0.2mm\tilde{\alpha}Z}{Z\tilde{Z}}\right|
      \leq\frac{\alpha\theta}{3\tilde{Z}}+\frac{\alpha \theta}{3\tilde{Z}}\leq\theta\frac{\alpha}{Z}
  \end{align*}
where the last step used $\tilde{Z}\geq \frac{2}{3}Z$.
\end{proof}

\begin{corollary}\label{col:split}
  Let $A$ be such that $\nrm{A}\leq 1$. A multiplicative $\frac\theta9$-approximation of both  $\tr{\frac{I}{4}e^{-H}}$ and $\tr{\frac{I+A/2}{4}e^{-H}}$ suffices to get an additive $\theta$-approximation of $\frac{\tr{Ae^{-H}}}{\tr{e^{-H}}}$.
\end{corollary}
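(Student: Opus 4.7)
The plan is to reduce the corollary directly to Lemma~\ref{lemma:trTogether} via a simple algebraic identity. First I would observe that
\[
\frac{\tr{\frac{I+A/2}{4}e^{-H}}}{\tr{\frac{I}{4}e^{-H}}} \;=\; \frac{\tr{(I+A/2)e^{-H}}}{\tr{e^{-H}}} \;=\; 1 + \tfrac{1}{2}\frac{\tr{A e^{-H}}}{\tr{e^{-H}}},
\]
so that estimating the target quantity $\tr{A e^{-H}}/\tr{e^{-H}}$ is equivalent to estimating twice the ratio $\alpha/Z$ (shifted by $-1$), where $\alpha := \tr{\frac{I+A/2}{4}e^{-H}}$ and $Z := \tr{\frac{I}{4}e^{-H}}$. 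The reason for the cosmetic factors $1/4$ and the shift $I+A/2$ (rather than just using $A$ directly) is to ensure that both $\frac{I+A/2}{4}$ and $\frac{I}{4}$ are psd with operator norm at most~$1$, so that $\alpha$ and $Z$ are genuine nonnegative quantities that we can approximate by preparing a pure state with a flag and running amplitude estimation on the flag probability (as outlined just before the statement).

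Next I would apply Lemma~\ref{lemma:trTogether} with parameter $\theta' := \theta/3$. By assumption $|\alpha - \tilde\alpha| \leq (\theta/9)\alpha = \theta' \alpha /3$ and $|Z - \tilde Z| \leq (\theta/9)Z = \theta' Z/3$, so the lemma yields
\[
\left|\frac{\alpha}{Z} - \frac{\tilde\alpha}{\tilde Z}\right| \;\leq\; \frac{\theta}{3} \cdot \frac{\alpha}{Z}.
\]
Therefore the natural estimator $2(\tilde\alpha/\tilde Z - 1)$ satisfies
\[
\left|\frac{\tr{A e^{-H}}}{\tr{e^{-H}}} - 2\Big(\frac{\tilde\alpha}{\tilde Z}-1\Big)\right| = 2\left|\frac{\alpha}{Z}-\frac{\tilde\alpha}{\tilde Z}\right| \leq \frac{2\theta}{3}\cdot\frac{\alpha}{Z}.
\]

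Finally, I would bound $\alpha/Z$. Since $\nrm{A}\leq 1$ we have $|\tr{A e^{-H}}/\tr{e^{-H}}|\leq 1$, and from the identity above this gives $\alpha/Z = 1 + \tfrac12\tr{Ae^{-H}}/\tr{e^{-H}} \leq 3/2$. Plugging this in gives an additive error of at most $(2\theta/3)(3/2) = \theta$, as required. There is no real obstacle here; the only thing to watch out for is the constant-tracking: one must verify that $\alpha/Z \leq 3/2$ (which is why the factor $2/3$ in Lemma~\ref{lemma:trTogether} combines with this bound to give exactly~$\theta$), and that both $\alpha$ and $Z$ remain strictly positive so that the ratio is well-defined and the amplitude-estimation arguments in later subsections apply.
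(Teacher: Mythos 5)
Your proof is correct and follows essentially the same route as the paper: both rewrite the ratio as $1+\tfrac12\tr{Ae^{-H}}/\tr{e^{-H}}$, invoke Lemma~\ref{lemma:trTogether} with parameter $\theta/3$, and bound $\alpha/Z\leq 3/2$ using $\nrm{A}\leq 1$ so that the multiplicative error becomes an additive $\theta/2$ error on the shifted quantity, hence $\theta$ after rescaling. The constant-tracking matches the paper's exactly.
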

\begin{proof}
  According to Lemma~\ref{lemma:trTogether} by dividing the two multiplicative approximations we get \[
    \frac{\theta}{3}\frac{\tr{\frac{I+A/2}{4}e^{-H}}}{\tr{\frac{I}{4}e^{-H}}}
    = \frac{\theta}{3}\left(1+\frac{\tr{\frac{A}{2}e^{-H}}}{\tr{e^{-H}}}\right)
    \leq \frac{\theta}{3}\left(1+\frac{\nrm{A}}{2}\right)
    \leq \theta/2,
  \]
  i.e., an additive $\theta/2$-approximation of
  $$
  1+\frac{\tr{\frac{A}{2}e^{-H}}}{\tr{e^{-H}}},
  $$
  which yields an additive $\theta$-approximation to $\tr{Ae^{-H}}/\tr{e^{-H}}$.
\end{proof}
It thus suffices to separately approximate both quantities from the corollary. Notice that both are of the form $\tr{\frac{I+A/2}{4}e^{-H}}$, the first with the actual $A$, the second with $A=0$.
Furthermore, a multiplicative $\theta/9$-approximation to both can be achieved by approximating both up to an additive error $\theta \tr{e^{-H}} / 72$, since $\tr{\frac{I}{8} e^{-H}}\leq \tr{\frac{I+A/2}{4}e^{-H}}$. 

For now, let us assume we can construct a unitary $U_{A,H}$ such that if we apply it to the state~$\ket{0\ldots 0}$ then we get a probability $\frac{\tr{(I+A/2)e^{-H}}}{4n}$ of outcome~0 when measuring the first qubit. That is:
\[
  \nrm{(\bra{0}\otimes I)U_{A,H}\ket{0\ldots 0}}^2 = \frac{\tr{(I+A/2)e^{-H}}}{4n}.
\]
(To clarify the notation: if $\ket{\psi}$ is a 2-register state, then $(\bra{0}\otimes I)\ket{\psi}$ is the (unnormalized) state in the 2nd register that results from projecting on $\ket{0}$ in the 1st register.)

In practice we will not be able to construct such a $U_{A,H}$ exactly, instead we will construct a $\tilde{U}_{A,H}$ that yields a sufficiently close approximation of the correct probability.
When we have access to such a unitary, the following lemma allows us to use amplitude estimation to estimate the probability and hence $\tr{\frac{I+A/2}{4}e^{-H}}$ up to the desired error.

\begin{lemma}\label{lem:ampest}
  Suppose we have a unitary $U$ acting on $q$ qubits such that $U\ket{0\ldots 0}=\ket{0}\ket{\psi}+\ket{\Phi}$ with $(\bra{0}\otimes I)\ket{\Phi}=0$ and $\nrm{\psi}^2  =p\geq p_{\min}$ for some known bound $p_{\min}$. Let $\mu \in(0,1]$ be the allowed multiplicative error in our estimation of $p$. Then with $\bigO\left(\frac{1}{\mu\sqrt{p_{\min}}}\right)$ uses of $U$ and $U^{-1}$ and using $\bigO\left(\frac{q}{\mu\sqrt{p_{\min}}}\right)$ gates on the $q$ qubits and some ancilla qubits, we obtain a $\tilde{p}$ such that $|p-\tilde{p}|\leq \mu p$ with probability at least $4/5$.
\end{lemma}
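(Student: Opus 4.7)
The plan is to invoke the standard amplitude estimation procedure of Brassard, H\o yer, Mosca, and Tapp. Let $\Pi = \ket{0}\!\bra{0}\otimes I$ be the projector onto the ``good'' subspace consisting of states whose first qubit is $\ket{0}$; by hypothesis, the component of $U\ket{0\ldots 0}$ in the range of $\Pi$ has squared norm $p$, so running amplitude estimation on the pair $(U,\Pi)$ with $t$ Grover-type iterations produces an estimate $\tilde p$ satisfying
\[
|\tilde p - p| \leq \frac{2\pi\sqrt{p(1-p)}}{t} + \frac{\pi^2}{t^2}
\]
with probability at least $8/\pi^2 > 4/5$. In particular the success probability already exceeds $4/5$, so no median-of-repetitions boosting is needed.

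The remaining step is to choose $t$ so that the right-hand side is at most $\mu p$. I would take $t = \lceil C/(\mu\sqrt{p_{\min}})\rceil$ for a sufficiently large absolute constant $C$ (for instance $C=4\pi$). Using $p \geq p_{\min}$ we get $1/t \leq \mu\sqrt{p_{\min}}/C \leq \mu\sqrt{p}/C$, and $\mu \leq 1$ gives $1/t^2 \leq \mu^2 p/C^2 \leq \mu p/C^2$. Substituting these into the BHMT bound and using $\sqrt{1-p}\leq 1$,
\[
\frac{2\pi\sqrt{p(1-p)}}{t} + \frac{\pi^2}{t^2} \leq \frac{2\pi \sqrt{p}\cdot \mu\sqrt{p}}{C} + \frac{\pi^2\mu p}{C^2} = \mu p\left(\frac{2\pi}{C}+\frac{\pi^2}{C^2}\right) \leq \mu p,
\]
which is the desired multiplicative guarantee.

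Finally I would tally the complexity. Each iteration of amplitude estimation uses one application of $U$ and one of $U^{-1}$, plus two reflections (one about $\ket{0\ldots 0}$ on the $q$-qubit workspace, one about the good subspace) that cost $O(q)$ elementary gates each, and a QFT on an $O(\log t)$-qubit counting register. Summing over the $t = O(1/(\mu\sqrt{p_{\min}}))$ iterations yields $O(1/(\mu\sqrt{p_{\min}}))$ uses of $U$ and $U^{-1}$ and $O(q/(\mu\sqrt{p_{\min}}))$ additional two-qubit gates on the $q$-qubit register, matching the statement. There is no genuine obstacle here: the whole content of the lemma is a clean translation between the additive error guarantee of BHMT and the multiplicative form needed later, which succeeds precisely because the known lower bound $p_{\min}$ lets us choose $t$ \emph{a priori}.
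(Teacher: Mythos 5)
Your proof is correct and follows essentially the same route as the paper's: both invoke the BHMT amplitude-estimation bound $|p-\tilde p|\le 2\pi\sqrt{p(1-p)}/M+\pi^2/M^2$ with success probability $8/\pi^2>4/5$, choose $M=\Theta(1/(\mu\sqrt{p_{\min}}))$, and use $p\ge p_{\min}$ together with $\mu\le 1$ to convert the additive guarantee into the multiplicative one (the paper takes $M\ge 3\pi/(\mu\sqrt{p_{\min}})$ where you take $C=4\pi$; both constants work). The accounting of the $\bigO(q)$ cost per reflection is likewise the same.
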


\begin{proof}
  We use the amplitude-estimation algorithm of~\cite[Theorem~12]{brassard2002AmpAndEst} with $M$ applications of $U$ and $U^{-1}$. This provides an estimate $\tilde{p}$ of $p$, that with probability at least $8/\pi^2>4/5$ satisfies
  \begin{equation*}
    |p-\tilde{p}|
    \leq2\pi\frac{\sqrt{p(1-p)}}{M}+\frac{\pi^2}{M^2}
    \leq\frac{\pi}{M}\left(2\sqrt{p}+\frac{\pi}{M}\right).
  \end{equation*}
  Choosing $M$ the smallest power of $2$ such that $M \geq 3\pi/(\mu\sqrt{p_{\min}})$, with probability at least $4/5$ we get
  \begin{equation*}
    |p-\tilde{p}|
    \leq\mu\frac{\sqrt{p_{\min}}}{3}\left(2\sqrt{p}
      +\mu\frac{ \sqrt{p_{\min}}}{3}\right)
    \leq\mu\frac{\sqrt{p}}{3}\left(3\sqrt{p}\right)
    \leq \mu p.
  \end{equation*}
  The $q$ factor in the gate complexity comes from the implementation of the amplitude amplification steps needed in amplitude estimation. The gate complexity of the whole amplitude-estimation procedure is dominated by this contribution, proving the final gate complexity.
\end{proof}

\begin{corollary}\label{col:main22}
  Suppose we are given the positive numbers $z\leq \tr{e^{-H}}$, $\theta\in(0,1]$, and unitary circuits $\tilde{U}_{A',H}$ for $A'=0$ and $A'=A$ with $\nrm{A}\leq 1$, each acting on at most $q$ qubits such that
  $$
  \left|\nrm{(\bra{0}\otimes I)\tilde{U}_{A',H}\ket{0\ldots 0}}^2-\frac{\tr{(I+A'/2)e^{-H}}}{4n}\right| \leq \frac{\theta z}{144n}.
  $$
  Applying the procedure of Lemma~\ref{lem:ampest} to $\tilde{U}_{A',H}$ (both for $A'=0$ and for $A'=A$) with $p_{\min}=\frac{z}{9n}$ and $\mu=\theta/19$, and combining the results using Corollary~\ref{col:split} yields an additive $\theta$-approximation of $\tr{A\rho}$ with probability at least $4/5$.
  The procedure uses
  \[
    \bigO\left(\frac{1}{\theta}\sqrt{\frac{n}{z}}\right)
  \]
  applications of $\tilde{U}_{A,H}$, $\tilde{U}_{0,H}$ and their inverses, and $\bigO\left(\frac{q}{\theta}\sqrt{\frac{n}{z}}\right)$ additional gates.
\end{corollary}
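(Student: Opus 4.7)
The strategy is to use Lemma~\ref{lem:ampest} twice—once for $\tilde U_{A,H}$ and once for $\tilde U_{0,H}$—to obtain multiplicative $(\theta/9)$-approximations of $\tr{(I+A/2)e^{-H}}/(4n)$ and $\tr{e^{-H}}/(4n)$, respectively, and then feed these two estimates into Corollary~\ref{col:split}. The extra $1/(4n)$ factor is harmless because it cancels in the ratio, so a multiplicative $(\theta/9)$-approximation of each $\tr{(I+A'/2)e^{-H}}/(4n)$ gives the multiplicative $(\theta/9)$-approximation of $\tr{(I+A'/2)e^{-H}/4}$ that Corollary~\ref{col:split} demands, and hence an additive $\theta$-approximation of $\tr{A\rho}$.

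The first step is to verify that the hypothesized $p_{\min}=z/(9n)$ is a legitimate lower bound on the true success probability $p:=\nrm{(\bra 0\otimes I)\tilde U_{A',H}\ket{0\dots 0}}^2$ of each circuit. Since $\nrm{A}\leq 1$ gives $I+A'/2\succeq I/2$ and $\tr{e^{-H}}\geq z$, the ``ideal'' probability $q':=\tr{(I+A'/2)e^{-H}}/(4n)$ satisfies $q'\geq z/(8n)$. The hypothesis bounds $|p-q'|$ by $\theta z/(144n)\leq z/(144n)$, so $p\geq z/(8n)-z/(144n)=17z/(144n)\geq 16z/(144n)=z/(9n)=p_{\min}$. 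Thus Lemma~\ref{lem:ampest} applies with the stated $p_{\min}$ and $\mu=\theta/19$, producing an estimate $\tilde p$ of $p$ with $|\tilde p-p|\leq (\theta/19)p$ using $\bigO\bigl((1/\theta)\sqrt{n/z}\bigr)$ applications of $\tilde U_{A',H}$ and its inverse and $\bigO\bigl((q/\theta)\sqrt{n/z}\bigr)$ additional gates.

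The main (short) calculation is to convert a multiplicative approximation of $p$ together with the additive hypothesis $|p-q'|\leq \theta z/(144n)$ into a multiplicative approximation of $q'$ itself with relative error $\theta/9$. Using $q'\geq z/(8n)$ one has $\theta z/(144n)\leq (\theta/18)q'$ and $p\leq q'+\theta z/(144n)\leq (19/18)q'$, hence $(\theta/19)p\leq (\theta/18)q'$. Adding the two bounds gives
\[
  |\tilde p-q'|\leq |\tilde p-p|+|p-q'|\leq (\theta/18)q'+(\theta/18)q'=(\theta/9)q',
\]
which is the desired multiplicative $(\theta/9)$-approximation of $q'$.

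Finally, feed the two multiplicative $(\theta/9)$-approximations into Corollary~\ref{col:split} to obtain an additive $\theta$-approximation of $\tr{A\rho}$. Each amplitude-estimation call succeeds with probability at least $4/5$, so by a union bound both succeed with probability at least $3/5$; this can be boosted to any desired constant success probability by a standard median-of-repetitions trick at only constant overhead. The total cost of $\bigO\bigl((1/\theta)\sqrt{n/z}\bigr)$ calls to $\tilde U_{A,H}$, $\tilde U_{0,H}$ and their inverses, together with $\bigO\bigl((q/\theta)\sqrt{n/z}\bigr)$ other gates, matches the claimed bound. No step is truly difficult; the only ``obstacle'' is the bookkeeping that converts the additive guarantee on the circuits and the multiplicative guarantee of amplitude estimation into a single clean multiplicative guarantee on the ideal quantity $q'$.
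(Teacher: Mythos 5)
Your proposal is correct and follows essentially the same argument as the paper: lower-bound the ideal probability by $z/(8n)$, use the additive hypothesis to certify $p_{\min}=z/(9n)$ and to bound the circuit error by $(\theta/18)$ times the ideal value, combine with the $(\theta/19)$ multiplicative error of amplitude estimation via the triangle inequality to get a total relative error of $\theta/9$, and invoke Corollary~\ref{col:split}. The only difference is notational (you call the measured probability $p$ and the ideal one $q'$, whereas the paper does the reverse), plus your explicit remarks that the $1/(4n)$ factor cancels and that a union bound handles the two calls, both of which the paper leaves implicit.
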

\begin{proof}
  First note that since $I+A'/2 \succeq I/2$, we have
  \begin{equation*}
    t:=\frac{\tr{(I+A'/2)e^{-H}}}{4n} \geq \frac{\tr{e^{-H}}}{8n},
  \end{equation*}
  and thus
  \begin{equation}\label{eq:pDiff}
    \left| \nrm{(\bra{0}\otimes I)\tilde{U}_{A',H}\ket{0\ldots 0}}^2 - t\right|
    \leq \frac{\theta z }{144n}
    \leq \frac{\theta}{18}\cdot\frac{\tr{e^{-H}}}{8 n}
    \leq \frac{\theta}{18} t
    \leq \frac{t}{18}.
  \end{equation}
  Therefore
  \[
    \nrm{(\bra{0}\otimes I)\tilde{U}_{A',H}\ket{0\ldots 0}}^2
    \geq \left(1-\frac{1}{18}\right)t
    \geq \left(1-\frac{1}{18}\right)\frac{\tr{e^{-H}}}{8n}
    > \frac{\tr{e^{-H}}}{9 n}
    \geq \frac{z}{9 n}=:p_{\min}.
  \]
  Also by \eqref{eq:pDiff} we have
  \begin{equation*}
    \nrm{(\bra{0}\otimes I)\tilde{U}_{A',H}\ket{0\ldots 0}}^2
    \leq \left(1+\frac{\theta}{18}\right) t
    \leq \frac{19}{18} t.
  \end{equation*}
  Therefore using Lemma~\ref{lem:ampest} and setting $\mu=\theta/19$, with probability at least $4/5$ we get a $\tilde{p}$ satisfying
  \begin{equation}\label{eq:ptDiff}
    \left| \tilde{p} - \nrm{(\bra{0}\otimes I)\tilde{U}_{A',H}\ket{0\ldots 0}}^2 \right|
    \leq \frac{\theta}{19}\cdot\nrm{(\bra{0}\otimes I)\tilde{U}_{A',H}\ket{0\ldots 0}}^2
    \leq \frac{\theta}{18} t.
  \end{equation}
  By combining \eqref{eq:pDiff}-\eqref{eq:ptDiff} and using the triangle inequality we get
  \begin{equation*}
    \left| t - \tilde{p}\right|
    \leq \frac{\theta}{9}t,
  \end{equation*}
  so that Corollary~\ref{col:split} can indeed be applied.
  The complexity statement follows from Lemma~\ref{lem:ampest} and our choices of $p_{\min}$ and $\mu$.
\end{proof}
Notice the $1/\sqrt{z}\geq 1/\sqrt{\tr{e^{-H}}}$ factor in the complexity statement of the last corollary. To make sure this factor is not too large, we would like to ensure $\tr{e^{-H}}=\Omega(1)$. This can be achieved by substituting $H_+ = H-\lambda_{\min}I$, where $\lambda_{\min}$ is the smallest eigenvalue of $H$. It is easy to verify that this will not change the value $\tr{Ae^{-H}/\tr{e^{-H}}}$.

It remains to show how to compute $\lambda_{\min}$ and how to apply $\tilde{U}_{A,H}$. Both of these steps are considerably easier in the case where all matrices are diagonal, so we will consider this case first.

\subsubsection{The special case of diagonal matrices -- for LP-solving}\label{sec:lptrace}
In this section we consider diagonal matrices, assuming oracle access to $H$ of the following form:
\[
  O_H\ket{i}\ket{z} = \ket{i}\ket{z\oplus H_{ii}}
\]
and similarly for $A$. Notice that this kind of oracle can easily be constructed from the general sparse matrix oracle~\eqref{eq:oraclemat} that we assume access to.

\begin{lemma}\label{lem:diagU}
  Let $A,H \in \mathbb{R}^{n\times n}$ be diagonal matrices such that $\nrm{A}\leq 1$ and $H\succeq 0$, and let $\mu > 0$ be an error parameter. Then there exists a unitary $\tilde{U}_{A,H}$ such that
  \[
    \left|  \nrm{ (\bra{0}\otimes I)\tilde{U}_{A,H}\ket{0\ldots0} }^2  - \tr{\frac{I+A/2}{4n}e^{-H}}\right| \leq \mu,
  \]
 which uses $1$ quantum query to $A$ and $H$ and $\bigO(\log^{\bigO(1)}(1/\mu) + \log(n))$ other gates.
\end{lemma}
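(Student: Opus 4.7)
The plan is to build $\tilde U_{A,H}$ as the composition of a uniform superposition preparation, one query each to $H$ and $A$ into ancilla registers, a purely classical (reversible) arithmetic subcircuit computing a bit-level approximation of the ``angle'', and a single controlled rotation acting on a dedicated flag qubit; the queried ancillae are never uncomputed, since garbage registers do not affect the marginal probability of the flag.

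Concretely, I would set up registers $\ket{f}\ket{i}\ket{h}\ket{a}\ket{g}$ where $f$ is the flag, $i\in\{0,1\}^{\lceil \log_2 n\rceil}$ is the index, $h,a$ hold the queried values of $H_{ii}$ and $A_{ii}$, and $g$ is an arithmetic scratch register. First, apply Hadamards to $\ket{i}$ (handling the case where $n$ is not a power of $2$ by a small ancilla flag added to~$f$ that records whether $i\in[n]$; this costs $\bigO(\log n)$ gates and preserves the $1/n$ normalization inside the success event). Next, one call to $O_H$ writes $H_{ii}$ into the $h$-register, and one call to $O_A$ writes $A_{ii}$ into the $a$-register. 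Then, using classical reversible arithmetic on $(h,a,g)$, compute a fixed-point approximation $\tilde p_i$ of $p_i := e^{-H_{ii}}(1+A_{ii}/2)/4\in[0,3/8]$ with additive accuracy $\mu/2$; since the required functions $x\mapsto e^{-x}$, multiplication, and $x\mapsto \arcsin(\sqrt{x})$ admit standard circuits of size $\text{polylog}(1/\mu)$, this step costs $\text{polylog}(1/\mu)$ gates and no oracle calls. Finally, use the classically computed angle $\tilde\theta_i\approx \arcsin(\sqrt{\tilde p_i})$ stored in $g$ to apply a controlled $Y$-rotation on the flag qubit~$f$, yielding amplitude $\sqrt{\tilde p_i}$ on $\ket{f=0}$ and $\sqrt{1-\tilde p_i}$ on $\ket{f=1}$.

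Projecting onto $\ket{f=0}$ in the flag register gives probability $\frac{1}{n}\sum_{i=1}^n \tilde p_i$, which agrees with $\tr{(I+A/2)e^{-H}}/(4n) = \frac{1}{n}\sum_i p_i$ up to the sum of pointwise errors $|\tilde p_i - p_i|\le \mu$ introduced by the approximate arithmetic and approximate rotation. Hence
\[
\bigl|\,\|(\bra{0}\otimes I)\tilde U_{A,H}\ket{0\ldots 0}\|^2 - \tr{(I+A/2)e^{-H}}/(4n)\,\bigr| \le \mu,
\]
as required. The oracle count is exactly one query to $O_H$ and one to $O_A$ (we intentionally avoid uncomputing, since only the marginal probability on~$f$ matters), and the total gate count is $\bigO(\log n + \text{polylog}(1/\mu))$.

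The main obstacle I anticipate is only a bookkeeping one: namely ensuring that the classical approximation pipeline (approximating $e^{-x}$, multiplying it by $(1+A_{ii}/2)/4$, then feeding the product into $\arcsin\sqrt{\cdot}$ and into the rotation angle's binary expansion) is controlled end-to-end to additive error $\mu$, so the $\text{polylog}(1/\mu)$ bound for the arithmetic piece is justified cleanly. This is standard (composition of Lipschitz arithmetic primitives on a bounded domain, using that $(1+A_{ii}/2)/4$ is bounded away from singularities and that $e^{-H_{ii}}\in[0,1]$ thanks to $H\succeq 0$), but it is the one place where care is needed; everything else is a direct concatenation of Hadamards, two oracle queries, reversible classical arithmetic, and a single controlled rotation.
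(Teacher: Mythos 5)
Your construction is essentially identical to the paper's: Hadamards on the index register, one query each to the diagonal oracles for $H$ and $A$, reversible arithmetic computing a $\text{polylog}(1/\mu)$-bit approximation of $\arcsin\bigl(\sqrt{(1+A_{ii}/2)e^{-H_{ii}}/4}\bigr)$, a rotation of the flag qubit controlled on that register (the paper just spells out your ``controlled $Y$-rotation by the stored angle'' as a cascade of bit-controlled rotations), and no uncomputation since only the squared norm of the flag-$0$ component matters. The proof is correct; the only cosmetic differences are that you explicitly handle $n$ not a power of two and target the probability $\tilde p_i$ rather than the angle directly, neither of which changes the argument.
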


\begin{proof}
   First we prepare the state $\sum_{i=1}^{n}\ket{i}/\sqrt{n}$ with $\bigOn{\log(n)}$ one- and two-qubit gates. If $n$ is a power of~2 we do this by applying $\log_2(n)$ Hadamard gates on $\ket{0}^{\otimes\log_2(n)}$; in the general case it is still possible to prepare the state $\sum_{i=1}^{n}\ket{i}/\sqrt{n}$ with $\bigOn{\log(n)}$ two-qubit gates, for example by preparing the state $\sum_{i=1}^{k}\ket{i}/\sqrt{k}$ for $k=2^{\lceil\log_2(n)\rceil}$ and then using (exact) amplitude amplification in order to remove the $i>n$ from the superposition.
  
  Then we query the diagonal values of $H$ and $A$ to get the state $\sum_{i=1}^{n}\ket{i}\ket{H_{ii}}\ket{A_{ii}}/\sqrt{n}$. Using these binary values we apply a finite-precision arithmetic circuit to prepare
  $$
  \frac{1}{\sqrt{n}} \sum_{i=1}^{n} \ket{i}\ket{H_{ii}}\ket{A_{ii}}\ket{\beta_i}   \text{, where } \beta_i:=\arcsin\left(\sqrt{\frac{1+A_{ii}/2}{4}e^{-H_{ii}}+\delta_i}\right)/\pi \text{, and } |\delta_i|\leq \mu.
  $$
  The error $\delta_i$ is because we only write down a finite number of bits $b_1.b_2b_3\dots b_{\log(8/\mu)}$. Due to our choice of $A$ and $H$, we know that $\beta_i$ lies in $[0,1]$.
  We proceed by first adding an ancilla qubit initialized to $\ket{1}$ in front of the state, then we apply $\log(8/\mu)$ controlled rotations to this qubit: for each $b_j=1$ we apply a rotation by angle $\pi2^{-j}$. In other words, if $b_1 = 1$, then we rotate $\ket{1}$ fully to $\ket{0}$. If $b_2 = 1$, then we rotate halfway, and we proceed further by halving the angle for each subsequent bit.
  We will end up with the state:
  $$
  \frac{1}{\sqrt{n}}\sum_{i=1}^{n} \left(\sqrt{\frac{1+A_{ii}/2}{4}e^{-H_{ii}}+\delta_i}\,\ket{0}+\sqrt{1-\frac{1+A_{ii}/2}{4}e^{-H_{ii}}-\delta_i}\,\ket{1}\right) \ket{i}\ket{A_{ii}}\ket{H_{ii}}\ket{\beta_i}.
  $$
  It is now easy to see that the squared norm of the $\ket{0}$-part of this state is as required:
  $$
  \nrm{\frac{1}{\sqrt{n}}\sum_{i=1}^{n}\sqrt{\frac{1+A_{ii}/2}{4e^{H_{ii}}}+\delta_i}\,\ket{i}}^2
  =\frac{1}{n}\sum_{i=1}^{n}\left(\frac{1+A_{ii}/2}{4}e^{-H_{ii}}+\delta_i\right)
  =\frac{\tr{(I+A/2)e^{-H}}}{4n}+\sum_{i=1}^{n}\frac{\delta_i}{n},
  $$
  which is an additive $\mu$-approximation since $\left| \sum_{i=1}^{n}\frac{\delta_i}{n} \right| \leq \mu$.
\end{proof}
\begin{corollary}\label{col:traceDiagCalc}
 For $A,H \in \mathbb{R}^{n\times n}$ diagonal matrices with $\nrm{A}\leq 1$, an additive $\theta$-approximation of
  \[
    \tr{A\rho} = \frac{\tr{Ae^{-H}}}{\tr{e^{-H}}}
  \]
  can be computed using $\bigO\left(\frac{\sqrt{n}}{\theta}\right)$ queries to $A$ and $H$, and $\bOt{\frac{\sqrt{n}}{\theta}}$ other gates.
\end{corollary}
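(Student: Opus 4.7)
The plan is to combine Lemma~\ref{lem:diagU} (which supplies the required state-preparation unitary $\tilde U_{A,H}$ in the diagonal case) with Corollary~\ref{col:main22} (which converts such a unitary into an additive estimate of $\tr{A\rho}$ via amplitude estimation). The only subtlety is the factor $1/\sqrt{z}$ in Corollary~\ref{col:main22}, which forces us to guarantee that $\tr{e^{-H}}$ is not exponentially small; as noted in the text right before Section~\ref{sec:lptrace}, this is handled by shifting $H$ by its smallest eigenvalue.

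First I would compute (an approximation of) $\lambda_{\min}:=\min_i H_{ii}$. Since $H$ is diagonal, this reduces to finding the minimum of the function $i\mapsto H_{ii}$ on $[n]$, which the D\"urr--H\o yer algorithm does with $\bigO(\sqrt n)$ queries to $O_H$ and $\bOt{\sqrt n}$ other gates. Define $H_+ := H-\lambda_{\min}I$, so that $H_+\succeq 0$, $\lambda_{\min}(H_+)=0$, and $\tr{e^{-H_+}}\geq 1$. Crucially, $\tr{A\rho} = \tr{A e^{-H}}/\tr{e^{-H}} = \tr{A e^{-H_+}}/\tr{e^{-H_+}}$, so working with $H_+$ changes nothing about the quantity we want to estimate, and an oracle for $H_+$ is obtained from $O_H$ by subtracting the known scalar $\lambda_{\min}$ in the data register using one query.

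Next I would instantiate Lemma~\ref{lem:diagU} for the pairs $(A,H_+)$ and $(0,H_+)$ with error parameter $\mu := \theta/(144n)$, producing unitaries $\tilde U_{A,H_+}$ and $\tilde U_{0,H_+}$ whose ``$\ket 0$-probabilities'' approximate $\tr{(I+A'/2)e^{-H_+}}/(4n)$ up to additive error $\mu = \theta z/(144n)$ with $z=1$. Each of these uses $1$ query to $A$ and $H$ plus $\bigO(\log^{\bigO(1)}(n/\theta)+\log n)$ other gates on $q=\bigO(\log n + \log\log(1/\mu))$ qubits. Now I would simply plug these unitaries into Corollary~\ref{col:main22} with $z=1$ and precision $\theta$; that corollary, with high probability, outputs an additive $\theta$-approximation to $\tr{A\rho}$ using $\bigO(\sqrt{n/z}/\theta) = \bigO(\sqrt n/\theta)$ calls to $\tilde U_{A,H_+}^{\pm 1}$ and $\tilde U_{0,H_+}^{\pm 1}$ and $\bigO(q\sqrt n/\theta)$ additional gates.

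Adding up, the query count is $\bigO(\sqrt n) + \bigO(\sqrt n/\theta) = \bigO(\sqrt n/\theta)$ queries to $A$ and $H$, and the total gate count is $\bOt{\sqrt n/\theta}$, matching the claim. There is no real obstacle here: the computation of $\lambda_{\min}$ is a direct application of quantum minimum-finding and is negligible compared to the amplitude-estimation cost; the only thing to be careful about is choosing the precision $\mu$ inside Lemma~\ref{lem:diagU} small enough ($\mu\leq \theta z/(144n)$ with $z=1$) to meet the hypothesis of Corollary~\ref{col:main22}, and verifying that this only enters the gate complexity polylogarithmically.
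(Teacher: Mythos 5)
Your proof is correct and follows exactly the route the paper takes: Dürr--Høyer minimum-finding over the diagonal entries to get $\lambda_{\min}$, the shift $H_+ = H - \lambda_{\min}I$ so that $\tr{e^{-H_+}}\geq 1$ justifies taking $z=1$, and then Lemma~\ref{lem:diagU} plugged into Corollary~\ref{col:main22}. The only difference is that you spell out the choice of $\mu$ and the bookkeeping of the gate counts, which the paper leaves implicit.
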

\begin{proof}
  Since $H$ is a diagonal matrix, its eigenvalues are exactly its diagonal entries. Using the quantum minimum-finding algorithm of D{\"u}rr and H{\o}yer~\cite{durr1996QMinimumFinding} one can find (with high success probability) the minimum $\lambda_{\min}$ of the diagonal entries using $\bigO(\sqrt{n})$ queries to the matrix elements.
  Applying Lemma~\ref{lem:diagU} and Corollary~\ref{col:main22} to $H_+ = H-\lambda_{\min}I$, with $z=1$, gives the stated bound.
\end{proof}

\subsubsection{General case -- for SDP-solving}\label{sec:estTrArhogeneral}

In this section we will extend the ideas from the last section to non-diagonal matrices. There are a few complications that arise in this more general case. These mostly follow from the fact that we now do not know the eigenvectors of $H$ and $A$, which were the basis states before, and that these eigenvectors might not be the same for both matrices. For example, to find the minimal eigenvalue of $H$, we can no longer simply minimize over its diagonal entries. To solve this, in Appendix~\ref{app:genMinFind} we develop new techniques that generalize minimum-finding.

Furthermore, the unitary $\tilde{U}_{A,H}$ in the LP case could be seen as applying the operator
$$
\sqrt{\frac{I+A/2}{4}e^{-H}}
$$
to a superposition of its eigenvectors. This is also more complicated in the general setting, due to the fact that the eigenvectors are no longer the basis states. In Appendix~\ref{apx:LowWeight} we develop general techniques to apply smooth functions of Hamiltonians to a state. Among other things, this will be used to create an efficient purified Gibbs sampler.

Our Gibbs sampler uses similar methods to the work of Chowdhury and Somma~\cite{chowdhury2016QGibbsSampling} for achieving logarithmic dependence on the precision. However, the result of~\cite{chowdhury2016QGibbsSampling} cannot be applied to our setting, because it implicitly assumes access to an oracle for $\sqrt{H}$ instead of $H$. Although their paper describes a way to construct such an oracle, it comes with a large overhead: they construct an oracle for $\sqrt{H'}=\sqrt{H + \nu I}$, where $\nu\in\mathbb{R}_+$ is some possibly large positive number. This shifting can have a huge effect on $Z'=\tr{e^{-H'}}=e^{-\nu}\tr{e^{-H}}$, which can be prohibitive due to the $\sqrt{1/Z'}$ factor in the runtime, which blows up exponentially in $\nu$.

In the following lemma we show how to implement $\tilde{U}_{A,H}$ using the techniques we developed in Appendix~\ref{apx:LowWeight}.
\begin{lemma}\label{lemma:trPreEst}
  Let $A,H\in \mathbb{C}^{n\times n}$ be Hermitian matrices such that $\nrm{A}\leq 1$ and $I\preceq H\preceq K I$ for a known $K\in \mathbb{R}_+$. Assume $A$ is $s$-sparse and $H$ is $d$-sparse with $s\leq d$. Let $\mu > 0$ be an error parameter. Then there exists a unitary $\tilde{U}_{A,H}$ such that
  \[
    \left|  \nrm{ (\bra{0}\otimes I)\tilde{U}_{A,H}\ket{0\ldots0} }^2  - \tr{\frac{I+A/2}{4n}e^{-H}}\right| \leq \mu
  \]
  that uses $\bOt{Kd}$ queries to $A$ and $H$, and the same order of other gates.
\end{lemma}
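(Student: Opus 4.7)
The plan is to implement $\tilde{U}_{A,H}$ by first preparing a maximally entangled state and then applying block-encoded versions of $e^{-H/2}$ and $\sqrt{(I+A/2)/4}$ to one of the two registers, using the smooth-function-of-a-sparse-Hamiltonian techniques developed in Appendix~\ref{apx:LowWeight}. The key algebraic observation motivating the construction is the identity
\[
\frac{1}{n}\sum_{i=1}^n \bra{i}\, e^{-H/2}\,\frac{I+A/2}{4}\,e^{-H/2}\,\ket{i} \;=\; \tr{\frac{I+A/2}{4n}e^{-H}},
\]
which follows from the cyclicity of trace and tells us that if we can apply the subnormalized operator $M:=\sqrt{(I+A/2)/4}\cdot e^{-H/2}$ to one half of the maximally entangled state as the $\ket{0}$-block of a unitary, then the squared amplitude of the $\ket{0}$-flag component will equal the target exactly.

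Concretely, I would first prepare $\ket{\Phi}=\frac{1}{\sqrt{n}}\sum_{i=1}^n \ket{i}_A\ket{i}_B$ with $O(\log n)$ Hadamards and CNOTs. Next, using that $I\preceq H\preceq KI$, the function $f(x)=e^{-x/2}$ is smooth on $[1,K]$ with range in $[e^{-K/2},e^{-1/2}]\subset[0,1]$; applying Theorem~\ref{thm:Taylor} (or the sharper Corollary~\ref{cor:patched}) gives a circuit $V_H$ acting on register $B$ together with an ancilla flag qubit, realizing $e^{-H/2}$ to operator-norm error $\eps_1$ in its $\ket{0}$-flag block, at a cost of $\bOt{Kd}$ queries to $H$ (the degree of the required Fourier/polynomial approximation scales linearly with the eigenvalue range $K$). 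Similarly, since $\nrm{A}\leq 1$ we have $(I+A/2)/4\succeq I/8$, so $g(x)=\sqrt{(1+x/2)/4}$ is a smooth positive function on $[-1,1]$ with range in $[\sqrt{1/8},\sqrt{3/8}]\subset[0,1]$, and the same framework yields a circuit $V_A$ implementing $\sqrt{(I+A/2)/4}$ on register $B$ to error $\eps_2$ using $\bOt{s}$ queries to $A$. Composing $V_A\cdot V_H$ on register $B$ after preparing $\ket{\Phi}$, and bundling the two ancilla flags into a single $\ket{0\ldots 0}$-slot, defines $\tilde{U}_{A,H}$.

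The only remaining task is to track error propagation. Since both block-encodings realize operators of norm at most $1$, the error in their composition (as a subnormalized operator) is at most $\eps_1+\eps_2$ in operator norm, and consequently the squared amplitude of the flag-$0$ component deviates from its ideal value $\tr{(I+A/2)e^{-H}}/(4n)$ by at most $O(\eps_1+\eps_2)$; setting $\eps_1,\eps_2=\Theta(\mu)$ yields the desired additive error $\mu$ with only a $\mathrm{polylog}(1/\mu)$ overhead. I expect the main technical obstacle to lie in the Gibbs-block $V_H$: obtaining $e^{-H/2}$ to the required additive precision demands an approximation whose degree scales linearly in $K$, and carefully carrying out a Linear-Combination-of-Unitaries / Fourier-series argument using the Hamiltonian-simulation primitives is exactly what the Appendix~\ref{apx:LowWeight} machinery is designed to handle. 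Once that is in place, the $V_A$ block contributes only $\bOt{s}\leq\bOt{d}$ queries, so the overall query complexity (and, with the same count of two-qubit gates plus $O(\log n)$ for state preparation, the overall gate complexity) is $\bOt{Kd}$, as claimed.
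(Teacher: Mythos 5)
Your proposal is correct and follows essentially the same route as the paper's proof: prepare the maximally entangled state, apply block-encodings of $e^{-H/2}$ (Theorem~\ref{thm:emH}, cost $\bOt{Kd}$) and $\sqrt{(I+A/2)/4}$ (Theorem~\ref{thm:Taylor}, cost $\bOt{d}$) to one register, and use $\frac{1}{n}\tr{\frac{I+A/2}{4}e^{-H}}=\nrm{\sqrt{(I+A/2)/4}\,e^{-H/2}\ket{\Phi}}^2$. Your error propagation (operator-norm error of the composed subnormalized map, then a difference-of-squared-norms bound on the state) is a slightly more direct bookkeeping than the paper's trace-norm computation, but both yield the same $O(\eps_1+\eps_2)$ bound.
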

\begin{proof}
  The basic idea is that we first prepare a maximally entangled state $\sum_{i=1}^{n}\ket{i}\ket{i}/\sqrt{n}$, and then apply the (norm-decreasing) maps $e^{-H/2}$ and $\sqrt{\frac{I+A/2}{4}}$ to the first register. Note that we can assume without loss of generality that $\mu\leq 1$, otherwise the statement is trivial.

  Let $\tilde{W}_0=\left(\bra{0}\otimes I\right)\tilde{W}\left(\ket{0}\otimes I\right)$ be a $\mu/5$-approximation of the map $e^{-H/2}$ (in operator norm) implemented by using Theorem~\ref{thm:emH}, and let $\tilde{V}_0=\left(\bra{0}\otimes I\right)\tilde{V}\left(\ket{0}\otimes I\right)$ be a $\mu/5$-approximation of the map $\sqrt{\frac{I+A/2}{4}}$ implemented by using Theorem~\ref{thm:Taylor}. We define $\tilde{U}_{A,H}:=\tilde{V}\tilde{W}$, noting that there is a hidden $\otimes I$ factor in both $\tilde{V}$ and $\tilde{W}$ corresponding to their respective ancilla qubit.  As in the linear programming case, we are interested in the probability~$p$ of measuring outcome $00$ in the first register (i.e., the two ``flag'' qubits) after applying $\tilde{U}_{A,H}$.  We will analyze this in terms of these operators below. 
  \begin{align}
    p&:=\nrm{\left(\bra{00}\otimes I\right) \tilde{U}_{A,H}\left(\ket{00}\otimes I\right)\sum_{i=1}^{n}\frac{\ket{i}\ket{i}}{\sqrt{n}}}^2 \nonumber\\
     &=\nrm{\tilde{V}_0\tilde{W}_0\sum_{i=1}^{n}\frac{\ket{i}\ket{i}}{\sqrt{n}}}^2\nonumber\\
     &=\frac{1}{n}\sum_{i=1}^{n}\bra{i}\tilde{W}_0^\dagger\tilde{V}_0^\dagger\tilde{V}_0\tilde{W}_0\ket{i}\nonumber\\
     &=\frac{1}{n}\tr{\tilde{W}_0^\dagger\tilde{V}_0^\dagger\tilde{V}_0\tilde{W}_0}\nonumber\\
     &=\frac{1}{n}\tr{\tilde{V}_0^\dagger\tilde{V}_0\tilde{W}_0\tilde{W}_0^\dagger}\label{eq:exactTrace}
  \end{align}
Now we show that the above quantity is a good approximation of
  \begin{equation}\label{eq:approxTrace}
  \frac{1}{n}\tr{\frac{I+A/2}{4}e^{-H}}. 
  \end{equation}
For this we show that $\tilde{V}_0^\dagger\tilde{V}_0\approx (I+A/2)/4$ and $\tilde{W}_0\tilde{W}_0^\dagger\approx e^{-H}$. To see this, first note that for all matrices $B,\tilde{B}$ with $\nrm{B}\leq 1$, we have
\begin{align*}
  \nrm{B^\dagger B - \tilde{B}^\dagger \tilde{B}}& = \nrm{(B^\dagger-\tilde{B}^\dagger)B + B^\dagger(B-\tilde{B}) - (B^\dagger-\tilde{B}^\dagger)(B-\tilde{B})}\\
  & \leq    \nrm{(B^\dagger-\tilde{B}^\dagger)B} + \nrm{B^\dagger(B-\tilde{B})} +\nrm{(B^\dagger-\tilde{B}^\dagger)(B-\tilde{B})}\\
  & \leq    \nrm{B^\dagger-\tilde{B}^\dagger}\nrm{B} + \nrm{B^\dagger}\nrm{B-\tilde{B}} +\nrm{B^\dagger-\tilde{B}^\dagger}\nrm{B-\tilde{B}}\\
  & \leq 2\nrm{B-\tilde{B}}+\nrm{B-\tilde{B}}^2.
\end{align*}
  Since $\mu\leq 1$, and hence $2\mu/5 + (\mu /5)^2\leq \mu/2$, this implies (with $B=e^{-H/2}$ and $\tilde{B}=\tilde{W}_0^\dagger$) that $\nrm{e^{-H} - \tilde{W}_0\tilde{W}_0^\dagger}\leq \mu / 2$, and also (with $B=\sqrt{(I+A/2)/4}$ and $\tilde{B}=\tilde{V}_0$) $\nrm{(I\!+\!A/2)/4 - \tilde{V}_0^\dagger\tilde{V}_0}\leq \mu/2$. Let $\nrm{\cdot}_1$ denote the trace norm (a.k.a.\ Schatten 1-norm). Note that for all $C,D,\tilde{C},\tilde{D}$:
  \begin{align*}
    \left|\tr{CD}-\tr{\tilde{C}\tilde{D}}\right|
    &\leq \nrm{CD-\tilde{C}\tilde{D}}_1\\
    &= \nrm{ (C-\tilde{C})D + C(D-\tilde{D})-(C-\tilde{C})(D-\tilde{D}) }_1\\
    &\leq \nrm{(C-\tilde{C})D}_1+\nrm{C(D-\tilde{D})}_1+\nrm{(C-\tilde{C})(D-\tilde{D})}_1\\
    & \leq \nrm{C-\tilde{C}}\nrm{D}_1+\nrm{D-\tilde{D}}\left(\nrm{C}_1+\nrm{C-\tilde{C}}_1\right).
  \end{align*}
  Which, in our case (setting $C=(I+A/2)/4$, $D=e^{-H}$, $\tilde{C}=\tilde{V}_0^\dagger\tilde{V}_0$, and $\tilde{D}=\tilde{W}_0\tilde{W}_0^\dagger$) implies that
  \[
    \left|\tr{\left(I+A/2\right) e^{-H}/4}-\tr{\tilde{V}_0^\dagger\tilde{V}_0\tilde{W}_0\tilde{W}_0^\dagger}\right|	\leq (\mu/2) \tr{e^{-H}} +(\mu/2)(1/2+\mu / 2)n.
  \]
  Dividing both sides by $n$ and using equation~\eqref{eq:exactTrace} then implies
  \begin{align*}
    \left|\tr{\left(I+A/2\right) e^{-H}}/(4n)-p\right| & \leq \frac{\mu}{2} \frac{\tr{e^{-H}}}{n}+\frac{\mu}{2}\left(\frac{1}{2}+\frac{\mu}{2}\right) \\
                                                       &\leq \frac{\mu}{2}+ \frac{\mu}{2}\\
                                                       &= \mu.
  \end{align*}
  This proves the correctness of $\tilde{U}_{A,H}$. It remains to show that the complexity statement holds.
  To show this we only need to specify how to implement the map $\sqrt{\frac{I+A/2}{4}}$ using Theorem~\ref{thm:Taylor} (see Appendix~\ref{apx:LowWeight}), since the map $e^{H/2}$ is already dealt with in Theorem~\ref{thm:emH}. To use Theorem~\ref{thm:Taylor}, we choose $x_0:=0$, $K:=1$ and $r:=1$, since $\nrm{A}\leq 1$. Observe that $\sqrt{\frac{1+x/2}{4}}=\frac{1}{2}\sum_{k=0}^{\infty}\binom{1/2}{k}\left(\frac{x}{2}\right)^k$ whenever $|x|\leq 1$. Also let $\delta=1/2$, so $r+\delta=\frac{3}{2}$ and $\frac{1}{2}\sum_{k=0}^{\infty}\left|\binom{1/2}{k}\right|\left(\frac{3}{4}\right)^k\leq 1=:B$.
  Recall that $\tilde{V}$ denotes the unitary that Theorem~\ref{thm:Taylor} constructs.
  Since we choose the precision parameter to be $\mu/5=\Theta(\mu)$, Theorem~\ref{thm:Taylor} shows $\tilde{V}$ can be implemented using $\bigO\left(d\log^2\left(1/\mu\right)\right)$ queries and $\bigO\left(d\log^2\left(1/\mu\right)\left[\log(n)+\log^{2.5}\left(1/\mu\right)\right]\right)$ gates. This cost is negligible compared to the cost of our implementation of $e^{-H/2}$ with $\mu/5$ precision: Theorem~\ref{thm:emH} uses $\bigO\left(Kd\log^2\left(K/\mu\right)\right)$ queries and $\bigO\left(Kd\log^2\left(Kd/\mu\right)\left[\log(n)+\log^{2.5}\left(Kd/\mu\right)\right]\right)$ gates to implement~$\tilde{W}$.
\end{proof}
\begin{corollary}\label{col:traceCalc}
  Let $A,H\in \mathbb{C}^{n\times n}$ be Hermitian matrices such that $\nrm{A}\leq 1$ and $\nrm{H}\leq K$ for a known bound $K\in \mathbb{R}_+$. Assume $A$ is $s$-sparse and $H$ is $d$-sparse with $s\leq d$. An additive $\theta$-approximation of
  \[
    \tr{A\rho} = \frac{Ae^{-H}}{\tr{e^{-H}}}
  \]
  can be computed using $\bOt{\frac{\sqrt{n}dK}{\theta}}$ queries to $A$ and $H$, while using the same order of other gates.
\end{corollary}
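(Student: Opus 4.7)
The plan is to reduce to Lemma~\ref{lemma:trPreEst} and Corollary~\ref{col:main22}, which together already give us a way to estimate $\tr{A\rho}$ provided we can set up a shifted Hamiltonian $H_+$ satisfying $I\preceq H_+\preceq K'I$ for some explicit $K'$, along with a known lower bound $z$ on $\tr{e^{-H_+}}$. The ratio $\tr{Ae^{-H}}/\tr{e^{-H}}$ is invariant under shifts of $H$ by a multiple of $I$, so it is enough to replace $H$ by $H_+:=H-\tilde\lambda I + I$ for some $\tilde\lambda$ that is a constant-additive approximation to the smallest eigenvalue $\lambda_{\min}(H)$. Then $H_+$ has eigenvalues in an interval of the form $[1-\eta,\,2K+1+\eta]$ for $\eta=O(1)$, so $K'=O(K)$ works, and since at least one eigenvalue of $H_+$ lies near $1$, we get $\tr{e^{-H_+}}\geq e^{-(1+\eta)} = \Omega(1)$, allowing us to take $z=\Omega(1)$ in Corollary~\ref{col:main22}.

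The first step, then, is to compute such a $\tilde\lambda$. Unlike the diagonal case of Corollary~\ref{col:traceDiagCalc}, we cannot simply minimize over the diagonal entries of $H$. Instead I would use the generalized minimum-finding procedure of Appendix~\ref{app:genMinFind} applied to phase estimation on $U=e^{iH}$ executed on one half of a maximally entangled state $\frac{1}{\sqrt n}\sum_i\ket{i}\ket{i}$, exactly as described in the introduction. Any eigenvector of $H$ appears with weight $1/\sqrt{n}$ in this superposition, so the generalized minimum-finder will locate the minimum eigenvalue using $\tilde{O}(\sqrt{n})$ phase-estimation calls. Since $\nrm{H}\leq K$, a constant-precision phase estimation on $e^{iH}$ costs $\tilde{O}(dK)$ queries by sparse Hamiltonian simulation, giving a total of $\tilde{O}(\sqrt{n}\,dK)$ queries for this step.

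The second step is to feed $H_+$ into Lemma~\ref{lemma:trPreEst} to build, for $A'\in\{0,A\}$, the unitaries $\tilde{U}_{A',H_+}$ that satisfy
\[
\left|\nrm{(\bra{0}\otimes I)\tilde{U}_{A',H_+}\ket{0\ldots 0}}^2-\tr{\tfrac{I+A'/2}{4n}e^{-H_+}}\right|\leq \mu,
\]
each costing $\tilde{O}(K'd)=\tilde{O}(Kd)$ queries for the accuracy demanded by Corollary~\ref{col:main22} (namely $\mu=\Theta(\theta z/n)$, which only enters polylogarithmically). Then Corollary~\ref{col:main22} with $z=\Omega(1)$ yields an additive $\theta$-approximation to $\tr{A\rho}$ from $O(\sqrt{n/z}/\theta) = O(\sqrt{n}/\theta)$ applications of these unitaries. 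Multiplying the per-unitary cost by the number of calls gives $\tilde{O}(\sqrt{n}\,dK/\theta)$ queries, which dominates the $\tilde{O}(\sqrt{n}\,dK)$ cost of the eigenvalue-shifting step and matches the claimed bound; the gate cost is of the same order by the gate bounds in Lemma~\ref{lem:ampest} and Lemma~\ref{lemma:trPreEst}.

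The main obstacle I expect is setting up the eigenvalue shift cleanly. One has to argue that the generalized minimum-finder, combined with approximate Hamiltonian simulation of $e^{iH}$, indeed locates $\lambda_{\min}(H)$ to the constant precision needed, and that the small error in $\tilde\lambda$ propagates only to a constant multiplicative error in $\tr{e^{-H_+}}$ (hence still permits $z=\Omega(1)$). Everything else is a direct composition of the tools already established: Lemma~\ref{lemma:trPreEst} gives the Gibbs-state-based block encoding, Corollary~\ref{col:main22} converts this into an amplitude-estimation routine for $\tr{A\rho}$, and the shift preserves the target ratio exactly.
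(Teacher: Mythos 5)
Your proposal is correct and follows essentially the same route as the paper: the paper's proof also estimates $\lambda_{\min}(H)$ to additive error $1/2$ (via Lemma~\ref{lemma:normEst}, which is exactly the generalized-minimum-finding-plus-phase-estimation routine you describe), shifts to $H_+ := H-(\tilde{\lambda}_{\min}-3/2)I$, and then invokes Lemma~\ref{lemma:trPreEst} and Corollary~\ref{col:main22} with $z=e^{-2}$. The only adjustment needed in your write-up is the shift constant: shifting by $+I$ alone does not guarantee $I\preceq H_+$ when $\tilde{\lambda}$ overestimates $\lambda_{\min}$, so one shifts by $1+\eta=3/2$ as the paper does, which changes nothing else in the argument.
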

\begin{proof}
  Start by computing an estimate $\tilde{\lambda}_{\min}$ of $\lambda_{\min}(H)$, the minimum eigenvalue of $H$, up to additive error $\eps = 1/2$ using Lemma~\ref{lemma:normEst}.
  We define $H_+ := H-(\tilde{\lambda}_{\min}-3/2)I$, so that $I\preceq H_+$ but $2I\nprec H_+$.

  Applying Lemma~\ref{lemma:trPreEst} and Corollary~\ref{col:main22} to $H_+$ with $z=e^{-2}$ gives the stated bound.
\end{proof}

\subsection{An efficient 2-sparse oracle} \label{sec:oracle}
To motivate the problem below, recall from the end of Section~\ref{sec:classicalAK} that $\tilde{a}_j$ is an additive $\theta$-approximation to $\tr{A_j\rho}$, $\tilde{c}$ is an additive  $\theta$-approximation to $\tr{C\rho}$ and $\cp = \tilde{c} - r \theta - \theta$. We first describe a simplified version of our quantum 2-sparse oracle (Lemma~\ref{lem:oracle}) that assumes access to a unitary acting as $\ket{j}\ket{0}\ket{0} \mapsto \ket{j} \ket{\tilde{a}_j} \ket{\psi_j}$, where $\ket{\psi_j}$ is some workspace state depending on~$j$.

At the end of this section we then discuss how to modify the analysis when we are given an oracle that acts as $\ket{j}\ket{0}\ket{0} \mapsto \ket{j} \sum_i \beta_j^{i}\ket{\tilde{a}^{i}_j} \ket{\psi^{i}_j}$ where each $\ket{\tilde{a}^{i}_j}$ is an approximation of $a_j$ and the amplitudes $\beta_j^{i}$ are such that measuring the second register with high probability returns an $\tilde{a}^{i}_j$ which is $\theta$-close to $a_j$.
We do so since the output of the trace-estimation procedure of the previous section is of this more complicated form.

\medskip

Our goal is to find a $y \in \pt(\tilde{a},\cp)$, i.e., a $y$ such that
\begin{align*}
  \nrm{y}_1&\leq r\\
  b^T y &\leq \alpha\\
  \tilde{a}^T y &\geq \cp\\
  y &\geq 0.
\end{align*}
Our first observation is that the polytope $\pt(\tilde{a},\cp)$ is extremely simple: it has only three non-trivial constraints and, if it is non-empty, then it contains a point with at most $2$ non-zero coordinates. The latter follows from general properties of linear programs: any feasible LP with at most $k$ constraints has a $k$-sparse optimal solution (see, e.g.,~\cite[Ch.~7]{schrijver1986TheoryOfLPAndIP}). Note that non-emptiness of $\pt(\tilde a,\cp)$ is equivalent to the optimal value of 
\begin{equation} \label{eq:simple LP}
\min \quad 1^T y \ \text{ s.t. } \  b^T y \leq \alpha, \ \tilde a^T y \geq \cp, \ y \geq 0
\end{equation}
being at most $r$ (the latter being an LP with only $2$ non-trivial constraints). We will give an alternative, constructive proof that we can obtain a $2$-sparse solution in Lemma~\ref{lem:2DArg}. This will also illustrate the intuition behind the definition of our $2$-sparse oracle. 

Before doing so, let us give a first naive approach to find a $y \in \pt(\tilde{a},\cp)$ which will not be sufficiently efficient for our purposes. Using the formulation in Equation~\eqref{eq:simple LP}, we could attempt to find a $y \in \pt(\tilde{a},\cp)$ by solving $\bigO(m^2)$ linear programs with only $2$ variables and $2$ constraints each (these LPs are obtained from~\eqref{eq:simple LP} by setting all but 2 variables equal to zero) and searching for an optimal solution whose value is at most $r$.
Here each linear program is determined by the values $\tilde{a}_j$, $b_j$ and~$\cp$, and thus we can decide if $\pt(\tilde{a},\cp)$ is non-empty using $\bigO(m^2)$ classical time (given these values). 

We use a more sophisticated approach to show that $\bigO(m)$ classical operations (and queries) suffice. Our approach is amenable to a quantum speedup: it also implies that only $\bOt{\sqrt{m}}$ quantum queries suffice. In particular, we now show how to reduce the problem of finding a $y \in \pt(\tilde{a},\cp)$ to finding a convex combination of points $(b_j,\tilde{a}_j)$ that lies within a certain region of the plane.

\medskip

First observe that if $\alpha \geq 0$ and $\cp \leq 0$, then $y=0$ is a solution and our oracle can return it. From now on we will assume that $\alpha < 0$ or $\cp>0$. 
Then for a feasible point $y$ we may write $y = Nq$ with $N=\nrm{y}_1 > 0$ and hence $\nrm{q}_1 = 1$.
So we are looking for an $N$ and a $q$ such that
\begin{align}
  b^T q &\leq \alpha / N\label{eq:Nbound}\\
  \tilde{a}^T q &\geq \cp / N\notag\\\notag
  \nrm{q}_1 &= 1\\\notag
  q &\geq 0\\\notag
  0 &< N \leq r.
\end{align}
We can now view $q \in \mathbb{R}_+^m$ as the coefficients of a convex combination of the points $p_i = (b_i,\tilde{a}_i)$ in the plane. We want such a combination that lies to the upper left of $g_N = (\alpha/N,\cp / N)$ for some $0 < N\leq r$. Let $\mathcal{G}_N$ denote the upper-left quadrant of the plane starting at $g_N$.
\begin{lemma}\label{lem:2DArg}
  If there is a $y \in \pt (\tilde{a},\cp)$, then there is a $2$-sparse $y^{\prime} \in \pt (\tilde a,\cp)$ such that $\nrm{y}_1 = \nrm{y^{\prime}}_1$.
\end{lemma}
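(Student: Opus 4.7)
The plan is to translate the statement into a planar geometry question and then combine a basic vertex count with a short case analysis. First I would dispose of $\nrm{y}_1=0$ (then $y'=0$ is $2$-sparse), and for $N:=\nrm{y}_1>0$ rescale to $q:=y/N\in\Delta_m$. Associating to each $i\in[m]$ the planar point $p_i:=(b_i,\tilde{a}_i)$, the membership $y\in\pt(\tilde{a},\cp)$ becomes the statement that $\sum_i q_i p_i$ lies in the closed upper-left quadrant $\mathcal{G}:=\{(x,z):x\leq\alpha/N,\,z\geq\cp/N\}$ with corner $g:=(\alpha/N,\cp/N)$, and finding the sought $y'=Nq'$ amounts to producing a $2$-sparse $q'\in\Delta_m$ with $\sum_i q'_i p_i\in\mathcal{G}$.

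Next I would pass to a $3$-sparse feasible $q$ by basic polytope vertex theory: the feasible set $\{q\in\Delta_m:b^Tq\leq\alpha/N,\;\tilde{a}^Tq\geq\cp/N\}$ is cut out by $m$ nonnegativity constraints, one equality, and two further inequalities, so every vertex has at least $m-3$ active nonnegativity constraints and hence support of size at most $3$. Replace $q$ by such a vertex; if its support already has size $\leq 2$ we are done.

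The core of the argument is a case analysis on the triangle $T:=\mathrm{conv}\{p_{i_1},p_{i_2},p_{i_3}\}$, which intersects $\mathcal{G}$ since $\sum_k q_{i_k}p_{i_k}\in\mathcal{G}$. If some vertex $p_{i_k}$ already lies in $\mathcal{G}$, the $1$-sparse vector $e_{i_k}$ does the job. Otherwise consider the bounded convex polygon $T\cap\mathcal{G}$: its extreme points can only be of three types --- (a) vertices of $T$ in $\mathcal{G}$ (excluded), (b) the corner $g$ of $\mathcal{G}$ when it lies in $T$, or (c) intersection points of an edge of $T$ with one of the two boundary rays of $\mathcal{G}$. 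A type-(c) extreme point is, by construction, a convex combination of two of the $p_{i_k}$ lying in $\mathcal{G}$, delivering the desired $2$-sparse $q'$. The only way to avoid type-(c) points is $T\cap\mathcal{G}=\{g\}$; but if $g$ lies in the interior of $T$, a whole neighborhood of $g$ sits inside $T\cap\mathcal{G}$, forcing extra extreme points, and if $g$ lies on an edge of $T$ (or the $p_{i_k}$ are collinear so that $T$ degenerates to a segment), then $g$ is already itself a $2$-sparse convex combination of two of the $p_{i_k}$.

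The main obstacle I expect is the bookkeeping around degenerate configurations --- collinear triples, $g$ sitting exactly on $\partial T$, or the two oracle inequalities being tight only at boundary points --- but in every case the trichotomy above collapses to ``some vertex of $T$ lies in $\mathcal{G}$'' (yielding $1$-sparse) or ``some edge of $T$ crosses $\partial\mathcal{G}$'' (yielding $2$-sparse), which is exactly what the lemma asks for, and the resulting $q'$ satisfies $\nrm{Nq'}_1=N=\nrm{y}_1$.
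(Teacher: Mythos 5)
Your proof is correct and follows the same geometric route as the paper: rescale to a convex combination of the points $p_i=(b_i,\tilde a_i)$ landing in the quadrant $\mathcal G_N$, and conclude that some segment $\overline{p_jp_k}$ already meets the quadrant. The paper simply asserts this last geometric fact, whereas you actually justify it (via reduction to a vertex of the feasible polytope, hence $3$-sparse support, followed by the case analysis on the triangle), so your write-up is a more complete version of the same argument.
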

\begin{proof}
Let $y\in \pt (\tilde{a},\cp)$, and $N=\nrm{y}_1$. Consider $p_i = (b_i,\tilde{a}_i)$ and $g = (\alpha / N,\cp / N)$ as before, and write $q = y/N$ such that $\sum_{j=1}^m q_j =1$, $q \geq 0$. The vector $q$ certifies that a convex combination of the points $p_i$ lies in $\mathcal{G}_N$. But then there exist $j,k \in \lbrack m \rbrack$ such that the line segment $\overline{p_jp_k}$ intersects~$\mathcal{G}_N$. All points on this line segment are convex combinations of $p_j$ and $p_k$, hence there is a convex combination of $p_j$ and $p_k$ that lies in~$\mathcal{G}_N$. This gives a $2$-sparse $q^{\prime}$, and $y' = N q' \in \pt (\tilde{a},\cp)$.
\end{proof}
We can now restrict our search to $2$-sparse $y$.
Let $\mathcal{G} = \bigcup_{N \in (0, r\rbrack } \mathcal{G}_N$, see Figure~\ref{fig:graph} for the shape of~$\mathcal G$. Then we want to find two points $p_j,p_k$ that have a convex combination in $\mathcal{G}$, since this implies that a scaled version of their convex combination gives a $y \in \pt(\tilde{a},\cp)$ with $\nrm{y}_1 \leq r$ (this scaling can be computed efficiently given $p_j$ and $p_k$).

Furthermore, regarding the possible (non-)emptiness of $\mathcal{G}$ we know the following by Lemma~\ref{lem:approxP} and Lemma~\ref{lem:2DArg}:
\begin{itemize}
\item If $\mathcal P_{0}(X) \cap \{y \in \mathbb R^m \colon \sum_{j} y_j \leq r\}$ is non-empty, then some convex combination of two of the $p_j$'s lies in $\mathcal G$.
\item If $\mathcal P_{4Rr\theta}(X) \cap \{y \in \mathbb R^m \colon \sum_{j} y_j \leq r\}$ is empty, then no convex combination of the $p_j$'s lies in~$\mathcal{G}$.
\end{itemize}
We first prove a simplified version of the main result of this section. The analysis below applies if there are $m$ points $p_j = (b_j,\tilde a_j)$, where $j \in [m]$, and we are given a unitary which acts as $\ket{j}\ket{0}\ket{0} \mapsto \ket{j} \ket{\tilde{a}_j} \ket{\psi_j}$. We later prove the result for more general oracles, which may give superpositions of approximations instead of just the one value~$\tilde a_j$.

 \begin{figure}[H]
	\renewcommand{\scale}{1}
	\renewcommand{\sfw}{.49}
	\centering
	\begin{subfigure}[b]{\sfw \linewidth}
		\centering
		\renewcommand{\cx}{-1}
		\renewcommand{\cy}{-.5}
		\begin{tikzpicture}[
        scale=\scale,
        axis/.style={thick, -, >=stealth'},
        arrow/.style={thick, ->, >=stealth'},
		qu/.style={very thick, blue,dashed},
        corner node/.style={color=red,large}
     ]

        \coordinate (corner) at (\cx,\cy);     
        \coordinate (corner3) at ($3*(corner)$);
        \coordinate (corner2) at ($2*(corner)$);
        \coordinate (UR) at (3,3);
		\coordinate (UL) at (-3,3);
        \coordinate (LR) at (3,-3);
		\coordinate (LL) at (-3,-3);
		\coordinate (cornerL) at (-3,\cy);
		\coordinate (cornerU) at (\cx,3);
        \coordinate (corner2L) at ($(-3,2*\cy)$);
		\coordinate (corner2U) at ($(2*\cx,3)$);
		\coordinate (corner3L) at ($(-3,3*\cy)$);
		\coordinate (corner3U) at ($(3*\cx,3)$);
	\fill[fill=blue!10, very thick] (corner) -- (corner3) -- (corner3U) -- (UL) -- (cornerL);
    \fill[fill=blue!10, very thick] (corner) -- (cornerU) -- (UL) -- (corner3);
    \fill[fill=blue!10, very thick] (corner3) -- (corner3U) -- (UL) -- (corner3L);
    \draw[axis] (-3,0)  -- (3,0) node(xline)[right] {};
    \draw[axis] (0,-3)  -- (0,3) node(xline)[right] {};
    \draw[arrow] (corner) -- (corner3);
    \draw[qu] (cornerL) -- (corner) -- (cornerU);
    \draw[qu] (corner2L) -- (corner2) -- (corner2U);
    \fill[red] (corner) circle (2pt) node[right] {};
\end{tikzpicture}
		\caption{$\alpha<0,\cp<0$}
	\end{subfigure}
	\begin{subfigure}[b]{\sfw \linewidth}
		\centering
		\renewcommand{\cx}{-1}
		\renewcommand{\cy}{.5}
		\begin{tikzpicture}[
        scale=\scale,
        axis/.style={thick, -, >=stealth'},
        arrow/.style={thick, ->, >=stealth'},
		qu/.style={very thick, blue,dashed},
        corner node/.style={color=red,large}
     ]

        \coordinate (corner) at (\cx,\cy);     
        \coordinate (corner3) at ($3*(corner)$);
        \coordinate (corner2) at ($2*(corner)$);
        \coordinate (UR) at (3,3);
		\coordinate (UL) at (-3,3);
        \coordinate (LR) at (3,-3);
		\coordinate (LL) at (-3,-3);
		\coordinate (cornerL) at (-3,\cy);
		\coordinate (cornerU) at (\cx,3);
        \coordinate (corner2L) at ($(-3,2*\cy)$);
		\coordinate (corner2U) at ($(2*\cx,3)$);
		\coordinate (corner3L) at ($(-3,3*\cy)$);
		\coordinate (corner3U) at ($(3*\cx,3)$);
	\fill[fill=blue!10, very thick] (corner) -- (corner3) -- (corner3U) -- (UL) -- (cornerL);
    \fill[fill=blue!10, very thick] (corner) -- (cornerU) -- (UL) -- (corner3);
    \fill[fill=blue!10, very thick] (corner3) -- (corner3U) -- (UL) -- (corner3L);
    \draw[axis] (-3,0)  -- (3,0) node(xline)[right] {};
    \draw[axis] (0,-3)  -- (0,3) node(xline)[right] {};
    \draw[arrow] (corner) -- (corner3);
    \draw[qu] (cornerL) -- (corner) -- (cornerU);
    \draw[qu] (corner2L) -- (corner2) -- (corner2U);
    \fill[red] (corner) circle (2pt) node[right] {};
\end{tikzpicture}
		\caption{$\alpha<0,\cp\geq 0$}
	\end{subfigure}
	\vskip .5cm
	\begin{subfigure}[b]{\sfw \linewidth}
		\centering
		\renewcommand{\cx}{1}
		\renewcommand{\cy}{-.5}
		\begin{tikzpicture}[
        scale=\scale,
        axis/.style={thick, -, >=stealth'},
        arrow/.style={thick, ->, >=stealth'},
		qu/.style={very thick, blue,dashed},
        corner node/.style={color=red,large}
     ]

        \coordinate (corner) at (\cx,\cy);     
        \coordinate (corner3) at ($3*(corner)$);
        \coordinate (corner2) at ($2*(corner)$);
        \coordinate (UR) at (3,3);
		\coordinate (UL) at (-3,3);
        \coordinate (LR) at (3,-3);
		\coordinate (LL) at (-3,-3);
		\coordinate (cornerL) at (-3,\cy);
		\coordinate (cornerU) at (\cx,3);
        \coordinate (corner2L) at ($(-3,2*\cy)$);
		\coordinate (corner2U) at ($(2*\cx,3)$);
		\coordinate (corner3L) at ($(-3,3*\cy)$);
		\coordinate (corner3U) at ($(3*\cx,3)$);
	\fill[fill=blue!10, very thick] (corner) -- (corner3) -- (corner3U) -- (UL) -- (cornerL);
    \fill[fill=blue!10, very thick] (corner) -- (cornerU) -- (UL) -- (corner3);
    \fill[fill=blue!10, very thick] (corner3) -- (corner3U) -- (UL) -- (corner3L);
    \draw[axis] (-3,0)  -- (3,0) node(xline)[right] {};
    \draw[axis] (0,-3)  -- (0,3) node(xline)[right] {};
    \draw[arrow] (corner) -- (corner3);
    \draw[qu] (cornerL) -- (corner) -- (cornerU);
    \draw[qu] (corner2L) -- (corner2) -- (corner2U);
    \fill[red] (corner) circle (2pt) node[right] {};
\end{tikzpicture}
		\caption{$\alpha\geq 0,\cp< 0$}
	\end{subfigure}
	\begin{subfigure}[b]{\sfw \linewidth}
		\centering
		\renewcommand{\cx}{1}
		\renewcommand{\cy}{.5}
		\begin{tikzpicture}[
        scale=\scale,
        axis/.style={thick, -, >=stealth'},
        arrow/.style={thick, ->, >=stealth'},
		qu/.style={very thick, blue,dashed},
        corner node/.style={color=red,large}
     ]

        \coordinate (corner) at (\cx,\cy);     
        \coordinate (corner3) at ($3*(corner)$);
        \coordinate (corner2) at ($2*(corner)$);
        \coordinate (UR) at (3,3);
		\coordinate (UL) at (-3,3);
        \coordinate (LR) at (3,-3);
		\coordinate (LL) at (-3,-3);
		\coordinate (cornerL) at (-3,\cy);
		\coordinate (cornerU) at (\cx,3);
        \coordinate (corner2L) at ($(-3,2*\cy)$);
		\coordinate (corner2U) at ($(2*\cx,3)$);
		\coordinate (corner3L) at ($(-3,3*\cy)$);
		\coordinate (corner3U) at ($(3*\cx,3)$);
	\fill[fill=blue!10, very thick] (corner) -- (corner3) -- (corner3U) -- (UL) -- (cornerL);
    \fill[fill=blue!10, very thick] (corner) -- (cornerU) -- (UL) -- (corner3);
    \fill[fill=blue!10, very thick] (corner3) -- (corner3U) -- (UL) -- (corner3L);
    \draw[axis] (-3,0)  -- (3,0) node(xline)[right] {};
    \draw[axis] (0,-3)  -- (0,3) node(xline)[right] {};
    \draw[arrow] (corner) -- (corner3);
    \draw[qu] (cornerL) -- (corner) -- (cornerU);
    \draw[qu] (corner2L) -- (corner2) -- (corner2U);
    \fill[red] (corner) circle (2pt) node[right] {};
\end{tikzpicture}
		\caption{$\alpha\geq 0,\cp\geq 0$}
	\end{subfigure}
	
	\caption{The region $\mathcal G$ in light blue. The borders of two quadrants $\mathcal G_N$ have been drawn by thick dashed blue lines. The red dot at the beginning of the arrow is the point $(\alpha/r,c^{\prime}/r)$.}
	\label{fig:graph}
\end{figure}
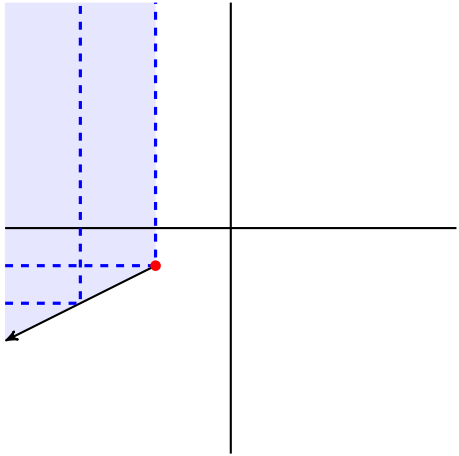
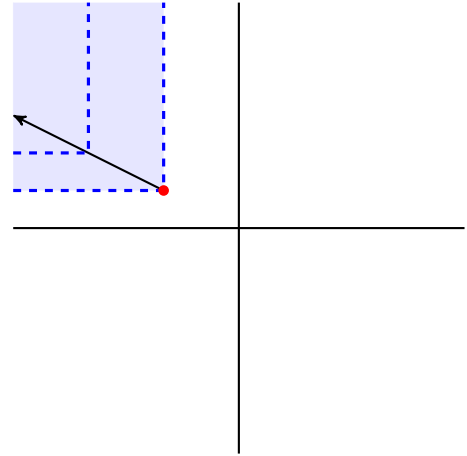
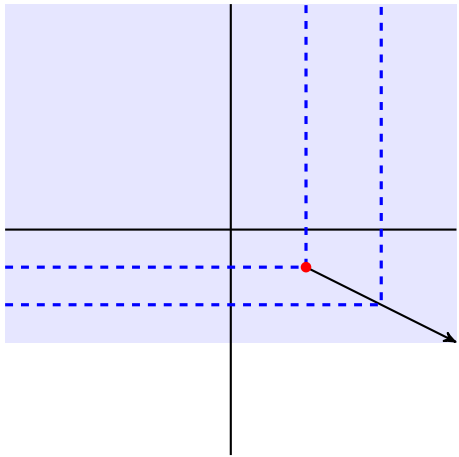
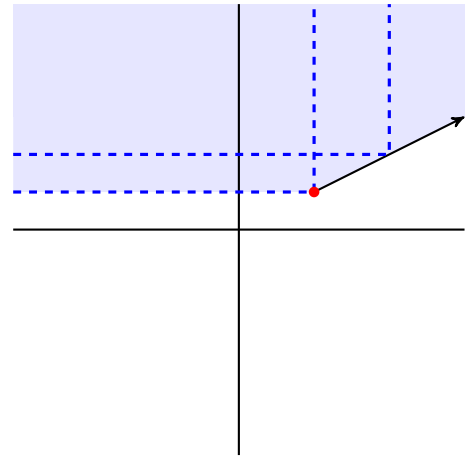

\begin{lemma}[Simple version]\label{lem:oracle}
There is an algorithm that returns a 2-sparse vector $q$ (with $q \geq 0$ and $\|q\|_1=1$) such that $\sum_{j=1}^m q_jp_j\in \mathcal G$, if one exists, using one search and two minimizations over the $m$ points $p_j=(b_j,\tilde{a}_j)$.
This gives a classical algorithm that uses $\bigO(m)$ calls to the subroutine that gives the entries of~$\tilde{a}$, and $\bigO(m)$ other operations; and a quantum algorithm that (in order to solve the problem with high probability) uses $\bigO(\sqrt{m})$ calls to an (exact quantum) subroutine that gives the entries of~$\tilde{a}$, and $\bOt{\sqrt{m}}$ other gates.
\end{lemma}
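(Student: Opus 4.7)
The plan is to split into cases and reduce the problem to planar geometry. If $\alpha\geq 0$ and $\cp\leq 0$, the zero vector already lies in $\pt(\tilde a,\cp)$ and the oracle returns it immediately. Otherwise, by Lemma~\ref{lem:2DArg}, it suffices to look for a 2-sparse certificate, and the question becomes purely geometric: does the convex hull of the $m$ points $p_j=(b_j,\tilde a_j)$ meet the region $\mathcal{G}=\bigcup_{N\in(0,r]}\mathcal{G}_N$, and if so, exhibit two points whose connecting segment crosses~$\mathcal{G}$? The first step is therefore to write down an explicit description of $\mathcal{G}$ in terms of the signs of $\alpha$ and $\cp$: the anchor point $g_N=(\alpha/N,\cp/N)$ traces a ray as $N$ varies in $(0,r]$, and taking the union of the upper-left quadrants gives a region whose boundary consists of at most two pieces (coming from the extremal values $N=r$ and $N\to 0^+$), each describable as a half-plane-type constraint.

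The oracle itself then executes in two phases. The search phase scans the $m$ points looking for a single $p_{j^\star}\in\mathcal{G}$; whenever such a point exists, a 1-sparse vector supported on $j^\star$ (with $y_{j^\star}$ equal to any feasible witness $N\in(0,r]$) is a valid output. If the search fails, then by Lemma~\ref{lem:2DArg} any 2-sparse solution must use a pair of points straddling $\mathcal{G}$, one on each side of its boundary. The two minimizations are then set up to extremize linear objectives along the two boundary directions of $\mathcal{G}$ (which depend on the sign case determined in the previous paragraph), producing the two candidate extreme points $p_{j^\star},p_{k^\star}$. From these one computes the unique convex combination lying on the boundary of $\mathcal{G}$, and reads off the corresponding 2-sparse vector~$y$.

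Complexity-wise, each of the three passes is a linear scan classically, yielding the $\bigO(m)$ bound. In the quantum setting, the search is implemented by Grover search and the two minimizations by the D\"urr--H\o yer algorithm~\cite{durr&hoyer:minimum}, each using $\bigO(\sqrt m)$ queries; amplifying each subroutine's success probability to a high constant at logarithmic cost and union-bounding over the three calls yields the $\bigO(\sqrt m)$-query, $\bOt{\sqrt m}$-gate bound with high overall success probability. For the general case in which $\tilde a_j$ is only accessible through the superposition $\sum_i\beta_j^{i}\ket{\tilde a_j^{i}}\ket{\psi_j^{i}}$ produced by the trace-estimation subroutine of Section~\ref{sec:trCalc}, one replaces standard minimum-finding by the generalized procedure of Appendix~\ref{app:genMinFind}, which was designed precisely to deal with this form of approximate coherent access rather than a clean classical oracle.

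The main obstacle is establishing correctness of the two-point step, namely that the specific pair of extreme points output by the two minimizations yields a valid convex combination whenever \emph{any} pair does. This reduces to a case analysis on the shape of $\mathcal{G}$ (driven by the signs of $\alpha$ and $\cp$) together with the observation that, in the plane, if the convex hull of $\{p_j\}$ intersects a region whose boundary has at most two active facets, then the intersection is already witnessed by a pair of extreme points in one of a small number of predictable directions. A secondary subtlety is that the minimization objectives must be chosen to be well-posed in each sign case and compatible with the restricted quadrant of $\mathcal{G}$; once this is handled, the correctness reduces to a short case check.
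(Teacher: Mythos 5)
Your architecture matches the paper's: the same three phases (return $y=0$ when $\alpha\geq 0$ and $\cp\leq 0$; a single search for one point $p_i\in\mathcal{G}$; then two separate extremizations to find a crossing pair), the same case analysis of $\mathcal{G}$ by the signs of $\alpha$ and $\cp$, and the same quantum instantiation via Grover/D\"urr--H\o yer and the generalized minimum-finding of Appendix~\ref{app:genMinFind}. However, the step you yourself flag as ``the main obstacle'' is in fact the entire content of the lemma, and the substitute you propose for it does not work. You say the two minimizations ``extremize linear objectives along the two boundary directions of $\mathcal{G}$.'' The correct quantities to minimize are not linear functionals of the $p_j$ but the \emph{angles} $\angle \ell_j L_1$ and $\angle L_2\ell_k$ subtended at the corner point $(\alpha/r,\cp/r)$ between the ray toward $p_j$ (resp.\ $p_k$) and the edge $L_1$ (resp.\ $L_2$). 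The reason separate minimization is sound is the decomposition: the segment $\overline{p_jp_k}$ meets $\mathcal{G}$ if and only if (up to relabeling) $\angle \ell_j L_1 + \angle L_1 L_2 + \angle L_2\ell_k \leq \pi$, and since the middle angle is fixed, the left-hand side is minimized by minimizing the two outer angles independently; if the minimizing pair fails the test, every pair fails. A linear objective does not certify this: a point far away in a direction nearly parallel to $L_2$ can maximize the projection onto $L_2$'s direction while subtending a strictly larger angle at the corner than a nearby point, and one can arrange the remaining points so that only the smaller-angle point participates in a crossing pair. Your fallback observation --- that an intersection of the convex hull with a two-facet region is ``witnessed by extreme points in predictable directions'' --- is the same unproven claim restated, and it is false for the linear-functional notion of extremality.

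A secondary omission: the lemma also requires computing a valid scaling $N$ (the paper takes any $N\in\left[\cp/(q^T\tilde a),\,\alpha/(q^Tb)\right]$) to convert the convex combination $q$ into the output $y=Nq$; you mention reading off $y$ from the boundary intersection but do not verify that a feasible $N\in(0,r]$ exists for the pair returned. Once you replace the linear objectives by the angle objectives and add the angle-sum argument, the rest of your proof (the case analysis of $\mathcal{G}$, the complexity accounting, and the use of generalized minimum-finding to handle the superposed estimates $\tilde a_j^{i}$) goes through as in the paper.
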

\addtocounter{lemma}{-1}
\begin{proof}
  The algorithm can be summarized as follows:
  \begin{enumerate}
  \item Check if $\alpha \geq 0$ and $\cp\leq 0$. If so, then return $q=0$. \label{it:step1}
  \item Check if there is a $p_i\in \mathcal G$. If so, then return $q = e_i$ 
  \label{it:step2}
  \item Find $p_j,p_k$ so that the line segment $\overline{p_jp_k}$ goes through $\mathcal G$ and return the corresponding $q$. \label{it:step3}
  \item If the first three steps did not return a vector $q$, then output `Fail'.
    \end{enumerate}

  The main realization is that in step 3 we can search separately for $p_j$ and $p_k$. 
  We explain this in more detail below, but first we will need a better understanding of the shape of $\mathcal G$ (see Figure~\ref{fig:graph} for illustration). The shape of $\mathcal G$ depends on the sign of $\alpha$ and $\cp$. 
  \begin{itemize}
  \item[(a)] If $\alpha < 0$ and $\cp < 0$. The corner point of $\mathcal{G}$ is $(\alpha / r,\cp / r)$. One edge goes up vertically and an other follows the line segment $\lambda \cdot  (\alpha,\cp)$ for $\lambda \in [ 1/r,\infty)$ starting at the corner. 
  \item[(b)] If $\alpha < 0$ and $\cp\geq 0$. Here $\mathcal G_N \subseteq \mathcal G_r$ for $N\leq r$. So $\mathcal G = \mathcal G_r$. The corner point is again $(\alpha / r,\cp / r)$, but now one edge goes up vertically and one goes to the left horizontally.
  \item[(c)] If $\alpha\geq 0$ and $\cp \leq 0$. This is the case where $y=0$ is a solution, $\mathcal G$ is the whole plane and has no corner.
  \item[(d)] If $\alpha\geq 0$ and $\cp > 0$. The corner point of $\mathcal G$ is again $(\alpha / r,\cp / r)$. From there one edge goes to the left horizontally and one edge follows the line segment $\lambda \cdot (\alpha,\cp)$ for $\lambda \in [ 1/r,\infty)$.
  \end{itemize}
  
    Since $\mathcal G$ is always an intersection of at most $2$ halfspaces, steps~\ref{it:step1}-\ref{it:step2} of the algorithm are easy to perform. In step~\ref{it:step1} we handle case (c) by simply returning $y=0$. For the other cases $(\alpha /r ,\cp /r)$ is the corner point of $\mathcal G$ and the two edges are simple lines. Hence in step~\ref{it:step2} we can easily search through all the points to find out if there is one lying in $\mathcal G$; since $\mathcal G$ is a very simple region, this only amounts to checking on which side of the two lines a point lies.

  \begin{figure}[hbt]
    \renewcommand{\scale}{.75}
    \renewcommand{\sfw}{.49}
    \centering
    \centering
    \renewcommand{\cx}{1}
    \renewcommand{\cy}{-1}
    \begin{tikzpicture}[
        scale=\scale,
        line/.style={thick, -, >=stealth'},
	combi/.style={thick, dashed, >=stealth'},
corner node/.style={color=red,large}
     ]

        \coordinate (corner) at (2,-2);     
	\coordinate (pj) at ($(corner) + (-6,-3)$);
	\coordinate (pk) at ($(corner) + (5,7)$);
	\coordinate (UR) at (6,6);
	\coordinate (UL) at (-6,6);
        \coordinate (LR) at (6,-6);
	\coordinate (LL) at (-6,-6);
	
	\coordinate (cornerL) at (-6,-2);
	\coordinate (cornerU) at (2,6);

    \fill[fill=blue!10, very thick] (corner) -- (cornerU) -- (UL) -- (cornerL);

    \draw[combi] (pk) -- (pj);
    \draw (cornerU) coordinate (l2) node[right] {$L_2$}
    -- (corner)
    -- (cornerL) coordinate (l1) node[below] {$L_1$};

    \draw (pj) coordinate (pj) node[below] {$p_j$}
    -- (corner)
    -- (pk) coordinate (pk) node[right] {$p_k$};	

    \pic [draw, -, "$\angle L_2\ell_k$", angle eccentricity=3] {angle = pk--corner--l2};
    \pic [draw, -, "$\angle L_1 L_2$", angle eccentricity=2] {angle = l2--corner--l1};	
    \pic [draw, -, "$\angle \ell_j L_1$", angle eccentricity=3] {angle = l1--corner--pj};
    
    \node[text width=3cm] at ($(corner) + (1.8,-.3)$) {$(\alpha/r,\cp/r)$};

    \fill[red] (corner) circle (2pt) node[right] {};
\end{tikzpicture}

    \caption{Illustration of $\mathcal G$ with the points $p_j,p_k$ and the angles $\angle \ell_j L_1,\angle L_1 L_2,\angle L_2\ell_k$ drawn in. Clearly the line $\overline{p_jp_k}$ only crosses $\mathcal{G}$ when the total angle is less than $\pi$.}
    \label{fig:angles}
  \end{figure}
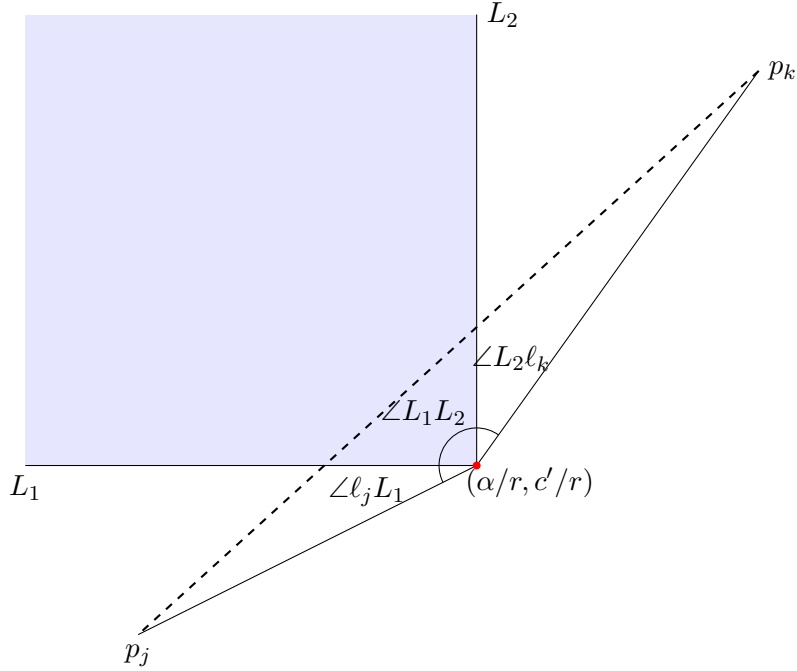

  Now, if we cannot find a single point in $\mathcal{G}$ in step~\ref{it:step2}, then we need a combination of two points in step~\ref{it:step3}. Let $L_1, L_2$ be the edges of $\mathcal G$ and let $\ell_j$ and $\ell_k$ be the line segments from $(\alpha/r,\cp/r)$ to $p_j$ and $p_k$, respectively.
  Then, as can be seen in Figure~\ref{fig:angles}, the line segment $\overline{p_jp_k}$ goes through $\mathcal G$ if and only if (up to relabeling $p_j$ and $p_k$) $\angle \ell_j L_1 + \angle L_1 L_2 + \angle L_2 \ell_k \leq \pi$. Since $\angle L_1 L_2$ is fixed, we can simply look for a $j$ such that $\angle \ell_j L_1$ is minimized and a $k$ such that $\angle L_2\ell_k$ is minimized. If $\overline{p_jp_k}$ does not pass through $\mathcal G$ for this pair of points, then it does not for any of the pairs of points.

  Notice that these minimizations can be done separately and hence can be done in the stated complexity. Given the minimizing points $p_j$ and $p_k$, it is easy to check if they give a solution by calculating the angle between $\ell_j$ and $\ell_k$. The coefficients of the convex combination $q$ are then easy to compute.
\end{proof}

We now consider the more general case where we are given access to a unitary which for each $j$ provides a superposition over different values $\tilde a_j$. We do so because the trace estimation procedure of Corollary~\ref{col:traceCalc} provides an oracle of this form.

\begin{lemma}[General version]\label{lem:oraclefull}
  Assume that we are given an oracle that acts as
  \[
    \ket{j}\ket{0}\ket{0} \mapsto \ket{j} \sum_i \beta_j^{i}\ket{\tilde{a}^{i}_j} \ket{\psi^{i}_j}
  \]
  where each $\ket{\tilde{a}^{i}_j}$ is an approximation of $a_j$ and the amplitudes $\beta_j^{i}$ are such that measuring the second register with high probability returns an $\tilde{a}^{i}_j$ which is $\theta$-close to $a_j$.

	There is a quantum algorithm that uses $\bOt{\sqrt{m}}$ calls to the oracle described above, and the same order of two-qubit gates, and (with high probability) has the following guarantees.
	\begin{itemize}
		\item If $\mathcal P_{0}(X) \cap \{y \in \mathbb R^m \colon \sum_{j} y_j \leq r\}$ is non-empty, then the algorithm returns a $2$-sparse vector in $\mathcal P_{4Rr\theta}(X) \cap \{y \in \mathbb R^m \colon \sum_{j} y_j \leq r\}$.
		\item If $\mathcal P_{4Rr\theta}(X) \cap \{y \in \mathbb R^m \colon \sum_{j} y_j \leq r\}$ is empty, then the algorithm correctly concludes that $\mathcal P_{0}(X) \cap \{y \in \mathbb R^m \colon \sum_{j} y_j \leq r\}$ is empty.
	\end{itemize}
\end{lemma}
\begin{proof}
Since we can exponentially reduce the probability that we obtain an $\tilde{a}^{i}_j$ which is further than~$\theta$ away from $a_j$, we will for simplicity assume that for all $i,j$ we have $|\tilde{a}^{i}_j - a_j| \leq \theta$; the neglected exponentially small probabilities will only affect the analysis in negligible ways.

Note that while we do not allow our quantum algorithm enough time to obtain classical descriptions of all $\tilde a_j$s (we aim for a runtime of $\bOt{\sqrt{m}}$), we do have enough time to compute $\tilde{c}$ once initially (after this measurement, $\mathcal G$ is well-defined). Knowing $\tilde c$, we can compute the angles defined by the points $p_j^i = (b_j,\tilde{a}_j^{i})$ with respect to the corner point of $(\alpha/r,(\tilde{c}-\theta)/ r-\theta)$ and the lines $L_1, L_2$ (see Figure~\ref{fig:angles}). We now apply our generalized minimum-finding algorithm with runtime $\bOt{\sqrt{m}}$ (see Theorem~\ref{thm:genMin}) starting with a uniform superposition over the $j$s to find $k,\ell \in [m]$ and points $p_{k}^i$ and $p_{\ell}^{i'}$ approximately minimizing the respective angles to lines $L_1,L_2$. Here `approximately minimizing' means that there is no $j\in[m]$ such that for all $i''$ the angle of $p_j^{i''} = (b_j,\tilde{a}_j^{i''})$ with $L_1$ is smaller than that of $p_{k}^i$ with $L_1$ (and similar for $\ell$ and $L_2$). From this point on we can simply consider the model in the simple version of this lemma, since by the analysis above there exists an approximation $\tilde{a}\in\mathbb{R}^m$ with $\tilde{a}_k = \tilde{a}_k^i$ and $\tilde{a}_{\ell}= \tilde{a}_{\ell}^{i'}$ and where $k$ and $\ell$ are the correct minimizers.

It follows from Lemma~\ref{lem:approxP} and Lemma~\ref{lem:2DArg} that if $\mathcal P_{0}(X) \cap \{y \in \mathbb R^m: \sum_{j} y_j \leq r\}$ is non-empty, then some convex combination of $(\tilde a_\ell, b_\ell)$ and $(\tilde a_k, b_k)$ lies in $\mathcal G$. On the other hand, if $\mathcal P_{4Rr\theta}(X) \cap \{y \in \mathbb R^m: \sum_{j} y_j \leq r\}$ is empty, then the same lemmas guarantee that we correctly conclude that $\mathcal P_{0}(X) \cap \{y \in \mathbb R^m: \sum_{j} y_j \leq r\}$ is empty. 
\end{proof}

\subsection{Total runtime} \label{sec:runtime}
We are now ready to add our quantum implementations of the trace calculations and the oracle to the classical Arora-Kale framework.
\upperbound*
\begin{proof} Using our implementations of the different building blocks, it remains to calculate what the total complexity will be when they are used together.
  \begin{description}
  \item[Cost of the oracle for $H^{(t)}$.]
    The first problem in each iteration is to obtain access to an oracle for $H^{(t)}$. In each iteration the oracle will produce a $y^{(t)}$ that is at most $2$-sparse, and hence in the $(t+1)$th iteration, $H^{(t)}$ is a linear combination of $2t$ of the $A_j$ matrices and the $C$ matrix.

    We can write down a sparse representation of the coefficients of the linear combination that gives $H^{(t)}$ in each iteration by adding the new terms coming from $y^{(t)}$. This will clearly not take longer than $\bOt{T}$, since there are only a constant number of terms to add for our oracle. As we will see, this term will not dominate the complexity of the full algorithm.

    Using such a sparse representation of the coefficients, one query to a sparse representation of $H^{(t)}$ will cost $\bOt{st}$ queries to the input matrices and $\bOt{st}$ other gates. For a detailed explanation and a matching lower bound, see Appendix~\ref{app:sparsematrixsum}.
  \item[Cost of the oracle for $\tr{A_j\rho}$.]
    In each iteration $M^{(t)}$ is made to have operator norm at most~$1$.
    This means that
    \[
      \nrm{-\eta H^{(t)}}  \leq \eta \sum_{\tau = 1}^t \nrm{M^{(\tau)}} \leq \eta t.
    \]
    Furthermore we know that $H^{(t)}$ is at most $d:=s(2t+1)$-sparse.
    Calculating $\tr{A_j\rho}$ for one index $j$ up to an additive error of $\theta := \eps/(12Rr)$ can be done using the algorithm from Corollary~\ref{col:traceCalc}. This will take
    \[
      \bOt{\sqrt{n}\frac{\nrm{H}d}{\theta}} = \bOt{\sqrt{n} s\frac{\eta t^2 Rr}{\eps}}
    \]
    queries to the oracle for $H^{(t)}$ and the same order of other gates.
    Since each query to $H^{(t)}$ takes $\bOt{st}$ queries to the input matrices, this means that
    \[
      \bOt{\sqrt{n}s^2\frac{\eta t^3 Rr}{\eps}}
    \]
    queries to the input matrices will be made, and the same order of other gates, for each approximation of a $\tr{A_j\rho}$ (and similarly for approximating $\tr{C\rho}$).
  \item[Total cost of one iteration.]
    Lemma~\ref{lem:oracle} tells us that we will use $\bOt{\sqrt{m}}$ calculations of $\tr{A_j\rho}$, and the same order of other gates, to calculate a classical description of a $2$-sparse $y^{(t)}$.
    This brings the total cost of one iteration to
    \[
      \bOt{\sqrt{nm} s^2\frac{\eta t^3 Rr}{\eps}}
    \]
    queries to the input matrices, and the same order of other gates.
  \item[Total quantum runtime for SDPs.]
    Since $w\leq r+1$ we can set $T = \bOt{\frac{R^2r^2}{\eps^2}}$. With $\eta = \sqrt{\frac{\ln(n)}{T}}$,
    summing over all iterations in one run of the algorithm gives a total cost of
    \begin{align*}
      \bOt{\sum_{t=1}^T \sqrt{nm} s^2\frac{\eta t^3 Rr}{\eps}} &=   \bOt{\sqrt{nm} s^2\frac{\eta T^4 Rr}{\eps}} \\
                                                               &=   \bOt{\sqrt{nm} s^2\left( \frac{Rr}{\eps}\right)^{\!\!8} }
    \end{align*}
    queries to the input matrices and the same order of other gates.
  \end{description}
  \vskip -0.5cm
\end{proof}
\paragraph{Total quantum runtime for LPs.}
The final complexity of our algorithm contains a factor $\bOt{sT}$ that comes from the sparsity of the $H^{(t)}$ matrix. This assumes that when we add the input matrices together, the rows become less sparse. This need not happen for certain SDPs. For example, in the SDP relaxation of MAXCUT, the $H^{(t)}$ will always be $d$-sparse, where $d$ is the degree of the graph. A more important class of examples is that of linear programs: since LPs have diagonal $A_j$ and $C$, their sparsity is $s=1$, and even the sparsity of the $H^{(t)}$ is always~$1$. This, plus the fact that the traces can be computed without a factor $\nrm{H}$ in the complexity (as shown in Corollary~\ref{col:traceDiagCalc} in Section~\ref{sec:lptrace}), means that our algorithm solves LPs with
\[
  \bOt{\sqrt{nm} \left( \frac{Rr}{\eps}\right)^{\!\!5} }
\]
queries to the input matrices and the same order of other gates.

\paragraph{Total classical runtime.}
$\kern-0.9mm\text{Using}$ the classical techniques for trace estimation from Appendix~\ref{app:trace}, and the classical version of our oracle (Lemma~\ref{lem:oracle}), we are also able to give a general classical instantiation of the Arora-Kale framework. The final complexity will then be
\[
  \bOt{nms\left(\frac{Rr}{\eps}\right)^{\!\!4}+ns\left(\frac{Rr}{\eps}\right)^{\!\!7}}.
\]
The better dependence on $Rr/\eps$ and $s$, compared to our quantum algorithm, comes from the fact that we now have the time to write down intermediate results explicitly. For example, we do not need to recalculate parts of $H^{(t)}$ for every new query to it, instead we can just calculate it once at the start of the iteration by adding $M^{(t)}$ to $H^{(t-1)}$ and writing down the result.

\paragraph{Further remarks.}
We want to stress again that our solver is meant to work for \emph{all} SDPs. In particular, it does not use the structure of a specific SDP. As we  show in the next section, every oracle that works for all SDPs must have large width. To obtain quantum speedups for a \emph{specific} class of SDPs, it will be necessary to develop oracles tuned to that problem. We view this as an important direction for future work. Recall from the introduction that Arora and Kale also obtain fast classical algorithms for problems such as MAXCUT by doing exactly that: they develop specialized oracles for those problems.

\section{Downside of this method: general oracles are restrictive}\label{sec:downside}

In this section we show some of the limitations of a method that uses sparse or general oracles, i.e., ones that are not optimized for the properties of specific SDPs. We will start by discussing sparse oracles in the next section. We will use a counting argument to show that sparse solutions cannot hold too much information about a problem's solution.
In Section~\ref{sec: general width bounds} we will show that width-bounds that do not depend on the specific structure of an SDP are for many problems not efficient.
As in the rest of the paper, we will assume the notation of Section~\ref{sec:upperbounds}, in particular of Meta-Algorithm~\ref{alg:AKSDP}.

\subsection{Sparse oracles are restrictive}
\begin{lemma}
    If, for some specific SDP of the form~\eqref{eq:SDP}, every $\eps$-optimal dual-feasible vector has at least $\ell$ non-zero elements, then the width $w$ of any $k$-sparse \textsf{Oracle}$_{\eps/3}$ for this SDP is such that $\frac{Rw}{\eps} = \Omega\left(\sqrt{\frac{\ell}{k\ln(n)}}\right)$.
\end{lemma}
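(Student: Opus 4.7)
The plan is to combine two facts about Meta-Algorithm~\ref{alg:AKSDP}: (i) its output inherits the sparsity of the oracle's outputs (up to a mild additive constant), and (ii) after $T = \lceil 9w^2R^2\ln(n)/\eps^2\rceil$ iterations the output is guaranteed to be $\eps$-optimal dual-feasible. Combining these with the hypothesis lower-bounds $T$, and hence $w$.

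More concretely, suppose the oracle is $k$-sparse, meaning every vector $y^{(t)}$ it produces has at most $k$ non-zero entries, and that its actual width is $w$. Run Meta-Algorithm~\ref{alg:AKSDP} with $\alpha = \opt$ and the oracle's own width-bound $w$. Then either the oracle outputs ``fail'' in some round (which we can rule out, since $\alpha = \opt$ means there exists a dual-feasible $y$ with $b^Ty \leq \alpha$, so $\mathcal{P}_0(X^{(t)})$ is nonempty for every psd $X^{(t)}$), or the algorithm returns
\[
  \overline{y} \;=\; \frac{\eps}{R} e_1 + \frac{1}{T}\sum_{t=1}^T y^{(t)},
\]
which by the correctness theorem of Arora--Kale is dual-feasible with $b^T\overline{y} \leq \opt + \eps$, i.e., $\eps$-optimal. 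Since each $y^{(t)}$ has at most $k$ non-zero entries, $\overline{y}$ has at most $kT + 1$ non-zero entries.

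By the hypothesis, every $\eps$-optimal dual-feasible vector has at least $\ell$ non-zero entries, so $kT + 1 \geq \ell$, i.e.,
\[
  k \cdot \left\lceil \frac{9 w^2 R^2 \ln(n)}{\eps^2}\right\rceil + 1 \;\geq\; \ell.
\]
Rearranging gives $\frac{w^2 R^2}{\eps^2} = \Omega\!\left(\frac{\ell}{k\ln(n)}\right)$, and taking square roots yields the claimed bound $\frac{Rw}{\eps} = \Omega\!\left(\sqrt{\frac{\ell}{k\ln(n)}}\right)$.

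No step here looks like a genuine obstacle: the only thing to be careful about is the ``fail'' case, which is why we set $\alpha = \opt$ so that feasibility of $\mathcal{P}_0(X^{(t)})$ is guaranteed and the algorithm must actually produce the $\overline{y}$ whose sparsity we bound. (If one wants to avoid knowing $\opt$ exactly, the same argument works for any $\alpha$ slightly above $\opt$, since the hypothesis on $\ell$ concerns all $\eps$-optimal dual-feasible vectors.)
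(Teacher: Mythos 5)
Your proposal is correct and follows essentially the same route as the paper: the returned $\overline{y}$ is the average of $T$ vectors that are $k$-sparse plus the $1$-sparse shift $\frac{\eps}{R}e_1$, so $\ell \leq kT+1$, and combining with $T = \bigO(R^2w^2\ln(n)/\eps^2)$ gives the bound. Your extra care about the ``fail'' case (choosing $\alpha=\opt$ so that $\mathcal{P}_0(X^{(t)})$ is always nonempty) is a valid tidying-up of a point the paper leaves implicit.
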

\begin{proof}
  The vector $\bar{y}$ returned by Meta-Algorithm~\ref{alg:AKSDP} is, by construction, the average of $T$ vectors $y^{(t)}$ that are all $k$-sparse, plus one extra $1$-sparse term of $\frac{\eps}{R}e_1$, and hence $\ell \leq kT+1$. The stated bound on $\frac{Rw}{\eps}$ then follows directly by combining this inequality with $T = \bigO(\frac{R^2w^2}{\eps^2}\ln(n))$.
\end{proof}
The oracle presented in Section~\ref{sec:oracle} always provides a $2$-sparse vector $y$. This implies that if an SDP requires an $\ell$-sparse dual solution, we must have $\frac{Rw}{\eps} = \Omega(\sqrt{\ell / \ln(n)})$. This in turn means that the upper bound on the runtime of our algorithm will be of order $\ell^4 \sqrt{nm} s^2$. This is clearly bad if $\ell$ is of the order $n$ or $m$.

Of course it could be the case that almost every SDP of interest has a sparse approximate dual solution (or can easily be rewritten so that it does), and hence sparseness might be not a restriction at all. However, as we will see below, this is not the case. We will prove that for certain kinds of SDPs, no ``useful'' dual solution can be very sparse. Intuitively, a dual solution to an SDP is ``useful'' if it can be turned into a solution of the problem that the SDP is trying to solve. We make this more precise in the definition below. 

\begin{definition}\label{def:problem}
    A problem is defined by a function $f$ that, for every element $p$ of the problem domain $\mathcal D$, gives a subset of the solution space $\mathcal S$, consisting of the solutions that are considered correct.
   We say a family of SDPs, $\{SDP^{(p)}\}_{p\in \mathcal D}$, solves the problem via the dual if there is an $\eps \geq 0$ and a function $g$ such that for every $p \in \mathcal D$ and every $\eps$-optimal dual-feasible vector $y^{(p)}$ to $SDP^{(p)}$:
  \[
    g(y^{(p)}) \in f(p).
  \]
  In other words, an $\eps$-optimal dual solution can be converted into a correct solution of the original problem without more knowledge of $p$.
\end{definition}
For these kinds of SDP families we will prove a lower bound on the sparsity of the dual solutions.
The idea for this bound is as follows. If you have a lot of different instances that require different solutions, but the SDPs are equivalent up to permuting the constraints and the coordinates of $\mathbb{R}^n$, then a dual solution should have a lot of unique permutations and hence cannot be too sparse.

\begin{theorem}
  Consider a problem and a family of SDPs as in Definition~\ref{def:problem}. Let $\mathcal T \subseteq \mathcal D$ be such that for all $p,q \in \mathcal T$:
  \begin{itemize}
  \item $f(p) \cap f(q) = \emptyset$. That is, a solution to $p$ is not a solution to $q$ and vice versa.
  \item The number of constraints $m$ and the primal variable size $n$ are the same for $SDP^{(p)}$ and $SDP^{(q)}$.
  \item Let $A_j^{(p)}$ be the constraints of $SDP^{(p)}$ and $A_j^{(q)}$ those from $SDP^{(q)}$ (and define $C^{(p)}$, $C^{(q)}$, $b_j^{(p)}$, and $b_j^{(q)}$ in the same manner). Then there exist $\sigma \in S_n$, $\pi \in S_m$ s.t.\ $\sigma^{-1} A^{(p)}_{\pi(j)} \sigma = A^{(q)}_j$ (and $\sigma^{-1} C^{(p)} \sigma = C^{(q)}$, $b^{(p)}_{\pi(j)} = b^{(q)}_j$). That is, the SDPs are the same up to permutations of the labels of the constraints and permutations of the coordinates of $\mathbb{R}^n$.
  \end{itemize}
  If $y^{(p)}$ is an $\eps$-optimal dual-feasible vector to $SDP^{(p)}$ for some $p\in \mathcal T$, then $y^{(p)}$ is at least $\frac{\log(|\mathcal T|)}{\log m}$-dense (i.e., has at least that many non-zero entries).
\end{theorem}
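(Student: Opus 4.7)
The plan is a permutation-orbit counting argument. Fix $p \in \mathcal{T}$ and suppose $y^{(p)}$ is an $\eps$-optimal dual-feasible vector to $SDP^{(p)}$ with exactly $k$ non-zero entries. I will associate to every $q \in \mathcal{T}$ a distinct $\eps$-optimal dual-feasible vector $y^{(q)}$ for $SDP^{(q)}$ obtained by permuting the entries of $y^{(p)}$, then count how many such vectors are possible.

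First I would show that permuting the coordinates of $y^{(p)}$ transports dual feasibility and optimality from $SDP^{(p)}$ to $SDP^{(q)}$. Concretely, given $q \in \mathcal{T}$ with permutations $\sigma \in S_n, \pi \in S_m$ satisfying $\sigma^{-1} A^{(p)}_{\pi(j)} \sigma = A^{(q)}_j$, $\sigma^{-1} C^{(p)} \sigma = C^{(q)}$, and (implicitly in the hypothesis that the SDPs are the same up to the relabeling) $b^{(p)}_{\pi(j)} = b^{(q)}_j$, define $y^{(q)}_j := y^{(p)}_{\pi(j)}$. Then
\[
  \sum_{j=1}^{m} y^{(q)}_j A^{(q)}_j - C^{(q)} \;=\; \sigma^{-1}\Bigl(\sum_{j=1}^{m} y^{(p)}_{\pi(j)} A^{(p)}_{\pi(j)} - C^{(p)}\Bigr)\sigma \;=\; \sigma^{-1}\Bigl(\sum_{k=1}^{m} y^{(p)}_{k} A^{(p)}_{k} - C^{(p)}\Bigr)\sigma,
\]
which is psd because conjugation by a permutation matrix preserves psd-ness. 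Similarly $b^{(q)T} y^{(q)} = b^{(p)T} y^{(p)}$ and the two SDPs have the same optimal value (since they are permutation-equivalent), so $y^{(q)}$ is $\eps$-optimal dual-feasible for $SDP^{(q)}$.

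Next I would use the problem-solving hypothesis to separate these vectors. Since $g(y^{(q)}) \in f(q)$ and the sets $f(q)$ are pairwise disjoint over $q \in \mathcal{T}$, the vectors $\{y^{(q)}\}_{q \in \mathcal{T}}$ must be pairwise distinct. But each $y^{(q)}$ is obtained from $y^{(p)}$ by a permutation of its $m$ coordinates, so the number of distinct such vectors is at most the size of the orbit of $y^{(p)}$ under the action of $S_m$, which is bounded by the number of injections from the support of $y^{(p)}$ into $[m]$: at most $m(m-1)\cdots(m-k+1) \le m^k$. Therefore $|\mathcal{T}| \le m^k$, giving $k \ge \log(|\mathcal{T}|)/\log(m)$ as claimed.

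The main technical step is the first one: verifying that permuting indices really does map $\eps$-optimal dual solutions to $\eps$-optimal dual solutions. This reduces to the two algebraic identities above plus the observation that conjugation by a permutation preserves psd-ness, so there is no real obstacle once one keeps track of the permutations $\sigma, \pi$ carefully. The counting part is then immediate, and the bound is tight in the sense that a sparse vector simply does not have enough distinct permutations to encode too many inequivalent problem instances.
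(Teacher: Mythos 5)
Your proposal is correct and follows essentially the same route as the paper: permute $y^{(p)}$ via $\pi$ to get an $\eps$-optimal dual solution for each $SDP^{(q)}$, use disjointness of the $f(q)$'s to conclude these $|\mathcal T|$ vectors are pairwise distinct, and bound the number of distinct permutations of a $k$-sparse vector by $m!/(m-k)! \le m^k$. Your explicit conjugation identity verifying that the permuted vector remains dual-feasible is a detail the paper leaves implicit, but the argument is the same.
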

\begin{proof}
  We first observe that, with $SDP^{(p)}$ and $SDP^{(q)}$ as in the lemma, if $y^{(p)}$ is an $\eps$-optimal dual-feasible vector of $SDP^{(p)}$, then $y^{(q)}$ defined by
  \[
    y^{(q)}_j := y^{(p)}_{\pi(j)} = \pi(y^{(p)})_j
  \]
  is an $\eps$-optimal dual vector for $SDP^{(q)}$. Here we use the fact that a permutation of the $n$~coordinates in the primal does not affect the dual solutions. Since $f(p)\cap f(q) = \emptyset$ we know that $g(y^{(p)}) \neq g(y^{(q)})$ and so $y^{(p)} \neq y^{(q)}$. Since this is true for every $q$ in $\mathcal T$, there should be at least $|\mathcal T|$ different vectors $y^{(q)} = \pi (y^{(p)})$.

  A $k$-sparse vector can have $k$ different non-zero entries and hence the number of possible unique permutations of that vector is at most
  \[
    \binom{m}{k} k! = \frac{m!}{(m-k)!} = \prod_{t = m-k+1}^m t \leq m^k
  \]
  so
  \[
    \frac{\log |\mathcal{T}|}{\log m} \leq k.
  \]
  \vskip -5mm
\end{proof}

\paragraph{Example.} Consider the $(s,t)$-mincut problem, i.e., the dual of the $(s,t)$-maxflow. Specifically, consider a simple instance of this problem: the union of two complete graphs of size $z+1$, where $s$ is in one subgraph and $t$ in the other. Let the other vertices be labeled by $\{1,2,\dots,2z\}$.
Every assignment of the labels over the two halves gives a unique mincut, in terms of which labels fall on which side of the cut. There is exactly one partition of the vertices in two sets that cuts no edges (namely the partition consists of the two complete graphs), and every other partition cuts at least $z$ edges. Hence a $z/2$-approximate cut is a mincut. This means that there are $\binom{2z}{z}$ problems that require a different output. So for every family of SDPs that is symmetric under permutation of the vertices and for which a $z/2$-approximate dual solution gives an $(s,t)$-mincut, the sparsity of a $z/2$-approximate dual solution is at least\footnote{Here $m$ is the number of constraints, not the number of edges in the graph.}
\[
  \frac{\log {\binom{2z}{z}}}{\log m} \geq \frac{z}{ \log{m}},
\]
where we used that $\binom{2z}{z} \geq \frac{2^{2z}}{2\sqrt{z}}$. In particular this holds for the standard linear programming formulation of the $(s,t)$-maxflow/$(s,t)$-mincut problem. 

\subsection{General width-bounds are restrictive for certain SDPs} \label{sec: general width bounds}
In this section we will show that width-bounds can be restrictive when they do not consider the specific structure of an SDP.

\begin{definition} An algorithm $O$ is called a \emph{general oracle} if, when provided with an error parameter $\eps$, it correctly implements an \textsf{Oracle}$_{\eps}$ (as in Algorithm~\ref{alg:Oracle}) for all inputs. We use $O_\eps$ to denote the algorithm provided by $O$ with error parameter $\eps$ fixed. A function $w(n,m,s,r,R,\eps)$ is called a \emph{general width-bound} for a general oracle if, for every $0<\eps<1/2$, the value $w(n,m,s,r,R,\eps)$ is a correct width-bound (see Definition~\ref{def:width}) for $O_\eps$ for every SDP with parameters $n,m,s,r$, and $R$. In particular, the function $w$ may not depend on the structure of the input $A_1,\ldots,A_m$, $C$, $b$ or on the value of $\alpha$.
\end{definition}

We will show that general width-bounds need to scale with $r^{*}$ (recall that $r^{*}$ denotes the smallest $\ell_1$-norm of an optimal solution to the dual). We then go on to show that if two SDPs in a class can be combined to get another element of that class in a natural manner, then, under some mild conditions, $r^{*}$ will be of the order $n$ and $m$ for some instances of the class.

We start by showing, for specifically constructed LPs, a lower bound on the width of any oracle. Although these LPs will not solve any useful problem, every general width-bound should also apply to these LPs. This gives a lower bound on general width-bounds.
\begin{lemma}\label{lem:genisrstar}
For every $n \geq 3$, $m \geq 3$, $s \geq 1$, $R^* \geq 1$, $r^{*}>0$, there is an SDP with these parameters such that for any $0 \leq \eps \leq 1/2$ any \textsf{Oracle}$_{\eps}$ for this SDP has width at least $r^*/2$.
\end{lemma}
\begin{proof}
  We will construct an LP for $n=m=3$. This is enough to prove the lemma since LPs are a subclass of SDPs and we can increase $n$, $m$, and $s$ by adding more dimensions and $s$-dense SDP constraints that do not influence the analysis below.
  For some $k> 0$, consider the following LP
  \begin{align*}
    \max \ \ \ & (1,0,0) x\\
    \text{s.t.} \ \ \ & \begin{bmatrix}
      1 & 1 & 1\\
      1/k & 1 & 0\\
      -1 & 0 & -1
    \end{bmatrix} x \leq \begin{bmatrix}
      R\\
      0\\
      -R
    \end{bmatrix}\\
               & x \geq 0
  \end{align*}
  where the first row is the primal trace constraint.
  Notice that $x_1 = x_2 = 0$ due to the second constraint. This implies that $\opt = 0$ and, due to the last constraint, that $x_3\geq R$. Notice that $(0,0,R)$ is actually an optimal solution, so $R^{*} = R$.

  To calculate $r^{*}$, look at the dual of the LP:
  \begin{align*}
    \min \ \ \ & (R,0,-R) y\\
    \text{s.t.} \ \ \ & \begin{bmatrix}
      1 & 1/k & -1\\
      1 & 1 & 0\\
      1 & 0 & -1
    \end{bmatrix} y \geq \begin{bmatrix}
      1\\
      0\\
      0
    \end{bmatrix}\\
               & y \geq 0,
  \end{align*}
  due to strong duality its optimal value is $0$ as well. This implies $y_1 = y_3$, so the first constraint becomes $y_2 \geq k$. This in turn implies $r^{*}\geq k$, which is actually attained (by $y = (0,k,0)$) so $r^{*} = k$.

  Since the oracle and width-bound should work for every $x\in \mathbb{R}^3_+$ and every $\alpha$, they should in particular work for $x = (R,0,0)$ and $\alpha = 0$. In this case the polytope for the oracle becomes
  \begin{align*}
    \mathcal{P}_{\eps}(x) := \{ y \in \mathbb R^m:\ & y_1 - y_3 \leq 0, \\
                                                    & y_1-y_3+y_2/k\geq 1-\eps/R,\\
                                                    & y \geq 0 \}.
  \end{align*}
  since $b^T y = y_1-y_3$, $c^T x = 1$, $a_1^T x = 1$, $a_2^T x = 1/k$ and $a_3^T x = -1$. This implies that for every $y\in \mathcal{P}_{\eps}(x)$, we have $y_2 \geq k(1-\eps/R) \geq k/2 = r^{*}/2$, where the second inequality follows from the assumptions that $\eps \leq 1/2$ and $R\geq 1$.

  Notice that the term
  \[
    \nrm{\sum^m_{j=1} y_j A_j - C}
  \]
  in the definition of width for an SDP becomes
  \[
    \nrm{ A^T y - c}_{\infty}
  \]
  in the case of an LP. In our case, due to the second constraint in the dual, we know that
  \[
    \nrm{ A^T y - c}_{\infty} \geq y_1 + y_2 \geq \frac{r^{*}}{2}
  \]
  for every vector $y$ from $\mathcal{P}_{\eps}(x)$. This shows that any oracle has width at least $r^{*}/2$ for this LP.
\end{proof}
\begin{corollary}
  For every general width-bound $w(n,m,s,r,R,\eps)$, if $n,m\geq 3$, $s\geq 1$, $r>0$, $R>1$, and $\eps\leq 1/2$, then
  \[
    w(n,m,s,r,R,\eps) \geq \frac{r}{2}.
  \]
\end{corollary}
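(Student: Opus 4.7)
The corollary follows almost directly from Lemma~\ref{lem:genisrstar}, so the plan is short. I would fix arbitrary parameters $n, m \geq 3$, $s \geq 1$, $r,R > 0$, and $\eps \leq 1/2$, and invoke Lemma~\ref{lem:genisrstar} with $R^* := R$ and $r^* := r$ to obtain a specific SDP whose parameters are exactly $(n,m,s,r,R,\eps)$ and for which \emph{every} oracle has width at least $r^*/2 = r/2$.

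Since $w$ is by assumption a \emph{general} width-bound, it is a valid width-bound for every SDP with these parameters; in particular it is a valid width-bound for the specific SDP produced by the lemma. Any oracle used with this SDP must have actual width at least $r/2$, and a valid width-bound cannot be smaller than the true width, so $w(n,m,s,r,R,\eps) \geq r/2$.

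The only mildly delicate point is matching the parameter ranges between the lemma and the corollary: the construction inside the lemma is spelled out for $n = m = 3$ and is then padded with additional $s$-dense constraints in extra coordinates, so it indeed realizes every $(n,m,s,r,R,\eps)$ in the range stated by the corollary without affecting the analysis that forces $y_2 \geq r^*/2$ in the oracle polytope. I do not expect any genuine obstacle — the bulk of the work has already been done in Lemma~\ref{lem:genisrstar}, and the corollary is essentially a quantification step: ``true for the constructed SDP with parameters $(n,m,s,r,R,\eps)$'' upgrades to ``true for every general $w$ at $(n,m,s,r,R,\eps)$'' precisely because a general width-bound is required to hold uniformly over all SDPs with the given parameters.
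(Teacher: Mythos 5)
Your proposal is correct and matches the paper's intent exactly: the corollary is stated without a separate proof precisely because it is the immediate quantification step you describe, instantiating Lemma~\ref{lem:genisrstar} with $r^*=r$, $R^*=R$ and using that a general width-bound must be valid for the constructed SDP. The only cosmetic discrepancy is between the lemma's stated range ($n,m\geq 4$) and the corollary's ($n,m\geq 3$), which is an inconsistency in the paper itself rather than a gap in your argument, since the underlying construction lives in $n=m=3$.
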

\begin{proof}
Consider the LP given by Lemma~\ref{lem:genisrstar} with $r^* = r$. Using a general oracle with general width-bound $w$ for this LP implies the corollary.
\end{proof}
Note that this bound applies to both our algorithm and the one given by Brand\~ao and Svore.
It turns out that for many classes of SDPs it is natural to assume that $m$, $r^{*}$, and $R^{*}$ grow linearly with $n$, and that the ``logical'' choice of $\eps$ also scales linearly with $n$. We now argue that this is for instance the case when SDPs in a class combine in a natural manner. As an example, consider the MAXCUT problem. For, e.g., $d$-regular graphs the MAXCUT value grows linearly with the number of vertices $n$. Therefore, one is generally interested in finding a constant \emph{multiplicative} approximation to the optimal value. For $d$-regular graphs this would thus translate to an \emph{additive} error which scales linearly with the number of vertices. We argue below that for the SDP-relaxation it is also natural to assume that $r$ and $R$ grow linearly in $n$. 
Take for example two SDP-relaxations for the MAXCUT problem on two graphs $G^{(1)}$ and $G^{(2)}$ (on $n^{(1)}$ and $n^{(2)}$ vertices, respectively):

\noindent\begin{minipage}{\textwidth}
  \begin{minipage}{.49\textwidth}
    \begin{align*}
      \max \quad & \tr{L(G^{(1)})X^{(1)}} \\
      \text{s.t.}\ \ \ & \tr{X^{(1)}} \leq n^{(1)}\\
                 & \tr{E_{jj} X^{(1)}} \leq 1 \text{ for }j=1,\dots,n^{(1)}\\
                 &X^{(1)} \succeq 0
    \end{align*}
  \end{minipage}
  \begin{minipage}{.49\textwidth}
    \begin{align*}
      \max \quad & \tr{L(G^{(2)})X^{(2)}} \\
      \text{s.t.}\ \ \ & \tr{X^{(2)}} \leq n^{(2)}\\
                 & \tr{E_{jj} X^{(2)}} \leq 1 \text{ for }j=1,\dots,n^{(2)}\\
                 &X^{(2)} \succeq 0
    \end{align*}
  \end{minipage}
\end{minipage}
\noindent Where $L(G)$ is the Laplacian of a graph. Note that this is not normalized to operator norm $\leq 1$, but for simplicity we ignore this here. If we denote the direct sum of two matrices by $\oplus$, that is
\[
  A\oplus B = \begin{bmatrix}A &0\\0&B\end{bmatrix},
\]
then, for the disjoint union of the two graphs, we have
\[
  L(G^{(1)} \cup G^{(2)}) = L(G^{(1)}) \oplus L(G^{(2)}).
\]
This, plus the fact that the trace distributes over direct sums of matrices, means that the SDP relaxation for MAXCUT on $G^{(1)} \cup G^{(2)}$ is the same as a natural combination of the two separate maximizations:
\begin{align*}
  \max \quad & \tr{L(G^{(1)})X^{(1)}} + \tr{L(G^{(2)})X^{(2)}} \\
  \text{s.t.}\ \ \ & \tr{X^{(1)}}+ \tr{X^{(2)}} \leq n^{(1)}+n^{(2)}\\
             & \tr{E_{jj} X^{(1)}} \leq 1 \text{ for }j=1,\dots,n^{(1)}\\
             & \tr{E_{jj} X^{(2)}} \leq 1 \text{ for }j=1,\dots,n^{(2)}\\
             &X^{(1)},X^{(2)} \succeq 0.
\end{align*}
It is easy to see that the new value of $n$ is $n^{(1)}+n^{(2)}$, the new value of $m$ is $m^{(1)}+m^{(2)}-1$ and the new value of $R^{*}$ is $n^{(1)}+n^{(2)} = R^{* (1)}+R^{* (2)}$. Likewise, it is natural to assume that the desired approximation error for the combined SDP is the sum of the desired errors for the individual SDPs: starting with feasible solutions $X^{(i)}$ that are $\eps^{(i)}$-approximate solutions to the two SDP-relaxations ($i = 1,2$), the matrix $X^{(1)} \oplus X^{(2)}$ is an $(\eps^{(1)}+\eps^{(2)})$-approximate solution to the combined SDP.
It remains to see what happens to $r^{*}$, and so, for general width-bounds, what happens to~$w$. As we will see later in this section, under some mild conditions, these kind of combinations imply that there are MAXCUT-relaxation SDPs for which $r^{*}$ also increases linearly, but this requires a bit more work.
\begin{definition} \label{def:combining}
  We say that a class of SDPs (each with an associated allowed approximation error) is \emph{combinable} if there is a $k\geq 0$ so that for every two elements in this class, $(SDP^{(a)},\eps^{(a)})$ and $(SDP^{(b)},\eps^{(b)})$, there is an instance in the class, $(SDP^{(c)},\eps^{(c)})$, that is a combination of the two in the following sense:
  \begin{itemize}
  \item $C^{(c)} = C^{(a)} \oplus C^{(b)}$.
  \item $A^{(c)}_j = A^{(a)}_j \oplus A^{(b)}_j$ and  $b_j^{(c)} = b_j^{(a)}+b_j^{(b)}$ for $j \in \lbrack k \rbrack$.
  \item $A^{(c)}_{j} = A^{(a)}_{j}\oplus \mathbf{0}$ and $b^{(c)}_{j} = b^{(a)}_{j}$ for $j = k+1,\dots,m^{(a)}$.
  \item $A^{(c)}_{m^{(a)}+j-k} = \mathbf{0} \oplus A^{(b)}_{j}$ and $b^{(c)}_{m^{(a)}+j-k} = b^{(b)}_{j}$ for $j = k+1,\dots,m^{(b)}$.
  \item $\eps^{(c)} = \eps^{(a)}+\eps^{(b)}$.
  \end{itemize}
  In other words, some fixed set of constraints are summed pairwise, and the remaining constraints get added separately.
\end{definition}
The motivation for the above definition reflects the following: if $X^{(a)}$ and $X^{(b)}$ are feasible solutions to $SDP^{(a)}$ and $SDP^{(b)}$ that are $\eps^{(a)}$-approximate and $\eps^{(b)}$-approximate solutions respectively, then $X^{(a)} \oplus X^{(b)}$ is an $(\eps^{(a)}+\eps^{(b)})$-approximate solution to $SDP^{(c)}$.

Furthermore, note that this is a natural generalization of the combining property of the MAXCUT relaxations (in that case $k=1$ to account for the constraint giving the trace bound).
\begin{theorem}
  If a class of SDPs is combinable and there is an element $SDP^{(1)}$ for which every optimal dual solution has the property that
  \[
    \sum_{j=k+1}^m y_m \geq \delta
  \]
  for some $\delta >0$, then there is a sequence $(SDP^{(t)},\eps^{(t)})_{t\in \mathbb{N}}$ in the class such that $\frac{R^{* (t)}r^{* (t)}}{\eps^{(t)}}$ increases linearly in $n^{(t)}$, $m^{(t)}$ and $t$, and $SDP^{(t)}$ is the $t$-fold combination of $SDP^{(1)}$ with itself.
\end{theorem}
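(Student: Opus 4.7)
My plan is to let $SDP^{(t)}$ be the $t$-fold combination of $SDP^{(1)}$ with itself, obtained by iterating Definition~\ref{def:combining}. The parameters satisfy $n^{(t)} = t\, n^{(1)}$ and $m^{(t)} = k + t(m^{(1)} - k)$ directly from the direct-sum structure, and $\eps^{(t)} \leq t\, \eps^{(1)}$ by iterating the combining bound. Assuming the mild non-triviality $m^{(1)} > k$, the quantities $n^{(t)}, m^{(t)}, t$ all grow proportionally, so it suffices to prove $\frac{R^{*(t)} r^{*(t)}}{\eps^{(t)}} = \Omega(t)$.

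The heart of the argument is a block-diagonal decomposition of the dual of $SDP^{(t)}$. Index the coordinates of $y \in \mathbb{R}^{m^{(t)}}$ as $k$ shared entries followed by $t$ disjoint groups of $m^{(1)} - k$ copy-specific entries, and for each copy $i \in [t]$ let $y^{(1),i} \in \mathbb{R}^{m^{(1)}}$ consist of the $k$ shared entries followed by copy $i$'s specific entries. Because $A_j^{(t)}$ and $C^{(t)}$ are block-diagonal with $t$ blocks (one per copy), the dual slack matrix $\sum_j y_j A_j^{(t)} - C^{(t)}$ is block-diagonal as well, with its $i$-th block equal to $\sum_j y^{(1),i}_j A_j^{(1)} - C^{(1)}$. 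Hence $y$ is dual-feasible for $SDP^{(t)}$ if and only if each $y^{(1),i}$ is dual-feasible for $SDP^{(1)}$, and the dual objective splits as $b^{(t)T} y = \sum_{i=1}^{t} b^{(1)T} y^{(1),i}$. Taking all $y^{(1),i}$ equal to a fixed optimal dual $y^{(1),*}$ of $SDP^{(1)}$ produces a dual-feasible $y$ of objective $t\,\opt^{(1)}$, while the matching lower bound $\sum_i b^{(1)T} y^{(1),i} \geq t\,\opt^{(1)}$ comes from individual dual feasibility of each $y^{(1),i}$; strong duality for $SDP^{(t)}$ then yields $\opt^{(t)} = t\,\opt^{(1)}$.

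Now for any optimal dual $y^{(t),*}$ of $SDP^{(t)}$, the identity $\sum_i b^{(1)T} y^{(1),i} = t\,\opt^{(1)}$ together with the inequality $b^{(1)T} y^{(1),i} \geq \opt^{(1)}$ for each summand forces every induced $y^{(1),i}$ to itself be an optimal dual of $SDP^{(1)}$. The hypothesis of the theorem then gives $\sum_{j>k} y^{(1),i}_j \geq \delta$ for every $i$, and since the $t$ copy-specific coordinate blocks of $y^{(t),*}$ are disjoint, summing over $i$ yields $r^{*(t)} \geq \sum_{j>k} y^{(t),*}_j \geq t\,\delta$. For the $R^{*(t)}$ factor I invoke complementary slackness on the trace constraint: under the mild non-degeneracy condition that some optimal dual of $SDP^{(1)}$ has the trace coordinate $y_1^{(1),*} > 0$ (satisfied by natural classes such as the MAXCUT relaxation), the symmetric lift exhibited above has $y_1^{(t),*} > 0$ too, so every optimal primal of $SDP^{(t)}$ saturates the trace bound, giving $R^{*(t)} = b_1^{(t)} = t\, R^{(1)}$. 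Combining the three bounds, $\frac{R^{*(t)} r^{*(t)}}{\eps^{(t)}} \geq \frac{t\, R^{(1)} \cdot t\,\delta}{t\,\eps^{(1)}} = \Omega(t)$, which by the linear relation between $t$, $n^{(t)}$ and $m^{(t)}$ is also linear in those parameters.

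The main subtlety is that the combined primal \emph{strictly relaxes} the first $k$ shared constraints (it allows $\tr{A_j^{(a)} X^{(a)}} > b_j^{(a)}$ provided the other side compensates), so one cannot deduce $\opt^{(t)} = t\, \opt^{(1)}$ from the primal alone; the dual block-diagonality is exactly what makes the decomposition, and hence the $r^{*(t)} \geq t\,\delta$ lower bound, rigorous. The secondary subtlety is the non-degeneracy needed for $R^{*(t)}$ to grow linearly, which is easy to check case-by-case but does not follow formally from combinability plus the dual assumption alone.
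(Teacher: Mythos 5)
Your construction, the identity $\opt^{(t)}=t\,\opt^{(1)}$, and the bound $r^{*(t)}\geq t\delta$ all match the paper's proof: the block decomposition of the dual, the observation that each induced $(\tilde y\oplus y^i)$ must itself be dual-optimal for $SDP^{(1)}$, and the summation over the $t$ disjoint copy-specific blocks are exactly the paper's steps (the paper gets the marginally sharper $r^{*(t)}\geq r^{*(1)}+(t-1)\delta$ by also using $\nrm{\tilde y\oplus y^i}_1\geq r^{*(1)}$, but this changes nothing asymptotically).

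The gap is in your treatment of $R^{*(t)}$. You prove linear growth only under an extra non-degeneracy hypothesis (some optimal dual of $SDP^{(1)}$ has $y_1>0$), and you assert that linear growth of $R^{*(t)}$ ``does not follow formally from combinability plus the dual assumption alone.'' That assertion is wrong: it follows from combinability alone, with no complementary-slackness condition. The paper's argument is to introduce the relaxation $\widehat{SDP}^{(t)}$ in which \emph{all} constraints are summed, including the psd constraints, i.e.\ one only requires $\sum_{i=1}^t\Tr(A_jX_i)\leq tb_j$ for every $j\in[m^{(1)}]$ and $\sum_{i=1}^tX_i\succeq 0$. Every feasible point of $SDP^{(t)}$ is feasible for $\widehat{SDP}^{(t)}$, and $\widehat{SDP}^{(t)}$ depends on the variables only through $X:=\sum_iX_i$, under which substitution it is just $SDP^{(1)}$ with $b$ scaled by $t$; hence its optimum is $t\,\opt^{(1)}$ and its optimal solutions are exactly $t$ times the optimal solutions of $SDP^{(1)}$, so every one of them has trace at least $t\,R^{*(1)}$. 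Since $\opt^{(t)}=t\,\opt^{(1)}$ as well, every optimal solution $(X_1,\dots,X_t)$ of $SDP^{(t)}$ maps to an optimal solution of $\widehat{SDP}^{(t)}$, giving $\sum_i\Tr(X_i)\geq t\,R^{*(1)}$ and therefore $R^{*(t)}\geq t\,R^{*(1)}$ unconditionally. Replacing your complementary-slackness step with this relaxation argument closes the gap and proves the theorem as stated, without the case-by-case check you propose.
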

\begin{proof}
  The sequence we will consider is the $t$-fold combination of $SDP^{(1)}$ with itself.
  If $SDP^{(1)}$ is
  \noindent\begin{minipage}{\textwidth}
    \begin{minipage}{.49\textwidth}
      \begin{align*}
        \max \quad &\Tr(CX) \\
        \text{s.t.}\ \ \ &\Tr(A_j X) \leq b_j \quad \text{ for } j \in [m^{(1)}], \\
                   &X \succeq 0
      \end{align*}\end{minipage}
    \begin{minipage}{.49\textwidth}
      \begin{align*}
        \min \quad &\sum_{j=1}^{m^{(1)}} b_j y_j \\
        \text{s.t.}\ \ \ &\sum_{j=1}^{m^{(1)}} y_j A_j - C \succeq 0,\\
                   &y \geq 0
      \end{align*}\end{minipage}
  \end{minipage}
  then $SDP^{(t)}$ is
  \begin{align*}
    \max \quad &\ \sum_{i=1}^t \tr{CX_i} & &\\
    \text{s.t.}\ \ \ & \sum_{i=1}^t \Tr(A_j X_i) \leq t b_j &&\text{ for } j \in [k], \\
               & \Tr(A_j X_i) \leq  b_j &&\text{ for } j = k+1 , \dots , m^{(1)} \text{ and } i = 1 , \dots , t \\
               &X_i \succeq 0 &&\text{ for all } i = 1, \dots,t
  \end{align*}
  with dual
  \begin{align*}
    \min \quad &\sum_{j=1}^k t b_j y_j + \sum_{i=1}^t \sum_{j=k+1}^{m^{(1)}} b_j y_j^i\\
    \text{s.t.}\ \ \ &\sum_{j=1}^k y_j A_j  + \sum_{j=k+1}^{m^{(1)}} y_j^i A_j \succeq C \text{ for } i = 1,\dots,t\\
               &y,y^i \geq 0.
  \end{align*}
  First, let us consider the value of $\opt^{(t)}$. Let $X^{(1)}$ be an optimal solution to $SDP^{(1)}$ and for all $i \in \lbrack t \rbrack$ let $X_i = X^{(1)}$. Since these $X_i$ form a feasible solution to $SDP^{(t)}$, this shows that $\opt^{(t)} \geq t\cdot \opt^{(1)}$.
  Furthermore, let $y^{(1)}$ be an optimal dual solution of $SDP^{(1)}$, then $(y^{(1)}_1,\dots,y^{(1)}_k)\oplus \left( y^{(1)}_{k+1},\cdots,y^{(1)}_{m^{(1)}}\right)^{\oplus t}$ is a feasible dual solution for $SDP^{(t)}$ with objective value $t\cdot \opt^{(1)}$, so $\opt^{(t)} = t \cdot \opt^{(1)}$.

  Next, let us consider the value of $r^{* (t)}$. Let $\tilde{y} \oplus y^1 \oplus \dots \oplus y^t$ be an optimal dual solution for $SDP^{(t)}$, split into the parts of $y$ that correspond to different parts of the combination. Then $\tilde{y} \oplus y^i$ is a feasible dual solution for $SDP^{(1)}$ and hence $b^T ( \tilde{y}\oplus y^i) \geq \opt^{(1)}$. On the other hand we have
  \[
    t \cdot \opt^{(1)} = \opt^{(t)} = \sum_{i=1}^t b^T (\tilde{y} \oplus y^i),
  \]
  this implies that each term in the sum is actually equal to $\opt^{(1)}$. But if $(\tilde{y}\oplus y^i)$ is an optimal dual solution of $SDP^{(1)}$ then $\nrm{(\tilde{y}\oplus y^i)}_1 \geq r^{* (1)}$ by definition and $\nrm{y^i}_1 \geq \delta$. We conclude that $r^{* (t)} \geq r^{* (1)} - \delta + t\delta$.

  Now we know the behavior of $r^{*}$ under combinations, let us look at the primal to find a similar statement for $R^{*(t)}$. Define a new SDP, $\widehat{SDP}^{(t)}$, in which all the constraints are summed when combining, that is, in Definition~\ref{def:combining} we take $k = n^{(1)}$, however, contrary to that definition, we even sum the psd constraints: 
  \vskip-5mm
  \begin{align*}
    \max \quad &\ \sum_{i=1}^t \tr{CX_i} \\
    \text{s.t.}\ \ \ & \sum_{i=1}^t \Tr(A_j X_i) \leq t b_j \quad \text{ for } j \in [m^{(1)}], \\
               &\sum_{i=1}^t X_i \succeq 0.
  \end{align*}
  This SDP has the same objective function as $SDP^{(t)}$ but a larger feasible region: every feasible $X_1,\dots,X_t$ for $SDP^{(t)}$ is also feasible for $\widehat{SDP}^{(t)}$. However, by a change of variables, $X := \sum_{i=1}^t X_i$, it is easy to see that $\widehat{SDP}^{(t)}$ is simply a scaled version of $SDP^{(1)}$. So, $\widehat{SDP}^{(t)}$ has optimal value $t\cdot \opt^{(1)}$. Since optimal solutions to $\widehat{SDP}^{(t)}$ are scaled optimal solutions to $SDP^{(1)}$, we have $\hat{R}^{*(t)} = t\cdot R^{* (1)}$. Combining the above, it follows that every optimal solution to $SDP^{(t)}$ is optimal to $\widehat{SDP}^{(t)}$ as well, and hence has trace at least $t\cdot R^{*(1)}$, so $R^{* (t)}\geq t\cdot R^{* (1)}$.

  We conclude that
  \[
    \frac{
      R^{*(t)} r^{*(t)}
    }{
      \eps^{(t)}
    }
    \geq
    \frac{
      tR^{* (1)}
      (r^{* (1)}+(t-1)\delta)
    }{
      t\eps^{(1)}}
    = \Omega\left( t \right)
  \]
  and $n^{(t)} = t n^{(1)}$, $m^{(t)} = t ( m^{(1)}-k)+k$.
\end{proof}
This shows that for many natural SDP formulations for combinatorial problems, such as the MAXCUT relaxation or LPs that have to do with resource management, $R^{*}r^{*}/\eps$ increases linearly in $n$ and $m$ for some instances. Hence, using $R^{*}\leq R$ and Lemma~\ref{lem:genisrstar}, $Rw/\eps$ grows at least linearly when a general width-bound is used.
\section{Lower bounds on the quantum query complexity}\label{sec:lowerbounds}

In this section we will show that every LP-solver (and hence every SDP-solver) that can distinguish two optimal values with high probability needs $\Omega\left(\sqrt{\max\{n,m\}} \left( \min\{n,m\} \right)^{3/2} \right)$ quantum queries in the worst case.

For the lower bound on LP-solving we will give a reduction from a composition of Majority and OR functions.

\begin{definition}
  Given input bits $Z_{ij\ell} \in \{0,1\}^{a\times b\times c}$ the problem of calculating
  \begin{align*}
    MAJ_a(&\\
          &OR_b(MAJ_c(Z_{111},\dots,Z_{11c}),\dots,MAJ_c(Z_{1b1},\dots,Z_{1bc})),\\
          &\dots,\\
          &OR_b(MAJ_c(Z_{a11},\dots,Z_{a1c}),\dots,MAJ_c(Z_{ab1},\dots,Z_{abc}))\\
    )&
  \end{align*}
  with the promise that
  \begin{itemize}
  \item Each inner $\MAJ_c$ is a boundary case, in other words $\sum_{\ell=1}^c \!Z_{ij\ell} \in\! \{c/2,c/2+1\}$ for all $i,j$.
  \item The outer $\MAJ_a$ is a boundary case, in other words, if $\tilde{Z} \in \{0,1\}^a$ is the bitstring that results from all the OR calculations, then $|\tilde{Z}| \in \{a/2,a/2+1\}$.
  \end{itemize}
  is called the promise $\MOM$ problem.
\end{definition}
\pagebreak[3]
\begin{lemma}\label{lem:knownlow}
  It takes at least $\Omega(a\sqrt{b}\,c)$ queries to the input to solve the promise $\MOM\!$ problem.
\end{lemma}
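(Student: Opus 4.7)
The plan is to lower-bound the quantum query complexity of the promise $\MOM$ problem via the (negative-weights) adversary method and its multiplicative behavior under function composition. The general adversary bound $\ADV^{\pm}(f)$ is, up to constants, a lower bound on the bounded-error quantum query complexity of any (possibly partial) function~$f$, so it suffices to prove $\ADV^{\pm}(\MOM)=\Omega(a\sqrt{b}\,c)$. I intend to view $\MOM$ as the three-level composition $\MAJ_a|_{\mathrm{bdry}}\circ\OR_b\circ\MAJ_c|_{\mathrm{bdry}}$, first estimating the adversary bound of each level separately and then using composition to multiply them.

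First, I would record the three individual adversary bounds. For $\OR_b$ we have the standard $\ADV^{\pm}(\OR_b)=\Omega(\sqrt{b})$ matching Grover. For the boundary majority on $c$ bits (the partial function distinguishing Hamming weight $c/2$ from $c/2+1$), I would run a direct Ambainis-style adversary with $A$ the weight-$c/2$ inputs, $B$ the weight-$(c/2{+}1)$ inputs, and the relation being single-bit flips: then $l(x)=c/2$, $l(y)=c/2+1$, while $l_{x,i},l_{y,i}\le 1$, giving $\ADV^{\pm}(\MAJ_c|_{\mathrm{bdry}})=\Omega(\sqrt{l(x)\,l(y)/(l_{x,i}\,l_{y,i})})=\Omega(c)$. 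The identical calculation yields $\ADV^{\pm}(\MAJ_a|_{\mathrm{bdry}})=\Omega(a)$.

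Next I would invoke the composition property from Appendix~\ref{app:adversarycomposition}, which states that $\ADV^{\pm}(f\circ g)\ge \ADV^{\pm}(f)\cdot \ADV^{\pm}(g)$ even for partial functions, so long as the inner function's outputs are compatible with the outer function's domain. Applying this twice to $F=\MAJ_a|_{\mathrm{bdry}}\circ\OR_b\circ\MAJ_c|_{\mathrm{bdry}}$ gives
\[
\ADV^{\pm}(\MOM)\;\ge\;\ADV^{\pm}(\MAJ_a|_{\mathrm{bdry}})\cdot \ADV^{\pm}(\OR_b)\cdot \ADV^{\pm}(\MAJ_c|_{\mathrm{bdry}})\;=\;\Omega(a\sqrt{b}\,c),
\]
and the quantum query lower bound follows.

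The main obstacle I anticipate is matching the promise structure of $\MOM$ to the hypotheses of the composition theorem. In particular, the promise demands simultaneously that every inner $\MAJ_c$ block be a boundary instance \emph{and} that the $a$-bit string produced by the $\OR_b$ layer be a boundary instance of the outer $\MAJ_a$; I must check that the partial-function composition theorem of Appendix~\ref{app:adversarycomposition} permits restricting the middle layer's output to precisely those assignments, so that we compose legal partial-function instances at each level. Once this compatibility is verified, plugging the three bounds into the composition inequality yields the claimed $\Omega(a\sqrt{b}\,c)$ lower bound.
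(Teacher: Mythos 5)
Your proposal is correct and follows essentially the same route as the paper: cite the known adversary lower bounds $\Omega(a)$, $\Omega(\sqrt{b})$, $\Omega(c)$ for boundary-case majority and OR, and multiply them via the partial-function composition theorem proved in Appendix~\ref{app:adversarycomposition} (Theorem~\ref{thm:adv}), whose domain condition matches the promise of $\MOM$ exactly as you anticipated. The only difference is cosmetic: you sketch an Ambainis-style argument for the boundary majority bound, whereas the paper simply cites it as known.
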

\begin{proof}
  The promise version of $\MAJ_k$ is known to require $\Omega(k)$ quantum queries. Likewise, it is known that the OR$_k$ function requires $\Omega(\sqrt{k})$ queries. Furthermore, the adversary bound tells us that query complexity is multiplicative under composition of functions; Kimmel~\cite[Lemma~A.3 (Lemma~6 in the arXiv version)]{kimmel2011QAdversatryUpperBound} showed that this even holds for promise functions. 
\end{proof}

\begin{lemma} \label{lem:calculation}
  Determining the value
  \[
    \sum_{i=1}^a \max_{j \in \lbrack b\rbrack} \sum_{\ell=1}^c Z_{ij\ell},
  \]
  for a $Z$ from the promise $\MOM$ problem up to additive error $\eps = 1/3$, solves the promise $\MOM$ problem.
\end{lemma}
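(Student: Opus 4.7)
My plan is to unpack both promises in turn and show that the sum in question is a simple affine function of $|\tilde Z|$, the Hamming weight of the bitstring $\tilde Z$ defined in the problem statement.

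First I would use the inner promise. For every pair $(i,j)$ we have $\sum_{\ell=1}^c Z_{ij\ell}\in\{c/2,c/2+1\}$, and by the definition of majority on a boundary input,
\[
  \MAJ_c(Z_{ij1},\ldots,Z_{ijc}) = \sum_{\ell=1}^c Z_{ij\ell} - c/2 \in \{0,1\}.
\]
Taking the $\OR_b$ over $j$ and using that $\max_j \MAJ_c(\cdot)$ equals the OR when the values are in $\{0,1\}$, I get
\[
  \tilde Z_i := \OR_b\!\left(\MAJ_c(Z_{i11},\ldots),\ldots,\MAJ_c(Z_{ib1},\ldots)\right) = \max_{j\in[b]}\sum_{\ell=1}^c Z_{ij\ell} - c/2.
\]
Summing over $i\in[a]$ yields the clean identity
\[
  \sum_{i=1}^a \max_{j\in[b]}\sum_{\ell=1}^c Z_{ij\ell} \;=\; \tfrac{ac}{2} + |\tilde Z|.
\]

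Next I invoke the outer promise $|\tilde Z|\in\{a/2,a/2+1\}$, so the quantity to be estimated takes one of exactly two values, differing by $1$, namely $ac/2+a/2$ or $ac/2+a/2+1$. An additive $\eps=1/3$ approximation therefore pins down $|\tilde Z|$ exactly, since the two candidates are separated by $1>2\eps$. But knowing whether $|\tilde Z|$ equals $a/2$ or $a/2+1$ is by definition the value of the outer $\MAJ_a$ on its promise input, which is precisely the output of the $\MOM$ function. Thus the estimation task solves the promise $\MOM$ problem.

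No step is really the hard part; the lemma is essentially a bookkeeping argument reducing a combinatorial function evaluation to a single scalar estimate. The only thing to double-check is the two unpacking steps (that $\MAJ_c$ of a boundary input equals $\sum_\ell Z_{ij\ell}-c/2$ and that $\max$ agrees with $\OR$ on $\{0,1\}$-valued inputs), both of which are immediate from the promise.
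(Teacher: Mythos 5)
Your proof is correct and follows essentially the same route as the paper: both reduce the sum to the affine expression $ac/2 + |\tilde Z|$ via the inner promise and then use the outer promise to conclude that a $1/3$-accurate estimate determines $|\tilde Z|\in\{a/2,a/2+1\}$. The only cosmetic difference is that you phrase the inner step algebraically ($\MAJ_c = \sum_\ell Z_{ij\ell} - c/2$) where the paper does a two-case analysis on the value of each OR.
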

\begin{proof}
  Notice that due to the first promise, $\sum_{\ell=1}^c Z_{ij\ell} \in \{c/2,c/2+1\}$ for all $i\in [a],j\in[b]$. This implies that
  \begin{itemize}
  \item If the $i$th OR is $0$, then all of its inner MAJ functions are $0$ and hence
    \[
      \max_{j \in \lbrack b\rbrack} \sum_{\ell=1}^c Z_{ij\ell} = \frac{c}{2}.
    \]
  \item If the $i$th OR is $1$, then at least one of its inner MAJ functions is $1$ and hence
    \[
      \max_{j \in \lbrack b\rbrack} \sum_{\ell=1}^c Z_{ij\ell} = \frac{c}{2} + 1.
    \]
  \end{itemize}
  Now, if we denote the string of outcomes of the OR functions by $\tilde{Z}\in \{0,1\}^a$, then
  \[
    \sum_{i=1}^a \max_{j \in \lbrack b\rbrack} \sum_{\ell=1}^c Z_{ij\ell} = a \frac{c}{2} + |\tilde{Z}|.
  \]
  Hence determining the left-hand side will determine $|\tilde{Z}|$; this Hamming weight is either $\frac{a}{2}$ if the full function evaluates to $0$, or $\frac{a}{2}+1$ if it evaluates to $1$.
\end{proof}
\begin{lemma} \label{lem:generallp}
  For an input $Z\in \{0,1\}^{a\times b\times c}$ there is an LP with $m = c+a$ and $n = c+ab$ for which the optimal value is
  \[
    \sum_{i=1}^a \max_{j \in \lbrack b\rbrack} \sum_{\ell=1}^c Z_{ij\ell}.
  \]
  Furthermore, a query to an entry of the input matrix or vector costs at most $1$ query to $Z$.
\end{lemma}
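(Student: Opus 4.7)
The plan is to construct an LP with exactly the two promised parameter counts whose optimum equals the target quantity, and whose entries are essentially copies of the bits $Z_{ij\ell}$ (plus fixed constants $0$ and $1$), so that any single entry query costs at most one query to $Z$.

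The natural construction has $ab$ variables $x_{ij}$ (for $i\in[a],j\in[b]$) together with $c$ auxiliary variables $y_\ell$ (for $\ell\in[c]$), giving $n=ab+c$ variables in total; and constraints $\sum_j x_{ij}\le 1$ for each $i\in[a]$ together with $y_\ell\le\sum_{i,j} Z_{ij\ell}\,x_{ij}$ for each $\ell\in[c]$, giving $m=a+c$ constraints in total. The objective is $\max\sum_\ell y_\ell$ with $x,y\ge 0$. Symbolically:
\begin{align*}
\max\quad &\sum_{\ell=1}^{c}y_\ell\\
\text{s.t.}\quad &y_\ell-\sum_{i=1}^{a}\sum_{j=1}^{b}Z_{ij\ell}\,x_{ij}\le 0 \quad\text{for all }\ell\in[c],\\
&\sum_{j=1}^{b}x_{ij}\le 1 \quad\text{for all }i\in[a],\\
&x,y\ge 0.
\end{align*}

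To verify the optimal value, I would first observe that for any fixed feasible $x$, maximizing over $y\ge 0$ yields $y_\ell=\sum_{i,j}Z_{ij\ell}x_{ij}$, so the objective reduces to $\sum_{i,j}x_{ij}\bigl(\sum_\ell Z_{ij\ell}\bigr)$. Writing $s_{ij}:=\sum_\ell Z_{ij\ell}$, we are left with $\max\sum_{i,j}s_{ij}x_{ij}$ subject to $\sum_j x_{ij}\le 1$ and $x\ge 0$. This decouples over $i$: for each $i$ we put all weight on a $j$ achieving $\max_j s_{ij}$, so the optimum equals $\sum_{i=1}^{a}\max_{j\in[b]}\sum_{\ell=1}^{c}Z_{ij\ell}$, as required.

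For the query-cost claim, I would check each type of entry. Every coefficient in the constraint matrix is either a fixed $0$ or $1$ (for the simplex-type rows $\sum_j x_{ij}\le 1$ and for the $y_\ell$ coefficient in its own defining row), or one of the bits $Z_{ij\ell}$ (for the $x_{ij}$ coefficient in the $\ell$th row). The right-hand side entries are $0$ or $1$, and the objective coefficients are $0$ or $1$. Hence producing any requested entry of $A$, $b$, or $c$ requires computing at most one index lookup against $Z$, so at most one query to $Z$ suffices, finishing the three claims of the lemma.

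The only potential obstacle is the bookkeeping of indexing (matching the $n=c+ab$, $m=c+a$ labelling and defining the ``$\ell$th non-zero entry in the $k$th row'' access consistent with the oracle model of Section~\ref{sec:upperbounds}), but since the matrix is explicit and each row's sparsity pattern is known up front, this is routine. No nontrivial step is hidden in the argument.
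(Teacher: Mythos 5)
Your LP is identical to the paper's construction up to renaming ($x_{ij}=v^{(i)}_j$, $y_\ell=w_\ell$): same $c$ coupling constraints $y_\ell\le\sum_{i,j}Z_{ij\ell}x_{ij}$, same $a$ simplex constraints, same objective, and the same two-step analysis (tighten the $y$-constraints, then decouple over $i$ and put all weight on the heaviest column). The proof is correct and matches the paper's.
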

\begin{proof}
  Let $Z^{(i)}$ be the matrix one gets by fixing the first index of $Z$ and putting the entries in a $c\times b$ matrix, so $Z^{(i)}_{\ell j} = Z_{ij\ell}$.
  We define the following LP:
  \begin{align*}
    \opt = \text{max} \ \ \ & \sum_{k=1}^c w_k\\
    \text{s.t.} \ \ \ &
                        \begin{bmatrix}
                          I & -Z^1 & \cdots & -Z^a \\
                          0 & \mathbf{1}^T & &\\
                          0 &  & \ddots &\\
                          0 &  & & \mathbf{1}^T
                        \end{bmatrix}
                                   \begin{bmatrix}
                                     w\\
                                     v^{(1)}\\
                                     \vdots\\
                                     v^{(a)}\\
                                   \end{bmatrix}
    \leq
    \begin{bmatrix}
      0\\
      {1}\\
      \vdots\\
      {1}
    \end{bmatrix}\\
                            & v^1,\dots,v^a \in \mathbb{R}_{+}^b ,w \in \mathbb{R}_{+}^c.
  \end{align*}
  Notice every $Z^{(i)}$ is of size $c\times b$, so that indeed $m = c+a$ and $n = c + ab$.

  For every $i \in [a]$ there is a constraint that says
  \[
    \sum_{j=1}^b v^{(i)}_j \leq 1.
  \]
  The constraints involving $w$ say that for every $k\in[c]$
  \[
    w_k \leq \sum_{i=1}^a \sum_{j=1}^b v^{(i)}_j Z^{(i)}_{k j} = \sum_{i=1}^a  ( Z^{(i)} v^{(i)} )_k,
  \]
  where $( Z^{(i)} v^{(i)} )_k$ is the $k$th entry of the matrix-vector product $Z^{(i)} v^{(i)}$.
  Clearly, for an optimal solution these constraints will be satisfied with equality, since in the objective function $w_k$ has a positive weight.
  Summing over $k$ on both sides, we get the equality
  \begin{align*}
    \opt &= \sum_{k = 1}^c w_{k}\\
         &= \sum_{k = 1}^c \sum_{i=1}^a  ( Z^{(i)} v^{(i)} )_{k}\\
         &= \sum_{i=1}^a \sum_{k = 1}^c   ( Z^{(i)} v^{(i)} )_{k}\\
         &= \sum_{i=1}^a  \nrm{ Z^{(i)} v^{(i)}}_1,
  \end{align*}
  so in the optimum $\nrm{ Z^{(i)} v^{(i)}}_1$ will be maximized.
  Note that we can use the $\ell_1$-norm as a shorthand for the sum over vector elements since all elements are positive.
  In particular, the value of $\nrm{ Z^{(i)} v^{(i)}}_1$ is given by
  \begin{align*}
    \text{max} \ \ \ & \nrm{ Z^{(i)} v^{(i)}}_1\\
    \text{s.t.} \ \ \ &	\nrm{v^{(i)}}_1\leq 1\\
                     & v^{(i)}\geq 0.
  \end{align*}
  Now $\nrm{\smash{ Z^{(i)} v^{(i)}}}_1$ will be maximized by putting all weight in $v^{(i)}$ on the index that corresponds to the column of $Z^{(i)}$ that has the highest Hamming weight.
  In particular in the optimum $\nrm{ \smash{Z^{(i)} v^{(i)}}}_1 = \max_{j\in \lbrack b \rbrack} \sum_{\ell = 1}^c Z^{(i)}_{\ell j}$.
  Putting everything together gives:
  \[
    \opt = \sum_{i=1}^a  \nrm{ Z^{(i)} v^{(i)}}_1 =  \sum_{i=1}^a  \max_{j\in \lbrack b \rbrack} \sum_{\ell = 1}^c Z^{(i)}_{\ell j}  =  \sum_{i=1}^a  \max_{j\in \lbrack b \rbrack} \sum_{\ell = 1}^c Z_{i j \ell}.
  \]
  \vskip-5mm
\end{proof}
\begin{theorem}\label{thm:lowbound}
  There is a family of LPs, with $m \leq n$ and two possible integer optimal values, that require at least $\Omega(\sqrt{n}\,m^{3/2})$ quantum queries to the input to distinguish those two values.
\end{theorem}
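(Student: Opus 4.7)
The strategy is to chain Lemmas~\ref{lem:generallp}, \ref{lem:calculation}, and~\ref{lem:knownlow} and then pick the dimensions $a,b,c$ of the promise $\MOM$ instance so that the resulting LP has $m$ constraints, $n$ variables, and inherits a query lower bound of the desired form.

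First I would observe the chain. Given any input $Z\in\{0,1\}^{a\times b\times c}$ to the promise $\MOM$ problem, Lemma~\ref{lem:generallp} produces an LP of size $m = c+a$, $n = c+ab$ whose optimum equals $\sum_{i=1}^a \max_j \sum_\ell Z_{ij\ell}$, and every query to an entry of the LP data costs at most one query to $Z$. By the promise, this optimum takes one of exactly two integer values $ac/2 + a/2$ or $ac/2 + a/2 + 1$, so distinguishing them with constant success probability decides $\MOM$ by Lemma~\ref{lem:calculation}. By Lemma~\ref{lem:knownlow}, any quantum algorithm solving $\MOM$ must make $\Omega(a\sqrt{b}\,c)$ queries to $Z$, hence also to the LP data.

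Next I would optimize the parameters. Given target sizes $m \leq n$, set $a = c = \lfloor m/2\rfloor$ and choose $b = \lfloor 2n/m \rfloor$, so that $m = c+a = \Theta(m)$ and $n = c+ab = \Theta(n)$ (and $b\geq 1$ since $n\geq m$). The resulting lower bound is
\[
\Omega(a\sqrt{b}\,c) = \Omega\!\left(m\cdot\sqrt{n/m}\cdot m\right) = \Omega\!\left(\sqrt{n}\,m^{3/2}\right),
\]
which yields the claimed bound, with the two optimal values differing by exactly $1$.

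There is essentially no hard step; the only things to double-check are (i)~that the promise on $\MOM$ is preserved under the reduction (the LP is built directly from the bits $Z_{ij\ell}$, so this is immediate), (ii)~that a query to the $(i,j)$-th entry of the LP matrix in Lemma~\ref{lem:generallp} translates into at most one query to $Z$ (also immediate from its definition, with the identity block and all-ones rows being free), and (iii)~that our parameter choice respects integrality and the constraint $m\leq n$, which follows from $b\geq 1$. The ``main obstacle,'' such as it is, lies entirely in Appendix~\ref{app:adversarycomposition}: the multiplicativity of the adversary bound for composed \emph{promise} functions underlying Lemma~\ref{lem:knownlow}; once that composition theorem is in hand, Theorem~\ref{thm:lowbound} is a short calculation.
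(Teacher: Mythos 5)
Your proposal is correct and follows exactly the same route as the paper's proof: chain Lemmas~\ref{lem:generallp}, \ref{lem:calculation}, and~\ref{lem:knownlow}, then set $a=c=m/2$ and $b=2n/m-1$ so that $\Omega(a\sqrt{b}c)=\Omega(m^{3/2}\sqrt{n})$. Nothing further is needed.
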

\begin{proof}
  Let $a = c = m/2$ and $b = \frac{n-c}{a} = \frac{2n}{m} - 1$, so that $n =c+ab$ and $m=c+a$.
  By Lemma~\ref{lem:generallp} there exists an LP with $n =c+ab$ and $m=c+a$ that calculates
  \[
    \sum_{i=1}^a \max_{j \in \lbrack b\rbrack} \sum_{\ell=1}^c Z_{ij\ell}
  \]
  for an input $Z$ to the promise $\MOM$ problem. By Lemma~\ref{lem:calculation}, calculating this value will solve the promise $\MOM$ problem.
  By Lemma~\ref{lem:knownlow} the promise $\MOM$ problem takes $\Omega(a\sqrt{b}\,c)$ quantum queries in the worst case.
  This implies a lower bound of
  \[
    \Omega\left(m^2\sqrt{\frac{n}{m}}\right) = \Omega(m^{3/2}\sqrt{n})
  \]
  quantum queries on solving these LPs.
\end{proof}

\begin{corollary}
  Distinguishing two optimal values of an LP (and hence also of an SDP) with additive error $\eps<1/2$ requires \[\Omega\left(\sqrt{\max\{n,m\}} \left( \min\{n,m\} \right)^{3/2} \right)\] quantum queries to the input matrices in the worst case.
\end{corollary}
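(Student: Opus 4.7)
The plan is a case split on whether $m \leq n$ or $m > n$, invoking Theorem~\ref{thm:lowbound} directly in the first case and applying LP duality to reduce to it in the second.

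For $m \leq n$, the claimed bound $\Omega(\sqrt{\max\{n,m\}}\,\min\{n,m\}^{3/2})$ simplifies to $\Omega(\sqrt{n}\,m^{3/2})$, which is exactly the content of Theorem~\ref{thm:lowbound}. Since the two possible optima in that theorem are integers, any algorithm achieving additive error $\eps < 1/2$ automatically distinguishes them, so the assumption $\eps < 1/2$ is enough.

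For $m > n$, I would apply Theorem~\ref{thm:lowbound} with the parameters swapped: it yields a family of LPs with $n$ constraints and $m$ variables (so that the theorem's ``$m \leq n$'' hypothesis holds for the swapped parameters), requiring $\Omega(\sqrt{m}\,n^{3/2})$ quantum queries to distinguish the two integer optima. I would then pass to the LP dual of each such instance. The dual has $m$ constraints and $n$ variables — exactly the target shape — and by strong LP duality its optimum equals the primal's, so distinguishing the dual's two optima with accuracy $\eps < 1/2$ also distinguishes the primal's, which solves the underlying promise-$\MOM$ task. Because the dual's constraint matrix is the transpose of the primal's, and Lemma~\ref{lem:generallp} shows each primal entry is $0$, $\pm 1$, or $\pm Z_{ij\ell}$, each dual entry is computable with $O(1)$ queries to the input bits $Z$; hence the query lower bound transfers, giving $\Omega(\sqrt{m}\,n^{3/2})$ for this regime.

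The main (mild) obstacle is confirming that strong duality applies to the LPs of Lemma~\ref{lem:generallp} and that their dual sits inside the standard oracle model of Section~\ref{sec:upperbounds}. Strong duality is immediate: the primal is feasible (the zero vector) and bounded (because $\mathbf 1^T v^{(i)}\leq 1$ and $v^{(i)}\geq 0$ cap the objective), so both optima are attained and equal. Rewriting the dual in the standard maximization-with-$\le$-constraints form is routine, and its sparsity remains $s=1$ since the dual is still an LP. Finally, the SDP version of the bound follows for free, because every LP with $m$ constraints and $n$ variables embeds as an SDP with diagonal $A_j, C$ of the same dimensions and sparsity $s=1$.
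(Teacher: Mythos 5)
Your proposal is correct and follows exactly the route the paper intends: the paper's own proof is the single sentence ``This follows from Theorem~\ref{thm:lowbound} and LP duality,'' and your case split (invoking the theorem directly when $m\leq n$, and dualizing a swapped-parameter instance when $m>n$) is precisely the expansion of that sentence, with the extra checks on strong duality and on the $O(1)$-query access to the dual's data being correct and harmless additions.
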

\begin{proof}
  This follows from Theorem~\ref{thm:lowbound} and LP duality.
\end{proof}
It is important to note that the parameters $R$ and $r$ from the Arora-Kale algorithm are not constant in this family of LPs ($R,r = \Theta(\min\{n,m\}^2)$ here), and hence this lower bound does not contradict the scaling with $\sqrt{mn}$ of the complexity of our SDP-solver or Brand\~{a}o and Svore's. Since we show in the appendix that one can always rewrite the LP (or SDP) so that $2$ of the parameters $R,r,\eps$ are constant, the lower bound implies that any algorithm with a sub-linear dependence on $m$ or $n$ has to depend at least polynomially on $Rr/\eps$. For example, the above family of LPs shows that an algorithm with a $\sqrt{mn}$ dependence has to have an $(Rr/\eps)^\kappa$ factor in its complexity with $\kappa\geq 1/4$. It remains an open question whether a lower bound of $\Omega(\sqrt{mn})$ can be proven for a family of LPs/SDPs where $\eps$, $R$ and $r$ all constant.

\section{Conclusion}

In this paper we gave better algorithms and lower bounds for quantum SDP-solvers, improving upon recent work of Brand\~ao and Svore~\cite{brandao2016QSDPSpeedup}. Here are a few directions for future work:
\begin{itemize}
\item {\bf Better upper bounds.}
  The runtime of our algorithm, like the earlier algorithm of Brand\~ao and Svore, has better dependence on $m$ and $n$ than the best classical SDP-solvers, but worse dependence on $s$ and on $Rr/\eps$. In subsequent work (see the introduction), these dependencies have been improved, but there is room for further improvement.
\item {\bf Applications of our algorithm.}
  As mentioned, both our and Brand\~ao-Svore's quantum SDP-solvers only improve upon the best classical algorithms for a specific regime of parameters, namely where $mn\gg Rr/\eps$. Unfortunately, we don't know particularly interesting problems in combinatorial optimization in this regime. As shown in Section~\ref{sec:downside}, many natural SDP formulations will not fall into this regime. However, it would be interesting to find useful SDPs for which our algorithm gives a significant speed-up.
\item {\bf New algorithms.} As in the work by Arora and Kale, it might be more promising to look at oracles (now quantum) that are designed for specific SDPs. Such oracles could build on the techniques developed here, or develop totally new techniques. It might also be possible to speed up other classical SDP solvers, for example those based on interior-point methods.
\item {\bf Better lower bounds.}
  Our lower bounds are probably not optimal, particularly for the case where $m$ and $n$ are not of the same order. The most interesting case would be to get lower bounds that are simultaneously tight in the parameters $m$, $n$, $s$, and $Rr/\eps$.
\end{itemize}

\paragraph{Acknowledgments.}
An earlier version of this paper appeared in the proceedings of the FOCS'17 conference~\cite{apeldoorn2017QSDPSolvers}.

We thank Fernando Brand\~ao for sending us several drafts of~\cite{brandao2016QSDPSpeedup} and for answering our many questions about their algorithms. We thank Stacey Jeffery for pointing us to~\cite{kimmel2011QAdversatryUpperBound}, and Andris Ambainis and Robin Kothari for useful discussions and comments. Thanks to Monique Laurent for finding a bug in an earlier version, which we fixed. We also thank the anonymous referees of FOCS'17 and Quantum for very helpful comments that improved the presentation.

JvA and SG are supported by the Netherlands Organization for Scientific Research (NWO), grant number 617.001.351.
AG and RdW are supported by ERC Consolidator Grant 615307-QPROGRESS. 
RdW is also partially supported by NWO through Gravitation-grant Quantum Software Consortium - 024.003.037, and through QuantERA project Quant\-Algo 680-91-034.

\bibliographystyle{alphaUrlePrint}
\bibliography{Bibliography}
\appendix


\section{Classical estimation of the expectation value \texorpdfstring{$\tr{A\rho}$}{traces}}\label{app:trace}

To provide contrast to Section~\ref{sec:trCalc}, here we describe a \emph{classical} procedure to efficiently estimate $\mathrm{Tr}(A \rho)$ where $A$ is a Hermitian matrix such that $\nrm{A} \leq 1$, and $\rho = \exp(-H)/\tr{\exp(-H)}$ for some Hermitian matrix $H$. The results in this section can be seen as a generalization of~\cite[Section 7]{arora2016CombPrimDualSDP}. The key observation is that if we are given a Hermitian matrix $B\succeq 0$, and if we take a random vector $u = (u_1, \ldots, u_n)$ where $u_i \in \{\pm 1\}$ is uniformly distributed, then, using $\mathbb E[u_i] = 0$, $\mathbb E[u_i^2] = 1$, we have
\begin{align*}
  \mathbb E \left[ u^T \sqrt{B} A \sqrt{B} u \right] &= \mathbb E \left[ \tr{\sqrt{B} A \sqrt{B} u u^T} \right] = \tr{\sqrt{B} A \sqrt{B} \mathbb E[uu^T]} \\
                                                     &=  \tr{\sqrt{B} A \sqrt{B} I} = \mathrm{Tr}(A B).
\end{align*}
We now show that $u^T \sqrt{B} A \sqrt{B} u$ is highly concentrated around its mean by Chebyshev's inequality.
\begin{lemma} \label{lem:JLherm0}
  Given a Hermitian matrix $A$, with $||A|| \leq 1$, a psd matrix $B$, and a parameter $0 < \theta \leq 1$. With probability $1-1/16$, the average of $k = \bigO \left(1/\theta^2\right)$ independent samples from the distribution $u^T \sqrt{B} A \sqrt{B} u$ is at most $\theta \tr{B}$ away from $\tr{AB}$. Here $u = (u_i)$ and each $u_i \in \{\pm 1\}$ is i.i.d.\ uniformly distributed.
\end{lemma}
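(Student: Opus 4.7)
The plan is to apply Chebyshev's inequality to the sample mean, so the main task is to bound the variance of a single draw $X := u^T \sqrt{B}A\sqrt{B} u$. The expectation $\mathbb{E}[X] = \tr{AB}$ has already been derived in the paragraph preceding the lemma, so I would focus entirely on controlling $\mathrm{Var}(X)$ in terms of $\tr{B}$.

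Write $M := \sqrt{B} A \sqrt{B}$, which is Hermitian since $A,B$ are. Expanding $X^2 = \sum_{i,j,k,\ell} u_i u_j u_k u_\ell M_{ij} M_{k\ell}$ and using that $\mathbb{E}[u_iu_ju_ku_\ell]$ is $1$ when the indices pair up and $0$ otherwise, the familiar Hutchinson-type calculation gives
\[
    \mathbb{E}[X^2] \;=\; \tr{M}^2 + 2\|M\|_F^2 - 2\sum_i M_{ii}^2,
\]
so $\mathrm{Var}(X) \leq 2\|M\|_F^2$. I would then bound the Frobenius norm by the cyclic identity $\|M\|_F^2 = \tr{M^2} = \tr{A B A B}$, apply Cauchy--Schwarz in the form $|\tr{(AB)(AB)}|\leq \|AB\|_F^2 = \tr{A^2 B^2} \leq \|A\|^2\,\tr{B^2}$, and finally use $\tr{B^2} \leq \|B\|\,\tr{B} \leq \tr{B}^2$ (valid because $B\succeq 0$). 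Combined with $\|A\|\leq 1$ this yields
\[
    \mathrm{Var}(X) \;\leq\; 2\,\tr{B}^2.
\]

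With the variance bound in hand, let $\bar X_k$ be the mean of $k$ independent copies of $X$. Chebyshev's inequality gives
\[
    \Pr\!\left[\,\bigl|\bar X_k - \tr{AB}\bigr| \geq \theta\,\tr{B}\,\right]
    \;\leq\; \frac{\mathrm{Var}(X)}{k\,\theta^2\,\tr{B}^2}
    \;\leq\; \frac{2}{k\,\theta^2},
\]
so choosing $k = \lceil 32/\theta^2 \rceil = \bigO(1/\theta^2)$ makes the failure probability at most $1/16$, which is exactly what the lemma claims.

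I don't foresee a real obstacle here; the only mildly delicate step is the variance bound, where one must avoid the naive estimate $\|M\|_F^2 \leq n\|M\|^2$ (which would bring an $n$ into the sample count) and instead use the psd structure of $B$ via $\tr{B^2}\leq\tr{B}^2$ so that the deviation scales with $\tr{B}$ rather than with the dimension. Everything else is a direct computation plus a one-line application of Chebyshev.
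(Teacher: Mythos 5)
Your proof is correct and follows essentially the same route as the paper's: both apply Chebyshev's inequality to the sample mean and bound the variance of a single draw by $2\nrm{\sqrt{B}A\sqrt{B}}_F^2$, which is then reduced to $2\tr{B}^2$ via Cauchy--Schwarz for the trace inner product and $\tr{B^2}\leq\tr{B}^2$ for psd $B$. The only difference is cosmetic (you apply Cauchy--Schwarz to $\tr{(AB)(AB)}$ directly, the paper to $\langle ABA,B\rangle$), and your choice $k=\lceil 32/\theta^2\rceil$ matches theirs.
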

\begin{proof}
  We let $F_k$ be the random variable $\frac{1}{k} \sum_{i=1}^k (u^{(i)})^T \sqrt{B} A \sqrt{B} u^{(i)}$, where each of the vectors $u^{(i)} \in \{\pm 1\}^n$ is sampled from the distribution described above. By the above it is clear that $\mathbb{E}[F_k] = \mathrm{Tr}(AB)$.

  We will use Chebyshev's inequality which, in our setting, states that for every $t >0$
  \begin{equation} \label{eq:chebyshev}
    \mathrm{Pr}\big( |F_k - \tr{AB}| \geq t \sigma_k \big) \leq \frac{1}{t^2},
  \end{equation}
  here $\sigma_k^2$ is the variance of $F_k$. We will now upper bound the variance of $F_k$. First note that $\mathrm{var}(F_k) = \frac{1}{k} \mathrm{var}(u^T \sqrt{B} A \sqrt{B} u)$.
  It therefore suffices to upper bound the variance $\sigma^2$ of $u^T \sqrt{B} A \sqrt{B} u$. We first write
  \begin{align*}
    \sigma^2 &= \mathrm{var}(u^T \sqrt{B} A \sqrt{B} u) = \mathbb{E} \left[(u^T \sqrt{B} A \sqrt{B} u)^2\right] - \mathbb{E}\left[u^T \sqrt{B} A \sqrt{B} u\right]^2 \\
             &=\underbrace{\mathbb{E} \left[ \sum_{i,j,k,l = 1}^n u_i u_j u_k u_l (\sqrt{B} A \sqrt{B})_{ij}  (\sqrt{B} A \sqrt{B})_{kl}  \right]}_{(*)}  - \mathrm{Tr}(\sqrt{B} A \sqrt{B})^2.
  \end{align*}
  We then calculate $(*)$ using $\mathbb{E}[u_i] = 0$, $\mathbb{E}[u_i^2]=1$, and the independence of the $u_i$'s:
  \begin{align*}
    (*) &= \sum_{i \neq j} (\sqrt{B} A \sqrt{B})_{ij} \left((\sqrt{B} A \sqrt{B})_{ij} + (\sqrt{B} A \sqrt{B})_{ji}\right)  + \sum_{i,k=1}^n (\sqrt{B} A \sqrt{B})_{ii} (\sqrt{B} A \sqrt{B})_{kk}   \\
        &= \sum_{i \neq j} 2(\sqrt{B} A \sqrt{B})_{ij}^2 + \mathrm{Tr}(\sqrt{B} A \sqrt{B})^2.
  \end{align*}
  Therefore, using Cauchy-Schwarz, we have
  \begin{align*}
    \sigma^2 &= (*) -  \mathrm{Tr}(\sqrt{B} A \sqrt{B})^2 = \sum_{i \neq j} 2(\sqrt{B} A \sqrt{B})_{ij}^2 \leq  \sum_{i,j} 2(\sqrt{B} A \sqrt{B})_{ij}^2\\
             &= 2 \mathrm{Tr}((\sqrt{B} A \sqrt{B})^2) = 2 |\langle ABA, B\rangle | \leq 2 |\langle ABA,ABA\rangle|^{1/2} |\langle B,B\rangle|^{1/2} \\
             &= 2 \tr{A^2 B A^2 B}^{1/2} \tr{B^2}^{1/2} \leq 2 \tr{B A^2 B}^{1/2} \tr{B^2}^{1/2} \leq 2 \tr{B^2} \leq 2 \tr{B}^2,
  \end{align*}
  where on the last line we use $\nrm{A} \leq 1$ and $\tr{A Y} \leq \nrm{A} \tr{Y}$ for any $Y \succeq 0$, in particular for $BA^2B$ and $B^2$.

  It follows that $\sigma_k^2 \leq 2 \tr{B}^2/k$. Chebyshev's inequality~\eqref{eq:chebyshev} therefore shows that for $k = \lceil 32 /\theta^2\rceil $ and $t = 4$,
  \[
    \mathrm{Pr}\big( |F_k - \tr{AB}| \geq \theta \tr{B} \big) \leq \frac{1}{16}.
  \]
  \vskip -5mm
\end{proof}
A simple computation shows that the success probability in the above lemma can be boosted to $1-\delta$ by picking the median of $\bigO(\log(1/\delta))$ repetitions. To show this, let $K = \lceil \log(1/\delta)\rceil$ and for each $i \in [K]$ let $(F_k)_i$ be the average of $k$ samples of $u^T \sqrt{B} A \sqrt{B} u$. Let $z_K$ denote the median of those $K$ numbers. We have
\[
  \mathrm{Pr}\big(|z_K - \tr{AB}| \geq \theta \tr{B}\big) = \underbrace{\mathrm{Pr}\big( z_K \geq \theta \tr{B} + \tr{AB} \big)}_{(*)} +  \mathrm{Pr}\big( z_K \leq   - \theta \tr{B} + \tr{AB} \big)
\]
We upper bound $(*)$:
\begin{align*}
  (*) &\leq  \sum_{I \subseteq [K]: |I| \geq K/2} \prod_{i \in I} \mathrm{Pr}\big( (F_k)_i \geq \theta \tr{B} + \tr{AB} \big) \\
      &\leq  (|\{I \subseteq [K]: |I| \geq K/2 \}| ) \left(\frac{1}{16}\right)^{K/2} \\
      &=  2^{K-1}\left(\frac{1}{4}\right)^K \\
      &\leq \frac{1}{2}\left(\frac{1}{2}\right)^{\log(1/\delta)} = \frac{1}{2} \delta.
\end{align*}
Analogously, one can show that $\mathrm{Pr}\big( z_K \leq   - \theta \tr{B} + \tr{AB} \big) \leq \frac{1}{2} \delta$. Hence
\[\mathrm{Pr}\big(|z_K - \tr{AB}| \geq \theta \tr{B}\big) \leq \delta. \]
This proves the following lemma:
\begin{lemma} \label{lem:JLherm}
  Given a Hermitian matrix $A$, with $||A|| \leq 1$, a psd matrix $B$, and parameters $0 < \delta \leq 1/2$ and $0 < \theta \leq 1$. Using $k = \bigO \left(\log(\frac{1}{\delta})/\theta^2\right)$ samples from the distribution $u^T \sqrt{B} A \sqrt{B} u$, one can find an estimate of $\tr{AB}$ that, with probability $1-\delta$, has additive error at most $\theta \tr{B}$. Here $u = (u_i)$ and the $u_i \in \{\pm 1\}$ are i.i.d.\ uniformly distributed.
\end{lemma}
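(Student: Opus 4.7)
The plan is to boost the constant-probability estimator from Lemma~\ref{lem:JLherm0} to one with failure probability $\delta$ via the standard median-of-means trick, without increasing the total sample count by more than a $\log(1/\delta)$ factor. Concretely, I would take $K = \lceil c\log(1/\delta) \rceil$ independent groups (for a small constant $c$), each group consisting of $k_0 = \lceil 32/\theta^2\rceil$ independent samples of $u^T\sqrt{B}A\sqrt{B}u$, and let $F^{(i)}_{k_0}$ be the sample mean of group~$i$. The output is $z_K := \operatorname{median}(F^{(1)}_{k_0},\ldots,F^{(K)}_{k_0})$. The total number of samples is $K k_0 = \bigO(\log(1/\delta)/\theta^2)$, matching the claim.

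For correctness I would first invoke Lemma~\ref{lem:JLherm0} to conclude that for each $i$, $\Pr\!\bigl[|F^{(i)}_{k_0} - \tr{AB}| \geq \theta \tr{B}\bigr] \leq 1/16$. Then I observe that if $|z_K - \tr{AB}| \geq \theta\tr{B}$, then at least $K/2$ of the $F^{(i)}_{k_0}$ must lie on one side, say all $\geq \tr{AB} + \theta\tr{B}$ (the other side being symmetric). Bounding this probability by a union bound over all subsets $I\subseteq[K]$ with $|I|\geq K/2$ and using independence gives
\[
\Pr[z_K \geq \tr{AB} + \theta\tr{B}]
\leq \sum_{|I|\geq K/2}\prod_{i\in I}\Pr\!\bigl[F^{(i)}_{k_0}\geq \tr{AB}+\theta\tr{B}\bigr]
\leq 2^{K-1}\left(\tfrac{1}{16}\right)^{K/2}
= \tfrac{1}{2}\cdot 4^{-K}.
\]
Combining both tails, $\Pr[|z_K - \tr{AB}| \geq \theta\tr{B}] \leq 4^{-K}$, which is at most $\delta$ for our choice of $K$.

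No step here is really an obstacle since the heavy lifting (bounding the variance of one estimator by $2\tr{B}^2$ and invoking Chebyshev) has already been done in Lemma~\ref{lem:JLherm0}. The only thing one has to be careful about is the constant in the exponent of the median-of-means bound: using Chebyshev we only get one-sample failure probability $1/16$, so the per-group failure probability is $1/16$, and one verifies that the binomial-tail bound $2^{K-1}(1/16)^{K/2} = 2^{-K-1}$ is small enough that $K = \bigO(\log(1/\delta))$ suffices. Sampling $u$ costs $\bigO(n)$ bits and evaluating $u^T\sqrt{B}A\sqrt{B}u$ is treated as a black-box primitive at this point, so the lemma's content is purely about sample complexity, as stated.
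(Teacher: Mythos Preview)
Your proposal is correct and follows essentially the same median-of-means argument as the paper: invoke Lemma~\ref{lem:JLherm0} for each of $K=\bigO(\log(1/\delta))$ groups, take the median, and bound each tail by a union bound over subsets of size $\geq K/2$. One small arithmetic slip: $2^{K-1}(1/16)^{K/2} = \tfrac{1}{2}\cdot 2^{-K}$, not $\tfrac{1}{2}\cdot 4^{-K}$, but this does not affect the conclusion since $K=\bigO(\log(1/\delta))$ still suffices.
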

Looking back at Meta-Algorithm~\ref{alg:AKSDP}, we would like to apply the above lemma to $B =\exp(- H)$.\footnote{For ease of notation, we write $H$ for $\eta H$.} Since it is expensive to compute the exponent of a matrix, it is of interest to consider samples from $v^T A v$, where $v$ is an approximation of $\sqrt{B} u$.

Say $\nrm{\sqrt{B} u -v} \leq \kappa$ and, as always, $\nrm{A} \leq 1$. Then
\begin{align*}
  |u^T \sqrt{B} A \sqrt{B} u - v^T Av| &= | u^T \sqrt{B} A \sqrt{B} u - u^T \sqrt{B} A v + u^T \sqrt{B} A v - v^T A v| \\
                                       &\leq | u^T \sqrt{B} A \sqrt{B} u - u^T \sqrt{B} A v| + |u^T \sqrt{B} A v - v^T A v| \\
                                       &=| u^T \sqrt{B} A (\sqrt{B} u-v)| + |(\sqrt{B}u-v)^T A v| \\
                                       &\leq \nrm{u^T \sqrt{B} A} \nrm{\sqrt{B} u -v} + \nrm{\sqrt{B} u -v} \nrm{Av} \\
                                       &\leq \nrm{ \sqrt{B} u} \nrm{A} \kappa + \kappa \nrm{A} \nrm{v} \\
                                       &\leq \kappa ( \nrm{ \sqrt{B} u} + \nrm{v}) \\
                                       &\leq \kappa ( \nrm{ \sqrt{B} u} + \nrm{\sqrt{B} u + v - \sqrt{B}u }) \\
                                       &\leq \kappa \left(\nrm{ \sqrt{B} u}+ \nrm{ \sqrt{B} u} + \nrm{\sqrt{B} u -v}\right) \\
                                       &\leq 2\kappa \nrm{ \sqrt{B} u} + \kappa^2
\end{align*}
Now observe that we are interested in
\[
  \frac{\tr{A \exp(- H)}}{\tr{\exp(-H)}} = \frac{\tr{A \exp(- H + \gamma I)}}{\tr{\exp(-H + \gamma I)}}.
\]
Suppose an upper bound $K$ on $\nrm{H}$ is known, then we can consider $H' = H -K I$ which satisfies $H' \preceq 0$.
It follows that $\nrm{\exp(- H')} \geq 1$ and, with $B = \exp(-H')$, therefore $\nrm{\smash{\sqrt{B}}} \leq \nrm{B} \leq \tr{B}$. Hence, taking $\kappa \leq \min\{\theta/\nrm{u},\nrm{\smash{\sqrt{B}}u}\} = \theta/ \nrm{u}$,\footnote{Here we assume that $\theta \leq 1$. Then, since $\lambda_{\min}(B) \geq 1$, we trivially have $\theta/\nrm{u} \leq 1/\sqrt{n} \leq \sqrt{n} \leq \nrm{\smash{\sqrt{B}} u}$.} we find
\[
  |u^T \sqrt{B} A \sqrt{B} u - v^T Av| \leq 2\kappa \nrm{ \sqrt{B} u} + \kappa^2  \leq 3 \kappa \nrm{\sqrt{B} u} \leq 3 \kappa \nrm{\sqrt{B}} \nrm{u} \leq 3 \theta \nrm{B} \leq 3 \theta \tr{B}.
\]
This shows that the additional error incurred by sampling from $v^T Av$ is proportional to $\theta \tr{B}$.
Finally, a $\kappa$-approximation of $\smash{\sqrt{B}}u$, with $\kappa = \theta/\nrm{u}$ can be obtained by using the truncated Taylor series of $\exp(-H'/2)$ of degree $p = \max\{2e \nrm{H'}, \log\left(\frac{\sqrt{n}}{\theta}\right) \}$:
\begin{align*}
  \nrm{\exp(-H'/2)  - \sum_{i = 0}^p \frac{(H'/2)^i}{i!}   } &= \nrm{\sum_{i = p+1}^\infty \frac{(H'/2)^i}{i!} } \leq \sum_{j=p+1}^\infty \frac{\nrm{H'}^j}{j!} \leq \sum_{j=p+1}^\infty \left(\frac{e \nrm{H'}}{j}\right)^j \\
                                                             &\leq \left(\frac{e \nrm{H'}}{p+1}\right)^{p+1} \frac{1}{1-\left(\frac{e\nrm{H'}}{(p+1)}\right)} \leq \left(\frac{1}{2}\right)^p = \theta/\sqrt{n},
\end{align*}
\begin{lemma} \label{lem:23}
  Given a Hermitian $s$-sparse matrix $A$, with $||A|| \leq 1$, a psd matrix $B = \exp(-H)$ with $H \preceq 0$, for a $d$-sparse $H$, and parameters $0 < \delta \leq 1/2$ and $0 < \theta \leq 1$. With probability $1-\delta$, using $k = \bigO \left(\log(\frac{1}{\delta})/\theta^2\right)$ samples from the distribution $v^T A v$, one can find an estimate that is at most $\theta \tr{B}$ away from $\tr{AB}$. Here
  \[
    v = \sum_{i=0}^p \frac{(H/2)^i}{i!} u
  \]
  where $p = \bigO(\max\{\nrm{H},\log\left(\frac{\sqrt{n}}{\theta}\right) \})$, and  $u = (u_j)$ where the $u_j \in \{\pm 1\}$ are i.i.d. uniformly distributed.
\end{lemma}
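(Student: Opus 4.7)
The proof will combine the concentration result of Lemma~\ref{lem:JLherm} (boosted to confidence $1-\delta$ via the median-of-means trick spelled out immediately after that lemma) with the two perturbation estimates already worked out in the discussion preceding the lemma. Concretely, I would first invoke Lemma~\ref{lem:JLherm} to conclude that the median of $O(\log(1/\delta))$ sub-averages, each built from $O(1/\theta^2)$ samples of $u^T\sqrt{B}A\sqrt{B}u$, lies within $\theta \tr{B}$ of $\tr{AB}$ with probability at least $1-\delta$, for a total of $k = O(\log(1/\delta)/\theta^2)$ samples. The remaining task is to certify that replacing each exact but uncomputable sample $u^T\sqrt{B}A\sqrt{B}u$ by the polynomial-evaluable surrogate $v^T A v$ introduces only an additional $O(\theta\tr{B})$ per-sample error.

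Next I would control the per-sample discrepancy using the telescoping estimate
\[
|u^T\sqrt{B}A\sqrt{B}u - v^T A v| \leq 2\kappa\,\|\sqrt{B}u\| + \kappa^2,
\]
which is exactly the inequality derived in the paragraph preceding the lemma whenever $\|\sqrt{B}u - v\| \leq \kappa$. Choosing $\kappa = \theta/\|u\|$ (noting $\kappa \leq \|\sqrt{B}u\|$ because $H\preceq 0$ forces all eigenvalues of $B$ to be at least $1$) and using $\|\sqrt{B}\| \leq \|B\| \leq \tr{B}$, this per-sample error is bounded by $3\kappa\|\sqrt{B}\|\|u\| \leq 3\theta\tr{B}$, as required.

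Third, I would verify that the stated truncation order $p = O(\max\{\|H\|,\log(\sqrt{n}/\theta)\})$ is enough to attain $\|\sqrt{B}u - v\| \leq \kappa$. By the standard Lagrange-form tail bound,
\[
\left\|e^{-H/2} - \sum_{i=0}^{p}\frac{(-H/2)^i}{i!}\right\| \;\leq\; \sum_{j=p+1}^{\infty}\left(\frac{e\|H\|/2}{j}\right)^{\!\!j} \;\leq\; 2^{-p}
\]
once $p \geq 2e\|H\|$; and for $p \geq \log(n/\theta)$ the right-hand side is at most $\theta/n$, so that $\|\sqrt{B}u - v\| \leq (\theta/n)\cdot\|u\| = \theta/\sqrt{n} = \kappa$. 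The logarithmic contribution $\log(\sqrt{n}/\theta)$ and $\log(n/\theta)$ differ only by a constant factor, and get absorbed into the $O(\cdot)$ in the statement of~$p$.

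Combining: with probability at least $1-\delta$ the empirical median-of-means of the $v^T A v$ samples differs from $\tr{AB}$ by at most the sum of the concentration error ($\theta\tr{B}$) and the per-sample bias ($3\theta\tr{B}$), which is $O(\theta\tr{B})$; a constant rescaling of $\theta$ gives the stated bound. The only mildly delicate point is matching the constants so that the Taylor truncation error $\|e^{-H/2}-P_p\|\cdot\|u\|$ is bounded by the target $\kappa = \theta/\|u\|$, which forces the $\log(\sqrt{n}/\theta)$ term inside $p$; this is the main bookkeeping subtlety but presents no conceptual obstacle, since all dependencies on $n$ enter only through this single logarithm.
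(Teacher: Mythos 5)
Your proposal is correct and follows essentially the same route as the paper, which proves this lemma via the discussion immediately preceding it: Lemma~\ref{lem:JLherm} for concentration, the bound $|u^T\sqrt{B}A\sqrt{B}u - v^TAv|\leq 2\kappa\nrm{\sqrt{B}u}+\kappa^2$ with $\kappa=\theta/\nrm{u}$, and the truncated-Taylor tail estimate to certify the choice of $p$. Your bookkeeping of $\log(n/\theta)$ versus $\log(\sqrt{n}/\theta)$ (absorbed into the $\bigO$ for $p$) is the right way to reconcile the constants.
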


\begin{lemma}
  Given $m$ Hermitian $s$-sparse $n \times n$ matrices $A_1=I,A_2, \ldots, A_m$, with $\nrm{A_j} \leq 1$ for all $j$, a Hermitian $d$-sparse $n \times n$ matrix $H$ with $\nrm{H} \leq K$, and parameters $0 < \delta \leq 1/2$ and $0 < \theta \leq 1$.
  With probability $1-\delta$, we can compute $\theta$-approximations $a_1, \ldots, a_m$ of $\tr{A_1 B}/\tr{B}, \ldots, \tr{A_m B}/\tr{B}$ where $B = \exp(-H)$, using
  \[
    \bigO\left(\frac{\log(\frac{m}{\delta})}{\theta^2} \max\left\{K,\log\left(\frac{\sqrt{n}}{\theta}\right)\right\} d n  + \frac{\log(\frac{m}{\delta})}{\theta^2}  m s n\right)
  \]
  queries to the entries of $A_1, \ldots, A_m$, $H$ and arithmetic operations.

\end{lemma}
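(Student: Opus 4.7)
My plan is to reduce the claim to $m+1$ applications of Lemma~\ref{lem:23}, and the key structural trick will be to reuse the same pool of random vectors across all $j\in[m]$ so that the expensive Taylor-series step is paid only once. First, I would replace $H$ by $H':=H-KI$; since $\nrm{H}\leq K$ this gives $H'\preceq 0$, while the quantities $\tr{A_j e^{-H}}/\tr{e^{-H}}$ are unaffected by this constant shift in the exponent. We still have $\nrm{H'}\leq 2K$, so the Taylor-truncation degree $p=O(\max\{K,\log(\sqrt{n}/\theta)\})$ from Lemma~\ref{lem:23} is of the order claimed in the final bound. Moreover, writing $B':=e^{-H'}$, all eigenvalues of $B'$ are $\geq 1$ and so $\tr{B'}\geq n$, which will anchor the conversion from additive to relative error in the ratio step.

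Second, set $\theta':=\theta/4$ and $\delta':=\delta/(m+1)$, and draw one shared pool of $k=O(\log(1/\delta')/\theta'^2)=O(\log(m/\delta)/\theta^2)$ i.i.d.\ random sign vectors $u^{(1)},\ldots,u^{(k)}\in\{\pm 1\}^n$. For each $\ell$ compute the approximate square-root-action vector $v^{(\ell)}:=\sum_{i=0}^p (H'/2)^i u^{(\ell)}/i!$ once, and then for every $j\in[m]$ use the scalars $\{(v^{(\ell)})^T A_j v^{(\ell)}\}_{\ell=1}^k$ to form the estimator $\tilde\alpha_j$ furnished by Lemma~\ref{lem:23}. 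By that lemma applied with parameters $(\theta',\delta')$ and a union bound, with probability at least $1-\delta$ we have $|\tilde\alpha_j-\tr{A_j B'}|\leq\theta'\tr{B'}$ simultaneously for all $j$. Because $A_1=I$, the choice $\tilde Z:=\tilde\alpha_1$ is then automatically a $\theta'\tr{B'}$-accurate estimate of $Z:=\tr{B'}$, with no additional work.

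Finally, output $a_j:=\tilde\alpha_j/\tilde Z$. Since $\theta'\leq 1/4$ we have $\tilde Z\geq(3/4)Z>0$, and the standard ratio calculation in the spirit of Lemma~\ref{lemma:trTogether}, combined with the bound $|\tr{A_j B'}|/Z\leq\nrm{A_j}\leq 1$, yields
\[
  \left|a_j-\frac{\tr{A_j B}}{\tr{B}}\right|\leq\frac{Z}{\tilde Z}\cdot 2\theta'\leq\frac{8}{3}\theta'<\theta
\]
uniformly in $j\in[m]$. For the complexity, each vector $v^{(\ell)}$ is produced by $p$ multiplications by the $d$-sparse matrix $H'$ at cost $O(pdn)$, totaling $O(kpdn)$ and matching the first term of the claimed bound; given the $v^{(\ell)}$, each scalar $(v^{(\ell)})^T A_j v^{(\ell)}$ is computed with one $s$-sparse matrix-vector product plus an inner product in $O(sn)$ queries to $A_j$ and operations, contributing $O(kmsn)$ overall and matching the second term. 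The main obstacle --- really the only subtlety --- is precisely this sharing of the vectors $v^{(\ell)}$ across all $j$, so that the two complexity contributions add rather than multiply; a naive invocation of Lemma~\ref{lem:23} separately for each $j$ would pay the Taylor cost $m$ times over and would spoil the claimed bound.
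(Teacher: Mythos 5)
Your proposal is correct and follows essentially the same route as the paper: shift $H$ by $KI$ so the exponent is negative semidefinite, draw one shared pool of $k=\bigO(\log(m/\delta)/\theta^2)$ sign vectors, compute the truncated-Taylor vectors $v^{(\ell)}$ once (paying the $\bigO(kpdn)$ cost a single time), reuse them for all $m$ quadratic forms at cost $\bigO(kmsn)$, take a union bound, and divide by the estimate for $A_1=I$ via the ratio lemma. The sharing of the vectors across $j$, which you correctly identify as the crux, is exactly the observation the paper makes.
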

\begin{proof}
  As observed above, for every matrix $A$,
  \[
    \frac{\tr{A \exp(- H)}}{\tr{\exp(- H)}} = \frac{\tr{A \exp(- H + \gamma I)}}{\tr{\exp(- H + \gamma I)}}.
  \]
  Lemma~\ref{lem:23} states that for $B' = \exp(- H + K I)$ and $A \in \{A_1, \ldots, A_m\}$ using $k = \bigO \left(\log(\frac{m}{\delta})/\theta^2\right)$ samples from the distribution $v^T A v$, one can find an estimate that is at most $\theta \tr{B}$ away from $\tr{AB}$ with probability $1-\delta/m$. Here
  \[
    v = \sum_{i=0}^p \frac{((H-KI)/2)^i}{i!} u
  \]
  where $p = \bigO(\max\{K,\log\left(\frac{\sqrt{n}}{\theta}\right) \})$, and  $u = (u_j)$ with the $u_j \in \{\pm 1\}$ i.i.d. uniformly distributed.

  Observe that the $k$ samples from $v^T A v$ are really obtained from $k$ samples of vectors $u = (u_j)$ combined with some post-processing, namely obtaining $v = \sum_{i=0}^p \frac{((H-KI)/2)^i}{i!} u$ and two more sparse matrix vector products.

  We can therefore obtain $k$ samples from each of $v^T A_1 v, \ldots, v^T A_m v$ by \emph{once} calculating $k$ vectors $v = \sum_{i=0}^p \frac{((H-KI)/2)^i}{i!} u$, and then, for each of the $m$ matrices $A_j$ computing the $k$ products $v^T A_j v$.
  The $k$ vectors $v$ can be constructed using
  \[
    \bigO\left(\frac{\log(\frac{m}{\delta})}{\theta^2} \max\left\{K,\log\left(\frac{\sqrt{n}}{\theta}\right)\right\}d n\right)
  \]
  queries to the entries of $H$ and arithmetic operations.
  The $mk$ matrix vector products can be computed using
  \[
    \bigO\left( \frac{\log(\frac{m}{\delta})}{\theta^2}  m s n \right)
  \]
  arithmetic operations and queries to the entries of $A_1, \ldots, A_m$ and $H$. This leads to total complexity
  \[
    \bigO\left(\frac{\log(\frac{m}{\delta})}{\theta^2} \max\left\{K,\log\left(\frac{\sqrt{n}}{\theta}\right)\right\} d n  + \frac{\log(\frac{m}{\delta})}{\theta^2}  m s n\right)
  \]
  for computing $k$ samples from each of $v^TA_1 v, \ldots, v^TA_m v$.

  The results of Lemma~\ref{lem:23} say that for each $j$, using those $k$ samples of $v^T A_j v$ we can construct a $\theta \tr{B'}/4$-approximation $a_j'$ of $\tr{A_j B'}$, with probability $1- \delta/(2m)$. Therefore, by a union bound, with probability $1-\delta/2$ we can construct $\theta \tr{B'}/4$-approximations $a_1', \ldots, a_m'$ of $\tr{A_1 B'}, \ldots, \tr{A_m B'}$.
  Therefore, for each $j$, with probability at least $1-\delta$, by Lemma~\ref{lemma:trTogether} we have that $a_j = a_j'/a_1'$ is a $\theta$-approximation of $\tr{A_j B'}/\tr{B'}$, and hence it is a $\theta$-approximation of $\tr{A_j B}/\tr{B}$.
\end{proof}

\section{Implementing smooth functions of Hamiltonians}\label{apx:LowWeight}

In this appendix we show how to efficiently implement smooth functions of a given Hamiltonian.
First we explain what we mean by a function of a Hamiltonian $H\in\mathbb{C}^{n\times n}$, i.e., a Hermitian matrix. Since Hermitian matrices are diagonalizable using a unitary matrix, we can write $H=U^\dagger \text{diag}(\lambda) U$, where $\lambda\in\mathbb{R}^n$ is the vector of eigenvalues. Then for a function $f:\mathbb{R}\rightarrow\mathbb{C}$ we define $f(H):=U^\dagger \text{diag}(f(\lambda)) U$ with a slight abuse of notation, where we apply $f$ to the eigenvalues in $\lambda$ one-by-one. Note that if we approximate $f$ by $\tilde{f}$, then $\nrm{\tilde{f}(H)-f(H)}=\nrm{\text{diag}(\tilde{f}(\lambda))-\text{diag}(f(\lambda))}$. Suppose $D\subseteq\mathbb{R}$ is such that $\lambda\in D^n$, then we can upper bound this norm by the maximum of $|\tilde{f}(x)-f(x)|$ over $x\in D$. Finally we note that $D=[-\nrm{H},\nrm{H}]$ is always a valid choice.

\medskip

The main idea of the method presented below, is to implement a map $\tilde{f}(H)$, where $\tilde{f}$ is a good (finite) Fourier approximation of $f$ for all $x\in[-\nrm{H},\nrm{H}]$. The novelty in our approach is that we construct a Fourier approximation based on some polynomial approximation. In the special case, when $f$ is analytic and $\nrm{H}$ is less than the radius of convergence of the Taylor series, we can obtain good polynomial approximation functions simply by truncating the Taylor series, with logarithmic dependence on the precision parameter. Finally we implement the Fourier series using Hamiltonian simulation and the Linear Combination of Unitaries (LCU) trick~\cite{childs2012HamSimLCU,berry2014HamSimTaylor,berry2015HamSimNearlyOpt}.

This approach was already used in several earlier papers, particularly in~\cite{childs2015QLinSysExpPrec,chowdhury2016QGibbsSampling}.
There the main technical difficulty was to obtain a good truncated Fourier series.
This is a non-trivial task, since on top of the approximation error, one needs to optimize two other parameters of the Fourier approximation that determine the complexity of implementation, namely:

$\bullet$ the largest time parameter $t$ that appears in some Fourier term $e^{-itH}$, and

$\bullet$ the total weight of the coefficients, by which we mean the $1$-norm of the vector of coefficients.
\\ \noindent Earlier works used clever integral approximations and involved calculus to construct a good Fourier approximation for a specific function~$f$. We are not aware of a general result.

In contrast, our  Theorem~\ref{thm:Taylor} and Corollary~\ref{cor:patched} avoids the usage of any integration. It obtains a low-weight Fourier approximation function using the Taylor series. The described method is completely general, and has the nice property that the maximal time parameter $t$ depends logarithmically on the desired approximation precision. Since it uses the Taylor series, it is easy to apply to a wide range of smooth functions.

The circuit we describe for the implementation of the linear operator $f(H): \mathbb{C}^n \rightarrow \mathbb{C}^n$ is going to depend on the specific function~$f$, but not on $H$; the $H$-dependence is only coming from Hamiltonian simulation.
Since the circuit for a specific $f$ can be constructed in advance, we do not need to worry about the (polynomial) cost of constructing the circuit, making the analysis simpler.
When we describe gate complexity, we count the number of two-qubit gates needed for a quantum circuit implementation, just as in Section~\ref{sec:upperbounds}.

Since this appendix presents stand-alone results, here we will deviate slightly from the notation used throughout the rest of the paper, to conform to the standard notation used in the literature (for example, $\eps$, $r$, $\theta$ and $a$ have a different meaning in this appendix). For simplicity we also assume, that the Hamiltonian $H$ acts on $\mathbb{C}^{n}$, where $n$ is a power of 2. Whenever we write $\log(\text{formula})$ in some complexity statement we actually mean $\log_2(2+\text{formula})$ in order to avoid incorrect near-$0$ or even negative expressions in complexity bounds that would appear for small values of the formula.

\paragraph{Hamiltonian simulation.}
We implement each term in a Fourier series using a Hamiltonian simulation algorithm, and combine the terms using the LCU Lemma. Specifically we use~\cite{berry2015HamSimNearlyOpt}, but in fact our techniques would work with any kind of Hamiltonian simulation algorithm.\footnote{For example there is a more recent method for Hamiltonian simulation~\cite{low2016HamSimQubitization,low2016HamSimQSignProc} that could possibly improve on some of the log factors we get from~\cite{berry2015HamSimNearlyOpt}, but one could even consider completely different input models allowing different simulation methods.} The following definition describes what we mean by controlled Hamiltonian simulation.

\begin{definition}\label{def:controlledSim}
  Let $M=2^J$ for some $J\in \mathbb{N}$, $\gamma\in\mathbb{R}$ and $\epsilon\geq0$. We say that the unitary
  $$
  W:=\sum_{m=-M}^{M-1}\ketbra{m}{m}\otimes e^{im\gamma H}
  $$
  implements controlled $(M,\gamma)$-simulation of the Hamiltonian $H$, where $\ket{m}$ denotes a (signed) bitstring $\ket{b_Jb_{J-1}\ldots b_0}$ such that $m=-b_J2^J+\sum_{j=0}^{J-1}b_j2^j$. The unitary $\tilde{W}$ implements controlled $(M,\gamma,\eps)$-simulation of the Hamiltonian $H$, if
  $$
  \nrm{\tilde{W}-W}\leq \eps.
  $$
\end{definition}

Note that in this definition we assume that both positive and negative powers of $e^{iH}$ are simulated. This is necessary for our Fourier series, but sometimes we use only positive powers, e.g., for phase estimation; in that case we can simply ignore the negative powers.

The following lemma is inspired by the techniques of~\cite{childs2015QLinSysExpPrec}. It calculates the cost of such controlled Hamiltonian simulation in terms of queries to the input oracles~\eqref{eq:oracleind}-\eqref{eq:oraclemat} as described in Section~\ref{sec:upperbounds}.

\begin{lemma}\label{lemma:controlledHamsin}
  Let $H\in\mathbb{C}^{n\times n}$ be a $d$-sparse Hamiltonian. Suppose we know an upper bound $K\in\mathbb{R_+}$ on the norm of $H$, i.e., $\nrm{H}\leq K$, and let $\tau:=M\gamma K$. If $\eps>0$ and $\gamma=\Omega(1/(Kd))$, then a controlled $(M,\gamma,\eps)$-simulation of $H$ can be implemented using $\bigO(\tau d\log(\tau /\eps)/\log\log(\tau /\eps))$ queries and $\bigO\left(\tau d\log(\tau /\eps)/\log\log(\tau /\eps)\left[\log(n)+\log^{\frac{5}{2}}(\tau /\eps)\right]\right)$
  gates.
\end{lemma}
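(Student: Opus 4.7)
The plan is to reduce the controlled simulation for arbitrary $m\in\{-M,\ldots,M-1\}$ to a short sequence of controlled ``elementary'' simulations $e^{\pm i 2^j\gamma H}$, using the two's-complement binary expansion of $\ket{m}=\ket{b_J b_{J-1}\cdots b_0}$ from Definition~\ref{def:controlledSim}:
\[
 m = -b_J\,2^J + \sum_{j=0}^{J-1}b_j\,2^j, \qquad\text{so that}\qquad e^{im\gamma H} \;=\; e^{-i b_J\,2^J\gamma H}\prod_{j=0}^{J-1} e^{i b_j\,2^j\gamma H}.
\]
Applying each factor as a unitary controlled on the corresponding bit $b_j$ of $\ket{m}$ therefore implements $W$. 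A unitary controlled on a single qubit can be obtained by simulating $\ketbra{1}{1}\otimes H$ in place of $H$; this has the same sparsity $d$, and the oracles for $\ketbra{1}{1}\otimes H$ reduce to those of $H$ with constant overhead, so controlled simulation costs only a constant factor more than uncontrolled simulation.

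For each $j\in\{0,\ldots,J\}$ I invoke the sparse Hamiltonian simulation algorithm of~\cite{BerryChilds:hamsimFOCS}, which simulates $e^{\pm itH}$ for a $d$-sparse Hamiltonian with $\|H\|\leq K$ to precision $\eps_j$ using
\[
 \bigO\!\left(\tau_j\, d\,\frac{\log(\tau_j/\eps_j)}{\log\log(\tau_j/\eps_j)}\right) \text{ queries}, \qquad \tau_j := 2^j\gamma K,
\]
and $\bigO(\log n+\log^{5/2}(\tau_j/\eps_j))$ two-qubit gates per query. I budget the error equally across the $J+1$ factors, setting $\eps_j := \eps/(J+1)$, so that a standard triangle-inequality / telescoping argument over unitaries of norm $1$ bounds the accumulated error of the composition by $\eps$.

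Summing the per-factor queries yields a geometric series dominated by the $j=J$ term:
\[
 \sum_{j=0}^{J} \tau_j\, d\,\frac{\log(\tau_j/\eps_j)}{\log\log(\tau_j/\eps_j)}
 \;=\; \bigO\!\left(\tau\, d\,\frac{\log(\tau/\eps)}{\log\log(\tau/\eps)}\right),
\]
and multiplying by the per-query gate cost gives the stated gate bound. The assumption $\gamma=\Omega(1/(Kd))$ is used here to ensure $M\leq O(\tau d)$, so that $J+1=O(\log M)$ is absorbed into $\log(\tau/\eps)$ and does not inflate the $\log\log$ denominator.

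The main technical obstacle is this last error-accounting step: one has to verify that replacing $\eps$ by $\eps/(J+1)$ in each of the $J+1$ simulations changes each $\log(\tau_j/\eps_j)$ only by an additive $O(\log\log(\tau/\eps))$, and that the geometric-sum collapse really does preserve the $\log/\log\log$ shape of the BCK bound. Everything else — two's-complement decomposition, controlling a simulation by a single qubit via $\ketbra{1}{1}\otimes H$, and the triangle inequality for composed unitaries — is a routine assembly of standard ingredients and does not require new ideas.
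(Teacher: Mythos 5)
Your proposal is correct and follows essentially the same route as the paper: the same two's-complement decomposition of $W$ into $J+1$ controlled elementary simulations $e^{\pm i 2^j\gamma H}$, the same invocation of the Berry--Childs--Kothari simulation bound per factor, and the same triangle-inequality assembly with a geometric query sum dominated by the $j=J$ term. The one substantive difference is the error allocation: the paper assigns precision $2^{j-J-1}\eps$ to the $j$-th factor, which makes every per-factor logarithm exactly $\log(2\tau/\eps)$ so the geometric sum collapses with no further argument, whereas your equal split $\eps/(J+1)$ forces you to absorb an extra additive $\log(J+1)=O(\log\log(\tau d))$ inside each logarithm --- harmless in any reasonable parameter regime, but precisely the lower-order slack that the paper's exponentially weighted allocation (the ``standard trick to remove log factors'') is designed to eliminate.
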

\begin{proof}
  We use the results of~\cite[Lemma 9-10]{berry2015HamSimNearlyOpt}, which tell us that a $d$-sparse Hamiltonian $H$ can be simulated for time $t$ with $\eps$ precision in the operator norm using
  \begin{equation}\label{eq:hamSimQueries}
    \bigO\left(\left(t\maxnrm{H}d+1\right)\frac{\log\left(t\nrm{H}/\eps\right)}{\log\log\left(t\nrm{H}/\eps\right)}\right)
  \end{equation}
  queries and gate complexity
  \begin{equation}\label{eq:hamSimGates}
    \bigO\left(\left(t\maxnrm{H}d+1\right)\frac{\log\left(t\nrm{H}/\eps\right)}{\log\log\left(t\nrm{H}/\eps\right)}\left[\log(n)+\log^{\frac{5}{2}}\left(t\nrm{H}/\eps\right)\right]\right).
  \end{equation}

  Now we use a standard trick to remove log factors from the implementation cost, and write the given unitary $W$ as the product of some increasingly precisely implemented controlled Hamiltonian simulation unitaries.
  For $b\in \{0,1\}$ let us introduce the projector $\ketbra{b}{b}_j:=I_{2^j}\otimes\ketbra{b}{b}\otimes I_{2^{J-j}}$,
  where $J=\log(M)$. Observe that
  \begin{equation}\label{eq:cleverW}
    W=\left(\ketbra{1}{1}_{J}\otimes e^{-i2^{J}\gamma H}+\ketbra{0}{0}_{J}\otimes I\right)\prod_{j=0}^{J-1}\left(\ketbra{1}{1}_{j}\otimes e^{i2^{j}\gamma H}+\ketbra{0}{0}_{j}\otimes I\right).
  \end{equation}

  The $j$-th operator $e^{\pm i2^{j}\gamma H}$ in the product~\eqref{eq:cleverW} can be implemented with $2^{j-J-1}\epsilon$ precision using
  $\bigO\left(2^j\gamma Kd\log\left(\frac{2^j\gamma K}{\eps 2^{j-J-1}}\right)/\log\log\left(\frac{2^j\gamma K}{\eps 2^{j-J-1}}\right)\right)
  =\bigO(2^j\gamma Kd\log(\tau/\epsilon)/\log\log(\tau/\epsilon))$ queries by~\eqref{eq:hamSimQueries} and using $\bigO(2^jd\log(\tau/\epsilon)/\log\log(\tau/\epsilon)[\log(n)+\log^{\frac{5}{2}}(\tau/\epsilon)])$ gates by~\eqref{eq:hamSimGates}.
  Let us denote by $\tilde{W}$ the concatenation of all these controlled Hamiltonian simulation unitaries.
  Adding up the costs we see that our implementation of $\tilde{W}$ uses $\bigO(\tau d\log(\tau/\eps)/\log\log(\tau/\eps))$ queries and has gate complexity $\bigO\left(\tau d\log(\tau /\eps)/\log\log(\tau/\eps)\left[\log(n)+\log^{\frac{5}{2}}(\tau /\eps)\right]\right)$.
  Using the triangle inequality repeatedly, it is easy to see that $\nrm{W\!-\!\tilde{W}}\leq\sum_{j=0}^{J}2^{j-J-1}\eps \leq \eps$.
\end{proof}

\subsection{Implementation of smooth functions of Hamiltonians: general results}

The first lemma we prove provides the basis for our approach. It shows how to turn a polynomial approximation of a function $f$ on the interval $[-1,1]$ into a nice Fourier series in an efficient way, while not increasing the weight of coefficients. This is useful, because we can implement a function given by a Fourier series using the LCU Lemma, but only after scaling it down with the weight of the coefficients.

\begin{lemma}\label{lemma:LowWeightAPX}
  Let $\delta,\eps\in\!(0,1)$ and $f:\mathbb{R}\rightarrow \mathbb{C}$ s.t. $\left|f(x)\!-\!\sum_{k=0}^K a_k x^k\right|\leq \eps/4$ for all $x\in\![-1+\delta,1-\delta]$.
  Then $\exists\, c\in\mathbb{C}^{2M+1}$ such that
  $$
  \left|f(x)-\sum_{m=-M}^M c_m e^{\frac{i\pi m}{2}x}\right|\leq \eps
  $$
  for all $x\in\![-1+\delta,1-\delta]$, where $M=\max\left(2\left\lceil \ln\left(\frac{4\nrm{a}_1}{\eps}\right)\frac{1}{\delta} \right\rceil,0\right)$ and $\nrm{c}_1\leq \nrm{a}_1$. Moreover $c$ can be efficiently calculated on a classical computer in time $\text{poly}(K,M,\log(1/\eps))$.
\end{lemma}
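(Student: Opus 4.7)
The plan is to build the Fourier approximation monomial-by-monomial and combine by linearity. Concretely, suppose that for each $k \in \{0,\ldots,K\}$ I can produce coefficients $c^{(k)}_m$ supported on $|m|\le M$ satisfying $\sum_m |c^{(k)}_m|\le 1$ and $|x^k-\sum_{|m|\le M} c^{(k)}_m e^{i\pi m x/2}|\le \eps/(4\nrm{a}_1)$ on $[-1+\delta,1-\delta]$. Then setting $c_m := \sum_k a_k c^{(k)}_m$ gives $\nrm{c}_1 \le \sum_k |a_k|\cdot 1 = \nrm{a}_1$ by the triangle inequality, and combining the hypothesis $|f-\sum_k a_k x^k|\le \eps/4$ with the per-monomial bound yields total error at most $\eps/4+\nrm{a}_1\cdot\eps/(4\nrm{a}_1) = \eps/2 \le \eps$, as required. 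Efficient classical computability of the final $c$ then reduces to writing down each $c^{(k)}$ in closed form and taking the linear combination, which is clearly $\mathrm{poly}(K,M,\log(1/\eps))$.

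The core substitution for the per-monomial construction is $y=\sin(\pi x/2)$. The binomial expansion of $y^j = \bigl((e^{i\pi x/2}-e^{-i\pi x/2})/(2i)\bigr)^j$ shows that each power of $y$ is a trigonometric polynomial in $x$ of degree $\le j$ with Fourier $\ell_1$-norm exactly $\sum_\ell 2^{-j}\binom{j}{\ell}=1$; and since $x=\tfrac{2}{\pi}\arcsin(y)$ on $[-1,1]$, the identity $T_n(\sin(\pi x/2)) = \cos\bigl(n(\tfrac{\pi}{2}-\tfrac{\pi x}{2})\bigr) = \tfrac12\bigl(e^{in\pi/2}e^{-in\pi x/2}+e^{-in\pi/2}e^{in\pi x/2}\bigr)$ says every Chebyshev term $T_n(y)$ is likewise a Fourier polynomial in $x$ of degree $\le n$ with Fourier $\ell_1$-norm $1$. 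So it suffices to approximate $g^{(k)}(y) := (\tfrac{2}{\pi}\arcsin y)^k$ on $y\in [-\cos(\pi\delta/2),\cos(\pi\delta/2)]$ by a truncated Chebyshev series $\sum_{n\le M} g^{(k)}_n T_n(y)$ with $\sum_n |g^{(k)}_n|\le 1$ and pointwise error $\le \eps/(4\nrm{a}_1)$. For the $\ell_1$-bound: the substitution $y=\cos\phi$ converts the Chebyshev expansion into the Fourier cosine expansion of the explicit function $g^{(k)}(\cos\phi)=(1-2\phi/\pi)^k$ on $[0,\pi]$, and evaluating the series at $y=1$ gives $\sum_n g^{(k)}_n = g^{(k)}(1) = 1$; a parity analysis (verifiable directly for small $k$) shows that all nonzero $g^{(k)}_n$ within a fixed parity class share the same sign, so $\sum_n |g^{(k)}_n| = 1$.

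The main obstacle is pinning down the truncation rate at the claimed $M = O(\log(\nrm{a}_1/\eps)/\delta)$. A naive Taylor-type bound in $y$ would only give tail $\lesssim \cos(\pi\delta/2)^N\approx e^{-N\pi^2\delta^2/8}$, corresponding to $N=O(\log/\delta^2)$ — off by a factor of $\delta$. The sharpening comes from a Paley--Wiener / Bernstein-ellipse argument: for complex $x$ with $|\mathrm{Im}(x)|\le \delta$ one has $(1-\delta)^2+\delta^2<1$, so $P$ is analytic and uniformly bounded by $\nrm{a}_1$ in a complex strip of width $\delta$ around $[-1+\delta,1-\delta]$. Under the conformal map $y=\sin(\pi x/2)$ this strip pulls back to a region whose Bernstein parameter $\rho$ around $[-\cos(\pi\delta/2),\cos(\pi\delta/2)]$ satisfies $\rho = 1+\Theta(\delta)$, because the branch points of $\arcsin$ at $\pm 1$, which are at Euclidean distance $\Theta(\delta^2)$ from the real interval, sit at complex-analytic distance $\Theta(\delta)$ after the map. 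The Chebyshev coefficients therefore decay as $|g^{(k)}_n| = O(\rho^{-n})$, and choosing $M = \lceil 2\ln(4\nrm{a}_1/\eps)/\delta\rceil$ makes $\sum_{n>M}|g^{(k)}_n|\le \eps/(4\nrm{a}_1)$, which combined with the $\ell_1$ identity from the previous paragraph gives the needed $c^{(k)}$; the only delicate bookkeeping is checking that truncation of the Chebyshev series preserves the $\ell_1$-norm bound (it does, since truncation only deletes terms, and we never rescale the $T_n(y)$), so no blow-up occurs.
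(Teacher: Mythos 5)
Your reduction to per-monomial approximants with unit $\ell_1$-norm, and the observation that $T_n(\sin(\pi x/2))=\cos\left(n\pi(1-x)/2\right)$ has Fourier $\ell_1$-norm $1$, are both fine; the positivity you need for $\sum_n|g^{(k)}_n|=1$ also genuinely holds, since $((2/\pi)\arcsin y)^k$ has nonnegative monomial coefficients and each $y^\ell=2^{-\ell}\sum_m\binom{\ell}{m}T_{\ell-2m}(y)$ has nonnegative Chebyshev coefficients. The gap is in the truncation step: the two halves of your argument refer to two \emph{different} Chebyshev expansions and cannot be combined. The identity $T_n(\sin(\pi x/2))=\cos\left(n\pi(1-x)/2\right)$ and the sign argument pertain to the expansion on the full interval $[-1,1]$, i.e.\ to the Fourier cosine series of $(1-2\phi/\pi)^k$ on $[0,\pi]$. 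That function's even periodic extension has a corner at $\phi=0$ (equivalently, $\arcsin$ has branch points at $y=\pm1$), so these coefficients decay only like $n^{-2}$ --- e.g.\ $g^{(1)}_{2j+1}=\frac{8}{\pi^2(2j+1)^2}$ --- and no Bernstein ellipse of parameter $\rho>1$ around $[-1,1]$ avoids the branch points. The tail $\sum_{n>M}|g^{(k)}_n|$ is then $\Theta(1/M)$, forcing $M=\Omega(\nrm{a}_1/\eps)$ rather than $O(\log(\nrm{a}_1/\eps)/\delta)$. Your Bernstein-ellipse computation with $\rho=1+\Theta(\delta)$ is correct, but only for the expansion adapted to the subinterval $[-\cos(\pi\delta/2),\cos(\pi\delta/2)]$, and for that expansion the $\ell_1$ control evaporates: the rescaled polynomial $T_n\bigl(\sin(\pi x/2)/\cos(\pi\delta/2)\bigr)$ attains values as large as $T_n(1/\cos(\pi\delta/2))=e^{\Theta(n\delta)}$ near $x=\pm1$, so at $n=M$ its Fourier $\ell_1$-norm is $\mathrm{poly}(\nrm{a}_1/\eps)$ rather than $1$, and $\nrm{c}_1\le\nrm{a}_1$ no longer follows.

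The paper closes exactly this gap by a different mechanism. It accepts the ``naive'' rate in $y$-space: expand $x^k=\sum_\ell b^{(k)}_\ell\sin^\ell(\pi x/2)$ with $b^{(k)}\ge0$, $\nrm{b^{(k)}}_1=1$, and truncate at degree $L=O(\log(\nrm{a}_1/\eps)/\delta^2)$ using $|\sin(\pi x/2)|\le1-\delta^2$. The factor of $\delta$ is then recovered in frequency space rather than in $y$-space: writing $\sin^\ell(z)=(i/2)^\ell\sum_m(-1)^m\binom{\ell}{m}e^{iz(2m-\ell)}$, the normalized binomial coefficients concentrate (Chernoff) in a band $|m-\ell/2|\le M'=O(\sqrt{L\log(\nrm{a}_1/\eps)})=O(\log(\nrm{a}_1/\eps)/\delta)$, so discarding the out-of-band frequencies costs only $\eps/(4\nrm{a}_1)$ while the $\ell_1$-norm can only decrease. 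That concentration step is the idea missing from your proposal; without it, or a working substitute, the claimed value of $M$ is not reached.
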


\begin{proof} Let us introduce the notation $\nrm{f}_\infty=\sup\{|f(x)|: x\in\![-1+\delta,1-\delta]\}$.
  First we consider the case when $\nrm{a}_1<\eps/2$. Then $\nrm{f}_\infty\leq \nrm{f(x)-\sum_{k=0}^K a_k x^k}_\infty + \nrm{\sum_{k=0}^K a_k x^k}_\infty < \eps/4 + \eps/2 < \eps$. So in this case the statement holds with $M=0$ and $c=0$, i.e., even with an empty sum.

  From now on we assume $\nrm{a}_1\geq \eps/2$. We are going to build up our approximation gradually. Our first approximate function $\tilde{f}_1(x):=\sum_{k=0}^K a_k x^k$ satisfies $\nrm{f-\tilde{f}_1}_\infty\!\!\leq \eps/4$ by assumption.
In order to construct a Fourier series, we will work towards a linear combination of sines. To that end, note that $\forall x\in[-1,1]$: $\tilde{f}_1(x)\!=\!\sum_{k=0}^K a_k \left(\frac{\arcsin\left(\sin(x\pi/2 )\right)}{\pi/2}\right)^{\!k}\!$. Let $b^{(k)}$ denote the series of coefficients such that $\left(\frac{\arcsin(y)}{\pi/2}\right)^k=\sum_{\ell=0}^{\infty}b_\ell^{(k)} y^\ell$ for all $y\in[-1,1]$. For $k=1$ the coefficients are just $\frac{2}{\pi}$ times the coefficients of the Taylor series of $\arcsin$ so we know that $b^{(1)}_{2\ell}=0$ while $b^{(1)}_{2\ell+1}=\binom{2\ell}{\ell}\frac{2^{-2\ell}}{2\ell+1}\frac{2}{\pi}$. Since $\left(\frac{\arcsin(y)}{\pi/2}\right)^{\!k+1}\!\!=\left(\frac{\arcsin(y)}{\pi/2}\right)^{\!k}\left(\sum_{\ell=0}^{\infty}b_\ell^{(1)} y^\ell\right)$, we obtain the formula $b^{(k+1)}_{\ell}=\sum_{\ell'=0}^{\ell}b^{(k)}_{\ell'}b^{(1)}_{\ell-\ell'}$, so one can recursively calculate each $b^{(k)}$. As $b^{(1)}\geq 0$ one can use the above identity inductively to show that $b^{(k)}\geq 0$.
  Therefore $\nrm{b^{(k)}}_1=\sum_{\ell=0}^{\infty}b_\ell^{(k)} 1^\ell=\left(\frac{\arcsin(1)}{\pi/2}\right)^k=1$.
  Using the above definitions and observations we can rewrite
  $$\forall x\in[-1,1]: \tilde{f}_1(x)=\sum_{k=0}^K a_k \sum_{\ell=0}^{\infty} b^{(k)}_\ell \sin^\ell(x\pi/2).$$
  To obtain the second approximation function, we want to truncate the summation over $\ell$ at $L=\ln\left(\frac{4\nrm{a}_1}{\eps}\right)\frac{1}{\delta^2}$ in the above formula. We first estimate the tail of the sum. We are going to use that for all $\delta\in [0,1]$: $\sin((1-\delta)\pi/2)\leq 1 - \delta^2$. For all $k\in\mathbb{N}$ and $x\in\![-1+\delta,1-\delta]$ we have:
  \begin{align*}
    \left|\sum_{\ell=\lceil L\rceil}^{\infty} b^{(k)}_\ell \sin^\ell(x\pi/2)\right|
    &\leq \sum_{\ell=\lceil L\rceil}^{\infty} b^{(k)}_\ell \left|\sin^\ell(x\pi/2)\right| \\
    &\leq \sum_{\ell=\lceil L\rceil}^{\infty} b^{(k)}_\ell \left|1-\delta^2\right|^\ell\\
    &\leq \left(1-\delta^2\right)^L\sum_{\ell=\lceil L\rceil}^{\infty} b^{(k)}_\ell \\
    &\leq \left(1-\delta^2\right)^L\\
    &\leq e^{-\delta^2L} \\
    &= \frac{\eps}{4\nrm{a}_1}.
  \end{align*}
  Thus we have $\nrm{\tilde{f}_1-\tilde{f}_2}_\infty\leq \eps/4$ for
  $$\tilde{f}_2(x):=\sum_{k=0}^K a_k \sum_{\ell=0}^{\lfloor L \rfloor} b^{(k)}_\ell \sin^\ell(x\pi/2).$$
  To obtain our third approximation function, we will approximate $\sin^\ell(x\pi/2)$. First observe that
  \begin{equation}\label{eq:sinl}
    \sin^\ell(z)=\left(\frac{e^{-iz}-e^{iz}}{-2i}\right)^{\!\ell}
    =\left(\frac{i}{2}\right)^{\!\ell}\sum_{m=0}^{\ell}(-1)^{m}\binom{\ell}{m}e^{iz(2m-\ell)}
  \end{equation}
  which, as we will show (for $M'$ much larger than $\sqrt{\ell}$)
  is very well approximated by
  $$\left(\frac{i}{2}\right)^{\!\ell}\sum_{m=\lceil\ell/2\rceil-M'}^{\lfloor\ell/2\rfloor+M'}(-1)^{m}\binom{\ell}{m}e^{iz(2m-\ell)}. $$
  Truncating the summation in~\eqref{eq:sinl} based on this approximation reduces the maximal time evolution parameter (i.e., the maximal value of the parameter $t$ in the $\exp(izt)$ terms) quadratically.
  To make this approximation precise, we use Chernoff's inequality~\cite[A.1.7]{alon2008ProbMethod} for the binomial distribution, or more precisely its corollary for sums of binomial coefficients, stating
  $$ \sum_{m=\lceil\ell/2+M'\rceil}^{\ell}2^{-\ell}\binom{\ell}{m}
  \leq e^{-\frac{2 (M')^2}{\ell}}.$$
  Let $M'=\left\lceil \ln\left(\frac{4\nrm{a}_1}{\eps}\right)\frac{1}{\delta} \right\rceil$ and suppose $\ell\leq L$, then this bound implies that
  \begin{equation}\label{eq:binomBound}
    \sum_{m=0}^{\lfloor\ell/2\rfloor-M'}2^{-\ell}\binom{\ell}{m}
    =\sum_{m=\lceil\ell/2\rceil+M'}^{\ell}2^{-\ell}\binom{\ell}{m}
    \leq e^{-\frac{2 (M')^2}{\ell}}
    \leq e^{-\frac{2 (M')^2}{L}}
    \leq \left(\frac{\eps}{4\nrm{a}_1}\right)^2
    \leq \frac{\eps}{4\nrm{a}_1},
  \end{equation}
  where for the last inequality we use the assumption $\eps \leq 2 \nrm{a}_1$.
  By combining~\eqref{eq:sinl} and~\eqref{eq:binomBound} we get that for all $\ell \leq L$
  \begin{equation*}
    \nrm{\sin^\ell(z)-\left(\frac{i}{2}\right)^{\!\ell}\sum_{m=\lceil\ell/2\rceil-M'}^{\lfloor\ell/2\rfloor+M'}(-1)^{m}\binom{\ell}{m}e^{iz(2m-\ell)}}_\infty\leq \frac{\eps}{2\nrm{a}_1}.
  \end{equation*}
  Substituting $z=x\pi/2$ into this bound we can see that $\nrm{\tilde{f}_2-\tilde{f}_3}_\infty\leq \eps/2$, for
  \begin{equation}\label{eq:finalFourier}
    \tilde{f}_3(x):=\sum_{k=0}^K a_k \sum_{\ell=0}^{\lfloor L \rfloor} b^{(k)}_\ell \left(\frac{i}{2}\right)^\ell\sum_{m=\lceil\ell/2\rceil-M'}^{\lfloor\ell/2\rfloor+M'}(-1)^{m}\binom{\ell}{m}e^{\frac{i\pi x}{2}(2m-\ell)},
  \end{equation}
  using $\sum_{k=0}^K |a_k| \sum_{\ell=0}^{\lfloor L \rfloor} \left|b^{(k)}_\ell\right|\leq \sum_{k=0}^K |a_k|=\nrm{a}_1$. Therefore we can conclude that $\tilde{f}_3$ is an $\eps$-approximation to $f$:
  \[
    \nrm{f-\tilde{f}_3}_\infty\leq \nrm{f-\tilde{f}_1}_\infty + \nrm{\tilde{f}_1-\tilde{f}_2}_\infty + \nrm{\tilde{f}_2-\tilde{f}_3}_\infty \leq \eps.
  \]
  Observe that in~\eqref{eq:finalFourier} the largest value of $|m-\ell|$ in the exponent is upper bounded by $2M'=M$.
  So by rearranging the terms in $\tilde{f}_3$ we can write $\tilde{f}_3(x)=\sum_{m=-M}^{M}c_{m} e^{\frac{i\pi m}{2}x}$.
  Now let us fix a value $k$ in the first summation of~\eqref{eq:finalFourier}. Observe that after taking the absolute value of each term, the last two summations still yield a value $\leq 1$, since $\nrm{b^{(k)}}_1=1$ and $\sum_{m=0}^{\ell}\binom{\ell}{m}=2^\ell$.
  It follows that $\nrm{c}_1\leq \nrm{a}_1$. From the construction of the proof, it is easy to see that (an $\eps$-approximation of) $c$ can be calculated in time $\text{poly}(K,M,\log(1/\eps))$.
\end{proof}

Now we present the Linear Combination of Unitaries (LCU) Lemma~\cite{childs2012HamSimLCU,berry2014HamSimTaylor,berry2015HamSimNearlyOpt}, which we will use for combining the Fourier terms in our quantum circuit. Since we intend to use LCU for implementing non-unitary operations, we describe a version without the final amplitude amplification step. We provide a short proof for completeness.

\begin{lemma}[LCU Lemma~\cite{childs2012HamSimLCU,berry2014HamSimTaylor,berry2015HamSimNearlyOpt}] \label{lemma:LCU}
  Let $U_1,U_2,\ldots,U_m$ be unitaries on a Hilbert space $\mathcal{H}$, and $L=\sum_{i=1}^{m}a_i U_i$, where $a\in\mathbb{R}_+^{m}\setminus\{0\}$.
  Let	$V=\sum_{i=1}^{m}\ket{i}\!\bra{i}\otimes U_i$ and $A\in\mathbb{C}^{m\times m}$ be a unitary such that $A\ket{0}=\sum_{i=1}^{m}\sqrt{\frac{a_i}{\nrm{a}_1}}\ket{i}$.
  Then $\frac{L}{\nrm{a}_1}=\left(\bra{0}\otimes I\right) \left(A^\dagger\otimes I\right) V \left(A\otimes I\right)\left(\ket{0}\otimes I\right)$, i.e., for every $\ket{\psi}\in \mathcal{H}$ we have $\left(A^\dagger\otimes I\right) V \left(A\otimes I\right)\ket{0}\ket{\psi}=\ket{0}\frac{L}{\nrm{a}_1}\ket{\psi}+\ket{\Phi^\perp}$, where the vector $\ket{\Phi^\perp}$ satisfies $\left(\ketbra{0}{0}\otimes I\right)\ket{\Phi^\perp}=0$.
\end{lemma}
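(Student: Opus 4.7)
The plan is a direct computation of the action of $(A^\dagger \otimes I) V (A \otimes I)$ on a state $\ket{0}\ket{\psi}$, unpacked in three steps, using only the defining properties of $A$ and $V$.

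First I would compute $(A \otimes I)\ket{0}\ket{\psi} = \left(\sum_{i=1}^{m}\sqrt{a_i/\nrm{a}_1}\,\ket{i}\right)\ket{\psi}$ by the assumption on $A$. Applying $V$ then yields
\[
V(A\otimes I)\ket{0}\ket{\psi} = \sum_{i=1}^{m}\sqrt{a_i/\nrm{a}_1}\,\ket{i}\,U_i\ket{\psi},
\]
by the block-diagonal form of $V$. Applying $A^\dagger\otimes I$ does not change the second register, so to extract the $\ket{0}$-component of the first register we only need the identity $\bra{0}A^\dagger\ket{i} = \overline{\bra{i}A\ket{0}} = \sqrt{a_i/\nrm{a}_1}$, valid because each $a_i$ is real and non-negative.

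The key step is then to combine these: projecting on $\ket{0}$ in the first register gives
\[
(\bra{0}\otimes I)(A^\dagger\otimes I) V (A\otimes I)\ket{0}\ket{\psi} = \sum_{i=1}^{m}\sqrt{a_i/\nrm{a}_1}\cdot\sqrt{a_i/\nrm{a}_1}\,U_i\ket{\psi} = \frac{L}{\nrm{a}_1}\ket{\psi},
\]
which is exactly the claimed identity. Writing $(A^\dagger \otimes I)V(A\otimes I)\ket{0}\ket{\psi} = \ket{0}\otimes \frac{L}{\nrm{a}_1}\ket{\psi} + \ket{\Phi^\perp}$ where $\ket{\Phi^\perp}$ is the remaining (orthogonal) part supported on $\ket{i}$-components with $i\neq 0$ in the first register, then yields the stated form of the lemma.

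I do not expect any real obstacle: the proof is essentially two lines of Dirac-notation bookkeeping. The only subtlety worth flagging is the use of $a_i \geq 0$ (so the square roots are real and the conjugation in $\bra{0}A^\dagger\ket{i}$ is trivial); the hypothesis $a \in \mathbb{R}_+^m \setminus \{0\}$ ensures both this and that $\nrm{a}_1 > 0$, making the normalization in the definition of $A\ket{0}$ well-defined.
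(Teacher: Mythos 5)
Your proposal is correct and matches the paper's own proof essentially step for step: apply $A\otimes I$, apply $V$ blockwise, and contract with $\bra{0}(A^\dagger\otimes I)=\sum_i\sqrt{a_i/\nrm{a}_1}\bra{i}\otimes I$ to recover $L/\nrm{a}_1$. The remark about $a_i\geq 0$ making the conjugation trivial is a fine (if implicit in the paper) observation; nothing further is needed.
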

\begin{proof}
  \begin{align*}
    \left(\bra{0}\otimes I\right)(A^\dagger\otimes I) V \left(A\otimes I\right)\ket{0}\ket{\psi}
    &= \left(\left(\sum_{i=1}^{m}\sqrt{\frac{a_i}{\nrm{a}_1}}\bra{i}\right)\otimes I\right)V \sum_{i=1}^{m}\sqrt{\frac{a_i}{\nrm{a}_1}}\ket{i}\ket{\psi}\\
    &= \left(\left(\sum_{i=1}^{m}\sqrt{\frac{a_i}{\nrm{a}_1}}\bra{i}\right)\otimes I\right) \sum_{i=1}^{m}\sqrt{\frac{a_i}{\nrm{a}_1}}\ket{i}U_i\ket{\psi}\\
    &= \sum_{i=1}^{m}\frac{a_i}{\nrm{a}_1}U_i\ket{\psi}\\
    &= \frac{L}{\nrm{a}_1}\ket{\psi}
  \end{align*}
  \vskip-5mm
\end{proof}

The next result summarizes how to efficiently implement a Fourier series of a Hamiltonian.
\begin{lemma}\label{lemma:LCUApplied}
  Suppose $f(x)=\sum_{m=-M}^{M-1} c_m e^{im\gamma x}$, for a given $c\in\mathbb{C}^{2M}\setminus\{0\}$. We can construct a unitary $\tilde{U}$ which implements the operator $\frac{f(H)}{\nrm{c}_1}=\sum_{m=-M}^{M-1} \frac{c_m}{\nrm{c}_1} e^{im\gamma H}$ with $\eps$ precision, i.e., such that $$\nrm{(\bra{0}\otimes I)\tilde{U}(\ket{0}\otimes I)-\frac{f(H)}{\nrm{c}_1}}\leq\eps,$$
  using $\bigO(M(\log(M)+1))$ two-qubit gates and a single use of a circuit implementing controlled $(M,\gamma,\eps)$-simulation of $H$.
\end{lemma}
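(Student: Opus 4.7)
The plan is to reduce the lemma to a direct application of the LCU Lemma (Lemma~\ref{lemma:LCU}), first handling the fact that the coefficients $c_m$ are complex rather than real non-negative. Write $c_m = |c_m| e^{i\phi_m}$ and define the unitaries $U_m := e^{i\phi_m} e^{im\gamma H}$ for $m = -M,\dots,M-1$. Then
\[
  f(H) = \sum_{m=-M}^{M-1} |c_m|\, U_m,
\]
which is now in the form required by Lemma~\ref{lemma:LCU} with the non-negative weights $a_m := |c_m|$, and $\nrm{a}_1 = \nrm{c}_1$.

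Next, I would construct the two ingredients that LCU needs. First, the state-preparation unitary $A$ satisfying $A\ket{0} = \sum_m \sqrt{|c_m|/\nrm{c}_1}\,\ket{m}$; an arbitrary state on the $(\log(2M))$-qubit control register can be prepared by the standard Grover--Rudolph/Möttönen construction using $\bigO(M)$ single-qubit rotations conditioned on $\bigO(\log M)$ qubits, giving $\bigO(M\log M)$ two-qubit gates. Second, the selection unitary $V := \sum_m \ketbra{m}{m} \otimes U_m$. I would factor $V = D_\phi \cdot W$, where $W$ is the exact controlled $(M,\gamma)$-simulation of $H$ from Definition~\ref{def:controlledSim} and $D_\phi := \sum_m e^{i\phi_m}\ketbra{m}{m} \otimes I$ is a diagonal unitary on the ancilla register. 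The diagonal $D_\phi$ can be implemented in $\bigO(M)$ gates by stepping through the $2M$ computational basis states of the ancilla and applying the appropriate relative phase.

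The actual circuit would be $\tilde{U} := (A^\dagger \otimes I)\, D_\phi\, \tilde{W}\, (A \otimes I)$, where $\tilde{W}$ is the given controlled $(M,\gamma,\eps)$-simulation. By Lemma~\ref{lemma:LCU} applied to the $U_m$'s with weights $|c_m|$, the exact analogue $(A^\dagger \otimes I)\,D_\phi W\,(A\otimes I)$ has top-left block exactly equal to $f(H)/\nrm{c}_1$. The only approximation enters through $\tilde{W}$, and unitarity of $A$, $A^\dagger$ and $D_\phi$ preserves the operator-norm error:
\[
  \nrm{(\bra{0}\otimes I)\tilde{U}(\ket{0}\otimes I) - \tfrac{f(H)}{\nrm{c}_1}}
  \leq \nrm{\tilde{U} - (A^\dagger\otimes I) D_\phi W (A\otimes I)}
  = \nrm{\tilde{W} - W}
  \leq \eps,
\]
which gives the claimed precision bound. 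Adding the state-preparation cost ($\bigO(M\log M)$), the diagonal phase cost ($\bigO(M)$), and the single invocation of the $\tilde{W}$ oracle yields the stated $\bigO(M(\log M + 1))$ additional two-qubit gates.

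There is no serious obstacle here; the only things to be careful about are (a) absorbing the phases $e^{i\phi_m}$ into the selection step so that the LCU Lemma applies with its non-negativity hypothesis, and (b) verifying that the error $\eps$ guaranteed by the controlled simulation propagates unchanged through the surrounding unitaries. The gate count is dominated by preparing the $(2M)$-dimensional amplitude distribution encoded by $c$; this is where the $\log M$ factor appears, from implementing controlled rotations on a register of $\log(2M)$ qubits.
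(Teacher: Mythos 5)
Your proposal is correct and follows essentially the same route as the paper: absorb the phases of the $c_m$ into a diagonal unitary multiplying the controlled-simulation step, apply the LCU Lemma to the resulting non-negative weights $|c_m|$, and observe that the $\eps$ error of $\tilde{W}$ passes unchanged through the surrounding exact unitaries. The only cosmetic difference is that the paper budgets $\bigO(M(\log M+1))$ gates for the diagonal phase (each of the $2M$ controlled phases acts on a $(\log M+1)$-bit control string), whereas you claim $\bigO(M)$; this does not affect the stated bound since the state-preparation cost dominates.
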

\begin{proof}
  This is a direct corollary of Lemma~\ref{lemma:LCU}.
  To work out the details, note that we can always extend $c$ with some $0$ values, so we can assume without loss of generality that $M$ is a power of $2$. This is useful, because then we can represent each $m\in [-M,M-1]$ as a $(J+1)$-bit signed integer for $J=\log(M)$.

  The implementation of the operator $A$ in Lemma~\ref{lemma:LCU} does not need any queries and it can be constructed exactly using $\bigO(M(\log(M)+1))$ two-qubit gates, e.g., by the techniques of~\cite{grover2002SuperposEffIntegrProbDistr}. We sketch the basic idea which is based on induction. For $J=1$ the operator $A$ is just a two-qubit unitary. Suppose we proved the claim for bitstrings of length $J$ and want to prove the claim for length $J+1$. Let $a\in\mathbb{R}_+^{2^{J+1}}$ be such that $\nrm{a}_1=1$ and define $\tilde{a}\in\mathbb{R}_+^{2^{J}}$ such that $\tilde{a}_b=a_{b,0}+a_{b,1}$ for all bitstrings $b\in\{0,1\}^{J}$. Then we have a circuit $\tilde{A}$ that uses $\bigO(2^J (J+1))$ gates and satisfies $\sqrt{\tilde{a}_b}=\bra{b}\tilde{A}\ket{0\ldots 0}$ for all $b\in\{0,1\}^{J}$. We can add an extra $\ket{0}$-qubit and implement a controlled rotation gate $R_b$ on it for each $b\in\{0,1\}^{J}$. Let $R_b$ have rotation angle $\arccos\left(\sqrt{a_{b,0}/\tilde{a}_b}\right)$ and be controlled by $b$. It is easy to see that the new unitary $A$ satisfies $\sqrt{a_{b'}}=\bra{b'}A\ket{0\ldots 0}$ for each $b'\in\{0,1\}^{J}$. Each $R_b$ can be implemented using $\bigO(J)$ two-qubit gates and ancilla qubits, justifying the gate complexity and concluding the induction.

  What remains is to implement the operator $V\kern-0.2mm=\sum_{m=-M}^{M-1}\ketbra{m}{m}\otimes \frac{c_m}{|c_m|} e^{im\gamma H}$ from Lemma~\ref{lemma:LCU}. We implement $V=PW$ in two steps, where $P=\sum_{m=-M}^{M-1}\ketbra{m}{m}\otimes \frac{c_m}{|c_m|}I$. This $P$ can be implemented exactly using $\bigO(M(\log(M)+1))$ gates simply by building a controlled gate that adds the right phase for each individual bitstring. Since the bitstring on which we want to do a controlled operation has length  $\log(M)+1$, each controlled operation can be constructed using $\bigO(\log(M)+1)$ gates and ancilla qubits resulting in the claimed gate complexity. We use a circuit implementing controlled $(M,\gamma,\eps)$-simulation of $H$, denoted by~$\tilde{W}$, which is an $\eps$-approximation of $W$ by definition.

  Finally $\tilde{U}:=(A^\dagger\otimes I) P\tilde{W} \left(A\otimes I\right)$. This yields an $\eps$-precise implementation, since
  \begin{align*}
    \nrm{(\bra{0}\otimes I)\tilde{U}(\ket{0}\otimes I)-\frac{f(H)}{\nrm{c}_1}}
    &= \nrm{(\bra{0}\otimes I)\tilde{U}(\ket{0}\otimes I)-(\bra{0}\otimes I)(A^\dagger\otimes I) PW \left(A\otimes I\right)(\ket{0}\otimes I)}\\
    &\leq \nrm{\tilde{U}-(A^\dagger\otimes I) PW \left(A\otimes I\right)}\\
    &= \nrm{(A^\dagger\otimes I) P\tilde{W} \left(A\otimes I\right)-(A^\dagger\otimes I) PW \left(A\otimes I\right)}\\
    &= \nrm{\tilde{W}-W } \leq  \eps.
  \end{align*}
  \vskip -0.5cm
\end{proof}

Now we can state the main result of this appendix, which tells us how to efficiently turn a function (provided with its Taylor series) of a Hamiltonian~$H$, into a quantum circuit by using controlled Hamiltonian simulation.

In the following theorem we assume that the eigenvalues of $H$ lie in a radius-$r$ ball around $x_0$.  The main idea is that if even $r+\delta$ is less than the radius of convergence of the Taylor series, then we can obtain an $\eps$-approximation of $f$ by truncating the series at logarithmically high powers.
$B$ will be an upper bound on the absolute value of the function within the $r+\delta$ ball around $x_0$, in particular $\nrm{f(H)/B}\leq 1$. Therefore we can implement $f(H)/B$ as a block of some larger unitary. It turns out that apart from the norm and sparsity of $H$ and precision parameters, the complexity depends on the ratio of $\delta$ and $r$.
\begin{theorem}[Implementing a smooth function of a Hamiltonian]\label{thm:Taylor}
Let $x_0\in\mathbb{R}$ and $r>0$ be such that $f(x_0+x)=\sum_{\ell=0}^{\infty} a_\ell x^\ell$ for all $x\in\![-r,r]$.
Suppose $B>0$ and $\delta\in(0,r]$ are such that $\sum_{\ell=0}^{\infty}(r+\delta)^\ell|a_\ell|\leq B$.
If $\nrm{H-x_0I}\leq r$ and $\eps\in\!\left(0,\frac{1}{2}\right]$, then we can implement a unitary $\tilde{U}$ such that $\nrm{(\bra{0}\otimes I)\tilde{U}(\ket{0}\otimes I)-\frac{f(H)}{B}}\leq\eps$, using $\bigO\left(r/\delta\log\left(r/(\delta\eps)\right)\log\left(1/\eps\right)\right)$ gates and a single use of a circuit for controlled $\left(\bigO(r\log(1/\eps)/\delta),\bigO(1/r),\eps/2\right)$-simulation of $H$.

Suppose we are given $K$ such that $\nrm{H}\leq K$ and $r=\bigO(K)$. If, furthermore, $H$~is $d$-sparse and is accessed via oracles~\eqref{eq:oracleind}-\eqref{eq:oraclemat}, then the whole circuit can be implemented using
\[
  \bigO\left(\!\frac{Kd}{\delta}\log\left(\frac{K}{\delta\eps}\right)\log\left(\frac{1}{\eps}\right)\!\right)\text{ queries and }
  \bigO\left(\!\frac{Kd}{\delta}\log\left(\frac{K}{\delta\eps}\right)\log\left(\frac{1}{\eps}\right)\!\left[\log(n)\!+\!\log^{\frac{5}{2}}\left(\frac{K}{\delta\eps}\right)\right]\!\right)\text{ gates.}
\]
\end{theorem}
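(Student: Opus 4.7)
The plan is to assemble the theorem from the three ingredients already available: polynomial (Taylor) truncation, the polynomial-to-Fourier conversion of Lemma~\ref{lemma:LowWeightAPX}, and the LCU-based implementation of Lemma~\ref{lemma:LCUApplied} together with the controlled simulation bound of Lemma~\ref{lemma:controlledHamsin}. The main conceptual step is to rescale so that we land exactly in the hypotheses of Lemma~\ref{lemma:LowWeightAPX}: replace $H$ by $\tilde H := (H-x_0 I)/(r+\delta)$, so that $\|\tilde H\|\leq r/(r+\delta)=1-\delta'$, where $\delta':=\delta/(r+\delta)$. Define the rescaled function $g(y):=f(x_0+(r+\delta)y)/B$, with Taylor coefficients $\tilde a_\ell := a_\ell (r+\delta)^\ell /B$ satisfying $\sum_\ell|\tilde a_\ell|\leq 1$ by hypothesis. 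Note $g(\tilde H)=f(H)/B$, which is the object we want to implement.

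Next, truncate at degree $L$: for $|y|\leq 1-\delta'$ the tail is bounded by $\sum_{\ell>L}|\tilde a_\ell|(1-\delta')^\ell\leq(1-\delta')^L\leq e^{-L\delta'}$, so $L=\Theta((r/\delta)\log(1/\eps))$ suffices to force the truncation error below $\eps/16$. The truncated polynomial has $\ell_1$-weight at most $1$. Now feed it into Lemma~\ref{lemma:LowWeightAPX} with precision $\eps/4$ and the shrinkage $\delta'$, obtaining coefficients $c\in\mathbb{C}^{2M+1}$ with $\|c\|_1\leq 1$ and $M=O(\log(1/\eps)/\delta')=O((r/\delta)\log(1/\eps))$ such that $\big|g(y)-\sum_m c_m e^{i\pi m y/2}\big|\leq \eps/2$ on $[-1+\delta',1-\delta']$. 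Because this bound holds on the spectrum of $\tilde H$, it lifts to the operator inequality $\big\|g(\tilde H)-\sum_m c_m e^{i\pi m\tilde H/2}\big\|\leq \eps/2$. Undoing the shift, $e^{i\pi m\tilde H/2}=e^{-i\pi m x_0/(2(r+\delta))}\,e^{i m\gamma H}$ with $\gamma:=\pi/(2(r+\delta))=\Theta(1/r)$, so the phase factors are absorbed into $c_m$ (preserving $\|c\|_1\leq 1$) and we obtain a Fourier series of the form $\sum_{m=-M}^{M-1} c'_m e^{im\gamma H}$ that $\eps/2$-approximates $f(H)/B$.

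Finally, apply Lemma~\ref{lemma:LCUApplied} with precision $\eps/2$. Because $\|c'\|_1\leq 1$, the lemma produces a unitary $\tilde U$ such that $\|(\bra 0\otimes I)\tilde U(\ket 0\otimes I)-f(H)/B\|\leq \eps$, using $O(M\log M)=O((r/\delta)\log(r/(\delta\eps))\log(1/\eps))$ two-qubit gates and one call to controlled $(M,\gamma,\eps/2)$-simulation of $H$, which matches the first complexity claim. When $H$ is $d$-sparse and $\|H\|\leq K=\Theta(r)$, plug into Lemma~\ref{lemma:controlledHamsin}: here $\tau=M\gamma K = O((K/\delta)\log(1/\eps))$, so the query count is $O(\tau d\log(\tau/\eps))=O((Kd/\delta)\log(1/\eps)\log(K/(\delta\eps)))$ and the gate count carries an extra $\log(n)+\log^{5/2}(K/(\delta\eps))$ factor, exactly as claimed.

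The only truly delicate point is error bookkeeping: three independent errors (Taylor truncation, polynomial-to-Fourier conversion via Lemma~\ref{lemma:LowWeightAPX}, and approximate controlled simulation from Lemma~\ref{lemma:controlledHamsin}) must be balanced so they sum to at most $\eps$; each is handled by choosing a small constant fraction of $\eps$ as shown above, and Lemma~\ref{lemma:LowWeightAPX} critically preserves $\|c\|_1\leq 1$ so that the block of the final unitary is \emph{exactly} $f(H)/B$ up to approximation, without any additional amplitude loss. The other mild subtlety is the shift by $x_0$: it must be pushed through the Fourier construction as a per-term phase, which is free since it does not change $\|c\|_1$ and only affects the classical precomputation of the coefficients used in the LCU select unitary.
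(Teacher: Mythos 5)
Your construction follows the paper's proof almost step for step (rescale to $\tilde H=(H-x_0I)/(r+\delta)$, truncate the Taylor series at $L=\Theta((r/\delta)\log(1/\eps))$, convert to a Fourier series via Lemma~\ref{lemma:LowWeightAPX}, absorb the $x_0$-shift into per-term phases, and implement via Lemma~\ref{lemma:LCUApplied} plus Lemma~\ref{lemma:controlledHamsin}); the complexity accounting and the choice $\gamma=\pi/(2(r+\delta))=\Theta(1/r)$, $M=O((r/\delta)\log(1/\eps))$ are correct. The one genuine gap is in your final step: Lemma~\ref{lemma:LCUApplied} does \emph{not} produce the block $\sum_m c'_m e^{im\gamma H}$; it produces $\sum_m \frac{c'_m}{\nrm{c'}_1} e^{im\gamma H}$, i.e., the linear combination normalized by the \emph{actual} $\ell_1$-norm of the coefficient vector, whatever that happens to be. Knowing only $\nrm{c'}_1\leq 1$ does not let you conclude that the block equals $f(H)/B$ up to $\eps$: if, say, $\nrm{c'}_1=1/2$, the LCU output is twice the operator you want, and the discrepancy $\nrm{(1/\nrm{c'}_1-1)\sum_m c'_m e^{im\gamma H}}$ can be a constant, not $O(\eps)$. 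Your closing remark that the block is obtained ``without any additional amplitude loss'' is precisely where the argument breaks.

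The fix is small and is exactly what the paper does: append one more ancilla qubit prepared by a single-qubit rotation with matrix entries $\nrm{c'}_1$ and $\sqrt{1-\nrm{c'}_1^2}$, and redefine success as both ancillas being $\ket{0}$; this multiplies the post-selected block by the known classical constant $\nrm{c'}_1$, turning $\sum_m \frac{c'_m}{\nrm{c'}_1}e^{im\gamma H}$ into $\sum_m c'_m e^{im\gamma H}$ at a cost of $O(1)$ gates, after which your error budget ($\eps/2$ for the Fourier approximation of $f(H)/B$ plus $\eps/2$ for the approximate controlled simulation) goes through. You should also note the degenerate case $c'=0$ (handled in the paper by mapping the flag to $\ket{1}$), since Lemma~\ref{lemma:LCUApplied} requires a nonzero coefficient vector.
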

\begin{proof}
  The basic idea is to combine Lemma~\ref{lemma:LowWeightAPX} and Lemma~\ref{lemma:LCUApplied} and apply them to a transformed version of the function.
  First we define $\delta':=\delta/(r+\delta)$, which is at most $1/2$ by assumption. Then, for all $\ell\in\mathbb{N}$ let $b_\ell:=a_\ell(r+\delta)^\ell$ and define the function $g:[-1+\delta',1-\delta'] \rightarrow \mathbb{R}$ by $g(y):=\sum_{\ell=0}^{\infty}b_\ell y^\ell$ so that
  \begin{equation}\label{eq:sclaedF}
    f(x_0+x)=g(x/(r+\delta)) \qquad \text{ for all } x\in\![-r,r].
  \end{equation}
  Now we set $L:=\left\lceil\frac{1}{\delta'}\log\left(\frac{8}{\eps}\right)\right\rceil$. Then for all $y\in\![-1+\delta',1-\delta']$
  \begin{align*}
    \left|g(y)-\sum_{\ell=0}^{L-1}b_\ell y^\ell\right|
    &= \left|\sum_{\ell=L}^{\infty}b_\ell y^\ell\right|\\
    &\leq \sum_{\ell=L}^{\infty}\left|b_\ell (1-\delta')^\ell\right|\\
    &\leq (1-\delta')^L \sum_{\ell=L}^{\infty}\left|b_\ell\right|\\
    &\leq \left(1-\delta'\right)^{L}B\\
    &\leq e^{-\delta'L}B\\
    &\leq \frac{\eps B}{8}.
  \end{align*}
  We would now like to obtain a Fourier-approximation of $g$ for all $y\in[-1+\delta',1-\delta']$, with precision $\eps'=\frac{\eps B}{2}$.
  Let $b':=(b_0,b_1,\ldots,b_{L-1})$ and observe that $\nrm{b'}_1\leq\nrm{b}_1\leq B$. We apply Lemma~\ref{lemma:LowWeightAPX} to the function $g$, using the polynomial approximation corresponding to the truncation to the first $L$ terms, i.e., using the coefficients in $b'$.
  Then we obtain a Fourier $\eps'$-approximation $\tilde{g}(y):=\sum_{m=-M}^{M}\tilde{c}_m e^{\frac{i\pi m}{2}y}$ of $g$, with
  $$
  M=\bigO\left(\frac{1}{\delta'}\log\left(\frac{\nrm{b'}_1}{\eps'}\right)\right)=\bigO\left(\frac{r}{\delta}\log\left(\frac{1}{\eps}\right)\right)
  $$
  such that the vector of coefficients $\tilde{c}\in\mathbb{C}^{2M+1}$ satisfies $\nrm{\tilde{c}}_1\leq\nrm{b'}_1 \leq \nrm{b}_1 \leq B$.
  Let
  $$\tilde{f}(x_0+x):=\tilde{g}\left(\frac{x}{r+\delta}\right)=\sum_{m=-M}^{M}\tilde{c}_m e^{\frac{i\pi m}{2(r+\delta)}x};$$
  by~\eqref{eq:sclaedF} we see that $\tilde{f}$ is an $\eps'$-precise Fourier approximation of $f$ on the interval $[x_0-r,x_0+r]$. To transform this Fourier series to its final form, we note that
  $\tilde{f}(z)=\sum_{m=-M}^{M}\tilde{c}_m e^{\frac{i\pi m}{2(r+\delta)}(z-x_0)}$, so by defining $c_m:=\tilde{c}_me^{-\frac{i\pi m}{2(r+\delta)}x_0}$ we get a Fourier series in $z$, while preserving $\nrm{c}_1=\nrm{\tilde{c}}_1\leq B$.

  In the trivial case, when $c=0$, we choose a unitary $\tilde{U}$, such that it maps the $\ket{0}$ ancilla state to $\ket{1}$, then clearly $(\bra{0}\otimes I)\tilde{U}(\ket{0}\otimes I)=0=\tilde{f}(H)$. Clearly such a $\tilde{U}$ can be implemented using $\bigO(1)$ gates and $0$ queries.
  Otherwise we can apply Lemma~\ref{lemma:LCUApplied} to this modified Fourier series to construct a unitary circuit $\tilde{V}$ implementing an $\frac{\eps}{2}$-approximation of $\tilde{f}(H)/\nrm{c}_1$.
  We can further scale down the amplitude of the $\ket{0}$-part of the output by a factor of $\nrm{c}_1/B\leq 1$, to obtain an approximation of $\tilde{f}(H)/B$ as follows. We simply add an additional ancilla qubit initialized to $\ket{0}$ on which we act with the one-qubit unitary
  $$Rot:=\left(\begin{array}{cc}
     \frac{\nrm{c}_1}{B} & \sqrt{1-\frac{\nrm{c}_1^2}{B^2}}\\
     -\sqrt{1-\frac{\nrm{c}_1^2}{B^2}} & \frac{\nrm{c}_1}{B}
   \end{array}\right).$$
  Finally we define $\tilde{U}:=Rot\otimes \tilde{V}$, and define $\ket{0}\ket{0}$ as the new success indicator, where the first qubit is the new ancilla.
  We show that $\tilde{U}$ implements $f(H)/B$ with $\eps$ precision: (if $c=0$, let us use the definition $\tilde{f}(H)/\nrm{c}_1:=0$)
  \begin{align*}
   \nrm{(\bra{0}\bra{0}\otimes I)\tilde{U}(\ket{0}\ket{0}\otimes I)-\frac{f(H)}{B}}
   &\leq
     \nrm{(\bra{0}\bra{0}\otimes I)\tilde{U}(\ket{0}\ket{0}\otimes I)-\frac{\tilde{f}(H)}{B}}
     + \nrm{\frac{\tilde{f}(H)}{B}-\frac{f(H)}{B}}\\
   &= \frac{\nrm{c}_1}{B}\nrm{(\bra{0}\otimes I)\tilde{V}(\ket{0}\otimes I)-\frac{\tilde{f}(H)}{\nrm{c}_1}}
     + \nrm{\frac{\tilde{f}(H)-f(H)}{B}}\\
   &\leq  \frac{\nrm{c}_1}{B}\frac{\eps}{2}+\frac{\eps'}{B}\\
   &\leq \eps .
  \end{align*}
  Lemma~\ref{lemma:LCUApplied} uses $\bigO\left(M\log(M+1)\right)=\bigO\left(r/\delta\log(1/\eps)\log\left(r/(\delta\eps)\right)\right)$ gates and a single use of a controlled $\left(M,\gamma=\pi/(2r+2\delta),\eps/2\right)$-simulation of $H$. If $\nrm{H}=\bigO(K)$, we can use Lemma~\ref{lemma:controlledHamsin} to conclude
  $\bigO\left(M\gamma Kd\log\left(\frac{1}{\eps}\right)\log\left(\frac{M\gamma K}{\eps}\log\left(\frac{1}{\eps}\right)\right)\right)=\bigO\left(\frac{Kd}{\delta} \log\left(\frac{K}{\delta\eps}\right)\log\left(\frac{1}{\eps}\right)\right)$ query and
  \begin{align*}
   \bigO\left(M\gamma Kd\log\left(\frac{1}{\eps}\right)\!\log\left(\frac{M\gamma K}{\eps}\log\left(\frac{1}{\eps}\right)\!\right)\!\left[\log(n)+\log^{\frac{5}{2}}\!\left(\frac{M\gamma K}{\eps}\log\left(\frac{1}{\eps}\right)\!\right)\!\right]\right)\kern-52mm&\\     &=\bigO\left(\frac{Kd}{\delta}\log\left(\frac{K}{\delta\eps}\right)\log\left(\frac{1}{\eps}\right)\!\left[\log(n)+\log^{\frac{5}{2}}\!\left(\frac{K}{\delta\eps}\right)\!\right]\right)
  \end{align*}
  gate complexity.
  Finally note that the polynomial cost of calculating $c$ that is required by Lemma~\ref{lemma:LowWeightAPX} does not affect the query complexity or the circuit size, it only affects the description of the circuit.
\end{proof}
   \begin{remark}\label{rem:subSpaceFun}
     Note that in the above theorem we can relax the criterion $\nrm{H-x_0I}\leq r$. Suppose we have an orthogonal projector $\Pi$, which projects to eigenvectors with eigenvalues in $[x_0-r,x_0+r]$, i.e., $[H,\Pi]=0$ and $\nrm{\Pi\left(H-x_0I\right)\Pi}\leq r$. Then the circuit $\tilde{U}$ constructed in Theorem~\ref{thm:Taylor} satisfies
     $$\nrm{\Pi\left((\bra{0}\otimes I)\tilde{U}(\ket{0}\otimes I)-\frac{f(H)}{B}\right)\Pi}\leq\eps.$$
   \end{remark}

   The following corollary shows how to implement functions piecewise in small ``patches" using Remark~\ref{rem:subSpaceFun}. The main idea is to first estimate the eigenvalues of $H$ up to $\theta$ precision, and then implement the function using the Taylor series centered around a point close to the eigenvalue.

   This approach has multiple advantages. First, the function may not have a Taylor series that is convergent over the whole domain of possible eigenvalues of $H$. Even if there is such a series, it can have very poor convergence properties, making $B$ large and therefore requiring a lot of amplitude amplification. Nevertheless, for small enough neighborhoods the Taylor series always converges quickly, overcoming this difficulty.

   \begin{corollary}\label{cor:patched}
     Suppose $(x_\ell)\in\mathbb{R}^L$ and $r,\theta\in\mathbb{R_+}$ are  such that the spectrum of $H$ lies in the domain $\bigcup_{\ell=1}^L [x_\ell-(r-2\theta),x_\ell+(r+2\theta)]$.\footnote{This way, even if we make $\theta$ error during the estimation of an eigenvalue $\lambda_i$, the closest $x_\ell$ will still contain $\lambda_i$ in its radius-$r$ neighborhood.} Suppose there exist coefficients $a_k^{(\ell)}\in\mathbb{R}$ such that for all $\ell\in [L]$ and $x\in\![-r,r]$ we have $f(x_\ell+x)=\sum_{k=0}^{\infty} a_k^{(\ell)} x^k$, and  $\sum_{k=0}^{\infty}(r+\delta)^k|a_k^{(\ell)}|\leq B$ for some fixed $\delta\in[0,r]$ and $B>0$. If $\nrm{H}\leq K$ and $\eps\in\!\left(0,\frac{1}{2}\right]$, then we can implement a unitary $\tilde{U}$ such that $$
     \nrm{(\bra{0}\otimes I)\tilde{U}(\ket{0}\otimes I)-\frac{f(H)}{B}}\leq\eps,
     $$
     using $\bigO\left(Lr/\delta\log\left(r/(\delta\eps)\right)\log\left(1/\eps\right)+\log(K/\theta)\log\log(K/(\theta\eps))\right)$ gates, and with $\bigO(\log(1/\eps))$ uses of an $(\bigO(1/\theta),\pi/K,\Omega(\eps^2/\log(1/\eps)))$-simulation of $H$ and a single use of a circuit for controlled $\left(\bigO(r\log(1/\eps)/\delta),\bigO(1/r),\eps/2\right)$-simulation of~$H$.
     If $r=\bigO(K)$, $\theta\leq r/4$, $\theta=\Omega(\delta)$, $\nrm{H}\leq K$, $H$~is $d$-sparse and is accessed via oracles~\eqref{eq:oracleind}-\eqref{eq:oraclemat}, then the circuit can be implemented using
     \vskip-3.5mm
     $$
     \bigO\left(\!\frac{Kd}{\delta}\log\left(\frac{K}{\delta\eps}\right)\log\left(\frac{1}{\eps}\right)\!\right)\text{ queries and }
     \bigO\left(\!\frac{Kd}{\delta}\log\left(\frac{K}{\delta\eps}\right)\log\left(\frac{1}{\eps}\right)\!\left[\log(n)\!+\!\log^{\frac{5}{2}}\left(\frac{K}{\delta\eps}\right)\right]\!\right)\text{ gates.}
     $$
   \end{corollary}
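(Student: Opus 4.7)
The plan is to use phase estimation to coherently identify which Taylor patch $[x_\ell-r,x_\ell+r]$ each eigenvalue of $H$ sits in, then apply Theorem~\ref{thm:Taylor} conditioned on that index, and finally uncompute. First I would run phase estimation on $e^{i\pi H/K}$ (whose eigenphases lie in $[-\pi,\pi]$) with $M=\bigO(1/\theta)$ steps of Hamiltonian simulation, writing an estimate $\tilde\lambda$ of each eigenvalue into an ancilla register. To suppress the tail of large phase-estimation errors, I would take the median of $\bigO(\log(1/\eps))$ independent repetitions, which boosts the probability of the event $|\tilde\lambda-\lambda|\leq\theta$ to $1-\bigO(\eps^2/\log(1/\eps))$; this is precisely why each individual simulation needs only precision $\Omega(\eps^2/\log(1/\eps))$, giving the $\bigO(\log(1/\eps))$ uses of an $(\bigO(1/\theta),\pi/K,\Omega(\eps^2/\log(1/\eps)))$-simulation stated in the claim.

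Next, reversibly compute $\ell(\tilde\lambda)\in[L]$, the index of the centre $x_\ell$ nearest to $\tilde\lambda$, into a fresh register at a gate cost of $\bigO(\log(K/\theta)\log\log(K/(\theta\eps)))$ for the necessary arithmetic and comparisons. The spectral promise guarantees that $\lambda\in[x_\ell-(r-2\theta),x_\ell+(r-2\theta)]$ for some $\ell$, and combined with $|\tilde\lambda-\lambda|\leq\theta$ this forces $|x_{\ell(\tilde\lambda)}-\lambda|\leq r-\theta<r$. So $\lambda$ sits strictly inside the radius-$r$ convergence ball of the Taylor series centred at $x_{\ell(\tilde\lambda)}$, which is exactly the hypothesis needed to invoke Theorem~\ref{thm:Taylor} together with Remark~\ref{rem:subSpaceFun}.

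Now I would conditionally apply Theorem~\ref{thm:Taylor}. The key point is that the only $\ell$-dependence of its construction is in the classical Fourier coefficient vector $c^{(\ell)}$: recentring the Taylor series at $x_\ell$ only multiplies each Fourier term $e^{im\gamma H}$ by the scalar $e^{-im\gamma x_\ell}$, which can be absorbed into $c^{(\ell)}$. Therefore a single controlled $(\bigO(r\log(1/\eps)/\delta),\bigO(1/r),\eps/2)$-simulation of $H$ can be shared across all $L$ patches, while the ancilla-controlled superposition $\sum_\ell\ketbra{\ell}{\ell}\otimes A^{(\ell)}$ of LCU coefficient-preparation unitaries costs $\bigO(Lr/\delta\log(r/(\delta\eps))\log(1/\eps))$ gates. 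After implementing this conditional application, uncompute $\ell(\tilde\lambda)$ and the phase-estimation ancillas by running the earlier two steps in reverse; on the success flag $\ket{0}$ the Hamiltonian-function has not disturbed the eigenvector, so uncomputation is clean and restores the ancillas to $\ket{0}$.

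Error accumulation then combines the $\eps/2$ bound from Remark~\ref{rem:subSpaceFun} with a further $\eps/2$ contribution from the median-boosted phase estimation, and the triangle inequality gives the desired total operator-norm error $\eps$. In the sparse-access model, substituting Lemma~\ref{lemma:controlledHamsin} into the single controlled Hamiltonian-simulation subroutine and using $r=\bigO(K)$ and $\theta=\Omega(\delta)$ (so that the phase-estimation simulation cost is dominated by the Theorem~\ref{thm:Taylor} simulation cost) yields the stated query and gate bounds. The hard part of the argument is exactly the conditional-application step: one must verify that all the $\ell$-dependence of the $L$ local Fourier expansions collapses into the classical LCU coefficients rather than into the simulation itself, so that the Hamiltonian-simulation count does not scale with $L$. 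A secondary technical subtlety is propagating the $\Omega(\eps^2/\log(1/\eps))$ per-shot simulation precision through the median-of-$\log(1/\eps)$ boosting into a clean operator-norm $\eps/2$ bound, which is a standard but not entirely trivial computation.
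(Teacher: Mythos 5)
Your proposal matches the paper's proof essentially step for step: phase estimation with median-of-$\bigO(\log(1/\eps))$ boosting to get a $\theta$-accurate eigenvalue register, a conditional application of Theorem~\ref{thm:Taylor} (via Remark~\ref{rem:subSpaceFun}) in which all $\ell$-dependence is pushed into the classical LCU coefficients so that a single controlled Hamiltonian simulation is shared across patches, followed by uncomputation and the same error split. The only detail the paper adds that you omit is the observation that one may assume $L(r-2\theta)=\bigO(K)$, hence $Lr=\bigO(K)$, which is what lets the $\bigO(Lr/\delta\cdot\log(\cdot)\log(\cdot))$ patch-preparation gates be absorbed into the final sparse-access gate bound.
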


   \begin{proof}[Sketch of the proof.]
     We start by performing phase estimation on $e^{iH}$ with $\approx\theta$ resolution in phase. We boost the success probability by taking the median outcome of $\bigO(\log(1/\eps))$ parallel repetitions, so that we get a worse-than-$\theta$ estimation with probability at most $\bigO(\eps^2)$. This way the boosted phase estimation circuit is $\bigO(\eps)$-close in operator norm to an ``idealized" phase estimation unitary that never makes approximation error greater than~$\theta$ (for more details on this type of argument, see the proof of Lemma~\ref{lemma:normEst}). Phase estimation uses controlled $(\bigO(K/\theta),\pi/K,\Omega(\eps^2/\log(1/\eps)))$-simulation of $H$ and a Fourier transform on $\bigO(\log(K/\theta))$-bit numbers which can be implemented using $\bigO(\log(K/\theta)\log\log(K/(\theta\eps)))$ gates. The probability-boosting uses $\bigO(\log(1/\eps))$ repetitions. Controlled on the phase estimate $\tilde{\lambda}$ that we obtained, we implement a $1/B$-scaled version of the corresponding function ``patch'' $\left.f(x)\right|_{[x_\ell-r,x_\ell+r]}$ centered around $\arg\min|x_\ell-\tilde{\lambda}|$ using Theorem~\ref{thm:Taylor} and Remark~\ref{rem:subSpaceFun}. The additional gate complexities of the ``patches'' add up to $\bigO\left(Lr/\delta\log\left(r/(\delta\eps)\right)\log\left(1/\eps\right)\right)$, but since each ``patch'' uses the same controlled $\left(\bigO(r\log(1/\eps)/\delta),\bigO(1/r),\eps/2\right)$-simulation of $H$, we only need to implement that once. Finally we uncompute phase estimation.\footnote{Note that phase estimation on some eigenvector of $e^{iH}$ can produce a superposition of different estimates of the phase. If some intervals $[x_\ell-r,x_\ell+r]$ overlap for $\ell$ and $\ell'$, those estimates could lead to different implementations of $f(H)$ (one based on the coefficients $a_k^{(\ell)}$ and one based on $a_k^{(\ell')}$). However, this causes no difficulty; since we used the same normalization $1/B$ for all implementations, both implementations lead to essentially the same state after postselecting on the $\ket{0}$ ancilla state.}
     For the final complexity, note that we can assume without loss of generality that $L(r-2\theta)=\bigO(K)$, since otherwise we can just remove some redundant intervals from the domain. Hence $Lr=\bigO(K)$ and Lemma~\ref{lemma:controlledHamsin} implies the stated complexities.
   \end{proof}

           This corollary is essentially as general and efficient as we can hope for. Let $D$ denote the domain of possible eigenvalues of $H$. If we want to implement a reasonably smooth function~$f$, then it probably satisfies the following: there is some $r=\Omega(1)$, such that for each $x\in D$, the Taylor series in the radius-$r$ neighborhood of $x$ converges quickly, more precisely the Taylor coefficients $a_k^{(x)}$ for the $x$-centered series satisfy $\sum_{k=0}^{\infty}|a_k^{(x)}|r^k=\bigO(\nrm{f}_{\infty})$, where we define $\nrm{f}_{\infty}:=\sup_{x\in D}|f(x)|$. If this is the case, covering $D$ with radius-$\bigO(r)$ intervals, choosing $\theta=\Theta(r)$ and $\delta=\Theta(r)$, Corollary~\ref{cor:patched} provides an $\bOt{\nrm{H}d}$ query and gate complexity implementation of $f(H)/B$, where $B=\bigO(\nrm{f}_{\infty})$. The value of $B$ is optimal up to constant factors, since $f(H)/B$ must have norm at most $1$. Also the $\nrm{H}d$ factor in the complexity is very reasonable, and we achieve the logarithmic error dependence which is the primary motivation of the related techniques. An application along the lines of this discussion can be found in Lemma~\ref{lemma:squareGibbs}.

           Also note that in the above corollary we added up the gate complexities of the different ``patches.'' Since these gates prepare the Fourier coefficients of the function corresponding to the different Taylor series at different points, one could use this structure to implement all coefficients with a single circuit. This can potentially result in much smaller circuit sizes, which could be beneficial when the input model allows more efficient Hamiltonian simulation (which then would no longer be the bottleneck in the complexity).

           \subsection{Applications of smooth functions of Hamiltonians}
           In this subsection we use the input model for the $d$-sparse matrix $H$ as described at the start of Section~\ref{sec:upperbounds}.
           We calculate the implementation cost in terms of queries to the input oracles~\eqref{eq:oracleind}-\eqref{eq:oraclemat}, but it is easy to convert the results to more general statements as in the previous subsection.

           The following theorem shows how to efficiently implement the function $e^{-H}$ for some $H\succeq I$. We use this result in the proof of Lemma~\ref{lemma:trPreEst} to estimate expectation values of the quantum state $\rho=e^{-H}/\tr{e^{-H}}$ (for the application we ensure that $H\succeq I$ by adding some multiple of $I$).

           \begin{theorem}\label{thm:emH}
             Suppose that $I\preceq H$ and we are given $K\in\mathbb{R}_+$ such that $\nrm{H}\leq 2K$. If $\eps\in(0,1/3)$,
             then we can implement a unitary $\tilde{U}$ such that $\nrm{(\bra{0}\otimes I)\tilde{U}(\ket{0}\otimes I)-e^{-H}}\leq\eps$ using   $$\bigO\left(Kd\log\left(\frac{K}{\eps}\right)\log\left(\frac{1}{\eps}\right)\right)\text{ queries and }\bigO\left(Kd\log\left(\frac{K}{\eps}\right)\log\left(\frac{1}{\eps}\right)\left[\log(n)+\log^{\frac{5}{2}}\left(\frac{K}{\eps}\right)\right]\right)\text{ gates.}$$
           \end{theorem}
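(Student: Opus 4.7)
I will apply Theorem~\ref{thm:Taylor} directly with $f(x)=e^{-x}$, choosing the expansion point $x_0$, the radius $r$, and the slack $\delta$ so that the theorem's normalisation factor $B$ works out to exactly~$1$. The subblock produced by the theorem is then $f(H)/B=e^{-H}$ on the nose, so no amplitude amplification is needed in order to meet the target $\nrm{(\bra 0\otimes I)\tilde U(\ket 0\otimes I)-e^{-H}}\leq\eps$.

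The function $e^{-x}$ is entire; its Taylor series around any centre $x_0$ has coefficients $a_\ell=e^{-x_0}(-1)^\ell/\ell!$, and the total weight that appears in the hypothesis of Theorem~\ref{thm:Taylor} is $\sum_\ell(r+\delta)^\ell|a_\ell|=e^{-x_0+r+\delta}$. Under the hypotheses $I\preceq H$ and $\nrm{H}\leq 2K$, the spectrum of $H$ sits in $[1,2K]$, so I will take $x_0:=K+\tfrac12$ and $r:=K-\tfrac12$, which guarantees $\nrm{H-x_0 I}\leq r$ and covers the whole spectrum. Setting $\delta:=1$ (valid as $\delta\leq r$ whenever $K\geq 3/2$; the bounded range $K<3/2$ is absorbed into the trivial regime of the final complexity) turns the weight into $e^{-(K+1/2)+(K-1/2)+1}=1$, so $B:=1$ is admissible.

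Plugging these parameters into Theorem~\ref{thm:Taylor}, with $2K$ playing the role of the theorem's upper bound on $\nrm{H}$ (so that $r=\bigO(K)$), produces a unitary $\tilde U$ with the desired $\eps$-approximation property. Specialising the query and gate counts of Theorem~\ref{thm:Taylor} at $r/\delta=\bigO(K)$ and $K/\delta=\bigO(K)$ yields the claimed $\bigO\bigl(Kd\log(K/\eps)\log(1/\eps)\bigr)$ queries and $\bigO\bigl(Kd\log(K/\eps)\log(1/\eps)[\log n+\log^{5/2}(K/\eps)]\bigr)$ gates. The one delicate step is the parameter balance that forces $B=1$: cancelling the $e^{r+\delta}$ blow-up requires $x_0\geq r+\delta$, and this is compatible with the spectrum-covering constraint $x_0-r\leq 1$ precisely because the hypothesis $I\preceq H$ supplies $\lambda_{\min}(H)\geq 1$. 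Without this lower bound on $H$ one would have $B$ bounded away from $1$ and would incur an extra amplitude-amplification step.
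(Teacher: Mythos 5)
Your proof is correct and is essentially the paper's own argument: both apply Theorem~\ref{thm:Taylor} to $f(x)=e^{-x}$ centred at $x_0=K+\tfrac12$ and choose $r,\delta$ with $r+\delta=x_0$ so that $\sum_\ell(r+\delta)^\ell|a_\ell|=e^{-x_0}e^{r+\delta}=1$, giving $B=1$ and hence $f(H)/B=e^{-H}$ directly. The only difference is the split: the paper takes $r:=K$ and $\delta:=\tfrac12$, which satisfies $\delta\leq r$ for every admissible $K$ (since $I\preceq H$ forces $K\geq\tfrac12$) and so avoids the separate $K<3/2$ case that your choice $r=K-\tfrac12$, $\delta=1$ requires you to patch.
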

           \begin{proof}
             In order to use Theorem~\ref{thm:Taylor} we set $x_0:=K+1/2$ so that $\nrm{H-x_0 I}\leq K=:r$, and use the function
             $$
             f(x_0+x)=e^{-x_0-x}=e^{-x_0}e^{-x}=e^{-x_0}\sum_{\ell=0}^{\infty}\frac{(-x)^\ell}{\ell!}.
             $$
             We choose $\delta:=1/2$ so that $e^{-x_0}\sum_{\ell=0}^{\infty}\frac{(r+\delta)^\ell}{\ell!}=e^{-x_0}\sum_{\ell=0}^{\infty}\frac{x_0^\ell}{\ell!}=1$, therefore we set $B:=1$.
             Theorem~\ref{thm:Taylor} tells us that we can implement a unitary~$\tilde{U}$, such that $\tilde{f}(H):=(\bra{0}\otimes I)\tilde{U}(\ket{0}\otimes I)$ is an $\eps$-approximation of $f(H)/B= e^{-H}$, using
             \[
               \bigO\left(Kd\log\left(\frac{K}{\eps}\right)\log\left(\frac{1}{\eps}\right)\right)\text{ queries and }\bigO\left(Kd\log\left(\frac{K}{\eps}\right)\log\left(\frac{1}{\eps}\right)\left[\log(n)+\log^{\frac{5}{2}}\left(\frac{K}{\eps}\right)\right]\right) \text{ gates}.
             \]
           \end{proof}

To conclude this appendix, we now sketch the proofs of a few interesting consequences of the earlier results in this appendix. These will, however, not be used in the body of the paper.

First, we show how to use the above subroutine together with amplitude amplification to prepare a Gibbs state with cost depending logarithmically on the precision parameter, as shown by the following lemma. To our knowledge this is the first Gibbs sampler that achieves logarithmic dependence on the precision parameter without assuming access to the entries of $\sqrt{H}$ as in~\cite{chowdhury2016QGibbsSampling}. This can mean a significant reduction in complexity; for more details see the introduction of Section~\ref{sec:estTrArhogeneral}.

           \begin{lemma}\label{lemma:preGibbs}
             We can probabilistically prepare a purified Gibbs state $\ket{\tilde{\gamma}}_{AB}$ such that with high probability $\nrm{\Tr_B\left(\ketbra{\tilde{\gamma}}{\tilde{\gamma}}_{AB}\right)\kern-0.2mm-\kern-0.2mm e^{-H}\!/\tr{e^{-H}}\kern-0.3mm}_1\leq \eps$ holds, using an expected cost $\bOt{\!\sqrt{n / \tr{e^{-H}}}}$ times the complexity of Theorem~\ref{thm:emH}. If we are given a number $z \leq \tr{e^{-H}}$, then we can also prepare $\ket{\tilde{\gamma}}_{AB}$ in a unitary fashion with cost $\bOt{\sqrt{n / z}}$ times the complexity of Theorem~\ref{thm:emH}.
           \end{lemma}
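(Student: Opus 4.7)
The plan is to purify the Gibbs state in the standard way: start from a maximally entangled state $\ket{\mathrm{MES}} := \frac{1}{\sqrt{n}}\sum_{i=1}^{n}\ket{i}_A\ket{i}_B$, apply the map $e^{-H/2}$ to register $A$, and (after normalization) trace out $B$. Indeed, writing $H = U^{\dagger}\mathrm{diag}(\lambda)U$ and noting that $(M\otimes I)\ket{\mathrm{MES}} = (I\otimes M^{T})\ket{\mathrm{MES}}$ for any $M$, one checks that the reduced state on $A$ of $(e^{-H/2}\otimes I)\ket{\mathrm{MES}}/\nrm{(e^{-H/2}\otimes I)\ket{\mathrm{MES}}}$ is exactly $e^{-H}/\tr{e^{-H}}$, and the norm of the unnormalized state is $\sqrt{\tr{e^{-H}}/n}$.

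The concrete steps I would take are the following. First, prepare $\ket{\mathrm{MES}}$ with $\log(n)$ Hadamard and CNOT gates. Second, invoke Theorem~\ref{thm:emH} on $H/2$ (the hypothesis $I\preceq H$ transfers to $I/2\preceq H/2$, and we can rescale the bound $K$ accordingly) with some internal precision $\mu$ to be chosen, obtaining a unitary $\tilde{U}$ with $\nrm{(\bra{0}\otimes I)\tilde{U}(\ket{0}\otimes I)-e^{-H/2}}\leq \mu$. Applying $\tilde{U}$ to register $A$ of $\ket{0}_F\ket{\mathrm{MES}}_{AB}$ produces a state of the form $\ket{0}_F\ket{\tilde{\phi}}_{AB}+\ket{\perp}$, where $\ket{\tilde{\phi}}$ approximates the unnormalized target $\frac{1}{\sqrt{n}}(e^{-H/2}\otimes I)\sum_i\ket{i}\ket{i}$ within Euclidean distance $\mu$. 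The $\ket{0}_F$-branch has squared norm $\tilde{p}\approx\tr{e^{-H}}/n$, so measuring the flag and postselecting on $\ket{0}_F$ yields the desired purified Gibbs state.

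Third, I would boost the success probability with amplitude amplification. In the probabilistic case (no $z$ known), use amplitude estimation of the kind used in Lemma~\ref{lem:ampest}, with a doubling-search over the guess for the success probability (the standard Boyer--Brassard--H{\o}yer--Tapp trick), to obtain the ``good'' branch in expected $\bigO(\sqrt{n/\tr{e^{-H}}})$ applications of $\tilde{U}$. In the unitary case, given $z\leq \tr{e^{-H}}$, apply fixed-point amplitude amplification (e.g.\ Yoder--Low--Chuang; or pad $\tilde{U}$ with an extra ancilla rotation so that the success amplitude is exactly a known value $\geq\sqrt{z/n}$ and then perform a fixed number $\bigO(\sqrt{n/z})$ of Grover-type iterations), producing $\ket{\tilde{\gamma}}_{AB}$ deterministically with fidelity $\geq 1-\eps/2$ to the ideal purification.

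The main obstacle is propagating the operator-norm error $\mu$ of Theorem~\ref{thm:emH} through the normalization and the $\bigO(\sqrt{n/z})$ amplification rounds without losing control of the trace distance. The worry is the usual factor of $1/\sqrt{\tilde{p}}\approx\sqrt{n/\tr{e^{-H}}}$ by which errors in the unnormalized state blow up under postselection, plus a further factor of the number of amplification steps. A short triangle-inequality computation shows that choosing $\mu=\Theta(\eps\sqrt{z/n})$ (respectively $\Theta(\eps\sqrt{\tr{e^{-H}}/n})$ in the probabilistic case) makes the final trace distance at most $\eps$; because Theorem~\ref{thm:emH} depends only polylogarithmically on $1/\mu$, this choice is absorbed into the $\bOt{\cdot}$, giving exactly the claimed $\bOt{\sqrt{n/\tr{e^{-H}}}}$ (or $\bOt{\sqrt{n/z}}$) multiplicative overhead on top of the cost of Theorem~\ref{thm:emH}.
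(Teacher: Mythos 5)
Your proposal is correct and follows essentially the same route as the paper: prepare the maximally entangled state, apply an approximate $e^{-H/2}$ to one register via Theorem~\ref{thm:emH}, postselect on the flag using the Boyer et al.\ exponential search (probabilistic case) or a fixed number of amplification rounds when $z$ is known (unitary case), and control the $1/\sqrt{\tilde p}$ error blow-up by tightening the precision of the block-encoding, which costs only polylogarithmic overhead. The only cosmetic difference is that the paper increases the precision round by round during the exponential search, whereas you fix $\mu=\Theta(\eps\sqrt{z/n})$ up front; both work for the same reason.
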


           \begin{proof}[Sketch of proof]
             First we show how to prepare a purified sub-normalized Gibbs state. Then we use the exponential search algorithm of Boyer et al.~\cite{boyer1998TightBoundsOnQuantumSearching} (with exponentially decreasing guesses for the norm $a$ of the subnormalized Gibbs state, and hence exponentially increasing number of amplitude amplification steps) to postselect on this sub-normalized state in a similar fashion as in Algorithm~\ref{alg:genMin}.
             There is a possible caveat here: if we postselect on a state with norm $a$, then it gets rescaled by $1/a$ and its preparation error is rescaled by $1/a$ as well. Therefore during the rounds of the search algorithm we always increase the precision of implementation to compensate for the increased (error) amplification.
             Since the success of postselection in a round is upper bounded by the square of $\bigO\left(a\cdot\#\{\text{amplification steps in the round}\}\right)$, the probability for the postselection to succeed in any of the early rounds is small.

             Now we describe how to prepare a purified sub-normalized Gibbs state. We use the decomposition $H\!=\!\sum_{j=1}^{n}E_j\ketbra{\phi_j}{\phi_j}$, where $\{\ket{\phi_j}:j\in[n]\}$ is an orthonormal eigenbasis of $H$. Due to the invariance of maximally entangled states under transformations of the form $W\otimes W^*$ for unitary $W$, we have
             \begin{equation}\label{eq:preGibbsMaxEnt}
               \frac{1}{\sqrt{n}}\sum_{j=0}^{n-1}\ket{j}_A\ket{j}_B=\frac{1}{\sqrt{n}}\sum_{j=1}^{n}\ket{\phi_j}_A\ket{\phi^*_j}_B.
             \end{equation}
             Suppose we can implement a unitary $U$ such that $(\bra{0}\otimes I)U(\ket{0}\otimes I)=e^{-H/2}$.
             If we apply $U$ to the A-register of the state~\eqref{eq:preGibbsMaxEnt}, then we get a state $\ket{\gamma}$ such that $\Tr_B\left((\bra{0}\otimes I)\ketbra{\gamma}{\gamma}(\ket{0}\otimes I)\right)=e^{-H}/n$.

             If we implement $e^{-H/2}$ 
             with sufficient precision using Theorem~\ref{thm:emH} in the 
             exponential search algorithm, then after $\bigO\left(\sqrt{n/\tr{e^{-H}}}\right)$ rounds of amplitude amplification, with high probability we obtain a Gibbs state using the claimed expected runtime.

             If we also know a lower bound $z \leq \tr{e^{-H}}$, then we have an upper bound on the expected runtime, therefore we can turn the procedure into a unitary circuit using standard techniques.
           \end{proof}

  We can also recover the Gibbs sampler of Chowdhury and Somma~\cite{chowdhury2016QGibbsSampling}: if we apply our Corollary~\ref{cor:patched} to the function $e^{-x^2}$ assuming access to $\sqrt{H}$ for some psd matrix~$H$, then we get a Gibbs sampler for the state $e^{-H}/\Tr(e^{-H})$, similar to~\cite{chowdhury2016QGibbsSampling}. The advantage of the presented approach is that it avoids the usage of involved integral transformations, and can be presented without writing down a single integral sign, also due to our general results the proof is significantly shorter. Before we prove the precise statement in Lemma~\ref{lemma:squareGibbs}, we need some preparation for the application of Corollary~\ref{cor:patched}:
  \begin{lemma}\label{lemma:emx2}
    For all $k\in\mathbb{N}$ we have
    \begin{equation}\label{eq:emx2}
      \partial_x^{k+1} e^{-x^2}=-2x\partial_x^{k} e^{-x^2}-2k\partial_x^{k-1} e^{-x^2}
    \end{equation}
    and for all $x\in\mathbb{R}$
    \begin{equation}\label{eq:emx2n}
      \left|\partial_x^{k} e^{-x^2}\right|\leq (2|x|+2k)^k e^{-x^2}.
    \end{equation}
    Therefore
    \begin{equation}\label{eq:emx2Taylor}
      \sum_{k=0}^{\infty}\frac{\left|\partial_x^{k} e^{-x^2}\right|}{k!}\left(\frac{1}{8e}\right)^k\leq 2.
    \end{equation}
           \end{lemma}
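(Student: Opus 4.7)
The plan is to establish the three claims in order, using the first to drive an induction for the second, and the second to bound the series in the third.

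For the recurrence \eqref{eq:emx2}, the starting point is the identity $\partial_x e^{-x^2} = -2x e^{-x^2}$. I would apply $\partial_x^{k}$ to both sides and expand the right-hand side using the Leibniz rule for $\partial_x^{k}(fg)$ with $f(x)=-2x$ and $g(x)=e^{-x^2}$. Since $\partial_x f = -2$ and all higher derivatives of $f$ vanish, only the $j=0$ and $j=1$ terms in the Leibniz expansion survive, yielding exactly $-2x\,\partial_x^{k} e^{-x^2}-2k\,\partial_x^{k-1} e^{-x^2}$.

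For the pointwise bound \eqref{eq:emx2n}, I would argue by induction on $k$, with the base cases $k=0$ (trivially $|e^{-x^2}|\le e^{-x^2}$, using $0^0=1$) and $k=1$ (where $|-2xe^{-x^2}|=2|x|e^{-x^2}\le(2|x|+2)e^{-x^2}$). For the inductive step, set $y:=2|x|$ and apply the triangle inequality to \eqref{eq:emx2}, then the induction hypothesis, to obtain
\[
  |\partial_x^{k+1}e^{-x^2}|
  \le y(y+2k)^k e^{-x^2} + 2k(y+2(k-1))^{k-1}e^{-x^2}.
\]
The target upper bound is $(y+2(k+1))^{k+1}e^{-x^2}$, which dominates $(y+2k)^{k+1}e^{-x^2}=y(y+2k)^k e^{-x^2}+2k(y+2k)^k e^{-x^2}$, so it suffices to note that for $k\ge 1$ one has $y+2k\ge 2\ge 1$, hence $(y+2k)^k \ge (y+2(k-1))^{k-1}$. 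This closes the induction.

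For the series bound \eqref{eq:emx2Taylor}, substituting \eqref{eq:emx2n} gives
\[
  \sum_{k=0}^{\infty}\frac{|\partial_x^{k}e^{-x^2}|}{k!}\Bigl(\frac{1}{8e}\Bigr)^k
  \le e^{-x^2}\sum_{k=0}^{\infty}\frac{(|x|+k)^k}{k!\,(4e)^k}.
\]
Using $(1+|x|/k)^k\le e^{|x|}$ (with the $k=0$ term handled by $(|x|+0)^0=1\le e^{|x|}$) one has $(|x|+k)^k\le k^k e^{|x|}$, and Stirling gives $k^k/k!\le e^k$ for all $k\ge 0$. Therefore the series is bounded by
\[
  e^{|x|-x^2}\sum_{k=0}^{\infty}\Bigl(\frac{e\cdot e}{4e}\Bigr)^k\cdot\text{(decay factor)}
  = e^{|x|-x^2}\sum_{k=0}^{\infty}\frac{1}{4^k}\cdot(\cdots),
\]
where the ``decay factor'' comes from the sharper form $k^k/k!\le e^k/\sqrt{2\pi k}$ (for $k\ge 1$). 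Combined with the elementary inequality $|x|-x^2\le 1/4$ for all real $x$, the prefactor is at most $e^{1/4}$, and the numerical series evaluates to a small constant, yielding the claimed bound of $2$.

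The main obstacle, and the only place where care is needed, is tracking the constants in the last step: the coefficient $1/(8e)$ in the statement is chosen precisely so that after the substitution $(2|x|+2k)^k=2^k(|x|+k)^k$ and the Stirling/exponential bounds, the resulting geometric series has ratio $1/4$ and the overall product lies below $2$. I would verify this by bounding the first few terms explicitly and dominating the tail by a convergent geometric series.
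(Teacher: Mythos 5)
Your proof is correct. Parts one and two match the paper's argument in substance: the paper proves the recurrence by induction rather than by the Leibniz rule, but these are interchangeable, and your induction for \eqref{eq:emx2n} (triangle inequality on the recurrence, then monotonicity of $(y+2k)^k$ in both the base and the exponent) is exactly what the paper does, written out in slightly more detail.

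The only genuinely different route is the final series estimate. The paper splits $(2|x|+2k)^k\leq(4|x|)^k+(4k)^k$ and bounds the two resulting sums separately: the first becomes $e^{-x^2}e^{|x|/(2e)}$, controlled by completing the square to get the prefactor $e^{(1/(4e))^2}$, and the second becomes $1/\sqrt{2\pi}$ via Stirling, giving a total of about $1.41$. You instead factor $(2|x|+2k)^k=2^k(|x|+k)^k\leq 2^k k^k e^{|x|}$ and use $k^k/k!\leq e^k$ together with $|x|-x^2\leq 1/4$, collapsing everything into a single geometric series of ratio $1/4$ and a bound of $\tfrac{4}{3}e^{1/4}\approx 1.71$. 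Your version is a bit more streamlined (one series instead of two, no Stirling needed) at the cost of a looser prefactor; both land comfortably under $2$. One cosmetic slip: your intermediate display shows a ratio $\frac{e\cdot e}{4e}=e/4$, which would \emph{not} give a bound below $2$ without the $1/\sqrt{2\pi k}$ decay; but this contradicts your own chain of inequalities, which correctly yields $e^k/(4e)^k=4^{-k}$, and you write $1/4^k$ immediately afterwards. The extra $e$ in the numerator should be deleted; with that fixed the argument is clean and the "decay factor" is unnecessary.
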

           \begin{proof}
             We prove both claims by induction. $\partial_x^{0} e^{-x^2}=e^{-x^2}$, $\partial_x^{1} e^{-x^2}=-2xe^{-x^2}$ and $\partial_x^{2} e^{-x^2}=4x^2e^{-x^2}-2e^{-x^2}$, so~\eqref{eq:emx2} holds for $k=1$. Suppose~\eqref{eq:emx2} holds for $k$, we prove the inductive step as follows:
             \begin{align*}
               \partial_x^{k+2} e^{-x^2} \!
               =\partial_x\left(\partial_x^{k+1} e^{-x^2}\right)
               \overset{\eqref{eq:emx2}}{=}\partial_x\left(-2x\partial_x^{k} e^{-x^2}\!-2k\partial_x^{k-1} e^{-x^2}\right)
               =-2x\partial_x^{k+1} e^{-x^2}\!-2(k+1)\partial_x^{k} e^{-x^2}.
             \end{align*}
             Similarly, observe that~\eqref{eq:emx2n} holds for $k=0$ and $k=1$. Suppose~\eqref{eq:emx2n} holds for $k$, then we show the induction step as follows:
             \begin{align*}
               \left|\partial_x^{k+1} e^{-x^2}\right|
               \kern+1.2mm&\kern-1.2mm\overset{\eqref{eq:emx2}}{=}\left|-2x\partial_x^{k} e^{-x^2}-2k\partial_x^{k-1} e^{-x^2}\right|\\
                          &\leq\left|2x\partial_x^{k} e^{-x^2}\right|+\left|2k\partial_x^{k-1} e^{-x^2}\right|\\
               \kern+1.2mm&\kern-1.2mm\overset{\eqref{eq:emx2n}}{\leq}2|x|(2|x|+2k)^k e^{-x^2}+2k(2|x|+2(k-1))^{k-1} e^{-x^2}\\
                          &\leq (2|x|+2(k+1))^{k+1} e^{-x^2}.
             \end{align*}
             Finally, using the previous two statements we can prove~\eqref{eq:emx2Taylor} by the following calculation:
             \begin{align*}
               \sum_{k=0}^{\infty}\frac{\left|\partial_x^{k} e^{-x^2}\right|}{k!}\left(\frac{1}{8e}\right)^k
               \kern+1.2mm&\kern-1.2mm\overset{\eqref{eq:emx2n}}{\leq} \sum_{k=0}^{\infty}\frac{(2|x|+2k)^k e^{-x^2}}{k!}\left(\frac{1}{8e}\right)^k\\
                          &\leq \sum_{k=0}^{\infty}\frac{(4|x|)^k e^{-x^2}}{k!}\left(\frac{1}{8e}\right)^k
                            +\sum_{k=1}^{\infty}\frac{(4k)^k e^{-x^2}}{k!}\left(\frac{1}{8e}\right)^k\\
                          &\leq e^{-x^2}\left(\sum_{k=0}^{\infty}\frac{1}{k!}\left(\frac{4|x|}{8e}\right)^k
                            +\sum_{k=1}^{\infty}\frac{1}{k!}\left(\frac{4k}{8e}\right)^k\right)\\
                          &\leq e^{-x^2}\left(e^{\frac{|x|}{2e}}
                            +\sum_{k=1}^{\infty}\frac{1}{\sqrt{2\pi}}\left(\frac{e}{k}\right)^k\left(\frac{k}{2e}\right)^k\right)\\
                          &=	e^{-\left(|x|-\frac{1}{4e}\right)^2}e^{\left(\frac{1}{4e}\right)^2} + \frac{e^{-x^2}}{\sqrt{2\pi}}\\
                          &\leq e^{\left(\frac{1}{4e}\right)^2} + \frac{1}{\sqrt{2\pi}}\\
                          &\leq 2.
             \end{align*}
           \end{proof}

           \begin{lemma}\label{lemma:squareGibbs}
             Suppose we know a $K>1$ such that $\nrm{H}\leq K$. If $\eps\in(0,1/3)$,
             then we can implement a unitary $\tilde{U}$ such that $\nrm{(\bra{0}\otimes I)\tilde{U}(\ket{0}\otimes I)-e^{-H^2}/2}\leq\eps$ using   $$\bigO\left(Kd\log\left(\frac{K}{\eps}\right)\log\left(\frac{1}{\eps}\right)\right)\text{ queries and }\bigO\left(Kd\log\left(\frac{K}{\eps}\right)\log\left(\frac{1}{\eps}\right)\left[\log(n)+\log^{\frac{5}{2}}\left(\frac{K}{\eps}\right)\right]\right)\text{ gates.}$$
           \end{lemma}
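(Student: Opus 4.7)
The plan is to apply Corollary~\ref{cor:patched} directly to the function $f(x) = e^{-x^2}/2$, using Lemma~\ref{lemma:emx2} to bound the Taylor coefficients in small neighborhoods of every point of the spectrum of $H$.

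First I would fix constants $r := 1/(16e)$, $\delta := 1/(16e)$, and $\theta := r/4 = 1/(64e)$, so that $r+\delta = 1/(8e)$ and the relations $\theta \le r/4$, $\theta = \Omega(\delta)$, $r = \bigO(K)$ (using $K>1$) required by Corollary~\ref{cor:patched} are all met. Then for every $x_\ell \in \mathbb{R}$ the Taylor series of $f$ around $x_\ell$ has coefficients $a_k^{(\ell)} = \partial_x^k e^{-x^2}\big|_{x_\ell}/(2\cdot k!)$, and inequality~\eqref{eq:emx2Taylor} gives
\[
 \sum_{k=0}^{\infty} (r+\delta)^k |a_k^{(\ell)}|
 = \frac{1}{2}\sum_{k=0}^{\infty}\frac{|\partial_x^k e^{-x^2}|_{x_\ell}|}{k!}\Big(\frac{1}{8e}\Big)^k
 \le 1,
\]
so we may take $B := 1$. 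This also confirms that the Taylor series at $x_\ell$ is absolutely convergent on $[-r,r]$, so $f(x_\ell + x) = \sum_{k\geq 0} a_k^{(\ell)} x^k$ there.

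Next I would choose the centers $x_\ell$ to be an evenly spaced grid of spacing $2(r-2\theta) = 1/(16e)$ covering the interval $[-K,K]$, giving $L = \bigO(K)$ patches. Since $\nrm{H}\le K$, the spectrum of $H$ lies inside $\bigcup_\ell [x_\ell - (r-2\theta), x_\ell + (r-2\theta)]$, which is contained in the domain required by Corollary~\ref{cor:patched}. Plugging all of this in, the corollary yields a unitary $\tilde U$ with $\nrm{(\bra{0}\otimes I)\tilde U(\ket{0}\otimes I) - f(H)/B} = \nrm{(\bra{0}\otimes I)\tilde U(\ket{0}\otimes I) - e^{-H^2}/2}\le \eps$. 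Since all of $r$, $\delta$, and $\theta$ are constants, the query complexity reduces to $\bigO(Kd\log(K/\eps)\log(1/\eps))$ and the gate complexity reduces to $\bigO(Kd\log(K/\eps)\log(1/\eps)[\log n + \log^{5/2}(K/\eps)])$, as claimed.

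The only real content is the verification that $B=1$ works, which is already done by Lemma~\ref{lemma:emx2}; everything else is just picking the constants in Corollary~\ref{cor:patched}. The mild obstacle to watch for is making sure the chosen spacing of the patches is consistent with the slightly non-symmetric interval $[x_\ell - (r-2\theta), x_\ell + (r+2\theta)]$ in Corollary~\ref{cor:patched}, but since our spacing $2(r-2\theta)$ is smaller than the minimum half-length $r-2\theta$ of the covering intervals, the covering is automatic.
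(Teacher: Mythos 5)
Your proof is correct and follows essentially the same route as the paper's: both apply Corollary~\ref{cor:patched} with constant-width patches covering $[-K,K]$ and use inequality~\eqref{eq:emx2Taylor} from Lemma~\ref{lemma:emx2} to certify the bound $B$; the only differences are the trivial choice of constants ($r=\delta=1/(16e)$, $\theta=1/(64e)$ versus the paper's $r=1/32$, $\delta=\theta=1/128$) and the normalization (you take $f=e^{-x^2}/2$ with $B=1$, the paper takes $f=e^{-x^2}$ with $B=2$, yielding the same operator $f(H)/B=e^{-H^2}/2$). All hypotheses of the corollary are verified correctly, so no gaps.
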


           \begin{proof}
             We apply Corollary~\ref{cor:patched} to the function $e^{-x^2}$. For this let $L_0:=\lceil 32K \rceil$, $L:=2L_0+1$ and let  $x_\ell:=(\ell-1-L_0)/32$ for all $\ell\in[L]$. We choose $r:=1/32$, $\delta:=\theta:=1/128$ and $B=2$ so that the conditions of Corollary~\ref{cor:patched} are satisfied, as shown by Lemma~\ref{lemma:emx2}. Indeed $r+\delta\leq1/(8e)$, hence for all $\ell\in[L]$ we have $\sum_{k=0}^{\infty}a_k^{(\ell)}(r+\delta)^k\leq 2=B$ as we can see by~\eqref{eq:emx2Taylor}. Since $\delta=\Theta(1)$, Corollary~\ref{cor:patched} provides the desired complexity.
           \end{proof}

           We can use the above lemma to prepare Gibbs states in a similar way to Lemma~\ref{lemma:preGibbs}.
           In case we have access to $\sqrt{H}$, the advantage of this method is that the dependence on $\nrm{H}$ is reduced~to~$\sqrt{\nrm{H}}$.

           \paragraph{Improved HHL algorithm.} Our techniques can also be applied to the Harrow-Hassidim-Lloyd (HHL) algorithm~\cite{harrow2009QLinSysSolver} in order to gain improvements in a similar manner to Childs et al.~\cite{childs2015QLinSysExpPrec}. The problem the HHL algorithm solves is the following. Suppose we have a circuit $U$ preparing a quantum state~$\ket{b}$ (say, starting from $\ket{0}$), and have $d$-sparse oracle access to a non-singular Hamiltonian $H$. The task is to prepare a quantum state, that $\eps$-approximates $H^{-1}\ket{b}/\nrm{H^{-1}\ket{b}}$. For simplicity here we only count the number of uses of $U$ and the number of queries to $H$. Childs et al.~\cite{childs2015QLinSysExpPrec} present two different methods for achieving this, one based on Hamiltonian simulation, and another directly based on quantum walks. Under the conditions $\nrm{H}\leq 1$ and $\nrm{H^{-1}}\leq\kappa$, the former makes $\bigO\left(\kappa\sqrt{\log(\kappa/\eps)}\right)$ uses of $U$ and has query complexity $\bigO(d\kappa^2\log^{2.5}(\kappa/\eps))$. The latter makes $\bigO\left(\kappa\log(d\kappa/\eps)\right)$ uses of $U$ and has query complexity $\bigO(d\kappa^2\log^{2}(d\kappa/\eps))$.

           Now we provide a sketch of how to solve the HHL problem with $\bigO(\kappa)$ uses  of $U$ and with query complexity $\bigO(d\kappa^2\log^{2}(\kappa/\eps))$ using our techniques. The improvement on both previously mentioned results is not very large, but our proof is significantly shorter thanks to our general Theorem~\ref{thm:Taylor} and Corollary~\ref{cor:patched}.

           To solve the HHL problem we need to implement the function  $H^{-1}$, i.e., apply the function $f(x)=1/x$ to~$H$. Due to the constraints on $H$, the eigenvalues of $H$ lie in the union of the intervals $[-1,-1/\kappa]$ and $[1/\kappa,1]$. We first assume that the eigenvalues actually lie in $[1/\kappa,1]$. In this case we can easily implement the function $1/x$ by Theorem~\ref{thm:Taylor} using the Taylor series around $1$:
           \begin{equation}\label{eq:1oz}
             (1+z)^{-1}=\frac{1}{1+z}=\sum_{k=0}^{\infty}(-1)^kz^k.
           \end{equation}
           As $H^{-1}=(I+(H-I))^{-1}$, we are interested in the eigenvalues of $H-I$. The eigenvalues of $H-I$ lie in the interval $[-1+1/\kappa,0]$, so we choose $r:=1-1/\kappa$ and $\delta:=1/(2\kappa)$. By substituting $z:=-1+1/(2\kappa)$ in~\eqref{eq:1oz}, we can see that $B:=2\kappa$ satisfies the conditions of Theorem~\ref{thm:Taylor}. Let $\eps'\in(0,1/2)$, then Theorem~\ref{thm:Taylor} provides an $\bigO\left(d\kappa \log\left(\kappa/\eps'\right)\log\left(1/\eps'\right)\right)$-query implementation of an $\eps'$-approximation of the operator $H^{-1}/(2\kappa)$, since $\nrm{H}\leq 1$. We can proceed similarly when the eigenvalues of $H-I$ lie in the interval $[-1,-1/\kappa]$, and we can combine the two cases using Corollary~\ref{cor:patched}.

           Setting $\eps':=c\eps/\kappa$ for an appropriate constant $c$ and using amplitude amplification, we can prepare an $\eps$-approximation of the state $H^{-1}\ket{b}/\nrm{H^{-1}\ket{b}}$ as required by HHL using $\bigO(\kappa)$ amplitude amplification steps. Therefore we use $U$ at most $\bigO(\kappa)$ times and make $\bigO\left(d\kappa^2 \log^2\left(\kappa/\eps\right)\right)$ queries.

           \section{Generalized minimum-finding algorithm}\label{app:genMinFind}
           In this appendix we describe our generalized quantum minimum-finding algorithm, which we are going to apply to finding an approximation of the ground state energy of a Hamiltonian. This algorithm generalizes the results of D{\"u}rr and H{\o}yer~\cite{durr1996QMinimumFinding} in a manner similar to the way amplitude amplification~\cite{brassard2002AmpAndEst} generalizes Grover search: we do not need to assume the ability to query individual elements of the search space, we just need to be able to generate a superposition over the search space. The algorithm also has the benefit over binary search that it removes a logarithmic factor from the complexity.

           The backbone of our analysis will be the meta-algorithm below from~\cite{durr1996QMinimumFinding}. The meta-algorithm finds the minimal element in the range of the random variable $X$ by sampling, where by ``range'' we mean the values which occur with non-zero probability. We assume $X$ has finite range.

           \begin{metaalgorithm}[H]
             \begin{description}
             \item[Input] A discrete random variable $X$ with finite range.
             \item[Output] The minimal value $x_{\min}$ in the range of $X$.
             \end{description}
             \begin{algorithmic}
               \State \textbf{Init} $t\leftarrow 0$; $s_0\leftarrow \infty$
               \State \textbf{Repeat} until $s_t$ is minimal in the range of $X$
               \begin{enumerate}
               \item $t\leftarrow t+1$
               \item Sample a value $s_t$ according to the conditional distribution $\mathrm{Pr}(X=s_t \mid X<s_{t-1})$.
               \end{enumerate}
             \end{algorithmic}
             \caption{Minimum-finding}
             \label{alg:metaMin}
           \end{metaalgorithm}

           Note that the above algorithm will always find the minimum, since the obtained samples are strictly decreasing.
           \begin{lemma}\label{lemma:metaMin}
             Let $X$ be a finite discrete random variable whose range of values is $x_1<x_2<\ldots<x_N$.
             Let $S(X)=\{s_1,s_2,\dots\}$ denote the random set of values obtained via sampling during a run of Meta-Algorithm~\ref{alg:metaMin} with input random variable $X$. If $k\in[N]$, then
             $$\mathrm{\mathrm{Pr}}(x_k\in S(X))=\frac{\mathrm{\mathrm{Pr}}(X=x_k)}{\mathrm{\mathrm{Pr}}(X\leq x_k)}.$$
           \end{lemma}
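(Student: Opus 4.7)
The plan is to induct on $N$, the size of the range of $X$, and to condition on the value of the first sample $s_1$. Write $p_j := \Pr(X = x_j)$. The base case $N=1$ is immediate: $S(X)=\{x_1\}$ with certainty, and the formula reduces to $p_1/p_1 = 1$. For the inductive step with $k \in [N]$, observe that because the sampled thresholds $s_1 > s_2 > \cdots$ are strictly decreasing, $x_k$ enters $S(X)$ in exactly two scenarios: either $s_1 = x_k$ (contributing probability $p_k$), or $s_1 = x_\ell$ for some $\ell > k$ and $x_k$ appears among the later samples. Any outcome $s_1 = x_j$ with $j < k$ precludes $x_k \in S(X)$, since all later samples are then $< x_j \leq x_{k-1}$.

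The key observation driving the induction is that conditional on $s_1 = x_\ell$, the tail $s_2, s_3, \ldots$ is itself a run of Meta-Algorithm~\ref{alg:metaMin} on the random variable $X' := X \mid X < x_\ell$, whose range is $\{x_1, \ldots, x_{\ell-1}\}$ and therefore has size $\ell - 1 < N$. Since $\ell > k$, the value $x_k$ still lies in the range of $X'$, so the inductive hypothesis gives
\[
\Pr(x_k \in S(X) \mid s_1 = x_\ell) \;=\; \frac{\Pr(X' = x_k)}{\Pr(X' \leq x_k)} \;=\; \frac{p_k}{\Pr(X \leq x_k)},
\]
the last equality because the common normalizing factor $1/\Pr(X < x_\ell)$ coming from the conditioning cancels in numerator and denominator (using $x_k < x_\ell$). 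Notably the conditional probability is independent of $\ell$, so summing the two scenarios yields $\Pr(x_k \in S(X)) = p_k + \Pr(X > x_k) \cdot p_k/\Pr(X \leq x_k)$, which collapses via $\Pr(X \leq x_k) + \Pr(X > x_k) = 1$ to the claimed ratio $p_k/\Pr(X \leq x_k)$.

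The only (mild) obstacle is formalizing the Markov-property step used above: that conditional on $s_1 = x_\ell$, the subsequent samples are distributed as a fresh run of the meta-algorithm on $X \mid X < x_\ell$. This is immediate from the update rule $s_{t+1} \sim (X \mid X < s_t)$ built into the algorithm, so the induction closes without further technicality. As an alternative presentation one could avoid induction entirely by letting $\tau$ be the first step with $s_\tau \leq x_k$ and observing directly that (since every earlier threshold exceeds $x_k$) $s_\tau$ is distributed as $X \mid X \leq x_k$, with $x_k \in S(X)$ iff $s_\tau = x_k$; this yields the same formula in one line.
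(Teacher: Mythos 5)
Your proof is correct, and your primary argument (induction on the size of the range, conditioning on the first sample $s_1$) is a genuinely different organization from the paper's. The paper instead fixes a time index $t$, computes $\Pr\bigl(s_t=x_k \mid s_t\leq x_k \wedge s_{t-1}>x_k\bigr)=\Pr(X=x_k)/\Pr(X\leq x_k)$ by expanding over all possible values of $s_{t-1}$, and then sums over $t$ using the fact that the events ``$t$ is the first time $s_t\leq x_k$'' partition the probability space; your closing remark about the stopping time $\tau$ is essentially a compressed version of that argument. Both routes rest on the same Markov property of the update rule $s_{t+1}\sim(X\mid X<s_t)$, which you correctly flag as the one step needing formalization. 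What your induction buys is that the self-similarity of the algorithm (the tail after $s_1=x_\ell$ is a fresh run on $X\mid X<x_\ell$) is invoked exactly once and the rest is a two-line telescoping identity $\Pr(X\leq x_k)+\Pr(X>x_k)=1$; what the paper's version buys is that it avoids induction entirely and, by working with explicit conditional probabilities, makes it easy to handle degenerate $0/0$ cases with a stated convention. Your cancellation step $\Pr(X'=x_k)/\Pr(X'\leq x_k)=p_k/\Pr(X\leq x_k)$ is valid precisely because $x_k<x_\ell$, as you note, and the base case and the exclusion of $s_1=x_j$ for $j<k$ are both handled correctly.
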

           \begin{proof}
             The intuition of the proof is to show that, whenever $\mathrm{Pr}(s_{t-1}>x_k)>0$, we have
             \begin{equation}\label{eq:conditional}
               \mathrm{Pr}(s_t=x_k \mid t\in[N]\text{ is the first time such that } s_t\leq x_k)=\frac{\mathrm{Pr}(X=x_k)}{\mathrm{Pr}(X\leq x_k)}.
             \end{equation}
             To formulate the statement more precisely\footnote{Throughout this proof, whenever a fraction is $0/0$, we simply interpret it as $0$. Therefore we also interpret conditional probabilities, conditioned on events that happen with probability $0$, as $0$.} we consider a fixed value $t\in[N]$. For notational convenience let $x:=x_k$, $x_{N+1}:=\infty$, we prove~\eqref{eq:conditional} by:
             \begin{align*}
               \mathrm{Pr}\left(s_t=x \mid s_t\leq x \kern-0.2mm \wedge \kern-0.2mm s_{t-1}>x\right)
               &=\frac{\mathrm{Pr}\left(s_t=x\right)}{\mathrm{Pr}\left(s_t\leq x \wedge s_{t-1}>x\right)}\\
               &=\sum_{x_\ell>x}\frac{\mathrm{Pr}\left(s_t=x\wedge s_{t-1}=x_\ell\right)}{\mathrm{Pr}\left(s_t\leq x \wedge s_{t-1}>x\right)}\\
               &=\sum_{x_\ell>x}\frac{\mathrm{Pr}\left(s_t=x\wedge s_{t-1}=x_\ell\right)}{\mathrm{Pr}\left(s_t\leq x \wedge s_{t-1}=x_\ell\right)}\frac{\mathrm{Pr}\left(s_t\leq x\wedge s_{t-1}=x_\ell\right)}{\mathrm{Pr}\left(s_t\leq x \wedge s_{t-1}>x\right)}\\
               &=\sum_{x_\ell>x}\frac{\mathrm{Pr}\left(s_t=x\mid s_{t-1}=x_\ell\right)\mathrm{Pr}\left(s_{t-1}=x_\ell\right)}{\mathrm{Pr}\left(s_t\leq x \mid s_{t-1}=x_\ell\right)\mathrm{Pr}\left(s_{t-1}=x_\ell\right)}\frac{\mathrm{Pr}\left(s_t\leq x\wedge s_{t-1}\!=\kern-0.2mm x_\ell\right)}{\mathrm{Pr}\left(s_t\leq x \wedge s_{t-1}>x\right)}\\
               &=\sum_{x_\ell>x}\frac{\mathrm{Pr}\left(X=x\right)}{\mathrm{Pr}\left(X\leq x\right)}\frac{\mathrm{Pr}\left(s_{t-1}=x_\ell\right)}{\mathrm{Pr}\left(s_{t-1}=x_\ell\right)}\frac{\mathrm{Pr}\left(s_t\leq x\wedge s_{t-1}=x_\ell\right)}{\mathrm{Pr}\left(s_t\leq x \wedge s_{t-1}>x\right)}\\
               &=\frac{\mathrm{Pr}\left(X=x\right)}{\mathrm{Pr}\left(X\leq x\right)}\frac{\mathrm{Pr}\left(s_{t-1}>x\right)}{\mathrm{Pr}\left(s_{t-1}>x\right)}.
             \end{align*}
             This is enough to conclude the proof, since there is always a smallest $t\in[N]$ such that $s_t\leq x$, as the algorithm always finds the minimum in at most $N$ steps. So we can finish the proof by
             \begin{align*}
               \mathrm{Pr}\left(x\in S(X)\right)&=\sum_{t=1}^{N}\mathrm{Pr}\left(s_t=x \right)\\
                                                &=\sum_{t=1}^{N}\mathrm{Pr}\left(s_t=x \mid s_t\leq x \wedge s_{t-1}>x\right)\mathrm{Pr}\left(s_t\leq x \wedge s_{t-1}>x\right)\\
                                                &=\frac{\mathrm{Pr}\left(X=x\right)}{\mathrm{Pr}\left(X\leq x\right)}\sum_{t=1}^{N}\mathrm{Pr}\left(s_t\leq x \wedge s_{t-1}>x\right)\\
                                                &=\frac{\mathrm{Pr}\left(X=x\right)}{\mathrm{Pr}\left(X\leq x\right)}\mathrm{Pr}\left(\exists t\in [N]\!:\,s_t\leq x \wedge s_{t-1}>x\right)\\
                                                &=\frac{\mathrm{Pr}\left(X=x\right)}{\mathrm{Pr}\left(X\leq x\right)}.
             \end{align*}
             \vskip-5mm
           \end{proof}

           Now we describe our generalized minimum-finding algorithm which is based on Meta-Algorithm~\ref{alg:metaMin}. We take some unitary $U$, and replace $X$ by the distribution obtained if we measured the second register of $U\ket{0}$. We implement conditional sampling via amplitude amplification and the use of the exponential search algorithm of Boyer et al.~\cite{boyer1998TightBoundsOnQuantumSearching}. If a unitary prepares the state $\ket{0}\ket{\phi}+\ket{1}\ket{\psi}$ where $\nrm{\phi}^2+\nrm{\psi}^2=1$, then this exponential search algorithm built on top of amplitude amplification prepares the state $\ket{1}\ket{\psi}$ probabilistically using an expected number of $\mathcal{O}\left(1/\nrm{\psi}\right)$ applications of $U$ and~$U^{-1}$ (we will skip the details here, which are straightforward modifications of~\cite{boyer1998TightBoundsOnQuantumSearching}).

           \begin{algorithm}[H]
             \begin{description}
             \item[Input] A number $M$ and a unitary $U$, acting on $q$ qubits, such that $U\ket{0}=\sum_{k=1}^{N}\ket{\psi_k}\ket{x_k}$, where $x_k$ is a binary string representing some number and $\ket{\psi_k}$ is an unnormalized quantum state on the first register. Let $x_1<x_2<\ldots <x_N$ and define $X$ to be the random variable with $\mathrm{Pr}(X=x_k)=\nrm{\psi_k}^2$.
             \item[Output] Some $\ket{\psi_k}\ket{x_{k}}$ for a (hopefully) small $k$.
             \end{description}
             \begin{algorithmic}
               \State \textbf{Init} $t\leftarrow 0$; $s_0\leftarrow \infty$
               \State \textbf{While} the total number of applications of $U$ and $U^{-1}$ does not exceed $M$:
               \begin{enumerate}
               \item $t\leftarrow t+1$
               \item Use the exponential search algorithm with amplitude amplification on states such that $x_k< s_{t-1}$ to obtain a sample $\ket{\psi_k}\ket{x_k}$.
               \item $s_t\leftarrow x_k$
               \end{enumerate}
             \end{algorithmic}
             \caption{Generalized minimum-finding}
             \label{alg:genMin}
           \end{algorithm}

           \begin{lemma}\label{lemma:genMinExp}
             There exists $C\in\mathbb{R_+}$, such that if we run Algorithm~\ref{alg:genMin} indefinitely (setting $M=\infty$), then for every $U$ and $x_k$ the expected number of uses of $U$ and $\kern0.2mm U^{-1}$ before obtaining a sample $x\leq x_k$ is at most $\frac{C}{\sqrt{\mathrm{Pr}(X\leq x_k)}}$.
           \end{lemma}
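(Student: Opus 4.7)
The plan is to couple the quantum Algorithm~\ref{alg:genMin} with the classical Meta-Algorithm~\ref{alg:metaMin} applied to the random variable $X$. Each amplification step on the subspace $\{x<s_{t-1}\}$ of $U\ket{0}$, carried out by the Boyer et al.\ exponential search, produces a sample from the conditional distribution $\mathrm{Pr}(X=\cdot\mid X<s_{t-1})$ using an expected number $C'/\sqrt{\mathrm{Pr}(X<s_{t-1})}$ of calls to $U$ and~$U^{-1}$, for some absolute constant $C'$. It follows that the sequence of thresholds $s_1,s_2,\ldots$ produced by Algorithm~\ref{alg:genMin} has the same joint distribution as the one produced by Meta-Algorithm~\ref{alg:metaMin}, so Lemma~\ref{lemma:metaMin} governs which values appear as thresholds during a run.

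Writing $p_j:=\mathrm{Pr}(X=x_j)$ and $P_j:=\mathrm{Pr}(X\leq x_j)$, I then decompose the expected total cost until the first $s_t\leq x_k$ is drawn. The initial step with $s_0=\infty$ costs $\mathcal{O}(1)$ in expectation. For each subsequent step $s_{t-1}$ strictly exceeds $x_k$, so $s_{t-1}=x_j$ for some $j>k$, contributing expected cost $C'/\sqrt{P_{j-1}}$; by Lemma~\ref{lemma:metaMin} this particular $x_j$ occurs as a threshold with probability $p_j/P_j$. Taking expectations gives
\[
  E[\text{total cost}] \;\leq\; \mathcal{O}(1) \,+\, C' \sum_{j>k} \frac{p_j}{P_j \sqrt{P_{j-1}}}.
\]
Everything so far is a routine assembly of the coupling with the per-step cost bound; the only non-mechanical step is to bound this sum by $\mathcal{O}(1/\sqrt{P_k})$.

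For that final step I would use the identity
\[
  \frac{1}{\sqrt{P_{j-1}}} - \frac{1}{\sqrt{P_j}} \;=\; \frac{p_j}{\sqrt{P_j P_{j-1}}\,\bigl(\sqrt{P_j}+\sqrt{P_{j-1}}\bigr)} \;\geq\; \frac{p_j}{2\, P_j \sqrt{P_{j-1}}},
\]
where the inequality uses $\sqrt{P_{j-1}}\leq\sqrt{P_j}$. Multiplying by $2$ and summing from $j=k+1$ to $N$ telescopes to $2/\sqrt{P_k}-2/\sqrt{P_N}\leq 2/\sqrt{\mathrm{Pr}(X\leq x_k)}$, yielding the desired constant~$C$. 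Since $k\geq 1$, every $P_{j-1}$ in the sum satisfies $P_{j-1}\geq p_1>0$, so no division-by-zero issues arise. Finding this telescoping inequality is the only non-routine part of the argument; the main conceptual obstacle is simply to resist bounding the sum term-by-term and instead recognize that the $p_j$ in the numerator matches the increment of $-1/\sqrt{P_j}$ almost exactly.
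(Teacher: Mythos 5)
Your proof is correct, and its skeleton --- coupling Algorithm~\ref{alg:genMin} with Meta-Algorithm~\ref{alg:metaMin}, invoking Lemma~\ref{lemma:metaMin} for the probability $\mathrm{Pr}(x_j\in S(X))=p_j/P_j$ that $x_j$ ever becomes a threshold, and charging expected cost $\bigO(1/\sqrt{P_{j-1}})$ for the exponential-search step launched from threshold $x_j$ --- is exactly the paper's decomposition, arriving at the same sum $\sum_{j>k}p_j/(P_j\sqrt{P_{j-1}})$. Where you genuinely diverge is in bounding that sum. The paper treats it as a Riemann-type sum for $\int_{P_k}^{1}z^{-3/2}\,dz$: it first proves that splitting any mass $p_j$ into two halves can only increase the sum, iterates this until all masses are below an arbitrary $\delta$, compares the refined sum to the integral up to a factor $(1+\delta/P_k)$, and finally lets $\delta\to 0$. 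Your telescoping inequality
\[
  \frac{1}{\sqrt{P_{j-1}}}-\frac{1}{\sqrt{P_j}}
  =\frac{p_j}{\sqrt{P_jP_{j-1}}\bigl(\sqrt{P_j}+\sqrt{P_{j-1}}\bigr)}
  \geq\frac{p_j}{2P_j\sqrt{P_{j-1}}}
\]
reaches the same bound $2/\sqrt{P_k}$ in two lines, with no subdivision machinery and no limit; it is in effect the discrete increment of the same antiderivative $-2z^{-1/2}$ that the paper integrates, so both arguments exploit the same cancellation, but yours is considerably more economical and is the version I would keep. Two cosmetic remarks: the constant $C$ must also absorb the constant from the Boyer et al.\ subroutine and the $\bigO(1)$ cost of the initial unconditioned sample (which you handle explicitly, and which the paper silently drops from its sum), and the step where the per-round cost is multiplied by $\mathrm{Pr}(x_j\in S(X))$ tacitly uses that, conditioned on $x_j$ being the current threshold, the expected cost of the next exponential search depends only on the conditional distribution $X_{<x_j}$ and not on the history --- a point on which the paper is equally terse.
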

           \begin{proof}
             Let $X_{<x_\ell}$ denote the random variable for which $\mathrm{Pr}(X_{<x_\ell}=x)=\mathrm{Pr}(X=x\mid X<x_\ell)$. The~expected number of uses of $U^{\pm 1}$ in Algorithm~\ref{alg:genMin} before obtaining a value $x\leq x_k$ is
             \begin{align*}
               \mathbb{E}\left[\#\text{uses of }U^{\pm 1}\text{ for finding }x\!\leq\! x_k\right]\!
               &=\sum_{i=0}^{N-1}\mathbb{E}\left[\#\text{uses of }U^{\pm 1}\text{ in the }i\text{-th round before }x\!\leq\! x_k\right]\\
               &=\!1\!+\!\sum_{i=1}^{N\kern-0.3mm-\kern-0.3mm1}\kern-4mm\sum_{\kern4mm\ell=k+1}^N \!\!\!\!\!\!\mathrm{Pr}(s_i\!=\!x_\ell)\mathbb{E}\left[\#\text{uses of }U^{\pm 1}\text{ for sampling }X_{< x_\ell}\right]\\               
               &=\!1\!+\!\!\!\!\sum_{\ell=k+1}^{N}\!\!\mathrm{Pr}(x_\ell\!\in\! S(X))\,\mathbb{E}\left[\#\text{uses of }U^{\pm 1}\text{ for sampling }X_{< x_\ell}\right]\\
                &=\!1\!+\!\!\!\sum_{\ell=k+1}^{N}\frac{\mathrm{Pr}(X=x_\ell)}{\mathrm{Pr}(X\leq x_\ell)}\bigO\left(\frac{1}{\sqrt{\mathrm{Pr}(X< x_\ell)}}\right)\\
                 &=\bigO\left(\sum_{\ell=k+1}^{N}\frac{\mathrm{Pr}(X=x_\ell)}{\mathrm{Pr}(X\leq x_\ell)}\frac{1}{\sqrt{\mathrm{Pr}(X< x_\ell)}}\right)\\
                 &=\bigO\left(\frac{1}{\sqrt{\mathrm{Pr}(X\leq x_k)}}\right),
             \end{align*}
             where the last equality follows from Equation~\eqref{eq:intApx} below. The constant $C$ from the lemma is the constant hidden by the $\bigO$. The remainder of this proof consists of proving~\eqref{eq:intApx} using elementary calculus. Let us introduce the notation $p_0:=\mathrm{Pr}(X\leq x_k)$ and for all $j\in[N-k]$ let $p_j:=\mathrm{Pr}(X=x_{k+j})$.
             Then the expression inside the $\bigO$ on the second-to-last line above becomes
             \begin{equation}\label{eq:pSimplified}
               \sum_{\ell=k+1}^{N}\frac{\mathrm{Pr}(X=x_\ell)}{\mathrm{Pr}(X\leq x_\ell)}\sqrt{\frac{1}{\mathrm{Pr}(X< x_\ell)}}
               =\sum_{j=1}^{N-k}\frac{p_j}{\sum_{i=0}^{j}p_i}\sqrt{\frac{1}{\sum_{i=0}^{j-1}p_i}}.
             \end{equation}
             The basic idea is that we treat the expression on the right-hand side of~\eqref{eq:pSimplified} as an integral approximation sum for the integral $\int_{p_0}^{1}z^{-3/2}dz$, and show that it is actually always less than the value of this integral. We proceed by showing that subdivision always increases the sum.

             Let us fix some $\ell\in[N-k]$ and define
             \[
               p'_i =
               \begin{cases}
                 p_i &\text{for } i\in\{0,1,\ldots, \ell-1\}\\
                 p_i/2 & \text{for } i\in\{\ell, \ell+1\}\\
                 p_{i-1} & \text{for } i\in\{\ell+2,\ldots,N-k+1\}
               \end{cases}
             \]
             and observe that
             \begin{align}
               &\sum_{j=1}^{N-k+1}\frac{p'_j}{\sum_{i=0}^{j}p'_i}\sqrt{\frac{1}{\sum_{i=0}^{j-1}p'_i}}
                 -\sum_{j=1}^{N-k}\frac{p^{\phantom{'}}_{\kern-0.2mmj}}{\sum_{i=0}^{j}p_i}\sqrt{\frac{1}{\sum_{i=0}^{j-1}p_i}}\nonumber\\
               =&\frac{p'_\ell}{\sum_{i=0}^{\ell}p'_i}\sqrt{\frac{1}{\sum_{i=0}^{\ell-1}p'_i}}+\frac{p'_{\ell+1}}{\sum_{i=0}^{\ell+1}p'_i}\sqrt{\frac{1}{\sum_{i=0}^{\ell}p'_i}}-\frac{p_\ell}{\sum_{i=0}^{\ell}p_i}\sqrt{\frac{1}{\sum_{i=0}^{\ell-1}p_i}}\nonumber\\
               =&\frac{p_\ell/2}{p_\ell/2+\sum_{i=0}^{\ell-1}p_i}\sqrt{\frac{1}{\sum_{i=0}^{\ell-1}p_i}}+\frac{p_\ell/2}{p_\ell+\sum_{i=0}^{\ell-1}p_i}\sqrt{\frac{1}{p_\ell/2+\sum_{i=0}^{\ell-1}p_i}}-\frac{p_\ell}{p_\ell+\sum_{i=0}^{\ell-1}p_i}\sqrt{\frac{1}{\sum_{i=0}^{\ell-1}p_i}}.\label{eq:abDiff}
             \end{align}
             We show that~\eqref{eq:abDiff} is $\geq 0$ after simplifying the expression by substituting $a:=\sum_{i=0}^{\ell-1}p_i$ and $b:=p_\ell/2$:
             \begin{align}
               \frac{b}{a+b}\sqrt{\frac{1}{a}}+\frac{b}{a+2b}\sqrt{\frac{1}{a+b}}-\frac{2b}{a+2b}\sqrt{\frac{1}{a}}=\frac{\left(a+b-\sqrt{a} \sqrt{a+b}\right)b }{(a+b)^{3/2} (a+2 b)}\geq 0.
             \end{align}
             Let us fix some parameter $\delta>0$. Recursively applying this halving procedure
             for different indices, we can find some $J\in\mathbb{N}$ and $\tilde{p}\in\mathbb{R}_+^{J+1}$ such that $\sum_{j=0}^{J}\tilde{p}_j=1$, $\tilde{p}_0=p_0$ and $\tilde{p}_j\leq \delta$ for all $j\in [J]$, moreover
             \begin{align*}
               &	\sum_{j=1}^{N-k}\frac{p_j}{\sum_{i=0}^{j}p_i}\sqrt{\frac{1}{\sum_{i=0}^{j-1}p_i}}
                 \leq \sum_{j=1}^{J}\frac{\tilde{p}_j}{\sum_{i=0}^{j}\tilde{p}_i}\sqrt{\frac{1}{\sum_{i=0}^{j-1}\tilde{p}_i}} .
             \end{align*}
             Observe that for all $j\in [J]$
             \begin{align}\label{eq:deltaMore}
               \sqrt{\frac{1}{\sum_{i=0}^{j-1}\tilde{p}_i}}
               =\sqrt{\frac{1}{\sum_{i=0}^{j}\tilde{p}_i}}\sqrt{\frac{\sum_{i=0}^{j}\tilde{p}_i}{\sum_{i=0}^{j-1}\tilde{p}_i}}
               =&\sqrt{\frac{1}{\sum_{i=0}^{j}\tilde{p}_i}}\sqrt{1+\frac{\tilde{p}_j}{\sum_{i=0}^{j-1}\tilde{p}_i}}\nonumber	\\
               \leq& \sqrt{\frac{1}{\sum_{i=0}^{j}\tilde{p}_i}}\sqrt{1+\frac{\delta}{\tilde{p}_0}}
                     \leq \sqrt{\frac{1}{\sum_{i=0}^{j}\tilde{p}_i}}\left(1+\frac{\delta}{p_0}\right).
             \end{align}
             Therefore
             \begin{align*}
               \sum_{j=1}^{N-k}\frac{p_j}{\sum_{i=0}^{j}p_i}\sqrt{\frac{1}{\sum_{i=0}^{j-1}p_i}}
               \leq& \sum_{j=1}^{J}\frac{\tilde{p}_j}{\sum_{i=0}^{j}\tilde{p}_i}\sqrt{\frac{1}{\sum_{i=0}^{j-1}\tilde{p}_i}}\\
               \text{(by~\eqref{eq:deltaMore}) }\leq& \sum_{j=1}^{J}\frac{\tilde{p}_j}{\sum_{i=0}^{j}\tilde{p}_i}\sqrt{\frac{1}{\sum_{i=0}^{j}\tilde{p}_i}}\left(1+\frac{\delta}{p_0}\right)\\
               =& \left(1+\frac{\delta}{p_0}\right)\sum_{j=1}^{J}\frac{\tilde{p}_j}{\left(\sum_{i=0}^{j}\tilde{p}_i\right)^{3/2}}\\
               =& \left(1+\frac{\delta}{p_0}\right)\sum_{j=1}^{J}\int_{\sum_{i=0}^{j-1}\tilde{p}_i}^{\sum_{i=0}^{j}\tilde{p}_i}\frac{1}{\left(\sum_{i=0}^{j}\tilde{p}_i\right)^{3/2}} dz\\
               \leq& \left(1+\frac{\delta}{p_0}\right)\sum_{j=1}^{J}\int_{\sum_{i=0}^{j-1}\tilde{p}_i}^{\sum_{i=0}^{j}\tilde{p}_i}z^{-\frac{3}{2}} dz\\
               =& \left(1+\frac{\delta}{p_0}\right)\int_{p_0}^{1}z^{-\frac{3}{2}} dz\\
               =& \left(1+\frac{\delta}{p_0}\right)\left[-2 z^{-\frac{1}{2}}\right]_{p_0}^1 \\
               \leq& \left(1+\frac{\delta}{p_0}\right)\left(\frac{2}{\sqrt{p_0}}\right).
             \end{align*}
             Since this inequality holds for every $\delta>0$, we can conclude using~\eqref{eq:pSimplified} that
             \begin{align}\label{eq:intApx}
               \sum_{\ell=k+1}^{N}\frac{\mathrm{Pr}(X=x_\ell)}{\mathrm{Pr}(X\leq x_\ell)}\sqrt{\frac{1}{\mathrm{Pr}(X< x_\ell)}}
               \leq\frac{2}{\sqrt{p_0}}=\frac{2}{\sqrt{\mathrm{Pr}(X\leq x_k)}}.
             \end{align}
             \vskip-6mm
           \end{proof}

           It is not too hard to work out the constant by following the proof of~\cite{boyer1998TightBoundsOnQuantumSearching} providing something like $C\approx 25$. The following theorem works with any $C$ satisfying Lemma~\ref{lemma:genMinExp}.

           \begin{theorem}[Generalized Minimum-Finding]\label{thm:genMin}
             If we run Algorithm~\ref{alg:genMin} with input satisfying $M\geq 4C/\sqrt{\mathrm{Pr}(X\leq x)}$ for $C$ as in Lemma~\ref{lemma:genMinExp} and a unitary $U$ that acts on $q$ qubits, then at termination we obtain an $x_i$ from the range of $X$ that satisfies $x_i\leq x$ with probability at least $\frac{3}{4}$.
             Moreover the success probability can be boosted to at least $1-\delta$ with $\bigO(\log(1/\delta))$ repetitions. This uses at most $M$ applications of $U$ and $U^{-1}$ and $\bigO(qM)$ other gates.
           \end{theorem}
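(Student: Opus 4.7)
The plan is to reduce the claim to a direct application of Markov's inequality on the expected-cost bound from Lemma~\ref{lemma:genMinExp}. Concretely, let $T$ denote the random variable counting the total number of applications of $U$ and $U^{-1}$ made by Algorithm~\ref{alg:genMin} (if run indefinitely, i.e.\ with $M=\infty$) until the first time a sample $x_i \leq x$ is obtained. Lemma~\ref{lemma:genMinExp} yields $\mathbb{E}[T] \leq C/\sqrt{\mathrm{Pr}(X\leq x)}$. By Markov's inequality, for any $M\geq 4C/\sqrt{\mathrm{Pr}(X\leq x)}$,
\[
  \mathrm{Pr}(T > M) \;\leq\; \frac{\mathbb{E}[T]}{M} \;\leq\; \frac{C/\sqrt{\mathrm{Pr}(X\leq x)}}{4C/\sqrt{\mathrm{Pr}(X\leq x)}} \;=\; \frac{1}{4}.
\]
Since the samples $s_1 > s_2 > \cdots$ are strictly decreasing by construction, the event $\{T \leq M\}$ implies that at the moment the budget is exhausted the current sample $s_t$ satisfies $s_t \leq x$. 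Hence, the algorithm outputs some $x_i \leq x$ with probability at least $3/4$, which is the first claim.

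For the success-probability boosting, I would run $R = \bigO(\log(1/\delta))$ independent instances of Algorithm~\ref{alg:genMin} (each with the same budget $M$) and return the smallest of the $R$ outputs. Each instance independently fails to produce a value $\leq x$ with probability at most $1/4$, so by independence the overall failure probability is at most $(1/4)^R \leq \delta$ for a suitable $R=\bigO(\log(1/\delta))$. The total cost becomes $\bigO(M\log(1/\delta))$ applications of $U$ and $U^{-1}$; to match the statement as written, one can absorb the $\log(1/\delta)$ factor into $M$ or simply state that each single run uses at most $M$ applications by the stopping rule.

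For the gate count, each call to the exponential-search subroutine does amplitude amplification using $U$, $U^{-1}$, reflections about $\ket{0}$, and a comparator circuit that checks whether the second register contains a value strictly less than the current threshold $s_{t-1}$. Both the reflection and the comparator on $q$-qubit registers use $\bigO(q)$ two-qubit gates, so each application of $U$ or $U^{-1}$ is accompanied by $\bigO(q)$ additional gates, giving the $\bigO(qM)$ gate bound after summing over all applications across all rounds.

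The main obstacle I anticipate is a technical subtlety with the exponential-search subroutine. Lemma~\ref{lemma:genMinExp} charges the expected cost of sampling from the conditional distribution $X_{<s_{t-1}}$ as $\bigO(1/\sqrt{\mathrm{Pr}(X<s_{t-1})})$, which is the standard guarantee of the Boyer--Brassard--H\o yer--Tapp routine when a marked state exists. One needs to verify that the cost analysis remains valid when the threshold $s_{t-1}$ is random and depends on prior rounds, and that prematurely terminating the subroutine at the global budget $M$ does not invalidate the application of Markov's inequality (the stopped process still satisfies $T \geq M \Rightarrow \text{no sample} \leq x$). Both points go through because the expected-cost bound of Lemma~\ref{lemma:genMinExp} was derived summing over all rounds unconditionally, and Markov applies to any nonnegative random variable regardless of the adaptive stopping structure.
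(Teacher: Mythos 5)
Your proposal is correct and follows essentially the same route as the paper's proof: Markov's inequality applied to the expected-cost bound of Lemma~\ref{lemma:genMinExp}, boosting by taking the minimum over $\bigO(\log(1/\delta))$ independent repetitions, and charging $\bigO(q)$ gates per amplitude-amplification step for the reflection and comparator. The additional remarks you make about the monotonicity of the samples and the validity of Markov under the adaptive stopping rule are sound and only make the argument more explicit than the paper's version.
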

           \begin{proof}
             Let $x_k$ be the largest value in the range of $X$ such that $x_k\leq x$. Then Lemma~\ref{lemma:genMinExp} says that the expected number of applications of $U$ and $U^{-1}$ before finding a value $x_i\leq x_k$ is at most $C/\sqrt{\mathrm{Pr}(X\leq x_k)}=C/\sqrt{\mathrm{Pr}(X\leq x)}$, therefore by the Markov inequality we know that the probability that we need to use $U$ and $U^{-1}$ at least $4C/\sqrt{\mathrm{Pr}(X\leq x)}$ times is at most $1/4$.
             The boosting of the success probability can be done using standard techniques, e.g., by repeating the whole procedure $\bigO(\log(1/\delta))$ times and taking the minimum of the outputs.

             The number of applications of $U$ and $U^{-1}$ follows directly form the algorithms description. Then, for the number of other gates, each amplitude amplification step needs to implement a binary comparison and a reflection through the $\ket{0}$ state, both of which can be constructed using $\bigO(q)$ elementary gates, giving a total of $\bigO(qM)$ gates.
           \end{proof}

           Note that this result is a generalization of D{\"u}rr and H{\o}yer~\cite{durr1996QMinimumFinding}: if we can create a uniform superposition over $N$ values $x_1<x_2<\ldots<x_N$, then $\mathrm{Pr}(X\leq x_1)=1/N$ and therefore Theorem~\ref{thm:genMin} guarantees that we can find the minimum with high probability with $\bigO(\sqrt{N})$ steps.

           Now we describe an application of this generalized search algorithm that we need in the paper.

           This final lemma in this appendix describes how to estimate the smallest eigenvalue of a Hamiltonian. A similar result was shown by Poulin and Wocjan~\cite{poulin2009PrepGndStateManyBody}, but we improve on the analysis to fit our framework better.
           We assume sparse oracle access to the Hamiltonian $H$ as described in Section~\ref{sec:upperbounds}, and will count queries to these oracles. We use some of the techniques introduced in Appendix~\ref{apx:LowWeight}.

           \begin{lemma}\label{lemma:normEst}
             If $H\!=\!\sum_{j=1}^{n}E_j\ketbra{\phi_j}{\phi_j}$, with eigenvalues $E_1\leq E_2 \leq \ldots \leq E_n$, is such that $\nrm{H}\leq K$, $\eps\leq K/2$, and $H$ is given in $d$-sparse oracle form, then we can obtain an estimate $E$ such that  $\left|E_1-E\right|\leq \eps$, with probability at least $2/3$, using
             $$
             \bigO\left(\frac{Kd\sqrt{n}}{\eps}\log^2\left(\frac{Kn}{\eps}\right)\right)\text{ queries and }\bigO\left(\frac{Kd\sqrt{n}}{\eps}\log^{\frac{9}{2}}\left(\frac{Kn}{\eps}\right)\right) \text{ gates}.
             $$
           \end{lemma}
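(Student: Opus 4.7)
The plan is to combine the generalized minimum-finding procedure of Theorem~\ref{thm:genMin} with phase estimation on a maximally entangled state, in essentially the same way that D\"urr--H\o yer minimum-finding combined with a uniform superposition produces the ordinary Grover minimum. Concretely, I would build a unitary $U$ that, on input $\ket{0}$, produces a state of the form $\sum_{j=1}^{n}\ket{\psi_j}\ket{\tilde{E}_j}$ where the second register holds an $(\eps/2)$-precise estimate of an eigenvalue $E_j$ of $H$ and $\nrm{\psi_j}^2=1/n$ uniformly. Then, for the smallest eigenvalue $E_1$, we have $\mathrm{Pr}(X\leq E_1+\eps/2)\geq 1/n$, so Theorem~\ref{thm:genMin} with budget $M=\bigO(\sqrt{n})$ returns (with probability at least $3/4$) some $\tilde{E}_j\leq E_1+\eps/2$. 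Since each $\tilde{E}_j$ satisfies $|\tilde{E}_j - E_j|\leq \eps/2$ and $E_j\geq E_1$, the output $E:=\tilde{E}_j$ satisfies $|E-E_1|\leq \eps$, as required.

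To implement $U$, I would first prepare the maximally entangled state $\frac{1}{\sqrt{n}}\sum_{i=0}^{n-1}\ket{i}_A\ket{i}_B$, which by~\eqref{eq:preGibbsMaxEnt} equals $\frac{1}{\sqrt{n}}\sum_{j=1}^{n}\ket{\phi_j}_A\ket{\upsilon_j}_B$ in the eigenbasis of $H$. Then apply phase estimation on $e^{iH/K}$ (so that all phases lie in $[-1,1]$) to the $A$-register using $\bigO(K/\eps)$ controlled applications, with resolution $\eps/(2K)$ in phase and hence $\eps/2$ in energy. The $A$-register now contains estimates $\tilde{E}_j$ of the eigenvalues. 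To make these estimates reliable, I would run $\bigO(\log n)$ parallel phase-estimation blocks and take the median, which drives the probability of an $\eps/2$-error estimate below $1/n^2$. This boosted circuit is then $\bigO(1/\sqrt{n})$-close in operator norm to an idealized unitary $U^{\mathrm{id}}$ whose second register is always an $\eps/2$-accurate estimate, via the standard ``median-of-independent-runs'' argument (the same reasoning used in the proof sketch of Corollary~\ref{cor:patched}).

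The main obstacle, and the step that needs the most care, is propagating the error from this approximation through the generalized minimum-finding procedure. Theorem~\ref{thm:genMin} is stated for an exact $U$, but we feed in $\tilde{U}$ with $\nrm{\tilde{U}-U^{\mathrm{id}}}=\bigO(1/\sqrt{n})$. Since Algorithm~\ref{alg:genMin} uses $M=\bigO(\sqrt{n})$ applications of $U$ and $U^{-1}$, the accumulated operator-norm error across the whole procedure is $\bigO(M/\sqrt{n})=\bigO(1)$, which naively is too large. The fix is to boost the per-call precision further: repeating phase estimation $\bigO(\log(Kn/\eps))$ times in parallel and taking the median drives $\nrm{\tilde{U}-U^{\mathrm{id}}}$ down to $\bigO(1/(\sqrt{n}\,M))=\bigO(1/n)$, so the total error stays a small constant and the success analysis of Theorem~\ref{thm:genMin} goes through (with success probability then boosted to $2/3$ by $\bigO(1)$ outer repetitions).

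Finally the resource count: each call to $\tilde{U}$ invokes controlled $(M',\pi/K,\eta)$-simulation of $H$ with $M'=\bigO(K/\eps)$ and $\eta$ inverse-polynomial in $Kn/\eps$, so by Lemma~\ref{lemma:controlledHamsin} one call costs $\bigO((Kd/\eps)\log(Kn/\eps))$ queries and the same order of gates up to a $\log^{3/2}$ factor, and the median step multiplies this by another $\log(Kn/\eps)$. Multiplying by the $\bigO(\sqrt{n})$ applications from Theorem~\ref{thm:genMin} gives the claimed $\bigO(Kd\sqrt{n}/\eps \cdot \log^2(Kn/\eps))$ queries and $\bigO(Kd\sqrt{n}/\eps \cdot \log^{9/2}(Kn/\eps))$ gates, where the additional $\log^{5/2}(Kn/\eps)$ factor in the gate count comes from the Hamiltonian simulation gate complexity of Lemma~\ref{lemma:controlledHamsin}.
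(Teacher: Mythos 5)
Your proposal is correct and follows essentially the same route as the paper's proof: phase estimation with $e^{\pi i H/K}$ applied to half of a maximally entangled state, median-boosting of the phase estimates, the generalized minimum-finding of Theorem~\ref{thm:genMin} with $M=\bigO(\sqrt{n})$, and propagation of the implementation error through an idealized ``never-errs'' unitary. The only (immaterial) difference is in the error bookkeeping: the paper keeps the per-call operator-norm error at $\bigO(\sqrt{1/n})$ with a sufficiently small constant so that the accumulated error over $\bigO(\sqrt{n})$ calls is a small constant, whereas you boost further so that it is $\bigO(1/\sqrt{n})$; both choices yield the stated query and gate complexities.
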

           \begin{proof}
             The general idea is as follows: we prepare a maximally entangled state on two registers, and apply phase estimation~\cite[Section~5.2]{nielsen2002QCQI}\cite{cleve1997QAlgsRevisited} to the first register with respect to the unitary $e^{\pi iH/K}$. We then use Theorem~\ref{thm:genMin} to find the minimal phase. In order to guarantee correctness we need to account for all the approximation errors coming from approximate implementations. This causes some technical difficulty, since the approximation errors can introduce phase estimates that are much less than the true minimum. We need to make sure that the minimum-finding algorithm finds these faulty estimates only with a tiny probability.

             We first initialize two $\log(n)$-qubit registers in a maximally entangled state $\smash{\frac{1}{\sqrt{n}}\sum_{j=0}^{n-1}\ket{j}\ket{j}}$. This can be done for example using $\log(n)$ Hadamard and CNOT gates, when $n$ is a power of two. Due to the invariance of maximally entangled states under transformations of the form $W\otimes W^*$ for unitary $W$, we have that
             $$
             \frac{1}{\sqrt{n}}\sum_{j=0}^{n-1}\ket{j}\ket{j}=\frac{1}{\sqrt{n}}\sum_{j=1}^{n}\ket{\phi_j}\ket{\phi^*_j}.
             $$

             Let $T:=2^{\left\lceil\log\left(\frac{K}{\eps}\right)+2\right\rceil}$ 
             and first assume that we have access to a perfect unitary $V$ which implements $V=\sum_{t=0}^{T-1}\ketbra{t}{t}\otimes e^{\pi t i H/K}$. Let $e_j:=E_jT/2K$. If we apply phase estimation to the quantum state $\ket{\phi_j}$ using $V$, then we get some superposition of phase estimates $\ket{e}$ in the first register. This superposition has the property, that a measurement in the computational basis reveals an $e$ such that $|e-e_j|\leq 3$ with high probability, so that the estimate $E:= e 2 K /T$ would satisfy $|E-E_j|=|e-e_j|2K/T\leq 3\eps/4<\eps$.
             If we repeat phase estimation $\bigO(\log(n))$ times (on the same state $\ket{\phi_j}$), and take the median of the estimates (in superposition), then we obtain a more concentrated superposition of estimates $e$ such that a measurement would reveal an $|e-e_j|\leq 3$ with probability at least $1-b/n$, for some $b=\Theta(1)$.

             Since in our maximally entangled state $\ket{\phi_j}$ is entangled with $\ket{\phi^*_j}$ on the second register, applying phase estimation to the first register in superposition does not cause interference.
             Let us denote by $U$ the unitary corresponding to the above preparation-estimation-boost procedure. Define $\Pi$ to be the projector which projects to the subspace of estimation values $\ket{e}$ such that there is a $j\in [n]$ with $|e-e_j|\leq \eps$. By the non-interference argument we can see that, after applying $U$, the probability that we get an estimation $e$ such that $|e-e_j|>3$ is at most $b/n$ for all $j\in [n]$, moreover $\nrm{(I-\Pi)U\ket{0}}^2\leq b/n$. Also let $\Pi_1$ denote the projector which projects to phase estimates that yield $e$ such that $|e-e_1|\leq 3$. It is easy to see that $\nrm{\Pi_1 U\ket{0}}^2\geq 1/n-b/n^2$.

             Now let us replace $V$ by $\tilde{V}$ implemented via Lemma~\ref{lemma:controlledHamsin}, such that $\nrm{\smash{V-\tilde{V}}}\leq c'/(n\log(n))$ for some $c'=\Theta(1)$. Let $\tilde{U}$ denote the circuit that we obtain from $U$ by replacing $V$ with $\tilde{V}$. Since in the repeated phase-estimation procedure we use $V$ in total $\bigO(\log(n))$ times, by using the triangle inequality we see that $\nrm{\smash{U-\tilde{U}}}\leq c/(2n)$, where $c=\Theta(1)$. We use the well-known fact that if two unitaries are $\delta$-close in operator norm, and they are applied to the same quantum state, then the measurement statistics of the resulting states are $2\delta$-close. Therefore we can upper bound the difference in probability of getting outcome $(I-\Pi)$:
             $$\nrm{\smash{(I-\Pi)\tilde{U}\ket{0}}}^2-\nrm{(I-\Pi)U\ket{0}}^2\leq 2\nrm{\smash{U\ket{0}-\tilde{U}\ket{0}}}\leq c/n,$$
             hence $\nrm{\smash{(I-\Pi)\tilde{U}\ket{0}}}^2\leq (b+c)/n$, and we can prove similarly that $\nrm{\smash{\Pi_1 \tilde{U}\ket{0}}}^2\geq 1/n-(b+c)/n$.

             Now let $\ket{\psi}:=(I-\Pi)\tilde{U}\ket{0}/\nrm{\smash{(I-\Pi)\tilde{U}\ket{0}}}$ be the state that we would get after post-selecting on the $(I-\Pi)$-outcome of the projective measurement $\Pi$. For small enough $b,c$ we have that $\nrm{\smash{\ket{\psi}-\tilde{U}\ket{0}}}= \bigO(\sqrt{(b+c)/n})$ by the triangle inequality.
             Thus there exists an idealized unitary $U'$ such that $\ket{\psi}=U'\ket{0}$, and $\nrm{\smash{\tilde{U}-U'}}= \bigO(\sqrt{(b+c)/n})$.
             Observe that $\nrm{\Pi_1 U'\ket{0}}^2=\nrm{\ket{\psi}}^2\geq \nrm{\smash{\Pi_1 \tilde{U}\ket{0}}}^2\geq 1/n-(b+c)/n$.

             Now suppose $(b+c)\leq 1/2$ and we run the generalized minimum-finding algorithm of Theorem~\ref{thm:genMin} using $U'$ with $M=6C\sqrt{n}$. Since
             $$\mathrm{Pr}(e\leq e_1+3)\geq\nrm{\Pi_1 U'\ket{0}}^2 \geq (1-b-c)/n \geq 1/(2n)>4/(9n)$$
             we will obtain an estimate $e$ such that $e\leq e_1+3$, with probability at least $3/4$. But since $\Pi\ket{\psi}=\ket{\psi}$, we find that any estimate that we might obtain satisfies $e\geq e_1-3$. So an estimate $e\leq e_1+3$ always satisfies $|e-e_1|\leq 3$.

             The problem is that we only have access to $\tilde{U}$ as a quantum circuit. Let $C_{MF}(\tilde{U})$ denote the circuit that we get from Theorem~\ref{thm:genMin} when using it with $\tilde{U}$ and define similarly $C_{MF}(U')$ for $U'$. Since we use $\tilde{U}$ a total of $\bigO(\sqrt{n})$ times in $C_{MF}(\tilde{U})$ and
             $$\nrm{\tilde{U}-U'}= \bigO(\sqrt{(b+c)/n})\text{, we get that } \nrm{C_{MF}(\tilde{U})-C_{MF}(U')}= \bigO(\sqrt{b+c}).
             $$
             Therefore the measurement statistics of the two circuits differ by at most $\bigO(\sqrt{b+c})$. Choosing $b,c$ small enough constants ensures that $C_{MF}(\tilde{U})$ outputs a proper estimate $e$ such that $|e-e_1|\leq 3$ with probability at least $2/3$. As we have shown at the beginning of the proof, such an $e$ yields an $\eps$-approximation of $E_1$ via $E:=e 2 K /T$.

             The query complexity has an $\bigO(Td\log(Tn))=\bigO(Kd/\eps\log(Kn/\eps))$ factor coming from the implementation of $\tilde{V}$ by Lemma~\ref{lemma:controlledHamsin}. This gets multiplied with $\bigO(\log(n))$ by the boosting of phase estimation, and by $\bigO(\sqrt{n})$ due to the minimum-finding algorithm. The gate complexity is dominated by the cost $\bigO(Kd/\eps\log^{7/2}(Kn/\eps))$ of implementing $\tilde{V}$, multiplied with the $\bigO(\sqrt{n}\log(n))$ factor as for the query complexity.
           \end{proof}
           Note that the minimum-finding algorithm of Theorem~\ref{thm:genMin} can also be used for state preparation. If we choose $2\eps$ less than the energy-gap of the Hamiltonian, then upon finding the approximation of the ground state energy we also prepare an approximate ground state. The precision of this state preparation can be improved with logarithmic cost, as can be seen from the proof of Lemma~\ref{lemma:normEst}.
           \section{Sparse matrix summation}\label{app:sparsematrixsum}
           As seen in Section~\ref{sec:upperbounds}, the Arora-Kale algorithm requires an approximation of $\exp(-\eta H^{(t)})$ where $H^{(t)}$ is a sum of matrices. To keep this section general we simplify the notation. Let $H$ be the sum of $k$ different $d$-sparse matrices $M$:
           \[
             H = \sum^{k}_{i=1} M_i
           \]
           In this section we study the complexity of one oracle call to $H$, given access to respective oracles for the matrices $M_1, \ldots, M_k$. Here we assume that the oracles for the $M_i$ are given in sparse matrix form, as defined in Section~\ref{sec:upperbounds}. In particular, the goal is to construct a procedure that acts as a similar sparse matrix oracle for $H$. We will only focus on the oracle that computes the non-zero indices of $H$, since the oracle that gives element access is easy to compute by summing the separate oracles.

           In the remainder of this section we only consider one row of $H$. We denote this row by $R_H$ and the corresponding rows of the matrices $M_i$ by $R_i$.
           Notice that such a row is given as an ordered list of integers, where the integers are the non-zero indices in $R_i$. Then $R_H$ will again be an ordered list of integers, containing all integers in the $R_i$ lists once (i.e., $R_H$ does not contain duplicates).

           \subsection{A lower bound}
           We show a lower bound on the query complexity of the oracle $O^I_H$ described above by observing that determining the number of elements in a row of $H$ solves the majority function.
           Notice that, given access to $O^I_H$, we can decide whether there are at least a certain number of non-zero elements in a row of $H$.

           \begin{lemma}
             Given $k+1$ ordered lists of integers $R_0,\dots,R_k$, each of length at most $d$. Let $R_H$ be the merged list that is ordered and contains every element in the lists $R_i$ only once (i.e., we remove duplicates).
             Deciding whether $|R_H| \leq  d+ \frac{dk}{2}$ or $|R_H| \geq d+\frac{dk}{2}+1$ takes $\Omega\left(dk\right)$ quantum queries to the input lists in general.
           \end{lemma}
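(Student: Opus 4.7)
The plan is to reduce the (promise) majority problem on $N = dk$ bits to the merged-list counting problem, so that a quantum algorithm for the latter with $o(dk)$ queries would contradict the known $\Omega(N)$ quantum query lower bound for majority.

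Given an input $z \in \{0,1\}^{dk}$ with bits $z_{i,j}$ for $i \in [k]$ and $j \in [d]$, I will construct $k+1$ ordered lists $R_0, R_1, \dots, R_k$, each of length at most $d$, so that $|R_H| = d + |z|$, where $|z|$ is the Hamming weight of $z$. Concretely, fix distinct integer ``anchors'' $s_j := j(k+2)$ for $j \in [d]$, and ``fresh'' values $t_{i,j} := j(k+2)+i$ for $i \in [k],\, j \in [d]$. Take $R_0 := \{s_1,\dots,s_d\}$, and for each $i \in [k]$ let the $j$-th element of $R_i$ equal $s_j$ if $z_{i,j}=0$ and $t_{i,j}$ if $z_{i,j}=1$. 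Since $s_j \leq s_j + i \leq s_j + k < s_{j+1}$, each $R_i$ is automatically sorted by position regardless of $z$, so the ordered-list structure is preserved without extra computation. Moreover every fresh value $t_{i,j}$ is unique to the pair $(i,j)$, while the anchor $s_j$ already lies in $R_0$, so the merged list $R_H$ consists of the $d$ anchors plus exactly one fresh value for each $(i,j)$ with $z_{i,j}=1$. Thus $|R_H| = d + |z|$.

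Under this reduction, the threshold $|R_H| \lessgtr d + \tfrac{dk}{2}$ becomes exactly $|z| \lessgtr \tfrac{dk}{2}$, i.e., majority on $dk$ bits (in fact the weaker non-promise version, which is no harder than the promise version). A single query to the $j$-th element of $R_i$ with $i \geq 1$ can be simulated by reading $z_{i,j}$ and returning either $s_j$ or $t_{i,j}$; this costs exactly one query to $z$. Queries to $R_0$ require no queries to $z$. Hence a quantum algorithm using $T$ queries to the list oracles yields a quantum algorithm using $T$ queries to $z$ for majority on $dk$ bits. Since majority on $N$ bits requires $\Omega(N)$ quantum queries, we obtain $T = \Omega(dk)$.

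The only step that requires a little care is ensuring the lists remain sorted no matter what $z$ does, so that the standard ``$j$-th entry in sorted order'' oracle can be answered with a single bit query, and this is what the interleaved choice of $s_j$ and $t_{i,j}$ guarantees. I don't foresee any other real obstacle; the rest is a direct invocation of the majority lower bound.
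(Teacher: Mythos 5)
Your proof is correct and takes essentially the same route as the paper: a reduction from majority on $dk$ bits using $d$ shared ``anchor'' values in $R_0$ and per-pair ``fresh'' values so that $|R_H| = d + |z|$. If anything, your construction is slightly more careful than the paper's (you verify sortedness explicitly, and your formula $t_{i,j}=j(k+2)+i$ genuinely makes the fresh values distinct across $i$, whereas the paper's displayed formula $j(k+1)+jZ_{ij}$ appears to contain a typo that would collide fresh values for different $i$).
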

           \begin{proof}
             We prove this by a reduction from MAJ on $dk$ elements. Let $Z\in \{0,1\}^{d\times k}$ be a Boolean string. It is known that it takes at least $\Omega(dk)$ quantum queries to $Z$ to decide whether $|Z|\leq \frac{dk}{2}$ or $|Z|\geq \frac{dk}{2}+1$.
             Now let $R_0,R_1,\dots,R_k$ be lists of length $d$ defined as follows:
             \begin{itemize}
             \item $R_0\lbrack j \rbrack = j(k+1)$ for $j = 1, \ldots, d$.
             \item $R_i\lbrack j \rbrack = j(k+1)+j Z_{ij}$ for $j = 1, \ldots, r$ and $i = 1,\ldots,k$.
             \end{itemize}
             By construction, if $Z_{ij} = 1$, then the value of the entry $R_i \lbrack j \rbrack$ is unique in the lists $R_0, \ldots, R_k$, and if $Z_{ij} = 0$ then $R_i [ j ] = R_0\lbrack j \rbrack$.
             So in $R_H$ there will be one element for each element in $R_0$ and one element for each bit in $Z_{ij}$ that is one. The length of $R_H$ is therefore $d+|Z|$. Hence, distinguishing between $|R_H| \leq d + \frac{dk}{2}$ and $|R_H| \geq d + \frac{dk}{2} + 1$ would solve the MAJ problem and therefore requires at least $\Omega(d k)$ queries to the lists in general.
           \end{proof}
           \begin{corollary}
             Implementing a query to a sparse matrix oracle $O^I_H$ for
             \[
               H = \sum_{j=i}^k M_j
             \]
             where each $M_j$ is $d$-sparse, requires $\Omega\left(dk\right)$ queries to the $O^I_{M_j}$ in general.
           \end{corollary}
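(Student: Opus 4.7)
The proof is a direct reduction from the decision problem of the preceding lemma. Given $k+1$ ordered lists $R_0,\ldots,R_k$ of integers (each of length at most $d$) on which we are promised $|R_H|\leq d+dk/2$ or $|R_H|\geq d+dk/2+1$, I would construct $d$-sparse matrices $M_0,\ldots,M_k$ (re-indexing so that we have $k$ matrices in the sum by absorbing the indexing off-by-one, or invoking the lemma with $k-1$ in place of $k$, which does not affect the $\Omega(dk)$ bound). Pick $M_j$ to be the matrix whose first row has non-zero entries (say, all equal to~$1$) exactly in the columns listed in $R_j$, and which is zero elsewhere. Each $M_j$ is trivially $d$-sparse, and a single query to the list $R_j$ simulates a single query to the sparse-access oracle $O^I_{M_j}$ of $M_j$: the $\ell$-th non-zero entry in the first row of $M_j$ is at column $R_j[\ell]$.

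With this encoding, $H=\sum_{j} M_j$ has, in its first row, non-zero entries exactly at the positions in the merged list $R_H$, so the number of non-zero entries in row~$1$ of $H$ equals $|R_H|$. A single query to $O^I_H$ at input $(1,\, d+dk/2+1)$ now suffices to decide the promise: by the convention spelled out in~\eqref{eq:oracleind}, the oracle either returns a valid column index (in which case $|R_H|\geq d+dk/2+1$) or the designated ``no such entry'' symbol (in which case $|R_H|\leq d+dk/2$).

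Consequently, if $O^I_H$ could be implemented with $q$ queries to the oracles $O^I_{M_0},\ldots,O^I_{M_k}$, we would obtain a quantum algorithm making $q$ queries to the input lists that solves the promise problem from the previous lemma. The lower bound $\Omega(dk)$ from that lemma therefore forces $q=\Omega(dk)$, which is the claimed bound.

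The only subtlety worth checking is the bookkeeping between ``$k+1$ lists'' in the lemma and ``$k$ matrices'' in the corollary (and the fact that the lemma's setup uses lists whose entries lie in a range polynomial in~$dk$, so there is no hidden cost in representing them as column indices of a sparse matrix acting on a space of dimension $\mathrm{poly}(dk)$); both are routine. The genuine content is entirely in the previous lemma, whose proof has already been carried out via the MAJ lower bound; the reduction above only needs one oracle call, so no loss in the bound occurs.
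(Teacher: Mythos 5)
Your reduction is correct and is exactly the argument the paper intends: the corollary is stated as an immediate consequence of the preceding lemma, and encoding each list $R_j$ as the first row of a $d$-sparse matrix $M_j$ so that one query to $O^I_H$ at position $d+dk/2+1$ decides the promise is the natural (and only missing) step. The bookkeeping issues you flag ($k+1$ lists versus $k$ matrices, and the size of the column indices) are indeed routine and handled correctly.
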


           \subsection{An upper bound}
           We first show that an oracle for the non-zero indices of $H$ can be constructed efficiently classically. The important observation is that in the classical case we can write down the oracle once and store it in memory. Hence, we can create an oracle for $H$ as follows. We start from the oracle of $M_1$ and then we ``add'' the oracle of $M_2$, then that of $M_2$, etc. By ``adding'' the oracle $M_i$ to $H = \sum_{j=1}^{i-1} M_j$, we mean that, per row, we insert the non-zero indices in the list of $M_i$ into that of $H$ (if it is not already there). When an efficient data structure (for example a binary heap) is used, then such insertions can be done in polylog time. This shows that in the classical case such an oracle can be made in time $\bOt{ndk}$. Note that in the application we are interested in, Meta-algorithm~\ref{alg:AKSDP}, in each iteration $t$ only one new matrix $M^{(t)}$ `arrives', hence from the oracle for $H^{(t-1)}$ an oracle for $H^{(t)}$ can be constructed in time $\bOt{nd}$.

           The quantum case is similar, but we need to add all the matrices together each time a query to $O^I_H$ is made, since writing down each row of $H$ in every iteration would take $\Omega(n)$ operations.

           To implement one such query to $O^I_H$, in particular to the $t$th entry of $R_H$, start with an empty heap and add all elements of $R_1$ to it. Continue with the elements of $R_2$, but this time, for each element first check if it is already present, if not, add it, if it is, just continue.
           Overall this will take $\bigO(dk)$ insertions and searches in the data structure and hence $\bOt{d k}$ operations. We end up with a full description of $R_H$. We can then find the index of the $t$th non-zero element from this and uncompute the whole description.
           Similarly, we need to be able to compute the inverse function since we need an in-place calculation. Given an index $i$ of a non-zero element in~$R_H$, we can compute all the indices for $R_H$ as above, and find where $i$ is in the heap to find the corresponding value of~$t$.

           \section{Equivalence of \texorpdfstring{$R$, $r$, and $\eps^{-1}$}{R, r and 1 over epsilon}}\label{app:reductions}
           In this section we will prove the equivalence of the three parameters $R$, $r$ and $\eps^{-1}$ in the Arora-Kale meta-algorithm. That is, we will show any two of the three parameters can be made constant by increasing the third. Therefore, $\frac{Rr}{\eps}$ as a whole is the interesting parameter. This appendix is structured as a set of reductions, in each case we will denote the parameters of the new SDP with a tilde.
           \begin{lemma}
             Let $0 < \eps \leq 1$. For every SDP with $R,r \geq 1$, there is an SDP with parameters $\tilde{R}=1$ and $\tilde{r} = r$ such that a solution to that SDP with precision $\tilde{\eps} = \frac{\eps}{R}$ provides a solution with precision~$\eps$ to the original SDP.
           \end{lemma}
           \begin{proof}
             Let $\tilde{A}_j = A_j$, $\tilde{C} = C$ and $\tilde{b} = \frac{b}{R}$. Now clearly $\tilde{R} = 1$, but $\widetilde{\opt} = \opt/R$. Hence determining $\widetilde{\opt}$ up to additive error $\tilde{\eps}=\frac{\eps}{R}$ will determine $\opt$ up to additive error $\eps$. Notice that the feasible region of the dual did not change, so $\tilde{r} = r$.
           \end{proof}
           \begin{lemma}
             Let $0 < \eps \leq 1$. For every SDP with $R,r\geq 1$, there is an SDP with parameters $\tilde{R} = \frac{R}{\eps}$ and $\tilde{r} = r$ such that a solution to that SDP with precision $\tilde{\eps}=1$ provides a solution with precision~$\eps$ to the original SDP.
           \end{lemma}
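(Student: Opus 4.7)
The plan is to prove this by a simple rescaling of the right-hand-side vector $b$, leaving the matrices $A_j$ and $C$ (and in particular their operator-norm bound $\nrm{\cdot}\leq 1$) untouched. Concretely, I would define the transformed SDP by $\tilde A_j:=A_j$, $\tilde C:=C$, and $\tilde b:=b/\eps$, and then verify in turn that the three parameters behave as claimed.

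First I would check the primal trace bound. Since the paper assumes $A_1=I$ and $b_1=R$, the new first constraint is $\Tr(\tilde X)\leq R/\eps$, so $\tilde R=R/\eps$. A primal-feasible $X$ of the original SDP corresponds to $\tilde X=X/\eps$ in the new one (satisfying $\Tr(A_j\tilde X)=\Tr(A_jX)/\eps\leq b_j/\eps=\tilde b_j$), with objective $\Tr(C\tilde X)=\Tr(CX)/\eps$, and this correspondence is a bijection between primal-feasible sets. Hence $\widetilde{\opt}=\opt/\eps$. Note $\tilde R\geq 1$ because $R\geq 1$ and $\eps\leq 1$, and $\tilde\eps=1$ is within the allowed range, so the hypotheses on the transformed SDP are satisfied.

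Next I would handle the dual. Writing down the dual of the transformed primal gives $\min\,\tilde b^T y=b^Ty/\eps$ subject to $\sum_j y_j A_j-C\succeq 0$ and $y\geq 0$; the dual feasible region is therefore literally identical to that of the original dual, and the bijection between dual-feasible solutions preserves $\ell_1$-norms. Consequently $\tilde r=r$ (and in particular $\tilde r\geq 1$).

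Finally, I would translate the precision guarantee. Given a feasible $\tilde y$ for the transformed dual with $\tilde b^T\tilde y\leq\widetilde{\opt}+\tilde\eps=\opt/\eps+1$, multiplying through by $\eps$ yields $b^T\tilde y\leq\opt+\eps$, so $\tilde y$ is an $\eps$-optimal dual-feasible solution to the original SDP. There is no serious obstacle here; the only point worth stating carefully is that rescaling $b$ does not affect the normalization of the input matrices or the dual feasible region, which is what cleanly decouples the change in $R$ from the (unchanged) value of $r$.
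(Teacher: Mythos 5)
Your proposal is correct and follows exactly the same route as the paper's proof: rescale only the vector $b$ to $b/\eps$, observe that $\tilde R = R/\eps$ and $\widetilde{\opt} = \opt/\eps$ (so precision $1$ on the new SDP gives precision $\eps$ on the original), and note that the dual feasible region — and hence $r$ — is unchanged. The extra details you supply (the primal bijection $X \mapsto X/\eps$ and the explicit translation of the dual precision guarantee) are correct elaborations of what the paper leaves implicit.
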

           \begin{proof}
             Let $\tilde{A}_j = A_j$, $\tilde{C} = C$ and $\tilde{b} = \frac{b}{\eps}$. Now $\tilde{R} = \frac{R}{\eps}$ and $\widetilde{\opt} = \opt/\eps$. Hence determining $\widetilde{\opt}$ up to additive error $\tilde{\eps}=1$ will determine $\opt$ up to additive error $\eps$. Notice that again the feasible region of the dual did not change, so $\tilde{r} = r$.
           \end{proof}
           \begin{lemma}
             Let $0 < \eps \leq 1$. For every SDP with $R,r \geq 1$, there is an SDP with parameters $\tilde{R} = R$ and $\tilde{r}=1$ such that a solution to that SDP with precision $\tilde{\eps} = \frac{\eps}{r}$ provides a solution with precision~$\eps$ to the original SDP.
           \end{lemma}
           \begin{proof}
             Let $\tilde{A}_j = A_j$, $\tilde{b} = b$ and $\tilde{C} = \frac{1}{r} C$. Now $\tilde{r} = 1$ and $\widetilde{\opt} = \opt/r$. Hence determining $\widetilde{\opt}$ up to additive error $\tilde{\eps}=\frac{\eps}{r}$ will determine $\opt$ up to additive error $\eps$. Since $r \geq 1$ and $\nrm{C}\leq 1$, we find $\nrm{\smash{\tilde{C}}} \leq 1$ as required. Notice that the feasible region of the primal did not change, so $\tilde{R} = R$.
           \end{proof}
           At this point we would like to state the last reduction by setting $\tilde{C} = \frac{1}{\eps}C$, but this would not guarantee that $\nrm{\smash{\tilde{C}}} \leq 1$. Instead we give an algorithm that performs multiple calls to an SDP-solver, each of which has constant $\eps$ but higher $r$.

           \begin{lemma}
             Assume an SDP-solver that costs
             \[
               C(n,m,s,R,\eps,r)
             \]
             to solve one SDP with precision $\eps$, and assume that $C$ is non-decreasing in $r$. Every SDP with parameters $R,r \geq 1$ can be solved up to precision $0<\eps \leq 1$ with cost
             \[
               \sum_{k=1}^{\log\left(\frac{1}{\eps}\right)} C\left(n + 1,m+ 1,s,R+ 4 \log\left(\frac{1}{\eps}\right),1,2^k(r + 1)\right),
             \]
             by solving $\log\left(\frac{1}{\eps}\right)$ SDPs, where the $k$-th SDP is solved up to precision $\tilde \eps =1$ and has parameters
             \[
               \tilde{n} = n+1, \  \tilde{m} = m+1, \  \tilde{R} = \bigO\left(R+ 4 \log\left(\frac{1}{\eps}\right)\right), \  \tilde{r} \leq 2^k(r + 1),
             \]
             and input matrices with elements described by bitstrings of length $\text{poly}(\log n,\log m, \log\left(\frac{1}{\eps}\right))$.
             Furthermore, if $C(n,m,s,R,1,r) = \text{poly}(n,m,s,R,r)$, then the above cost is $\text{poly}(n,m,s,R,r/eps)$.
           \end{lemma}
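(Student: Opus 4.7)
The plan is to reduce approximating $\opt$ to additive error $\eps$ to $\log(1/\eps)$ invocations of the given constant-precision SDP-solver, via a binary search that halves the uncertainty interval at each round. Before the search, apply the previous lemmas of this appendix to pre-normalise the instance so that $\opt$ lies in a known interval of width $\bigO(1)$; this costs only an additive $\bigO(\log(1/\eps))$ in the trace bound, which is absorbed by $\tilde R = R+4\log(1/\eps)$. Initialise $\alpha_0$ at the midpoint of this interval.

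At round $k = 1, \dots, \log(1/\eps)$ we maintain a guess $\alpha_{k-1}$ with $|\opt - \alpha_{k-1}| \leq 2^{-k+2}$, and we build an auxiliary SDP$^{(k)}$ of dimension $n+1$ with $m+1$ constraints that, when solved at precision $\tilde\eps = 1$, reveals the next bit of $\opt - \alpha_{k-1}$. Concretely, augment the primal variable by a single scalar slack $u \geq 0$; set $\tilde A_1 = I \oplus 1$ with $\tilde b_1 = R + 4\log(1/\eps)$; copy the old constraints as $\tilde A_j = A_j \oplus 0$; take the shifted and normalised objective $\tilde C^{(k)} = \tfrac{1}{2}\bigl((C - (\alpha_{k-1}/R)I) \oplus 0\bigr)$ so that $\|\tilde C^{(k)}\| \leq 1$; and add one further constraint encoding a ``zoom-in by $2^k$'' through a small threshold $\gamma_k = \Theta(2^{-k})$. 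The purpose of this last constraint is that its dual multiplier must carry the $2^k$ factor, inflating the optimal dual $\ell_1$-norm to $\tilde r \leq 2^k(r+1)$ while keeping all matrix norms at most $1$; the entries $\alpha_{k-1}$ and $\gamma_k$ require only $\bigO(\log(1/\eps))$ bits each, so the bitstring-length bound holds.

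Given the solver's output on SDP$^{(k)}$, which returns a dual-feasible vector whose objective is within $1$ of $\tilde\opt_k = 2^k(\opt - \alpha_{k-1}) + c_k$ for an explicit constant $c_k$, we recover $\opt - \alpha_{k-1}$ to additive error $2^{-k}$ and thus obtain $\alpha_k$ with the refined invariant $|\opt - \alpha_k| \leq 2^{-k+1}$. After $\log(1/\eps)$ rounds this gives an $\eps$-accurate estimate of $\opt$, and an $\eps$-feasible dual solution of the original SDP can be reconstructed by undoing the shift and the augmentation used in the final round. The total cost is the stated sum, and when $C$ is polynomial in $r$ the $k=\log(1/\eps)$ term dominates, yielding $C(n,m,s,R,1,r/\eps)$.

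The main technical obstacle is the precise construction of SDP$^{(k)}$. The difficulty is that one needs three properties simultaneously: (i) $\|\tilde C^{(k)}\|, \|\tilde A^{(k)}_j\| \leq 1$, forbidding any multiplicative rescaling of the objective or the constraints by $2^k$; (ii) $\tilde R$ grows only additively by $\bigO(\log(1/\eps))$, forbidding a multiplicative rescaling of the trace bound; and yet (iii) the approximate answer at precision~$1$ must resolve $\opt - \alpha_{k-1}$ down to $2^{-k}$. The only freedom left is to push the factor $2^k$ into the dual $\ell_1$-norm, which is exactly what the bound $\tilde r \leq 2^k(r+1)$ reflects; verifying that the augmented dual genuinely admits an optimal solution of $\ell_1$-norm $\bigO(2^k(r+1))$ — obtained by scaling an optimal dual solution of the original SDP together with the multiplier of the added constraint — is where the bulk of the work lies.
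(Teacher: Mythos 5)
Your high-level strategy---interval-halving with $\log(1/\eps)$ constant-precision solves, growing the trace bound only additively, and pushing the factor $2^k$ into the dual $\ell_1$-norm via one extra constraint with an $\Theta(2^{-k})$ coefficient---is exactly the paper's, and your parameter accounting ($\tilde n = n+1$, $\tilde m = m+1$, $\tilde R = R+\bigO(\log(1/\eps))$, $\tilde r \leq 2^k(r+1)$) matches. However, the step you defer as ``where the bulk of the work lies'' is the entire content of the lemma, and the partial construction you do write down would not deliver the required magnification. With your objective $\tilde C^{(k)} = \frac{1}{2}\bigl((C - (\alpha_{k-1}/R)I)\oplus 0\bigr)$, the new optimum is roughly $\frac{1}{2}\bigl(\opt - \alpha_{k-1}\tr{X^*}/R\bigr)$: a shifted \emph{contraction} of the original optimum, which a precision-$1$ solve cannot resolve to $2^{-k}$. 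Since $\nrm{\tilde C}\leq 1$ forbids multiplying the objective by $2^k$, the magnification must come from somewhere else, and your sketch never says where; an unspecified ``zoom-in'' constraint cannot help while the objective ignores the slack variable $u$. (A smaller issue: subtracting $(\alpha_{k-1}/R)I$ shifts the optimum by $\alpha_{k-1}$ only if $\tr{X^*}=R$ exactly, which the inequality trace constraint does not guarantee.)

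The paper's resolution is to put the objective \emph{entirely on the slack}: the new objective is $\tr{(0\oplus 1)\tilde X}=z$, and the single added constraint is $\tr{\bigl((-C)\oplus\tfrac{U-L}{4}\bigr)\tilde X}\leq -L$, i.e.\ $z \leq \tfrac{4(\tr{CX}-L)}{U-L}$, where $[L,U]\ni\opt$ is the current interval. Maximizing $z$ then maximizes $\tr{CX}$; the new optimum equals $\tfrac{4(\opt-L)}{U-L}\in[0,4]$, so $\tilde R=R+4$ and all matrix norms stay at most $1$, and a precision-$1$ estimate of it recovers $\opt$ to within $\tfrac{U-L}{4}$, halving the interval. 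The $2^k$ dual blow-up then falls out automatically rather than having to be engineered: the bottom-right entry of the dual LMI forces the multiplier $y_0$ of the added constraint to satisfy $y_0\tfrac{U-L}{4}\geq 1$, i.e.\ $y_0\geq\tfrac{4}{U-L}=\Theta(2^k)$, and the optimal dual is $\tfrac{4}{U-L}(1,y)$ with $y$ optimal for the original dual, giving $\tilde r=\tfrac{4}{U-L}(1+r)$. You would need to supply a construction with this coupling (or an equivalent one) before your argument is complete.
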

           \begin{proof}
             The high-level idea is that we want to learn a small interval in which the optimum lies, whilst using a ``big'' precision of $1$. We do so as follows: given an interval $[L,U]$ with the promise that $\opt \in [L,U]$, we formulate another SDP for which a $1$-approximation of the optimum learns us a new, smaller, interval $[L',U']$ such that $\opt \in [L',U']$. We will moreover have $U'-L' \leq \frac{1}{2}(U-L)$. In the remainder of the proof we first show how to do this reformulation, we then use this technique to prove the lemma.

             \medskip

             Given an SDP $p$, given in the form of Equation~\eqref{eq:SDP}, of which we know an interval $[L,U]$ such that $\opt \in [L,U]$ (with $0 < U-L \leq 1$), we can write down an equivalent SDP $p'$ such that an optimal solution of $p$ corresponds one-to-one to an optimal solution of $p'$, and the optimum of $p'$ lies in $[0,4]$:
             \begin{align*}
               (p') \qquad \qquad  \max \quad & \tr{\begin{pmatrix} 0 & 0\\ 0 & 1 \end{pmatrix} \tilde{X}} \\
               \text{s.t.}\ \ \ & \tr{\begin{pmatrix} -C & 0 \\ 0 & \frac{U-L}{4} \end{pmatrix} \tilde{X}}  \leq -L,\\
                                              &\tr{\begin{pmatrix} A_j & 0 \\ 0 & 0 \end{pmatrix}  \tilde{X}}  \leq b_j \quad \text{ for all } j \in [m], \\
                                              &\tilde{X} \succeq 0,
             \end{align*}
             here the variable $\tilde{X}$ is of size $(n+1) \times (n+1)$, and should be thought of as $\tilde{X} = \begin{pmatrix} X & \cdot \\ \cdot & z\end{pmatrix}$, where $X$ is the variable of the original SDP: an $n \times n$ positive semidefinite matrix.
             Observe that by assumption, for every feasible $X$ of the original SDP, $L \leq \tr{CX} \leq U$. Therefore, the first constraint implies $0 \leq z \leq 4$ and hence the new optimum lies between $0$ and $4$, and the new trace bound is $\tilde{R} = R + 4$. We now determine~$\tilde{r}$.
             The dual of the above program is given by:
             \begin{align*}
               (d') \qquad \qquad \min \quad & -L y_0  + \sum_{j=1}^m b_j y_j \\
               \text{s.t.}\ \ \ & \begin{pmatrix} -C & 0 \\ 0 & \frac{U-L}{4} \end{pmatrix} y_0 + \sum_{j=1}^m \begin{pmatrix} A_j & 0 \\ 0 & 0 \end{pmatrix} y_j \succeq \begin{pmatrix} 0 & 0 \\ 0 & 1 \end{pmatrix} \\
                                             &y \geq 0.
             \end{align*}
             \begin{claim} \label{claim33}
               An optimal solution $\tilde{y}$ to $d'$ is of the form $\tilde{y} = \frac{4}{U-L}(1, y)$ where $y$ is an optimal solution to $d$, the dual of  $p$.
             \end{claim}
             The proof of the claim is deferred to the end of this section. The claim implies that $\tilde{r} = \frac{4}{U-L} (1+ r)$. We also have
             \[
               \opt = L + \opt' \frac{U-L}{4},
             \]
             and hence, a $1$-approximation to $\opt'$ gives a $\frac{U-L}{4}$-approximation to $\opt$.

             \medskip

             We now use the above technique to prove the lemma. Assume an SDP of the form~\eqref{eq:SDP} is given. By assumption, $\nrm{C} \leq 1$ and therefore $\opt \in [-R,R]$. Calling the SDP-solver on this problem with $\eps = 1$ will give us an estimate of $\opt$ up to additive error $1$. Call this estimate $\opt_0$, then $\opt \in [\opt_0 - 1, \opt_0 +1] =: [L_0,U_0]$. We now define a new SDP $p'$ as above, with $U = U_0, L= L_0$ (notice $U_0 - L_0 \leq 2$). By the above, solving $p'$ with $\tilde{\eps} = 1$ determines a new interval $[L_1, U_1]$ of length at most $2 \frac{U_0 - L_0}{4}$ such that $\opt \in [L_1, U_1]$.
             We use the interval $[L_1, U_1]$ to build a new SDP $p'$ and solve that with $\tilde{\eps} = 1$ to get an interval $[L_2, U_2]$. Repeating this procedure $k = \log\left(\frac{1}{\eps}\right)+1$ times determines an interval $[L_k, U_k]$ of length at most $\frac{1}{2^k}(U_0-L_0) \leq \eps$. Hence, we have determined the optimum of $p$ up to an additive error of $\eps$.
             The total time needed for this procedure is at most
             \[
               \sum_{k=1}^{\log\left(\frac{1}{\eps}\right)} C\left(n + 1,m+ 1,s,R+ 4 \log\left(\frac{1}{\eps}\right),1,2^k(r + 1)\right). \qedhere
             \]
           \end{proof}
           \begin{proof}[Proof of Claim~\ref{claim33}]
             First observe that the linear matrix inequality in $d'$ implies the inequality $y_0 \frac{U-L}{4} \geq 1$ and hence $y_0 \geq \frac{4}{U-L}$.
             Suppose we fix a value $y_0$ (with $y_0 \geq  \frac{4}{U-L}$) in $d'$, then the resulting SDP is of the form
             \begin{align*}
               (d'') \qquad \qquad -L y_0 +  \min \quad &   \sum_j b_j y_j \\
               \text{s.t.}\ \ \ & \sum_{j=1}^m y_j A_j \succeq y_0 C, \\
                                                        &y \geq 0.
             \end{align*}
             and hence, an optimal solution $\tilde y$ to $d''$ is of the form $\tilde y = y_0 y$ where $y$ is an optimal solution to~$d$. It follows that an optimal solution to  $d'$ is of the form $(y_0, y_0 y)$ where $y$ is an optimal solution to~$d$. Observe that the optimal value of $d'$ as a function of $y_0$ is of the form $y_0 \cdot (-L + \opt)$. Since by assumption $\opt \geq L$, the objective is increasing (linearly) with $y_0$ and hence $y_0 =  \frac{4}{U-L}$ is optimal.
           \end{proof}

\end{document}